%% file: main.tex
\pdfoutput=1
\documentclass{article}

\usepackage{amsmath, amsthm, amssymb}
\usepackage{natbib}
\usepackage{booktabs}
\usepackage{graphicx}
\usepackage{hyperref}
\usepackage[margin=1.4in]{geometry}
\hypersetup{colorlinks,citecolor=blue,urlcolor=blue,linkcolor=blue,
  pdftitle={Higher-Order Asymptotics of Regression-Adjusted Estimators in Randomized Trials},
  pdfauthor={Anna Kelley and Stefan Wager}}
\newcommand{\RR}{\mathbb{R}}
\newcommand{\nn}{\mathcal{N}}
\newcommand{\EEE}{\mathbb{E}}
\newcommand{\EE}[1]{\mathbb{E}\left[#1\right]}
\newcommand{\Var}[1]{\operatorname{Var}\left[#1\right]}
\newcommand{\PP}[1]{\mathbb{P}\left[#1\right]}
\newcommand{\p}[1]{\left(#1\right)}
\newcommand{\cb}[1]{\left\{#1\right\}}
\newcommand{\abs}[1]{\left|#1\right|}
\newcommand{\cond}{\;\middle|\;}
\newcommand{\htau}{\hat{\tau}}
\newcommand{\hdelta}{\hat{\delta}}
\newcommand{\hbeta}{\hat{\beta}}
\DeclareMathOperator*{\argmin}{argmin}
\newcommand{\limn}{\lim_{n\to\infty}}
\DeclareMathOperator{\MSE}{MSE}
\DeclareMathOperator{\tr}{tr}
\DeclareMathOperator{\Cov}{Cov}
\newcommand{\Bern}{\mathrm{Bern}}
\newcommand{\CRT}{\mathrm{CRT}}
\newcommand{\IREG}{\mathrm{IREG}}
\newcommand{\REG}{\mathrm{REG}}

\newcommand{\MSEexp}{\MSE^{\mathrm{exp}}}
\input{tables/gamma0_values.tex}
\newcommand{\gz}[1]{\ifcsname gz:#1\endcsname\csname gz:#1\endcsname\else\errmessage{Missing generated value #1}\fi}


\theoremstyle{plain}
\newtheorem{prop}{Proposition}

\newtheorem{coro}[prop]{Corollary}
\newtheorem{lemm}[prop]{Lemma}
\newtheorem{theo}[prop]{Theorem}

\theoremstyle{definition}

\newtheorem{exam}{Example}
\newtheorem{defi}[exam]{Definition}

\theoremstyle{remark}
\newtheorem{rema}[prop]{Remark}

\newcommand\blfootnote[1]{%
  \begingroup
  \renewcommand\thefootnote{}
  \footnote{#1}%
  \addtocounter{footnote}{-1}
  \endgroup
}

\newcommand{\monthyear}{%
  \ifcase\month\or
  January\or February\or March\or April\or May\or June\or July\or
  August\or September\or October\or November\or December\fi
  \space \number\year
}

\author{Anna Kelley \\ Stanford University \and Stefan Wager \\ Stanford University}
\date{\monthyear}
\title{Higher-Order Asymptotics of \\ Regression-Adjusted Estimators in Randomized Trials}

\begin{document}

\maketitle

\begin{abstract}
Regression adjustment is widely used to improve the precision of average treatment effect estimates from randomized controlled trials. Interacted regression weakly dominates simple regression in terms of its first-order behavior but, in 50-50 randomized trials, the asymptotic variance of both methods is the same and so relative performance is determined by higher-order behavior. Here, we conduct a higher order analysis to compare simple vs.~interacted regression adjustments in the equal-probability case. The relative performance of the methods depends on a number of factors including feature skewness, treatment-effect heterogeneity, heteroskedasticity, and nonlinear misspecification---and can favor either estimator. However, in some practically important settings, including when features are skewed and outlying feature values are associated with higher outcome noise, simple regression can outperform interacted regression by a material amount.
\end{abstract}

\section{Introduction}

Randomized controlled trials (RCTs) are fundamental to how causal effects are ascertained across a number of fields. One valuable attribute of RCTs is that they enable unbiased estimation of average treatment effects using simple methods such as the difference in means, without reliance on parametric assumptions \citep{imbens2015causal}. However, while the difference-in-means estimator is unbiased, this does not automatically make it the preferred approach; and in particular regression adjustments are often used to improve power. For example, the Food and Drug Administration issued recent guidance encouraging analysts to ``adjust for baseline covariates that are strongly associated with an outcome in a clinical trial population'' \citep[accompanying guidance snapshot]{FDA2023}.\blfootnote{\hspace{-5.5mm}Code, simulation scripts and the computational certificates used in the proofs: \url{https://github.com/kelleyan-pixel/higher-order-regression-adjustment-for-rct}. Claude Opus 5.5 was used for code generation, including numerical simulation and computational proof design via SymPy \citep{sympy}; GPT-6 Astra was used for checking proofs and writing assistance throughout the appendix. }

The study of regression adjustments in RCTs has a long tradition in statistics. \citet{fisher1935design} used arguments rooted in parametric linear modeling to advocate for such regression adjustments. Fisher's analysis was later critiqued by \citet{Freedman2008}, who argued that reliance on linear models was misaligned with model-free guarantees sought in RCT inference. \citet{Lin2013}, however, responded to Freedman's critique by showing that regression adjustments do not require parametric assumptions to be near-unbiased for the average treatment effect. Furthermore, \citet{Lin2013} showed that regression adjustments with full treatment-control interaction weakly dominate both the difference in means estimator and Fisher's simple regression adjustment in terms of their leading-order (root-$n$ scale) errors. Lin's weak dominance result has led such treatment-interacted regressions to be often presented as a safe starting point for use of regression adjustments in RCTs \citep{li2017general,negi2021revisiting,Wager2025}.

A gap in this narrative, however, is that treatment-interacted regressions only strictly dominate simple regression in designs with randomization probability $\pi \neq 0.5$. In evenly randomized trials with $\pi = 0.5$, the first-order accuracy of both methods is equal for any data-generating distribution, i.e., first-order theory offers no guidance on which method to prefer. And, while $\pi = 0.5$ is of course a special case, this is an important special case in applications, and it is not a priori clear we should prefer interacted regression to simple regression here. For example, \citet{zhao2023covariate} argue based on simulation evidence and heuristics on the number of free parameters that simple regression may in practice be more stable than interacted regression with limited sample size---especially when there are multiple treatment levels.
Just to give a concrete example, suppose that
\begin{equation}
\label{eq:warmup}
\log(X_i) \sim \nn\p{0, \, 1.1^2}, \ \ \ Y_i(w) = 0.9 X_i + w + (1 + X_i) U_i, \ \ \ U_i \sim \nn\p{0, \, 1}.
\end{equation}
Then, at sample size $n = 500$, interacted regression has 7\% higher mean-squared error for the average treatment effect than simple regression.


Motivated by this gap, we here conduct a higher-order asymptotic analysis of both simple and interacted regression adjustments in RCTs to elucidate relative merits of both approaches in the case where $\pi = 0.5$. We frame our findings in terms of asymptotic deficiency \citep{HodgesLehmann1970}. When two estimators have matching first-order behavior but one dominates the other on the second-order scale then, asymptotically, switching to the better one gives an accuracy boost that is equivalent to a fixed boost in sample size, which is referred to as the deficiency. Our main finding is that, while either the simple or interacted regression adjustment may have better second-order performance, this trade-off appears to favor simple regression in some realistic settings. 

As a practical extension of this theory, we study feasible inference choices for both estimators with heteroskedasticity-robust standard error estimates. In particular, we consider both interacted and simple regression point estimates in tandem with HC1 and jackknife/HC3 standard error estimates \citep{EfronStein1981,MacKinnonWhite1985}. For both point estimates, HC1 can undercover substantially under skewed, tail-heteroskedastic covariates. The jackknife improves coverage in most designs, especially at lower nominal levels, but is still liable to undercover in cases of rare outliers or large HTE. In a typical low-HTE regime, simple regression combined with the jackknife achieves at least nominal 95\% coverage in all of our designs, with intervals no wider than those of interacted regression under either variance estimator.

\subsection{Related Work}
 
\paragraph{Regression adjustment in randomized experiments.}
The literature on regression adjustment in randomized experiments sits in two camps. The design-based literature treats the potential outcomes and covariates as fixed in a finite population, with randomness coming solely from treatment assignment. In this setting, \citet{Lin2013} shows that interacted regression weakly dominates both simple regression and difference-in-means in asymptotic efficiency. Meanwhile, \citet{zhao2023covariate} compare their finite sample behavior and allow for multi-armed and factorial experiments, postulating that interacted regression exhibits poor finite sample performance due to its additional model complexity. Another interesting but more distant line of work, including \citet{lei2021regression} and \citet{ChiangMatsushitaOtsu2026}, addresses bias and inference of the interacted estimator as $p$ grows with $n$.
 
By contrast, the superpopulation framework treats the experimental units as random draws from a larger population and studies repeated sampling over both the units and treatment assignment. Most past work here has focused on first-order comparisons. For instance, \citet{negi2021revisiting} show that the interacted regression estimator weakly dominates both difference in means and simple regression in this framework as well. We continue this line of work to formally investigate higher-order effects of both estimators, in particular to resolve the ambiguous optimality at $\pi=0.5$. Importantly, our goal is to characterize the higher-order behavior of the two estimators under IID sampling specifically, and it is possible that a design-based analysis yields different second-order results.
 
\paragraph{Higher-order asymptotic theory.}
First-order theory inherently cannot compare estimators with the same asymptotic variance, even though their finite-sample behavior may differ meaningfully. To address this, \citet{Nagar1959} introduced expansions of the higher-order moments of estimators, and \citet{Sargan1974} gave conditions for their validity; meanwhile Edgeworth expansions approximate their underlying distribution \citep{Sargan1976,BhattacharyaRao1976,Hall1992,Ghosh1994}. As we focus on the mean-squared error, we primarily build on the former tools, although the latter are useful for rigorous results on tail behavior and inference. \citet{HodgesLehmann1970} introduced asymptotic deficiency as a measure of the additional sample size required by the less accurate of two first-order equivalent procedures, and \citet{Rothenberg1984} emphasized that higher-order approximations should be consulted when competing estimators have the same limiting distribution. Parallel to our contribution, \citet{NeweySmith2004} present a higher-order comparison of the first-order equivalent general method of moments and generalized empirical likelihood estimators. \citet{Tan2014} studies the second-order properties of regression and likelihood calibration estimators in survey sampling and missing-data problems, finding that neither consistently dominates. \citet{ChiangMatsushitaOtsu2026} derive second-order stochastic expansions of regression-adjusted estimators to compare inference in a design-based, growing $p$ regime. However, to our knowledge we are the first to compare interacted and simple regression in a superpopulation setting using higher-order expansions.
 
\paragraph{Variance estimation.}
Under superpopulation sampling, the fixed-design based conventional sandwich variance for the interacted estimator ignores the variance of the sample covariate mean used for centering \citep[Section~5.3]{imbens2009recent}; see also \citet[Section~5]{Lin2013}. \citet{imbens2009recent} add an estimate of the omitted term, and \citet{negi2021revisiting} derive the asymptotic variance that accounts for it; we take this result for granted. General finite-sample behavior of heteroskedasticity-robust variance estimators is studied by \citet{EfronStein1981}, \citet{MacKinnonWhite1985}, \citet{ChesherJewitt1987}, \citet{LongErvin2000} and \citet{imbens2016robust}; the HC* nomenclature was introduced in \citet{MacKinnonWhite1985}. It is well known that despite their popularity in applied research, HC0 and HC1 are prone to undercovering in finite samples, and the statistics literature typically recommends switching to the jackknife (or HC3 in the \citet{MacKinnonWhite1985} terminology). \citet{lei2021regression} and \citet{ChiangMatsushitaOtsu2026} demonstrate this point specifically in the case of regression adjustment inference. Following this work, we present an empirical study of feasible inference using regression adjustment combined with robust standard errors, but we do not claim to contribute to this literature from a theoretical standpoint.

\section{Asymptotics of Regression Adjustment}
\label{sec:asymptotics}

We focus on completely randomized trials (CRTs) with causal effects formalized using the potential outcomes model \citep{imbens2015causal}. We assume the covariates $X_i\in \RR^p$ and potential outcomes $Y_i(0), \, Y_i(1) \in \RR$ are drawn IID from a distribution $P$, with average treatment effect $\tau = \EE{Y_i(1) - Y_i(0)}$. Both regression estimators below include an intercept, and interacted regression is evaluated at the sample mean of the covariates, so translating the covariates leaves both estimators unchanged; we therefore assume without loss of generality that $\EE{X}=0$. Treatment is assigned randomly conditional on the fixed treated group size $n_1=\sum_{i=1}^n W_i=\lfloor \pi n\rfloor$; treatment assignments are exchangeable but not independent.  The analyst observes $(X_i, \, Y_i, \, W_i)$ for $i = 1, \, \ldots, \, n$, where $Y_i = Y_i(W_i)$. We consider the following three estimators of the ATE:
\begin{equation}
\label{eq:estimators}
\begin{split}
&\htau_{DM} = \frac{1}{\abs{\cb{i : W_i = 1}}} \sum_{\cb{i : W_i = 1}} Y_i - \frac{1}{\abs{\cb{i : W_i = 0}}} \sum_{\cb{i : W_i = 0}} Y_i, \\
&\htau_{REG} = \htau,  \text{ with }  Y_i \sim \alpha + X_i \beta + W_i \tau, \\
&\htau_{IREG} = \htau + \frac{1}{n} \sum_{i = 1}^n X_i \hdelta,  \text{ with } Y_i \sim \alpha + X_i \beta + W_i \tau+  W_i X_i \delta,
\end{split}
\end{equation}
where, e.g., $Y_i \sim \alpha + X_i \beta + W_i \tau$ means we algorithmically run a regression of $Y_i$ against $X_i$ and $W_i$ with an intercept (without making any claims on well-specification of such a regression). To describe their first-order behavior it is helpful to define the best linear approximation to the conditional response function of the potential outcomes,
\begin{equation}
\label{eq:regr}
\cb{\alpha_{(w)}^*, \ \beta_{(w)}^*} =\argmin_{\alpha_{(w)},\,\beta_{(w)}} \cb{\EE{\p{Y_i(w) - X_i \beta_{(w)} - \alpha_{(w)}}^2}},
\end{equation}
together with the oracle residuals
\begin{equation}
\label{eq:epsi}
\varepsilon_i(w) = Y_i(w)-\alpha^*_{(w)}-X_i^\top\beta^*_{(w)},
\end{equation}
Writing $Z_i=\Sigma^{-1/2}X_i$ with $\Sigma=\Var{X_i}$, we have $\EEE[Z_i]=0$, $\EEE[Z_iZ_i^\top]=I_p$, $\EEE[Z_i\varepsilon_i(w)]=0$ and $\varepsilon_i = \varepsilon_i(W_i)$. The first-order behavior of the three estimators is well known \citep{negi2021revisiting}.

\begin{prop}
\label{prop:first_order}
Suppose that $X_i$ and $Y_i(w)$ have finite second moments and that $\Var{X_i}$ is invertible. Then all three estimators above are asymptotically normal, $\sqrt{n}(\htau_\cdot - \tau) \Rightarrow \nn(0, \, V_\cdot)$, with limiting variances
\begin{equation}
\begin{split}
&V_{DM} = V_{IREG} +  \Var{X_i\p{\sqrt{\frac{\pi}{1 - \pi}} \, \beta_{(0)}^* + \sqrt{\frac{1 - \pi}{\pi}} \, \beta_{(1)}^*}}, \\
&V_{REG} = V_{IREG} + \frac{\p{1 - 2\pi}^2}{\pi(1 - \pi)} \Var{X_i\p{\beta_{(1)}^* - \beta_{(0)}^*}}, \\
&V_{IREG} = \Var{X_i\p{\beta_{(1)}^* - \beta_{(0)}^*}} + \sum_{w = 0, \, 1} \frac{1}{\PP{W_i = w}} \EE{\p{Y_i(w) - X_i\beta_{(w)}^* - \alpha_{(w)}^*}^2}.
\end{split}
\end{equation}
\end{prop}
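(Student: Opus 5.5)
We will reduce each estimator to a sample average plus an $o_P(n^{-1/2})$ remainder. From that linear representation the asymptotic variance follows from a CLT adapted to the completely randomized design. Throughout, $\mu_w=\EE{Y_i(w)}=\alpha^*_{(w)}$ (since $\EE{X}=0$), so $\tau=\alpha^*_{(1)}-\alpha^*_{(0)}$. We write $\beta_w=\beta^*_{(w)}$ and $\bar X_w$ for the arm-wise covariate means.

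\textbf{Linear representations.} For difference in means, $\htau_{DM}-\tau=\bar\varepsilon_1-\bar\varepsilon_0+\bar X_1\beta_1-\bar X_0\beta_0$ exactly.

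For interacted regression, the fitted model is equivalent to separate OLS fits in each arm. Write $\hat\alpha_w,\hat\beta_w$ for the arm-$w$ intercept and slope. Then $\htau_{IREG}=(\hat\alpha_1-\hat\alpha_0)+\bar X(\hat\beta_1-\hat\beta_0)$, and $\hat\alpha_w+\bar X\hat\beta_w=\bar Y_w-(\bar X_w-\bar X)\hat\beta_w$. The arm-wise LLN and finite second moments give $\hat\beta_w=\beta_w+o_P(1)$, and $\bar X_w-\bar X=O_P(n^{-1/2})$. Hence
\[
\htau_{IREG}-\tau=\bar\varepsilon_1-\bar\varepsilon_0+\bar X(\beta_1-\beta_0)+o_P(n^{-1/2}).
\]
For simple regression, Frisch--Waugh gives $\htau_{REG}=\bar Y_1-\bar Y_0-(\bar X_1-\bar X_0)\hat\beta$, where $\hat\beta$ is the pooled within-arm slope. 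Its limit is $\bar\beta=\Sigma^{-1}(\pi\Sigma\beta_1+(1-\pi)\Sigma\beta_0)=\pi\beta_1+(1-\pi)\beta_0$, so $\htau_{REG}-\tau=\bar\varepsilon_1-\bar\varepsilon_0+\bar X_1\beta_1-\bar X_0\beta_0-(\bar X_1-\bar X_0)\bar\beta+o_P(n^{-1/2})$. Invertibility of $\Sigma$ is what makes each slope consistent.

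\textbf{CLT and variances.} Write $\bar X=\pi\bar X_1+(1-\pi)\bar X_0$ asymptotically, and $\bar X_1-\bar X_0=D$. With fixed $n_1$, the vectors $\sqrt n(\bar X,\,D,\,\bar\varepsilon_1,\,\bar\varepsilon_0)$ are jointly asymptotically normal. This follows from a CLT for sampling without replacement (conditioning on the pooled sample) or, more simply, from the representation of a CRT as a random permutation of IID units, so each arm is an IID sample of size $n_w$. Because arms are disjoint IID subsamples, $\sqrt n\,\bar X_w$ and $\sqrt n\,\bar\varepsilon_w$ are independent across arms. Within an arm they are uncorrelated, since $\EE{X\varepsilon(w)}=0$, with variances $\Sigma/\pi_w$ and $\EE{\varepsilon(w)^2}/\pi_w$. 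Then $\bar X$ and $D$ are asymptotically independent, with $n\Var{\bar X}\to\Sigma$ and $n\Var{D}\to\Sigma/(\pi(1-\pi))$.

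Writing each linear term as $\bar X(\cdot)+D(\cdot)$ gives the variances. For IREG, the term $\bar X(\beta_1-\beta_0)$ yields $V_{IREG}$ directly. For REG, $\bar X_1\beta_1-\bar X_0\beta_0=\bar X(\beta_1-\beta_0)+D((1-\pi)\beta_1+\pi\beta_0)$; subtracting $D\bar\beta$ leaves $D(1-2\pi)(\beta_1-\beta_0)$, which yields the extra term $(1-2\pi)^2/(\pi(1-\pi))\Var{X(\beta_1-\beta_0)}$. For DM, the extra term is $D((1-\pi)\beta_1+\pi\beta_0)$, whose variance is $\Var{X(\sqrt{(1-\pi)/\pi}\,\beta_1+\sqrt{\pi/(1-\pi)}\,\beta_0)}$. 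This matches the stated formula.

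\textbf{Main obstacle.} The delicate step is the joint CLT under the dependent, fixed-$n_1$ assignment with only second moments. I would handle it with the permutation argument: conditional on the assignment vector, the two arms are independent IID samples of deterministic sizes $\lfloor\pi n\rfloor$ and $n-\lfloor\pi n\rfloor$. The Lindeberg CLT then applies arm by arm, and the LLN for the Gram matrices gives the $o_P(n^{-1/2})$ remainders via Slutsky.
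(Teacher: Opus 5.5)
Your proof is correct. The paper does not prove Proposition~\ref{prop:first_order}: it calls the result well known and cites \citet{negi2021revisiting}. What you have written is therefore a self-contained replacement for that citation rather than a variant of an in-paper argument.

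Your two ingredients are exactly the ones the appendix later builds on, both appearing in Lemma~\ref{lemm:two_sample}:
\begin{itemize}
\item the Frisch--Waugh--Lovell forms $\htau_\REG=\bar Y_1-\bar Y_0-d_X^\top\hat\beta_{\rm pool}(r)$ and $\htau_\IREG=\bar Y_1-\bar Y_0-d_X^\top\{(1-r)\hat\beta_1+r\hat\beta_0\}$;
\item the fact that, conditional on the assignment, the two arms are independent IID samples of fixed sizes.
\end{itemize}

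Compared with importing the result from an IID-sampling framework where each $W_i$ is drawn independently, your route buys three things. First, it works directly under the fixed-$n_1$ design used here, so no separate conditioning-on-$n_1$ step is needed. Second, it uses only consistency of the arm Gram matrices, not a CLT for them, so second moments genuinely suffice. Third, the split $\bar X_1=\bar X+(1-r)D$, $\bar X_0=\bar X-rD$, with $\bar X$ and $D$ asymptotically independent, shows that REG and IREG differ only through the term $(1-2r)\,D(\beta_1-\beta_0)$. That term is precisely the first-order gap $-(1-2r)^2G/\{r(1-r)\}$ that Theorem~\ref{theo:crt_reduction} invokes at a general treated fraction $r$.

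One detail is worth adding. The arm sample covariances $C_w$ converge in probability to $\Sigma\succ0$, so all fits are uniquely defined on an event of probability tending to one. Hence the arbitrary completion of undefined fits does not affect the limits.
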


Thus $\htau_{IREG}$ weakly dominates both $\htau_{DM}$ and $\htau_{REG}$ at every $\pi$; its advantage over $\htau_{REG}$ is strict for generic distributions $P$ when $\pi\neq0.5$, while at $\pi = 0.5$ the two regression estimators have the same asymptotic variance
\begin{equation}
\label{eq:Vstar}
V^* = \Var{X_i\p{\beta_{(1)}^* - \beta_{(0)}^*}} + 2\EE{\varepsilon^2_i(1)} + 2\EE{\varepsilon^2_i(0)}.
\end{equation}
This is the setting we study. Because first-order theory is indifferent between the two methods, we turn to second-order asymptotics.

Suppose that we are comparing two estimators, $\htau_A$ and $\htau_B$, of a shared quantity $\tau$, whose mean-squared error rates are equivalent on the (first-order) $1/n$ but not on the (second-order) $1/n^2$ scale,
\begin{equation}
\label{eq:MSE_scaling}
\EE{\p{\htau_A - \tau}^2} = \frac{V}{n} + \frac{c_A}{n^2} + o\p{\frac{1}{n^2}}, \ \
\EE{\p{\htau_B - \tau}^2} = \frac{V}{n} + \frac{c_B}{n^2} + o\p{\frac{1}{n^2}}.
\end{equation}
\citet{HodgesLehmann1970} propose quantifying this difference in performance in terms of the number of extra samples required to compensate for using the less accurate estimator, i.e., they ask to solve for $k_n$ such that $\htau_B$ run on $k_n$ samples matches the accuracy of $\htau_A$ on $n$ samples as closely as possible. When this gap admits a limit $\limn (k_n - n) = \Delta \in \RR$, \citet{HodgesLehmann1970} refer to this limit $\Delta$ as the asymptotic deficiency of $\htau_B$ relative to $\htau_A$. Under the scaling \eqref{eq:MSE_scaling}, it is immediate that this limit exists, and that the asymptotic deficiency is:
\begin{equation}
\label{eq:deficiency}
\Delta = \p{c_B - c_A}/\,V.
\end{equation}
Following this program, our goal is to establish scaling results of the form \eqref{eq:MSE_scaling} for both $\htau_{REG}$ and $\htau_{IREG}$, thus enabling us to measure the asymptotic deficiency of $\htau_{IREG}$ relative to $\htau_{REG}$.

To focus on the typical estimator behavior rather than risk dominated by rare error blow-up, we consider a capped mean-squared error. Define
\begin{equation}
\label{eq:capped_mse}
\overline{\operatorname{MSE}}\p{\htau} = \EE{\p{\htau - \tau}^2 \land 1}
\end{equation}
This updated loss agrees with the regular squared error whenever $|\htau-\tau|\le 1$ while bounding the contribution of unstable regression fits. We use the same fixed cutoff for both estimators. When a fit is not uniquely defined, set $\hat{\tau}$ to any fixed finite value; the capped loss makes the choice immaterial.

The proof of the following result is given in Appendix \ref{sec:proofs}. Appendix \ref{sec:numerical-verification} gives simulation-based comparisons of realized, uncapped MSE with the formula in a variety of settings.

\begin{theo}
\label{theo:deficiency}
Consider a CRT on $n$ units with $n_1 = \lfloor n/2\rfloor$, and suppose that
$(X_i, Y_i(0), Y_i(1))$ are IID from a distribution $P$, $\Sigma = \Var{X_i} \succ 0$, and
\[
\EEE\big[\|X_i\|^{12}\big], \ \EEE\big[|Y_i(w)|^{12}\big] < \infty \text{ for } w = 0, \, 1.
\]
Define the whitened features and corresponding effect-heterogeneity coefficients as
\begin{equation}
\label{eq:whitening}
Z_i = \Sigma^{-1/2}X_i, \quad \Sigma = \Var{X_i},
\qquad
\gamma = \Sigma^{1/2}\p{\beta^*_{(1)}-\beta^*_{(0)}},
\end{equation}
and let
\[
K=\EEE[ZZ^\top\|Z\|^2],\qquad
S=\EEE[Z\|Z\|^2],\qquad
S_\gamma=\EEE[ZZ^\top(Z^\top\gamma)],
\]
\[
\mathcal E=\EEE[ZZ^\top\varepsilon],
\qquad
\mathcal E_{W}=\tfrac14\big(\EEE[ZZ^\top\varepsilon(1)]-\EEE[ZZ^\top\varepsilon(0)]\big).
\]
Then
\begin{equation}
\begin{split}
\label{eq:Dp_general}
&\overline{\operatorname{MSE}}\p{\htau_{IREG}}-\overline{\operatorname{MSE}}\p{\htau_{REG}} = \frac{D_p}{n^2}+O(n^{-3}), \\
&D_p
=
-(2p+3)\|\gamma\|^2
+\gamma^\top K\gamma
-(\gamma^\top S)^2
-3\|S_\gamma\|_F^2\\
&\quad\quad\quad\quad-16\tr(S_\gamma \mathcal E_W)
-8(\gamma^\top S)\tr(\mathcal E_W)
+8\gamma^\top \mathcal E_WS
+8\tr(\mathcal E^2)+8S^\top \EEE[Z\varepsilon^2].
\end{split}
\end{equation}
Consequently, when $V^*>0$, the asymptotic deficiency of IREG relative to REG, computed from the capped MSE, is $\Delta = D_p/V^*$ with $V^*$ as in \eqref{eq:Vstar}.
\end{theo}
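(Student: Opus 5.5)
We will prove the theorem by a stochastic expansion of each estimator around $\tau$ to third order in $n^{-1/2}$, followed by a moment computation. First we reduce to whitened coordinates. Both estimators are invariant under affine reparametrization of the covariates, so we may take $X_i=Z_i$ with $\EEE[Z]=0$ and $\Var{Z}=I_p$; then $\gamma=\beta^*_{(1)}-\beta^*_{(0)}$. We write $Y_i(w)=\alpha^*_{(w)}+Z_i^\top\beta^*_{(w)}+\varepsilon_i(w)$. Both $\htau_{REG}$ and $\htau_{IREG}$ are smooth functions of a finite vector of sample moments:
\[
\bar Z_w,\quad \hat\Sigma_w=\tfrac{1}{n_w}\sum_{W_i=w}Z_iZ_i^\top,\quad \tfrac{1}{n_w}\sum_{W_i=w}Z_i\varepsilon_i,\quad \bar\varepsilon_w,\qquad w=0,1,
\]
which we center at their population values. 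Explicitly, $\htau_{IREG}=\bar Y_1-\bar Y_0-\bar Z^\top(\hat\beta_1-\hat\beta_0)$ with separate-arm OLS slopes, while $\htau_{REG}=\bar Y_1-\bar Y_0-(\bar Z_1-\bar Z_0)^\top\hat\beta_{pool}$ with the within-arm pooled slope. We then Taylor-expand each estimator as
\[
\htau-\tau=T_1+T_2+T_3+R,
\]
where $T_k$ is a homogeneous polynomial of degree $k$ in the centered moments, so that $T_k=O_p(n^{-k/2})$. The leading term $T_1$ is common to both estimators at $\pi=1/2$; this is the content of Proposition~\ref{prop:first_order}. Consequently
\[
\MSE_{IREG}-\MSE_{REG}=2\EEE[T_1(T_2^I-T_2^R)]+\EEE[(T_2^I)^2-(T_2^R)^2]+2\EEE[T_1(T_3^I-T_3^R)]+O(n^{-3}).
\]
Two further facts are needed to control this expression. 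First, the difference $T_2^I-T_2^R$ involves only products of $\bar Z_1-\bar Z_0$ with the arm-specific covariance fluctuations and the $\gamma$- and $\varepsilon$-weighted cross moments. Second, terms with odd total degree have expectations of order $n^{-3/2}$ or $n^{-5/2}$ times the relevant moments, and these vanish or cancel at the needed order because of symmetry under relabeling the arms when $n_1=n_0$ (up to the floor, whose effect is $O(n^{-3})$).

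Next we justify replacing the capped MSE by the moments of the truncated expansion. We take a good event $G_n$ on which $\|\hat\Sigma_w-I\|\le 1/2$ for both arms and all centered moments are at most $n^{-1/2}\log n$. Using twelve finite moments and Rosenthal/Fuk--Nagaev inequalities for sampling without replacement, we show $\PP{G_n^c}=O(n^{-3})$. On $G_n$ the remainder satisfies $|R|\lesssim n^{-2}\mathrm{polylog}(n)$; more usefully, $R$ is a smooth fourth-order remainder whose products with $T_1$ have expectation $O(n^{-3})$ by Hölder. The cap contributes at most $\PP{G_n^c}+\PP{|\htau-\tau|>1,\,G_n}=O(n^{-3})$. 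Moment bounds of order twelve suffice because the highest products we must bound, such as $T_1 T_3$ and $T_2^2$, involve up to degree-four products of sample means of quantities like $Z Z^\top\varepsilon$ that are cubic in the data. This step is technical but standard.

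The main obstacle is the exact evaluation of the second-order moments under the CRT design. We would handle the dependence induced by fixed $n_1$ by writing arm means as sums over exchangeable indicators, using
\[
\EEE[W_iW_j]=\tfrac{n_1(n_1-1)}{n(n-1)},\qquad i\ne j,
\]
together with its higher-order analogues. This reduces everything to expectations of products of $Z$, $\gamma^\top Z$, $\varepsilon(0)$ and $\varepsilon(1)$ at coinciding indices. Within each needed expectation we then enumerate index coincidence patterns; only patterns with at most $k+1$ distinct indices among $2k$ contribute at order $n^{-2}$. Each surviving pattern produces one of the invariants $K$, $S$, $S_\gamma$, $\mathcal E$, $\mathcal E_W$ and $\EEE[Z\varepsilon^2]$, or the dimension-count terms $p\|\gamma\|^2$ coming from $\tr(\hat\Sigma-I)^2$-type fluctuations. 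In particular, the coefficient $-(2p+3)\|\gamma\|^2$ arises from the IREG correction $\bar Z^\top(\hat\beta_1-\hat\beta_0)$ interacting with the fluctuations of $\hat\Sigma_w^{-1}$.

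Because the bookkeeping is heavy, we would organize it symbolically: represent every $T_k$ as a polynomial in indexed random quantities, let a computer-algebra routine (as in the SymPy certificates mentioned in the paper) enumerate partitions and collect the resulting coefficients, and check the final formula for $D_p$ against simulations. Finally, with the scaling \eqref{eq:MSE_scaling} in hand and $V^*>0$, the claim $\Delta=D_p/V^*$ follows directly from \eqref{eq:deficiency}.
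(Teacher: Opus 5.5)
Your outline (cubic Taylor polynomial in the arm moments, localization of the capped loss, cumulant counting, symbolic bookkeeping) is the paper's. Computing directly under the balanced CRT is a legitimate alternative to the paper's Bernoulli computation plus the $+4G$ correction of Theorem~\ref{theo:crt_reduction}. Conditional on $W$ the arms are two independent IID samples of size $n/2$, so you do not need exchangeable-indicator moments at all.

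However, your ``second fact'' is false, and it contradicts your own display, which rightly keeps $2\EEE[T_1(T_2^I-T_2^R)]$. A product of three centered means from one arm has expectation exactly $\kappa_3/m^2$, and cross-arm products vanish. So degree-three terms are of order $n^{-2}$, not $n^{-3/2}$, and they are main contributors rather than something that cancels. Write $d_Z=\bar Z_1-\bar Z_0$ and let $A_w$ be the arm-$w$ average of $ZZ^\top-I_p$. Then $T_1=\gamma^\top\bar Z+\bar\varepsilon_1-\bar\varepsilon_0$ and $T_2^I-T_2^R=\tfrac14 d_Z^\top(A_1-A_0)\gamma$, and identity \eqref{eq:cv_identity} gives
\[
2\EEE[T_1(T_2^I-T_2^R)]=\frac{2}{n^2}\,\EEE\big[(\|Z\|^2-1)(\gamma^\top Z)\{\gamma^\top Z+\varepsilon(1)-\varepsilon(0)\}\big].
\]
In the Gaussian homoskedastic case this is $2(p+1)\|\gamma\|^2/n^2$, so discarding it would turn $D_p=-(p+1)\|\gamma\|^2$ into $-3(p+1)\|\gamma\|^2$.

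Swapping the arms is not a symmetry of $P$: it maps $P$ to a different law with $\gamma\mapsto-\gamma$ and $\mathcal E_W\mapsto-\mathcal E_W$. It therefore only shows that $D_p$ is invariant under that joint sign flip. What actually excludes half-integer powers, and makes the degree-five terms $\EEE[T_2T_3]$ and $\EEE[T_1T_4]$ of order $n^{-3}$, is the moment--cumulant count of Lemma~\ref{lemm:conditional}. Every surviving block has size at least two, so a degree-$a$ moment is a polynomial in $1/m$ beginning at $m^{-\lceil a/2\rceil}$.

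Separately, your IREG formula is wrong. It should be $\bar Y_1-\bar Y_0-(\bar Z_1-\bar Z)^\top\hbeta_1+(\bar Z_0-\bar Z)^\top\hbeta_0$, which equals $\bar Y_1-\bar Y_0-\tfrac12 d_Z^\top(\hbeta_1+\hbeta_0)$ when $n_1=n_0$. As you wrote it, its linear term differs from REG's, contradicting your claim that $T_1$ is shared.

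Your localization also fails to give the $O(n^{-3})$ remainder, for two reasons.
\begin{itemize}
\item The coordinates $ZZ^\top$ and $Z\varepsilon$ are quadratic in the data, so twelve moments of $(X,Y)$ leave them only six. With six moments, $\PP{G_n^c}$ for your shrinking threshold $n^{-1/2}\log n$ is in general of order $n^{-2}$ up to logarithms; it suffices that one of the $n$ summands exceed $\sqrt n\log n$. On $G_n^c$ you bound the capped loss only by one, so this does not reach $O(n^{-3})$.
\item H\"older gives only $O(n^{-1/2})\cdot O(n^{-2})=O(n^{-5/2})$ for $\EEE|T_1R|$. The missing half power comes from sign cancellation, not from moment bounds.
\end{itemize}
Lemma~\ref{lemm:localization} repairs both points. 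Use the fixed-radius event $\{\|d\|\le\delta\}$, whose complement has probability at most $\delta^{-6}\EEE\|d\|^6=O(n^{-3})$. Then expand the squared error through degree five: the degree-five terms are $O(n^{-3})$ by the cumulant count above, and the remainder is $O(\|d\|^6)$ with expectation $O(n^{-3})$.
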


\begin{rema}[Bias and variance contributions]
\label{rema:bias}
One may naturally be interested in how the overall MSE difference decomposes into bias and variance. Writing $e_{j}=\max\{-1,\min\{\htau_j-\tau,1\}\}$, define
\[
\mathrm{Bias}_j=\EEE[e_{j}],
\qquad
\mathrm{Var}_j=\Var{e_{j}},
\qquad
\overline{\operatorname{MSE}}(\htau_j)=\mathrm{Var}_j+\mathrm{Bias}_j^2.
\]
Evaluating the leading bias term for the two estimators (Lemma \ref{lemm:bias} in Appendix \ref{sec:proofs}) gives
\[
\mathrm{Bias}_{\REG}
= -\frac{4\tr(\mathcal E_W)+\gamma^\top S}{n}+O(n^{-2}),
\qquad
\mathrm{Bias}_{\IREG}
= -\frac{4\tr(\mathcal E_W)}{n}+O(n^{-2}),
\]
so that
$n^2\{\mathrm{Bias}^2_{\IREG}-\mathrm{Bias}^2_{\REG}\}
=-(\gamma^\top S)^2-8(\gamma^\top S)\tr(\mathcal E_W)$
to leading order, indeed directly reproducing two terms;
the remaining terms are due to variance.
\end{rema}

The expression for $D_p$ combines four key blocks: a covariate shape block scaled by the linear HTE $\gamma$; terms crossing $\gamma$ with residual asymmetry; residual heteroskedasticity; and nonlinear misspecification. No single term determines the sign and either method can come up on top---even when the population linear slopes coincide (i.e., when $\gamma = 0$). When $\gamma = 0$, the expression in \eqref{eq:Dp_general} reduces to
\[
D_p=8\tr(\mathcal E^2)+8S^\top \EEE[Z\varepsilon^2],
\]
and while $\tr(\mathcal E^2) = \|\mathcal E\|_F^2 \geq 0$, the term $S^\top \EEE[Z\varepsilon^2]$ can have either sign, so $\Delta$ can be positive or negative even when $\gamma = 0$.
To interpret the scale of $\Delta$ write
\[
R^2_\tau
=
\frac{\|\gamma\|^2}{\|\gamma\|^2+4\EEE[\varepsilon_i^2]}
= \frac{\|\gamma\|^2}{V^*},
\]
which, when $V^*>0$, quantifies the share of the first-order variance $V^*$ attributable to linear treatment-effect heterogeneity.

\begin{coro}
\label{coro:homosk}
Under the conditions of Theorem \ref{theo:deficiency}, suppose furthermore that the potential outcomes satisfy a well-specified linear model with homoskedastic errors, $\EE{\varepsilon_i(w)\cond X_i}=0$ and $\EE{\varepsilon_i^2(w)\cond X_i}=\sigma_w^2$ for $w\in\{0,1\}$. Then $\mathcal E = \mathcal E_W = 0$ and $\EEE[Z\varepsilon^2]=0$. For $\gamma\neq0$,
\begin{equation}
\label{eq:homosk_deficiency}
\Delta
=
R_\tau^2
\left[
-(2p+3)
+
\frac{\gamma^\top K\gamma-(\gamma^\top S)^2-3\|S_\gamma\|_F^2}{\|\gamma\|^2}
\right].
\end{equation}
In the Gaussian sub-case $K = (p+2)I_p$, $S=0$ and $S_\gamma = 0$, so that $\Delta = -(p+1)R_\tau^2$.
If $\gamma=0$, the second-order MSE difference vanishes for any covariate distribution; when $V^*>0$, $R_\tau^2=0$ and $\Delta=0$.
\end{coro}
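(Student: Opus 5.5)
The corollary is a direct specialization of Theorem \ref{theo:deficiency}, so the plan is to verify that each residual-dependent moment in \eqref{eq:Dp_general} vanishes and then simplify. First, by iterated expectations, $\EEE[ZZ^\top\varepsilon(w)]=\EEE[ZZ^\top\EEE[\varepsilon(w)\mid X]]=0$ for each $w$. This gives $\mathcal E_W=0$ immediately. For $\mathcal E=\EEE[ZZ^\top\varepsilon]$ with $\varepsilon=\varepsilon(W)$, I would use that $W$ is independent of $(X,\varepsilon(0),\varepsilon(1))$ marginally and is $\Bern(1/2)$ up to the floor. Then $\mathcal E=\tfrac12\sum_w\EEE[ZZ^\top\varepsilon(w)]=0$; if the proof's $\varepsilon$ is defined with exact $1/2$ weights, this holds exactly. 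Similarly, $\EEE[Z\varepsilon^2]=\tfrac12\sum_w\EEE[Z\,\EEE[\varepsilon(w)^2\mid X]]=\tfrac12\sum_w\sigma_w^2\EEE[Z]=0$, using $\EEE[Z]=0$.

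Substituting these into $D_p$ removes every term except the covariate-shape block:
\[
D_p=-(2p+3)\|\gamma\|^2+\gamma^\top K\gamma-(\gamma^\top S)^2-3\|S_\gamma\|_F^2 .
\]
Next, I would check that $V^*=\|\gamma\|^2+4\EEE[\varepsilon^2]$. The first term holds because $\Var{X(\beta^*_{(1)}-\beta^*_{(0)})}=\gamma^\top\EEE[ZZ^\top]\gamma=\|\gamma\|^2$. The second holds because $2\EEE[\varepsilon^2(1)]+2\EEE[\varepsilon^2(0)]=4\EEE[\varepsilon^2]$ under the half mixing. For $\gamma\ne0$ we have $V^*>0$, so $\Delta=D_p/V^*$ is defined. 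Writing $D_p/V^*=(\|\gamma\|^2/V^*)\cdot(D_p/\|\gamma\|^2)$ then yields \eqref{eq:homosk_deficiency}.

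For the Gaussian sub-case with $Z\sim\nn(0,I_p)$, I would use Isserlis' theorem. It gives $\EEE[Z_jZ_k\|Z\|^2]=\sum_l\EEE[Z_jZ_kZ_l^2]=\delta_{jk}(p+2)$, so $K=(p+2)I_p$. Third moments vanish, so $S=0$ and $S_\gamma=0$. The bracket is then $-(2p+3)+(p+2)=-(p+1)$, which gives $\Delta=-(p+1)R_\tau^2$.

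When $\gamma=0$, every remaining term of $D_p$ carries a factor of $\gamma$, so $D_p=0$ for any covariate distribution. If $V^*>0$, then $R_\tau^2=0$ and $\Delta=0$.

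The only delicate point is the treatment of $\varepsilon=\varepsilon(W_i)$ in $\mathcal E$ and $\EEE[Z\varepsilon^2]$, since with $n_1=\lfloor n/2\rfloor$ the assignment is not exactly Bernoulli. I would resolve this by reading these population quantities as the $\pi=1/2$ mixtures, which is how they enter Theorem \ref{theo:deficiency}; any floor effect is absorbed into its $O(n^{-3})$ remainder. Under the conditional mean-zero and homoskedastic assumptions, both mixture components vanish individually, so the conclusion does not depend on the weights.
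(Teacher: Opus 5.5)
Your proposal is correct and follows the same route as the paper, which gives no separate proof and treats the corollary as a direct substitution into Theorem \ref{theo:deficiency}: iterated expectations with $\EEE[Z]=0$ kill the residual blocks, $V^*=\|\gamma\|^2+4\EEE[\varepsilon_i^2]$ lets you factor out $R_\tau^2$, and Isserlis' theorem supplies the Gaussian constants. You also handle the floor issue correctly: the arm weighting in $\mathcal E$ and $\EEE[Z\varepsilon^2]$ is immaterial because each arm's component vanishes on its own.
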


In the Gaussian homoskedastic case, the second-order coefficient favors IREG and grows linearly in the number of covariates, with $p$ fixed as $n\to\infty$. This complicates the parameter-count heuristic of \citet{zhao2023covariate}, showing that the finite sample advantage of either estimator cannot be reduced to any simple feature of the DGP.

However, the following corollary sheds some light on why REG can outperform IREG in settings such as \eqref{eq:warmup}.

\begin{coro}
\label{coro:mult_normproportional}
Under the conditions of Theorem \ref{theo:deficiency}, suppose furthermore that the potential outcomes follow
\[
Y_i(w) = \alpha^*_{(w)} + X_i^\top\beta^*_{(w)} + \eta_i\|Z_i\|,
\]
where $\EE{\eta_i}=0$, $\EE{\eta_i^2}=\nu^2$ and $\eta_i$ is independent of $(Z_i,W_i)$. Then $\EE{\varepsilon_i(w)\cond Z_i}=0$ and $\EE{\varepsilon_i^2(w)\cond Z_i}=\nu^2\|Z_i\|^2$, where $\|\cdot\|$ is the Euclidean norm, so that $\EE{Z_i\varepsilon_i^2(w)}=\nu^2S$ and $\EE{\varepsilon_i^2}=\nu^2p$. For $\gamma\neq0$,
\begin{equation}
\label{eq:mult}
\Delta
=
R_\tau^2
\left[
-(2p+3)
+
\frac{\gamma^\top K\gamma-(\gamma^\top S)^2-3\|S_\gamma\|_F^2}{\|\gamma\|^2}
\right]
+
\frac{8\nu^2\,S^\top S}{\|\gamma\|^2+4\nu^2 p}.
\end{equation}
When $\gamma=0$ and $\nu^2>0$, the deficiency is
$\Delta=2\|S\|^2/p$, which is nonnegative and is strictly
positive whenever $S\neq0$.
\end{coro}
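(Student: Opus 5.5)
This corollary is a direct specialization of Theorem \ref{theo:deficiency}: I compute each ingredient of $D_p$ under the stated model and then divide by $V^*$. First I identify the residuals. Since $\eta_i$ is independent of $Z_i$ with mean zero, $\EE{\eta_i\|Z_i\|\cond Z_i}=0$. The conditional mean of $Y_i(w)$ is therefore exactly linear, so the population projection \eqref{eq:regr} returns the given $\alpha^*_{(w)},\beta^*_{(w)}$, and $\varepsilon_i(w)=\eta_i\|Z_i\|$ for both $w$. It follows that $\EE{\varepsilon_i^2(w)\cond Z_i}=\nu^2\|Z_i\|^2$, hence $\EE{Z\varepsilon^2}=\nu^2\EE{Z\|Z\|^2}=\nu^2 S$ and $\EE{\varepsilon^2}=\nu^2\EE{\|Z\|^2}=\nu^2\tr(I_p)=\nu^2 p$. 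Here $\varepsilon=\varepsilon(W)$, and since $\varepsilon(0)=\varepsilon(1)$ the identities hold regardless of $W$.

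Next I show that every term of $D_p$ involving $\mathcal E$ or $\mathcal E_W$ vanishes. By independence, $\EE{ZZ^\top\varepsilon(w)}=\EE{ZZ^\top\|Z\|}\,\EE{\eta}=0$, which gives $\mathcal E=0$ and $\mathcal E_W=0$. (Note that $\mathcal E_W=0$ would follow even without this, from $\varepsilon(1)=\varepsilon(0)$.) Substituting into \eqref{eq:Dp_general} leaves
\[
D_p=-(2p+3)\|\gamma\|^2+\gamma^\top K\gamma-(\gamma^\top S)^2-3\|S_\gamma\|_F^2+8\nu^2 S^\top S.
\]
The moment conditions of the theorem need a short check. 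Finite twelfth moments of $Y_i(w)$ require $\EE{|\eta_i|^{12}}<\infty$ together with the assumed $\EE{\|X_i\|^{12}}<\infty$. I would state this as implicit in the phrase ``under the conditions of Theorem \ref{theo:deficiency}.''

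Then I compute $V^*$ from \eqref{eq:Vstar}. We have $\Var{X_i(\beta^*_{(1)}-\beta^*_{(0)})}=\Var{Z_i^\top\gamma}=\|\gamma\|^2$, and $2\EE{\varepsilon^2(1)}+2\EE{\varepsilon^2(0)}=4\nu^2 p$. So $V^*=\|\gamma\|^2+4\nu^2p$, consistent with $R_\tau^2=\|\gamma\|^2/V^*$. Dividing $D_p$ by $V^*$ and factoring $\|\gamma\|^2/V^*=R_\tau^2$ out of the $\gamma$-block gives \eqref{eq:mult} for $\gamma\ne0$.

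For $\gamma=0$, all $\gamma$-terms drop out, including $S_\gamma=0$, which is linear in $\gamma$. Then $V^*=4\nu^2p>0$, and $\Delta=8\nu^2\|S\|^2/(4\nu^2p)=2\|S\|^2/p\ge0$, with strict positivity exactly when $S\neq0$.

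I do not expect a real obstacle here. The only point needing care is the identification step: checking that the model's coefficients coincide with the best linear projection, so that $\varepsilon_i(w)=\eta_i\|Z_i\|$. This uses $\EE{Z_i\eta_i\|Z_i\|}=0$, which in turn relies on the independence of $\eta_i$ from $Z_i$ and not merely on uncorrelatedness.
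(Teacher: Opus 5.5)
Your proposal is correct and follows the same route the paper takes implicitly; the paper gives no separate proof of this corollary. You identify $\varepsilon_i(w)=\eta_i\|Z_i\|$, note that $\mathcal E=\mathcal E_W=0$ and $\EEE[Z\varepsilon^2]=\nu^2S$, substitute into $D_p$ from \eqref{eq:Dp_general}, and divide by $V^*=\|\gamma\|^2+4\nu^2p$, with the $\gamma=0$ case and the extra moment requirement on $\eta_i$ handled correctly.
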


It is useful to consider how the terms in \eqref{eq:mult} compare in practice. In many applications the linear HTE signal is small relative to the residual departure from the regression model, so $R_\tau^2$ is small and the first summand rarely dominates unless its bracket is large. The bracketed component is of order $p$ in the leading cases---it equals $-(p+1)$ under Gaussian covariates---and its moment terms grow with the kurtosis and skewness of the underlying covariates, so it is sensitive to departures from Gaussianity in either direction.

The remaining term is nonnegative, since $8\nu^2\|S\|^2/(\|\gamma\|^2+4\nu^2p)\geq0$, and for $\nu>0$ it vanishes if and only if $S=\EEE[Z\|Z\|^2]=0$. Right-skewed covariates with a long upper tail, such as measures of user activity, engagement or spending, can produce a large third-moment vector $S$ after centering and whitening, in which case this term can dominate the first summand. The scalar case makes this transparent: with $p=1$, Corollary \ref{coro:mult_normproportional} gives
\[
\Delta
=
R_\tau^2\cb{\EE{Z_i^4}-4\EE{Z_i^3}^2-5}
+
\frac{8\nu^2\EE{Z_i^3}^2}{\gamma^2+4\nu^2}.
\]
Using Pearson's inequality $\EE{Z_i^4}\geq \EE{Z_i^3}^2+1$ \citep[see, e.g.,][]{rohatgi1989sharp} and $8\nu^2/(\gamma^2+4\nu^2) = 2(1-R^2_\tau)$,
\[
\Delta
\geq
-4R_\tau^2 + (2-5R_\tau^2)\EE{Z_i^3}^2 .
\]
So when $R_\tau^2 < 2/5$ the deficiency is positive as soon as $\EE{Z_i^3}^2 > 4R^2_\tau/(2-5R^2_\tau)$, and the lower bound grows linearly in $\EE{Z_i^3}^2$. 

\section{Feasible Inference}
\label{sec:feasible}

While the above results concern the precision of the point estimates provided by REG and IREG, in practice we are also interested in the validity of the resulting confidence intervals---which now depends jointly on how we estimate the standard error. As before, we operate in an assumption-lean framework, only considering estimators that are robust to misspecification of the underlying linear model.


We fit REG and IREG as in the R package \texttt{estimatr}, version 1.0.2 \citep{estimatr}: REG is \texttt{lm\char`\_robust(y \textasciitilde{} w + x)}, and IREG is \texttt{lm\char`\_lin(y \textasciitilde{} w, covariates = \textasciitilde{} x)}, with \texttt{se\char`\_type = "HC1"} (the package default is HC2). Because we operate in a superpopulation framework while these implementations are for fixed design sampling, we add a plug-in estimate of the variance of the sample covariate mean used for centering (\citealp[Section~5.3]{imbens2009recent}; \citealp{negi2021revisiting}):
\begin{equation}
\label{eq:vsp}
\hat G=\hdelta^\top\hat\Sigma_X\hdelta,
\qquad
\widehat{\mathrm{Var}}_{\mathrm{SP}}(\htau_\IREG)=\widehat{\mathrm{Var}}_{\mathrm{HC1}}(\htau_\IREG)+\frac{\hat G}{n},
\end{equation}
where $\hdelta$ is the vector of interaction coefficients and $\hat\Sigma_X$ the sample covariance of the covariates; $\hat G$ estimates $G=\|\gamma\|^2$. Estimation noise in the interaction coefficients can inflate the
plug-in correction, particularly when the true heterogeneity is small.

We choose to investigate HC1 as a popular benchmark: it is Stata's default and a recent survey \citep{Kranz2024} found it used in $98.1\%$ of a selected sample of 40,571 robust Stata regressions in reproduction packages of leading economics journals. We compare it with a robust alternative, the delete-one jackknife applied to the final estimate \citep{EfronStein1981,MacKinnonWhite1985}. Let $\htau_{(-i)}$ be the estimator recomputed on the sample with unit $i$ removed---both arm-specific fits are refitted and, for IREG, the covariates are recentred at the leave-one-out mean $\bar X_{(-i)}$---the jackknife variance estimate is then given by
\begin{equation}
\label{eq:jk}
\hat V_{\mathrm{JK}}
=
\frac{n-1}{n}\sum_{i=1}^n\big(\htau_{(-i)}-\bar\htau\big)^2,
\qquad
\bar\htau = \frac1n\sum_{i=1}^n \htau_{(-i)} .
\end{equation}
The jackknife offers two attractive attributes over HC1: conservativeness in expectation \citep{EfronStein1981}, and naturally accounting for superpopulation sampling variation by applying to the final packaged estimate. Thus for IREG we report HC1 with the plug-in correction \eqref{eq:vsp} alongside the jackknife; for REG, we directly report HC1 and jackknife. When removing unit $i$ leaves a rank-deficient design (unit $i$ has leverage one in its fit), $\htau_{(-i)}$ is undefined; we then omit it and apply \eqref{eq:jk} to the remaining $n'$ leave-one-out estimates, with $n'$ in place of $n$. All intervals use $t$ critical values with $n-k$ degrees of freedom, where $k$ is the number of regression coefficients.


Table \ref{tab:battery} compares all four joint procedure possibilities across \gz{hte:ba:ndesigns} DGPs. These DGPs exhibit a range of values of $R^2_\tau$ and covariate distributions, including Gaussian, skewed, heavy-tailed, and rare-value covariates. Appendix \ref{sec:designs} defines each design. All have $n=500$ and \gz{hte:reps} replications, and every procedure is applied to the same simulated datasets.

\begin{table}[tp]
\centering
\footnotesize
\setlength{\tabcolsep}{2.5pt}
\resizebox{\textwidth}{!}{%
\begin{tabular}{lr|c|cc|cc|c|cc|cc}
\toprule
& & \multicolumn{5}{c|}{\textbf{IREG}} & \multicolumn{5}{c}{\textbf{REG}} \\
& & & \multicolumn{2}{c|}{HC1 + plug-in} & \multicolumn{2}{c|}{jackknife} & & \multicolumn{2}{c|}{HC1} & \multicolumn{2}{c}{jackknife} \\
Design & $R^2_\tau$ & RMSE & width & cov. & width & cov. & RMSE & width & cov. & width & cov. \\
\midrule
\input{tables/jk_battery_body.tex} 
\bottomrule
\end{tabular}}
\caption{Root-mean-squared error of the point estimators, and median width and coverage (\%) of nominal $95\%$ intervals for the population ATE: IREG with HC1 plus the plug-in correction and with the direct jackknife \eqref{eq:jk}, and REG with HC1 and with the direct jackknife ($n=500$, \gz{hte:reps} replications; ``het.''\ denotes heteroskedastic noise). Designs are defined in Appendix \ref{sec:designs}; $g$ is the treatment-effect slope of the scalar lognormal designs. RMSE is computed over replications where both point estimators exist; coverage counts unavailable intervals as failures, and median width is computed over available intervals. In the rare-value design, \gz{jk:rare:IREG:degfrac}\% of replications have a leverage-one unit in an IREG arm, whose leave-one-out estimate is omitted from the jackknife (Section \ref{sec:feasible}).}
\label{tab:battery}
\end{table}

\begin{table}[tp]
\centering
\footnotesize
\setlength{\tabcolsep}{3pt}
\resizebox{\textwidth}{!}{%
\begin{tabular}{lr|ccc|ccc|ccc|ccc}
\toprule
& & \multicolumn{3}{c|}{IREG, HC1 + plug-in} & \multicolumn{3}{c|}{IREG, jackknife} & \multicolumn{3}{c|}{REG, HC1} & \multicolumn{3}{c}{REG, jackknife} \\
Design & $R^2_\tau$ & $80$ & $95$ & $99$ & $80$ & $95$ & $99$ & $80$ & $95$ & $99$ & $80$ & $95$ & $99$ \\
\midrule
\input{tables/jk_calibration_body.tex} 
\bottomrule
\end{tabular}}
\caption{Coverage (\%) at nominal $80\%$, $95\%$ and $99\%$ in the designs of Table \ref{tab:battery} with small or no heterogeneity, for IREG with HC1 plus the plug-in and with the jackknife, and for REG with HC1 and with the jackknife.}
\label{tab:plugin-cal}
\end{table}

When $R^2_\tau$ is small ($<0.05$), the combination of REG and jackknife is a clear winner across our selected designs. Its $95\%$ coverage ranges from \gz{jk:small:REG_JK:min}\% to \gz{jk:small:REG_JK:max}\%, its median interval width is between $\gz{jk:small:wRJK_IJK:min}$ and $\gz{jk:small:wRJK_IJK:max}$ times that of IREG with the jackknife and between $\gz{jk:small:wRJK_IHC1c:min}$ and $\gz{jk:small:wRJK_IHC1c:max}$ times that of IREG with HC1 plus the plug-in, which covers \gz{jk:small:IREG_HC1c:min}--\gz{jk:small:IREG_HC1c:max}\%. REG is also at least as precise a point estimator in these designs, with MSE ratios IREG/REG between \gz{jk:small:mseratio:min} and \gz{jk:small:mseratio:max}. At the $95\%$ level both estimators achieve only slightly anticonservative coverage with HC1; for IREG the upward bias of the plug-in partly offsets HC1's understatement of variance, and the undercoverage shows up more at other nominal levels (Table \ref{tab:plugin-cal}). The price of the jackknife is width: under the heaviest tails it is conservative, and REG+jackknife intervals are up to $\gz{jk:small:wRJK_RHC1:max}$ times as wide as REG+HC1's.

This pattern does not hold across all DGPs. As $R^2_\tau$ grows with skewed covariates, REG undercovers with either variance estimator, down to \gz{jk:ln15_large:REG_JK:cov}\% with jackknife at $R^2_\tau=\gz{jk:ln15_large:R2}$, while IREG with jackknife is the most reliable procedure, although it also reaches only \gz{jk:ln15_large:IREG_JK:cov}\% there. In the rare-value design an IREG arm contains a single rare unit in \gz{jk:rare:IREG:degfrac}\% of replications; removing that unit leaves a rank-deficient fit, so its LOO estimate is omitted, and IREG+jackknife covers \gz{jk:rare:IREG:degcov}\% of those refits against \gz{jk:rare:IREG:othercov}\% of the others. Treating any undefined leave-one-out estimate as making the entire interval unavailable would instead give IREG+jackknife coverage of \gz{jk:rare:IREG:strict}\%. REG is essentially unaffected: removing a unit leaves its pooled fit rank deficient in only \gz{jk:rare:REG:degfrac}\% of replications.

Table \ref{tab:plugin-cal} tests this pattern in the low-$R^2_\tau$ DGPs across nominal levels. HC1 intervals are close to nominal at $95\%$ and $99\%$ but undercover at $80\%$ in the skewed designs, by up to \gz{jk:small:IREG_HC1c:maxshort} percentage points for IREG with the plug-in and \gz{jk:small:REG_HC1:maxshort} for REG, so their $95\%$ coverage is partly a crossing point of a miscalibrated interval. With the jackknife, neither estimator is more than \gz{jk:small:REG_JK:maxshort} points below nominal at any of the three levels in any of these designs; under the heaviest tails it is conservative, with REG+jackknife covering up to \gz{jk:small:REG_JK:0.80:max}\% at nominal $80\%$.

\bibliographystyle{plainnat}
\bibliography{references}

\appendix

\section{Proofs}
\label{sec:proofs}

The goal of this section is to derive the expression for $D_p$ in Theorem \ref{theo:deficiency}. The argument has three main parts. First, we obtain an MSE Taylor expansion for regression estimates by expressing the estimator as a function of IID samples. Next, we calculate this second-order constant under a Bernoulli trial. Third, we remove the effect of random treatment imbalance introduced by Bernoulli assignment to relate this back to the desired CRT constant. 

Throughout this appendix, $j\in\{\REG,\IREG\}$ identifies the estimator.
The subscripts $\Bern$ and $(\CRT,r)$ indicate, respectively,
independent Bernoulli$(1/2)$ assignment and complete randomization
with $m=rn$ treated units. Expectations under these designs are
written $\EEE_\Bern$ and $\EEE_{\CRT,r}$. In particular,
\[
\overline{\operatorname{MSE}}_{\mathcal D}(\htau_j)
=\EEE_{\mathcal D}[\min\{(\htau_j-\tau)^2,1\}],
\qquad \mathcal D\in\{\Bern,(\CRT,r)\}.
\]
Unless otherwise specified, the $\CRT$ subscript defaults to
$r=\lfloor n/2\rfloor/n$. Estimators use the finite-valued
completion specified after \eqref{eq:capped_mse}.

\begin{defi}[Taylor-polynomial MSE]
\label{def:msexp}
Let $\bar{\mathcal V}=n^{-1}\sum_{i=1}^n\mathcal V_i$ and
$\hat\tau=g(\bar{\mathcal V})$ locally, where $g$ is smooth
near $\mu=\EEE[\mathcal V_i]$ and $g(\mu)=\tau$. Put
$d=\bar{\mathcal V}-\mu$, $a_i=\partial_i g(\mu)$,
$H_{ij}=\partial_{ij}g(\mu)$ and $A_{3,ijk}=\partial_{ijk}g(\mu)$. Repeated coordinate indices in this definition are summed from $1$ to the dimension of $\mathcal V_i$. The third-order Taylor expansion is
\begin{equation}
\label{eq:hall_expansion}
\tilde\tau_n
=\tau+a_i d_i+\tfrac12H_{ij}d_i d_j
+\tfrac16A_{3,ijk}d_i d_j d_k,
\end{equation}
and its mean-squared error is 
\[
\MSEexp(\hat\tau):=\EEE[(\tilde\tau_n-\tau)^2].
\]
The superscript $\mathrm{exp}$ indicates that this is the MSE of the third-order Taylor expansion of $\hat{\tau}$ (that is, $\Tilde{\tau}_n$), rather than $\hat{\tau}$ itself. 

For each arm $w\in\{0,1\}$, collect the moments needed for the
regression into a vector $U_i^{(w)}$. Its entries are $X_i$,
$Y_i(w)$, the distinct entries of $X_iX_i^\top$, and
$X_iY_i(w)$. Write
\[
\mu_w=\EEE[U_i^{(w)}],
\]
and let $\bar U_w$ denote the average of this vector over the
observations assigned to arm $w$.

\emph{Complete randomization with fixed arm sizes.}
Fix the treated fraction $r=m/n\in(0,1)$. Both regression estimators
can be computed from the two arm averages:
\[
\hat\tau_j=h_{j,r}(\bar U_1,\bar U_0),
\qquad j\in\{\REG,\IREG\},
\]
where Lemma \ref{lemm:two_sample} gives the function $h_{j,r}$.

To approximate the estimator, expand $h_{j,r}$ around the
population point $(\mu_1,\mu_0)$, keeping terms through degree
three in the deviations
\[
d=(\bar U_1-\mu_1,\bar U_0-\mu_0).
\]
Denote the resulting Taylor approximation by $\mathcal P_{j,r}$.
Its mean squared error is
\[
\MSEexp_{\CRT,r}(\hat\tau_j)
=
\EEE_{\CRT,r}\big[(\mathcal P_{j,r}-\tau)^2\big].
\]
When considering a single estimator $j$, we shorten
$\mathcal P_{j,r}$ to $\mathcal P_r$.

\emph{Bernoulli assignment.}
Here the treatment count is random, so we include it among the
sample moments. Define the IID vectors
\[
\mathcal V_i
=
\big(W_i,\;W_iU_i^{(1)},\;(1-W_i)U_i^{(0)}\big),
\]
whose population mean is
\[
\mu=\EEE[\mathcal V_i]
=
\big(1/2,\;\mu_1/2,\;\mu_0/2\big).
\]
Their sample average records the treated fraction and the two
arm-specific moment totals, with both totals divided by $n$.

For a vector $v=(v_W,v_1,v_0)$ of these raw moments, division by
the arm fractions recovers the within-arm averages. The estimator
can therefore be expressed as
\[
g_j(v)
=
h_{j,v_W}
\left(\frac{v_1}{v_W},\frac{v_0}{1-v_W}\right).
\]
Thus $g_j$ first converts the raw moments into within-arm averages,
then applies $h_{j,r}$ with $r=v_W$.

Let $\tilde\tau_{n,j}$ be the cubic Taylor approximation of $g_j$
around $\mu$, as in \eqref{eq:hall_expansion}. Define its
mean squared error by
\[
\MSEexp_\Bern(\hat\tau_j)
=
\EEE_\Bern\big[(\tilde\tau_{n,j}-\tau)^2\big].
\]
\end{defi}

\emph{Relation to capped and ordinary MSE.}
Lemma \ref{lemm:localization} connects these polynomial
approximations to the capped risk. Under the sampling and moment
conditions of Theorem \ref{theo:deficiency}, the capped MSE and
the Taylor-polynomial MSE differ by $O(n^{-3})$. They therefore
have the same coefficients through order $n^{-2}$. 

Transferring the expansion to ordinary, uncapped MSE requires
additional tail control. For example, under Bernoulli assignment,
suppose the completed estimator has finite second moment and
\[
\|\htau_j-\tilde\tau_{n,j}\|_{L^2}=o(n^{-3/2}).
\]
Because
$\|\tilde\tau_{n,j}-\tau\|_{L^2}=O(n^{-1/2})$,
Cauchy--Schwarz then gives an $o(n^{-2})$ difference between
ordinary MSE and $\MSEexp$. This is a sufficient additional
condition; it does not follow from the population moment
assumptions. For fixed-fraction complete randomization, the
corresponding condition uses $\mathcal P_{j,r}$ in place of
$\tilde\tau_{n,j}$.

We use $G=\|\gamma\|^2$. In covariate-moment tensor formulas, indices such as
$i,j,k,l,a,b,c,d$ range from $1$ to $p$ and refer to covariate
coordinates, rather than experimental units. Repeated indices are
summed over unless stated otherwise. In particular, $\delta_{jk}$ is
the Kronecker delta. Indices of the raw-moment vector $\mathcal V$
instead range over the coordinates of that vector; an estimator
subscript $j\in\{\REG,\IREG\}$ is never summed.

\subsection{Regression estimation expansion}

The next three lemmas explain how we obtain the MSE expansion. First,
conditioning on the assignment turns the experiment into two independent samples. Second, we show that a cubic Taylor polynomial approximates the capped risk. Finally, we identify the terms in that polynomial that contribute at the relevant orders $n^{-1}$ and $n^{-2}$.
\begin{lemm}[Two-sample representation]
\label{lemm:two_sample}
Assume the sampling and moment conditions of Theorem
\ref{theo:deficiency}. Let the assignment vector $W$ be independent of the sampled units, with $\sum_i W_i=m$ and $0<m<n$. Conditionally on $W$, the treated observations $\{(X_i,Y_i(1)):W_i=1\}$ and the control observations $\{(X_i,Y_i(0)):W_i=0\}$ are independent IID samples of sizes $m$ and $n-m$, from the laws of $(X,Y(1))$ and $(X,Y(0))$, respectively.

Wherever the required fits are uniquely defined, both $\htau_\REG$ and $\htau_\IREG$ can be written as $h_{j,r}(\bar U_1,\bar U_0)$, where $r=m/n$. Each map $h_{j,r}$ is infinitely differentiable in a
neighborhood of $(\mu_1,\mu_0)$, and its derivatives at
$(\mu_1,\mu_0)$ are rational functions of $r$ without poles in
$(0,1)$.
\end{lemm}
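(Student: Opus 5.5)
The argument has two parts. The conditional-independence claim comes from the IID sampling and the independence of $W$. The representation and smoothness claims come from writing both regressions through their normal equations in terms of within-arm sample moments.

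\emph{Step 1 (conditional law).} Since $W$ is independent of the IID family $\{(X_i,Y_i(0),Y_i(1))\}_{i=1}^n$, conditioning on $W=w$ leaves the joint law of the units unchanged. For the fixed index sets $\{i:w_i=1\}$ and $\{i:w_i=0\}$, the subfamilies $\{(X_i,Y_i(1))\}$ and $\{(X_i,Y_i(0))\}$ are functions of disjoint collections of independent units. They are therefore independent, and each is IID from the relevant marginal law, with sizes $m$ and $n-m$. The observed $Y_i=Y_i(W_i)$ coincides with these potential outcomes on each arm.

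\emph{Step 2 (representation).} For arm $w$, write $\bar X_w$, $\bar Y_w$, $\overline{XX^\top}_w$ and $\overline{XY}_w$ for the entries of $\bar U_w$. Put $\hat\Sigma_w=\overline{XX^\top}_w-\bar X_w\bar X_w^\top$ and $\hat c_w=\overline{XY}_w-\bar X_w\bar Y_w$.

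For IREG, the fully interacted regression is equivalent to separate OLS fits in each arm. This gives $\hbeta_w=\hat\Sigma_w^{-1}\hat c_w$ and $\hat\alpha_w=\bar Y_w-\bar X_w^\top\hbeta_w$. Evaluating at the pooled mean $\bar X=r\bar X_1+(1-r)\bar X_0$ yields
\[
\htau_\IREG=\bar Y_1-\bar Y_0-(\bar X_1-\bar X)^\top\hbeta_1+(\bar X_0-\bar X)^\top\hbeta_0 .
\]
This is the paper's $\htau+\bar X^\top\hdelta$, after using the intercept identity.

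For REG, the Frisch--Waugh--Lovell theorem gives $\htau_\REG=\bar Y_1-\bar Y_0-(\bar X_1-\bar X_0)^\top\hbeta$. Here $\hbeta=\hat\Sigma_{\rm pool}^{-1}\hat c_{\rm pool}$ is computed from within-arm centered moments, $\hat\Sigma_{\rm pool}=r\hat\Sigma_1+(1-r)\hat\Sigma_0$, and similarly for $\hat c_{\rm pool}$. Both expressions are explicit maps $h_{j,r}(\bar U_1,\bar U_0)$ built from polynomials in the entries of $\bar U_w$ and $r$, together with matrix inverses. The fits are uniquely defined exactly when these matrices are invertible.

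\emph{Step 3 (smoothness and derivatives).} At $(\mu_1,\mu_0)$ we have $\hat\Sigma_w=\Sigma\succ0$ and $\hat\Sigma_{\rm pool}=\Sigma$. By continuity, these matrices stay invertible on a neighborhood, and matrix inversion is real-analytic on invertible matrices by Cramer's rule. Hence each $h_{j,r}$ is $C^\infty$ there.

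For the derivatives, $r$ enters only polynomially (affinely), through $\bar X$ and $\hat\Sigma_{\rm pool}$, $\hat c_{\rm pool}$. By Cramer's rule, every derivative of $h_{j,r}$ at the base point is therefore a polynomial in $r$ and population moments, divided by a power of $\det\hat\Sigma_w$ or $\det\hat\Sigma_{\rm pool}$ evaluated at $\mu$. The first of these equals $\det\Sigma$. The second also equals $\det\Sigma\neq0$ for every $r$, because $r\Sigma+(1-r)\Sigma=\Sigma$. So the denominators are constants in $r$, and the derivatives are rational (indeed polynomial) in $r$ with no poles in $(0,1)$.

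The most delicate point is verifying that the interacted-regression formula with pooled centering matches the paper's definition \eqref{eq:estimators}. In particular, one must check that $\htau+\bar X^\top\hdelta$ equals $\hat\alpha_1-\hat\alpha_0+\bar X^\top(\hbeta_1-\hbeta_0)$, which reduces to the expression in Step 2. This needs care with the intercept reparametrization $\alpha_1=\alpha+\tau$.

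The moment assumptions are used only to guarantee that the entries of $U_i^{(w)}$ have finite means, so that $\mu_w$ is well defined. The twelfth-moment assumptions are stronger than what this lemma requires.
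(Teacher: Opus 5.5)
Your proposal is correct and follows essentially the same route as the paper. You condition on the independent assignment to obtain two independent IID arm samples, write REG via the FWL pooled within-arm slope and IREG via separate arm fits evaluated at $\bar X=r\bar X_1+(1-r)\bar X_0$, and use that both arm covariances equal $\Sigma\succ0$ at $(\mu_1,\mu_0)$ to get smoothness. Your further observation, that the base-point determinants equal $\det\Sigma$ for every $r$ so the derivatives are in fact polynomial in $r$, is a correct sharpening of the paper's claim that they are ``rational without poles in $(0,1)$''.
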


\begin{proof}
Conditioning on $W$ fixes the two disjoint sets of units assigned to
treatment and control. Since $W$ is independent of the sampled units,
this conditioning does not change their distributions. Independence
across units then gives independence within each arm and between the
two arms.

For the second claim, the coordinates of $\bar U_w$ contain all the
means and cross-moments needed for least squares. In particular, let
\[
C_w=\overline{XX^\top}_w-\bar X_w\bar X_w^\top,
\qquad
s_w=\overline{XY}_w-\bar X_w\bar Y_w.
\]
The arm-specific slopes and the pooled within-arm slope are
\[
\hat\beta_w=C_w^{-1}s_w,
\qquad
\hat\beta_{\rm pool}(r)
=\{rC_1+(1-r)C_0\}^{-1}\{rs_1+(1-r)s_0\}.
\]
Thus, writing $d_X=\bar X_1-\bar X_0$, we have
\[
\begin{aligned}
h_{\REG,r}(\bar U_1,\bar U_0)
&=\bar Y_1-\bar Y_0-d_X^\top\hat\beta_{\rm pool}(r),\\
h_{\IREG,r}(\bar U_1,\bar U_0)
&=\bar Y_1-\bar Y_0
-d_X^\top\{(1-r)\hat\beta_1+r\hat\beta_0\}.
\end{aligned}
\]
The second expression evaluates both fitted regressions at the overall covariate mean $\bar X=r\bar X_1+(1-r)\bar X_0$.

These expressions are rational functions of the arm moments. At
$(\mu_1,\mu_0)$, both arm covariance matrices, and hence their pooled
version, equal $\Sigma\succ0$. Also $d_X=0$, so both coefficient maps
equal $\tau$. The matrix inverses therefore remain well defined in a
neighborhood of the population moments. This proves smoothness, and
the displayed formulas give the claimed rational dependence of the
derivatives on $r$, with no poles in $(0,1)$.
\end{proof}

\begin{lemm}[Localization for capped loss and clipped errors]
\label{lemm:localization}
Under the sampling and moment conditions of Theorem
\ref{theo:deficiency}, let $\mathcal P_r$ be the third-order Taylor
polynomial of $h_{j,r}$ at $(\mu_1,\mu_0)$. For any 
$\eta\in(0,1/2)$,
\begin{equation}
\label{eq:capped_transfer}
\overline{\operatorname{MSE}}_{\CRT,r}(\htau_j)
=\MSEexp_{\CRT,r}(\htau_j)+O(n^{-3}),
\end{equation}
uniformly for $r=m/n\in[1/2-\eta,1/2+\eta]$. Also, with
$e_{j}=\max\{-1,\min\{\htau_j-\tau,1\}\}$,
\begin{equation}
\label{eq:clipped_bias_transfer}
\EEE[e_{j}]=\EEE[\mathcal P_r-\tau]+O(n^{-2})
\end{equation}
uniformly on the same interval. The analogous statements hold under
independent Bernoulli$(1/2)$ assignment with the unit-level cubic
Taylor polynomial $\tilde\tau_{n,j}$.
\end{lemm}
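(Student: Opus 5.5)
The plan is to split the sample space into a good event $\mathcal G_n$, on which the arm moments are close to their population values, and its complement. On $\mathcal G_n$ we compare $\htau_j$ with $\mathcal P_r$ through the Taylor remainder; on $\mathcal G_n^c$ we use boundedness of the capped loss together with the polynomial moments of $\mathcal P_r$. Fix a small radius $\rho>0$ such that $h_{j,r}$ is smooth on the closed ball $B_\rho$ of radius $\rho$ around $(\mu_1,\mu_0)$, uniformly in $r\in[1/2-\eta,1/2+\eta]$. This is possible by Lemma \ref{lemm:two_sample}: the arm covariances stay uniformly positive definite near $\Sigma$, and the derivatives are rational in $r$ without poles on this compact interval. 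Let $\mathcal G_n=\{d\in B_\rho\}$.

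By Lemma \ref{lemm:two_sample}, conditionally on $W$ the arms are independent IID samples of sizes $m$ and $n-m$, both of order $n$. The entries of $U^{(w)}_i$ involve at most products such as $X_iY_i(w)$ and $X_iX_i^\top$. The $12$th-moment assumptions therefore give $\EEE\|U^{(w)}_i\|^6<\infty$. Rosenthal's (or Marcinkiewicz--Zygmund's) inequality then yields $\EEE\|d\|^{6}=O(n^{-3})$ uniformly in $r$. Markov's inequality gives $\PP{\mathcal G_n^c}\le \rho^{-6}\EEE\|d\|^6=O(n^{-3})$.

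On $\mathcal G_n$, Taylor's theorem with fourth-order remainder gives $|\htau_j-\mathcal P_r|\le C\|d\|^4$, and also $|\mathcal P_r-\tau|\le C\|d\|$. For the bias statement \eqref{eq:clipped_bias_transfer}, we write
\[
e_j-(\mathcal P_r-\tau)=\big(e_j-(\htau_j-\tau)\big)\mathbf 1_{\mathcal G_n}+\big(\htau_j-\mathcal P_r\big)\mathbf 1_{\mathcal G_n}+\big(e_j-(\mathcal P_r-\tau)\big)\mathbf 1_{\mathcal G_n^c}.
\]
The first term vanishes once $\rho$ is small enough that $|\htau_j-\tau|\le 1$ on $B_\rho$. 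The second term has expectation $O(\EEE\|d\|^4)=O(n^{-2})$. The third term is bounded by $(1+|\mathcal P_r-\tau|)\mathbf 1_{\mathcal G_n^c}$. By Cauchy--Schwarz its expectation is at most $\big(1+\|\mathcal P_r-\tau\|_{L^2}\big)\PP{\mathcal G_n^c}^{1/2}=O(n^{-3/2})$. Here $\mathcal P_r$ is a cubic polynomial in $d$, so $\|\mathcal P_r-\tau\|_{L^2}$ requires $\EEE\|d\|^6<\infty$, which we have.

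For the capped MSE \eqref{eq:capped_transfer} the same decomposition is used, now with squares. On $\mathcal G_n$ the cap is inactive, and
\[
(\htau_j-\tau)^2-(\mathcal P_r-\tau)^2=(\htau_j-\mathcal P_r)\big(\htau_j-\tau+\mathcal P_r-\tau\big),
\]
whose absolute value is at most $C\|d\|^5$. Its expectation is therefore $O(n^{-5/2})$, which is not yet $O(n^{-3})$.

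\emph{This is the main obstacle.} To reach $O(n^{-3})$ I would refine the argument by expanding to fifth order. On $\mathcal G_n$ we have $\htau_j=\mathcal P_r+Q_4(d)+O(\|d\|^5)$ with $Q_4$ homogeneous quartic. The cross term $2(\mathcal P_r-\tau)Q_4$ then has leading part $2a^\top d\,Q_4(d)$, a degree-five polynomial in centered means. For a product of five centered sample means, the expectation is $O(n^{-3})$ rather than $n^{-5/2}$, since odd-degree cumulant pairings force one extra factor of $n^{-1}$. This step needs finite moments of order about ten for $\|d\|$, hence fifth moments of $U^{(w)}$, hence tenth moments of $X$ and $Y$. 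This is presumably why the theorem assumes $12$th moments. The remaining terms are bounded by $\EEE\|d\|^6=O(n^{-3})$.

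On $\mathcal G_n^c$ the capped loss is at most $1$, and $\EEE[(\mathcal P_r-\tau)^2\mathbf 1_{\mathcal G_n^c}]$ is controlled by Hölder's inequality using higher moments of $d$. Since $\PP{\mathcal G_n^c}$ can be made $O(n^{-k})$ for any $k$ allowed by the moment budget (I would take the $12$th-power Markov bound, giving $O(n^{-6})$), this contribution is $O(n^{-3})$.

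Uniformity in $r$ holds because every constant depends only on $\rho$, on derivative bounds over the compact $r$-interval, and on the moments of $U^{(w)}$. The Bernoulli case is handled identically using the IID vectors $\mathcal V_i$ and the map $g_j$, which is smooth near $\mu$ since $v_W$ stays near $1/2$ there.
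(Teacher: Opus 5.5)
Your architecture is the paper's: a fixed-radius good event, Markov's inequality from $\EEE\|d\|^6=O(n^{-3})$, and an expansion of the squared error through the quartic Taylor term. The only degree-five piece is then $2q_1q_4$ (your $2a^\top d\,Q_4(d)$), which the odd-degree centering argument reduces to $O(n^{-3})$, and everything else on $\mathcal G_n$ is $O(\|d\|^6)$. The gaps are in the bookkeeping on the complement $\mathcal G_n^c$, and there are two of them.

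\emph{Bias step.} Your own bound reads $(1+\|\mathcal P_r-\tau\|_{L^2})\PP{\mathcal G_n^c}^{1/2}=O(n^{-3/2})$. That does not deliver the $O(n^{-2})$ claimed in \eqref{eq:clipped_bias_transfer}: applying Cauchy--Schwarz to the constant $1$ throws away a factor $n^{-3/2}$.

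\emph{Capped MSE.} You invoke a twelfth-power Markov bound, H\"older with ``higher moments of $d$'', and moments of order about ten for $\|d\|$. None of these is available. $U^{(w)}$ contains $X_kX_l$ and $X_kY(w)$, so twelfth moments of $X$ and $Y(w)$ give only $\EEE\|U^{(w)}\|^6<\infty$, and $\EEE\|d\|^{12}$ need not be finite. For the same reason, H\"older with exponent $>1$ cannot be applied to the degree-six piece $q_3^2$ of $(\mathcal P_r-\tau)^2$. The twelfth-moment hypothesis buys exactly sixth moments of $U^{(w)}$. The degree-five centering step needs only $\EEE\|U^{(w)}\|^5$, not tenth moments of $\|d\|$.

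The missing device, which the paper uses, is that $\|d\|>\rho$ on $\mathcal G_n^c$, so $\|d\|^k\mathbf 1_{\mathcal G_n^c}\le\rho^{k-6}\|d\|^6$ for every $0\le k\le 6$. Since $\mathcal P_r-\tau$ is a cubic with uniformly bounded coefficients, $|\mathcal P_r-\tau|\le C(\|d\|+\|d\|^3)$ everywhere. Together these give $\EEE[(1+|\mathcal P_r-\tau|)\mathbf 1_{\mathcal G_n^c}]=O(n^{-3})$ and $\EEE[(\mathcal P_r-\tau)^2\mathbf 1_{\mathcal G_n^c}]=O(n^{-3})$, using only $\EEE\|d\|^6=O(n^{-3})$.

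The same inequality covers a step you left implicit. You need $\EEE[q_1q_4\mathbf 1_{\mathcal G_n}]$, but the centering argument computes the unrestricted $\EEE[q_1q_4]$. Their difference is at most $C\,\EEE[\|d\|^5\mathbf 1_{\mathcal G_n^c}]\le C\rho^{-1}\EEE\|d\|^6=O(n^{-3})$.

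For the bias alone, an alternative fix is to split off $\PP{\mathcal G_n^c}=O(n^{-3})$ and use $\|\mathcal P_r-\tau\|_{L^2}=O(n^{-1/2})$, which gives $O(n^{-2})$. With these repairs the whole argument runs on sixth moments of $U^{(w)}$, as in the paper. Your direct comparison of $(\htau_j-\tau)^2$ with $(\mathcal P_r-\tau)^2$ is then slightly more economical than the paper's, since the $2q_2q_3$ term cancels.
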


\begin{proof}
The idea is that on the event that the sample moments are close
to their expected values, the regression maps are
smooth and the cap is irrelevant. Outside this event, the bounded loss and a sixth-moment bound control the error.

\emph{Moment bounds.}
Each coordinate of $U^{(w)}$ is one of $X_k$, $Y(w)$, $X_kX_l$, or
$X_kY(w)$. The assumed twelfth moments give finite sixth moments for
all these coordinates. For the product coordinates, Cauchy--Schwarz
gives
\[
\begin{aligned}
\EEE|X_kX_l|^6
&\le\{\EEE|X_k|^{12}\,\EEE|X_l|^{12}\}^{1/2},\\
\EEE|X_kY(w)|^6
&\le\{\EEE|X_k|^{12}\,\EEE|Y(w)|^{12}\}^{1/2}.
\end{aligned}
\]
In particular, $\EEE\|U^{(w)}\|^6<\infty$.

Set
\[
\mu=(\mu_1,\mu_0),
\qquad
d=(\bar U_1-\mu_1,\bar U_0-\mu_0).
\]
We now bound moments of $d$. First, expand the products of its coordinate sample means into sums of products of centered unit-level observations. Any summand in which a unit appears only once has expectation zero: its sole centered coordinate is independent of the factors from all other units and has mean zero. Thus, a nonzero term in a product of six coordinates can involve at most three distinct units while a product of four can involve at most two. This tells us that for instance $\EEE\|d\|^6=O(n^{-3})$.

\emph{Reduce to inactive cap event.}
Lemma \ref{lemm:two_sample} and compactness of the fraction interval
allow us to choose a fixed $\delta>0$ such that, on
\[
\mathcal A=\{\|d\|\le\delta\},
\]
both regression maps are smooth, their derivatives through order six
are uniformly bounded, and $|h_{j,r}(\mu+d)-\tau|\le 1$.
The moment bound gives $\Pr(\mathcal A^c)=O(n^{-3})$. More generally,
for every $0\le k\le6$,
\[
\EEE[\|d\|^k\mathbf 1_{\mathcal A^c}]
\le\delta^{k-6}\EEE\|d\|^6
=O(n^{-3}).
\]
Thus restricting a polynomial term of degree at most six to
$\mathcal A$ changes its expectation by only $O(n^{-3})$, provided
its coefficients are uniformly bounded, as the Taylor coefficients
are here.

\emph{Compare the squared errors.}
Let $q_k$ denote the degree-$k$ Taylor term of
$h_{j,r}-\tau$, so that $\mathcal P_r-\tau=q_1+q_2+q_3$.
On $\mathcal A$, expanding the squared error through degree five gives
\[
\begin{aligned}
(h_{j,r}(\mu+d)-\tau)^2
={}&q_1^2+2q_1q_2+q_2^2+2q_1q_3\\
&+2q_1q_4+2q_2q_3+O(\|d\|^6).
\end{aligned}
\]
The preceding bound lets us replace an expectation restricted to
$\mathcal A$ by an expectation over the whole sample space, with
an $O(n^{-3})$ error for each displayed polynomial term. The two
degree-five terms, $2q_1q_4$ and $2q_2q_3$, have expectations
$O(n^{-3})$ by the centering argument above. The Taylor remainder
also has expected absolute value $O(n^{-3})$ on $\mathcal A$.
Outside $\mathcal A$, the capped loss contributes at most
$\Pr(\mathcal A^c)=O(n^{-3})$. Consequently,
\[
\overline{\operatorname{MSE}}_{\CRT,r}(\htau_j)
=\EEE[q_1^2+2q_1q_2+q_2^2+2q_1q_3]+O(n^{-3}).
\]

Next consider the squared error of the cubic approximation,
$(\mathcal P_r-\tau)^2=(q_1+q_2+q_3)^2$. Its terms of degrees
two through four are exactly the four terms inside the expectation
above. Its remaining terms are $2q_2q_3$ and $q_3^2$. The first
has expectation $O(n^{-3})$ by the signed fifth-moment bound;
the second has expectation $O(n^{-3})$ by the sixth-moment bound.
Thus
\[
\MSEexp_{\CRT,r}(\htau_j)
=\EEE[q_1^2+2q_1q_2+q_2^2+2q_1q_3]+O(n^{-3}).
\]
proving \eqref{eq:capped_transfer}.

\emph{Compare the clipped biases.}
Clipping is inactive on $\mathcal A$. Then the difference
between $h_{j,r}-\tau$ and $\mathcal P_r-\tau$ is $O(\|d\|^4)$,
whose expected absolute value is $O(n^{-2})$. Outside $\mathcal A$,
the clipped error is bounded by 1, and the expectation of each
term of the cubic polynomial is $O(n^{-3})$ after restriction to
that event. These bounds prove \eqref{eq:clipped_bias_transfer}.

\emph{Bernoulli assignment.}
For independent Bernoulli$(1/2)$ assignment, use the unit-level
raw-moment vectors
\[
\mathcal V_i=(W_i,W_iU_i^{(1)},(1-W_i)U_i^{(0)}).
\]
These vectors are IID and have finite sixth moments. At their
population mean, each arm has probability $1/2$ and the population
Gram matrices are invertible. The estimators are therefore smooth
functions of these raw moments near that mean. The same argument,
with $d=\bar{\mathcal V}-\EEE\mathcal V$ and cubic polynomial
$\tilde\tau_{n,j}$, proves both Bernoulli statements.
\end{proof}

\begin{lemm}[Conditional expansion]
\label{lemm:conditional}
Under the sampling and moment conditions of Theorem
\ref{theo:deficiency}, let $\mathcal P_r$ be the third-order Taylor
polynomial of $h_{j,r}$ at $(\mu_1,\mu_0)$ and define
$\MSEexp_{\CRT,r}(\htau_j)=\EEE[(\mathcal P_r-\tau)^2]$.
For every $\eta\in(0,\frac12)$,
\begin{equation}
\label{eq:conditional_expansion}
\MSEexp_{\CRT,r}(\hat{\tau}_{j})
=\frac{V_j(r)}{n}+\frac{c_j(r)}{n^2}+O(n^{-3}),
\qquad j\in\{\REG,\IREG\},
\end{equation}
uniformly in $r=m/n\in[\frac12-\eta,\frac12+\eta]$, with $V_j$ and
$c_j$ infinitely differentiable in $r$. The same expansion holds
with $\overline{\operatorname{MSE}}_{\CRT,r}$ in place of
$\MSEexp_{\CRT,r}$.
\end{lemm}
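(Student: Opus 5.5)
We expand $\MSEexp_{\CRT,r}(\htau_j)=\EEE[(q_1+q_2+q_3)^2]$, where $q_k$ is the degree-$k$ Taylor term of $h_{j,r}-\tau$ in $d=(\bar U_1-\mu_1,\bar U_0-\mu_0)$. By Lemma \ref{lemm:two_sample}, conditionally on the assignment, $\bar U_1$ and $\bar U_0$ are independent means of $m=rn$ and $n-m=(1-r)n$ IID vectors. Write $d_1=m^{-1}\sum_{i\in T}\xi_i$ and $d_0=(n-m)^{-1}\sum_{i\in C}\zeta_i$ with centered $\xi_i,\zeta_i$. Each of $\EEE[q_1^2]$, $\EEE[2q_1q_2]$, $\EEE[q_2^2]$, $\EEE[2q_1q_3]$, $\EEE[2q_2q_3]$ and $\EEE[q_3^2]$ is then a finite linear combination of mixed moments $\EEE[d_{a_1}\cdots d_{a_k}]$, $k\le 6$, with coefficients that are derivatives of $h_{j,r}$ at $(\mu_1,\mu_0)$. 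By Lemma \ref{lemm:two_sample}, these coefficients are rational in $r$ with no poles in $(0,1)$.

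The key step is an exact formula for these mixed moments. Because the arms are independent, a moment factors into a product of within-arm moments. Within an arm of size $N$, expanding into unit sums and dropping terms in which some unit appears once gives
\[
\EEE[d_{a}d_{b}]=\kappa_{ab}/N,\qquad \EEE[d_ad_bd_c]=\kappa_{abc}/N^2,\qquad \EEE[d_ad_bd_cd_e]=\frac{\kappa_{ab}\kappa_{cd}+\kappa_{ac}\kappa_{bd}+\kappa_{ad}\kappa_{bc}}{N^2}+\frac{\kappa_{abcd}}{N^3},
\]
where the $\kappa$'s are finite cumulants, finite by the twelfth-moment assumption (as in the proof of Lemma \ref{lemm:localization}). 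Fifth and sixth moments are $O(N^{-3})$. Substituting $N=rn$ or $(1-r)n$, each moment becomes an exact polynomial in $n^{-1}$ whose coefficients are Laurent monomials in $r$ and $1-r$. The $n^{-1}$ coefficient arises only from $\EEE[q_1^2]$; call it $V_j(r)$. The $n^{-2}$ coefficient $c_j(r)$ collects the Gaussian-pairing part of $\EEE[q_2^2]+2\EEE[q_1q_3]$ and the third-cumulant part of $2\EEE[q_1q_2]$. All remaining contributions, namely fourth cumulants over $N^3$, $\EEE[q_2q_3]$ and $\EEE[q_3^2]$, are $O(n^{-3})$.

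Uniformity in $r\in[\tfrac12-\eta,\tfrac12+\eta]$ follows because every coefficient is a polynomial in finitely many cumulants times a function of $r$ that is smooth and bounded on this compact interval. That function comes from the derivatives of $h_{j,r}$, for example the pooled inverse $\{rC_1+(1-r)C_0\}^{-1}$, and from the powers $r^{-k}(1-r)^{-l}$. The same argument gives infinite differentiability of $V_j$ and $c_j$, and the $O(n^{-3})$ remainder is bounded by a constant uniform in $r$. The final claim, with $\overline{\MSE}_{\CRT,r}$ in place of $\MSEexp_{\CRT,r}$, then follows directly from \eqref{eq:capped_transfer} in Lemma \ref{lemm:localization}, which is uniform on the same interval.

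The main obstacle is bookkeeping rather than any conceptual step. One must check that the derivative maps of $h_{j,r}$, which depend on $r$ both through the estimator formula and through the arm sizes, combine into genuinely smooth functions of $r$. One must also make sure that no $n^{-2}$ contribution is lost from the sixth-order terms: $\EEE[q_3^2]$ is $O(n^{-3})$ because every nonvanishing pairing involves at most three distinct units. The attack on this is to keep the $n$-dependence fully explicit via $N=rn$ and never divide by $r$ inside an asymptotic symbol.
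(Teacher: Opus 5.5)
Your proposal is correct and follows essentially the same route as the paper. You expand the squared cubic polynomial, factor the mixed moments using independence of the arms, and write each within-arm moment exactly as a polynomial in $1/N$ with cumulant coefficients. The paper gets these from cumulant additivity and the moment--cumulant formula, while you count unit-index patterns directly, which gives the same formulas. You then read off $V_j$ and $c_j$ as the $n^{-1}$ and $n^{-2}$ coefficients, obtain uniformity from boundedness of the rational-in-$r$ coefficients on the compact interval, and transfer the result to the capped risk via Lemma \ref{lemm:localization}.
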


\begin{proof}
Fix $j\in\{\REG,\IREG\}$ and
$I_\eta=[\frac12-\eta,\frac12+\eta]$, where $0<\eta<1/2$.
The arm sizes are $m=rn$ and $n-m=(1-r)n$. We will first expand the
squared cubic polynomial, then count the powers of $1/n$ in the
expectation of each term.

\emph{Expand the squared polynomial.}
Write $d_1=\bar U_1-\mu_1$ and $d_0=\bar U_0-\mu_0$.
Recall
\[
\mathcal P_r-\tau=q_1+q_2+q_3.
\]
Lemma \ref{lemm:two_sample} shows that its
coefficients are rational functions of $r$ without poles in $(0,1)$.
Expanding the square gives
\[
\begin{aligned}
\MSEexp_{\CRT,r}(\htau_j)
={}&\EEE[q_1^2]+2\EEE[q_1q_2]
+\EEE[q_2^2]+2\EEE[q_1q_3]\\
&+2\EEE[q_2q_3]+\EEE[q_3^2].
\end{aligned}
\]

\emph{Separate the two arms.}
After multiplying out these terms, we need mixed moments of the form
\begin{equation}
\label{eq:mixed_moment}
\EEE\Big[\prod_{u=1}^{a}d_{1,k_u}
           \prod_{u=1}^{b}d_{0,l_u}\Big],
\qquad 2\le a+b\le6.
\end{equation}
Each is multiplied by a product of two Taylor coefficients, which
is again rational in $r$ without poles in $(0,1)$. The sixth-moment
calculation in the proof of Lemma \ref{lemm:localization} gives
$\EEE\|U^{(w)}\|^6<\infty$. H\"older's inequality therefore ensures
that all the joint coordinate moments needed here are finite.

By Lemma \ref{lemm:two_sample}, the two arms are independent
conditionally on $W$. Their conditional laws depend only on the
fixed arm sizes, so $d_1$ and $d_0$ are also independent under
complete randomization. Hence \eqref{eq:mixed_moment} factors as
\[
\EEE\Big[\prod_{u=1}^{a}d_{1,k_u}\Big]
\;\EEE\Big[\prod_{u=1}^{b}d_{0,l_u}\Big].
\]
It remains to determine the powers of the sample size in a moment
from a single arm.

\emph{Compute moments within one arm.}
Consider the treated arm; the control calculation is identical, with
$U^{(0)}$ and $n-m$ replacing $U^{(1)}$ and $m$.
Let $\kappa_{k_1\cdots k_s}$ be the joint cumulant of
$(U^{(1)}_{k_1},\ldots,U^{(1)}_{k_s})$. Cumulants are useful here
because they add over independent observations. Thus
\[
\begin{aligned}
\operatorname{cum}(d_{1,k_1},\ldots,d_{1,k_s})
&=m^{-s}\sum_{i:W_i=1}
  \operatorname{cum}(U^{(1)}_{i,k_1},\ldots,U^{(1)}_{i,k_s})\\
&=m^{1-s}\kappa_{k_1\cdots k_s},
\qquad 2\le s\le6.
\end{aligned}
\]
The first-order cumulants are zero, since $d_1$ is centered.

The moment--cumulant formula expresses a moment as a sum over
partitions $\pi$ of the positions $\{1,\ldots,a\}$:
\[
\EEE\Big[\prod_{u=1}^{a}d_{1,k_u}\Big]
=\sum_{\pi}\prod_{B\in\pi}
  \operatorname{cum}(d_{1,k_u}:u\in B).
\]
Each block $B$ groups together coordinates whose joint cumulant
appears in the product. If a block contains only one position, its
cumulant is the mean of a centered coordinate, so that entire
product is zero. This does not require independence among the
coordinates: their dependence is retained in the cumulants for
blocks of size two or more.

A surviving partition has at most $\lfloor a/2\rfloor$ blocks.
A block of size $|B|$ contributes a factor $m^{1-|B|}$, so the
partition as a whole contributes
\[
m^{\sum_{B\in\pi}(1-|B|)}=m^{-(a-|\pi|)}.
\]
There is at least one block when $a\ge2$. It follows that, for
$2\le a\le6$,
\begin{equation}
\label{eq:mean_moment}
\EEE\Big[\prod_{u=1}^{a}d_{1,k_u}\Big]
=\sum_{s=\lceil a/2\rceil}^{a-1}\frac{M^{(1)}_s}{m^s}.
\end{equation}
Here $M^{(1)}_s$ is a finite sum of products of cumulants of
$U^{(1)}$. It may depend on the coordinate indices, but not on $m$.
For $a=1$ the moment is zero; for $a=0$ the empty product is one.
For example,
\[
\begin{aligned}
\EEE[d_{1,k}d_{1,l}]
&=\frac{\kappa_{kl}}{m},\\
\EEE[d_{1,k}d_{1,l}d_{1,q}]
&=\frac{\kappa_{klq}}{m^2},\\
\EEE[d_{1,k}d_{1,l}d_{1,q}d_{1,v}]
&=\frac{\kappa_{kl}\kappa_{qv}
       +\kappa_{kq}\kappa_{lv}
       +\kappa_{kv}\kappa_{lq}}{m^2}
  +\frac{\kappa_{klqv}}{m^3}.
\end{aligned}
\]
These identities are exact for every $m$.

\emph{Collect powers of $1/n$.}
Apply \eqref{eq:mean_moment} to both arms and substitute $m=rn$
and $n-m=(1-r)n$. A mixed moment in \eqref{eq:mixed_moment}
vanishes if $a=1$ or $b=1$. Otherwise it is a finite sum of terms
\[
\frac{M^{(1)}_sM^{(0)}_t}{r^s(1-r)^t}\,n^{-(s+t)},
\qquad
s\ge\lceil a/2\rceil,\quad
t\ge\lceil b/2\rceil.
\]
When $a=0$, take $s=0$ and $M^{(1)}_0=1$; when $b=0$, take
$t=0$ and $M^{(0)}_0=1$. Thus the power of $1/n$ is at least
\[
\lceil a/2\rceil+\lceil b/2\rceil
\ge\lceil(a+b)/2\rceil,
\]
and at most five, since the total degree is at most six.

Combining these identities with the squared polynomial gives the
exact finite expansion
\[
\MSEexp_{\CRT,r}(\htau_j)
=\sum_{\ell=1}^{5}\frac{e_\ell(r)}{n^\ell}.
\]
Each $e_\ell$ is a finite sum of products of Taylor coefficients,
cumulants of $U^{(1)}$ and $U^{(0)}$, and powers of $r^{-1}$ and
$(1-r)^{-1}$. It is therefore rational in $r$ without poles in
$(0,1)$, and hence infinitely differentiable there and bounded on
$I_\eta$. This is an exact polynomial in $1/n$, before any terms
are put into a remainder.

The degree count also tells us which terms determine the first two
coefficients. The degree-two term contributes only to $n^{-1}$.
Indeed, if $g_w$ is the gradient of $h_{j,r}$ with respect to
$\bar U_w$ at $(\mu_1,\mu_0)$, then
\[
\begin{aligned}
\EEE[q_1^2]
&=\frac{g_1^\top\Var{U^{(1)}}g_1}{m}
  +\frac{g_0^\top\Var{U^{(0)}}g_0}{n-m}\\
&=\frac1n\left\{
  \frac{g_1^\top\Var{U^{(1)}}g_1}{r}
  +\frac{g_0^\top\Var{U^{(0)}}g_0}{1-r}\right\}.
\end{aligned}
\]
The degree-three term $2\EEE[q_1q_2]$ contributes only to
$n^{-2}$, through third cumulants. The degree-four terms
$\EEE[q_2^2]+2\EEE[q_1q_3]$ contribute to $n^{-2}$ and $n^{-3}$.
The degree-five and degree-six terms contribute only to $n^{-3}$
and smaller orders. Consequently,
\[
V_j(r)=e_1(r),\qquad c_j(r)=e_2(r),
\]
where $c_j(r)$ collects the $n^{-2}$ parts of $2\EEE[q_1q_2]$,
$\EEE[q_2^2]$, and $2\EEE[q_1q_3]$.

\emph{Bound the remainder uniformly.}
For every $n\ge1$ and admissible $r\in I_\eta$,
\[
\begin{aligned}
\left|\MSEexp_{\CRT,r}(\htau_j)
      -\frac{V_j(r)}n-\frac{c_j(r)}{n^2}\right|
&=\left|\sum_{\ell=3}^{5}\frac{e_\ell(r)}{n^\ell}\right|\\
&\le\frac1{n^3}
  \sum_{\ell=3}^{5}\sup_{r'\in I_\eta}|e_\ell(r')|.
\end{aligned}
\]
The sum on the right is finite because the coefficients are
continuous on the compact interval $I_\eta$, which stays away from
$0$ and $1$. This proves the uniform $O(n^{-3})$ remainder in
\eqref{eq:conditional_expansion}. Lemma \ref{lemm:localization}
then gives the same expansion for the capped risk.
\end{proof}

\subsection{From Bernoulli assignment to balanced complete randomization}

Bernoulli assignment makes the unit-level moment calculation convenient, but it also makes the treatment fraction random. The next lemma allows us to link the two trial design MSEs.

\begin{lemm}[Bernoulli and conditional expansions]
\label{lemm:bern_conditional}
Under the sampling and moment conditions of Theorem
\ref{theo:deficiency}, with
$W_i\stackrel{\mathrm{iid}}{\sim}\mathrm{Bernoulli}(1/2)$
independent of the units, let $\bar W=n^{-1}\sum_iW_i$.
For every $\eta\in(0,1/2)$,
\[
\overline{\operatorname{MSE}}_\Bern(\htau_j)
=\EEE\Big[
\overline{\operatorname{MSE}}_{\CRT,\bar W}(\htau_j)
\mathbf 1\{|\bar W-\tfrac12|\le\eta\}\Big]
+O(e^{-2n\eta^2}).
\]
Consequently,
\[
\MSEexp_\Bern(\htau_j)
=\EEE\Big[\MSEexp_{\CRT,\bar W}(\htau_j)
\mathbf 1\{|\bar W-\tfrac12|\le\eta\}\Big]+O(n^{-3}).
\]
\end{lemm}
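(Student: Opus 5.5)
The plan is to condition on the assignment vector $W$, or more precisely on the treated count $m=n\bar W$, and treat the two displays separately. Under Bernoulli$(1/2)$ assignment independent of the units, the conditional law of $W$ given $n\bar W=m$ is uniform over assignments with $m$ treated units, which is exactly complete randomization with fraction $r=m/n$. The tower property then gives
\[
\overline{\operatorname{MSE}}_\Bern(\htau_j)=\EEE\big[\overline{\operatorname{MSE}}_{\CRT,\bar W}(\htau_j)\big].
\]
This identity holds once we adopt the finite-valued completion: when $m\in\{0,n\}$ or a fit is undefined, the estimator takes a fixed value, and the capped loss is still at most $1$. We split this expectation on $\{|\bar W-\frac12|\le\eta\}$ and its complement. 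On the complement the integrand is bounded by $1$ because of the cap. Hoeffding's inequality gives $\Pr(|\bar W-\frac12|>\eta)\le 2e^{-2n\eta^2}$, and this proves the first display.

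For the second display we cannot use the same argument, since the polynomial $\tilde\tau_{n,j}$ is not bounded. Instead we chain the results already proved. First, Lemma \ref{lemm:localization} in its Bernoulli form gives $\MSEexp_\Bern=\overline{\operatorname{MSE}}_\Bern+O(n^{-3})$. Next, the first display rewrites $\overline{\operatorname{MSE}}_\Bern$ as the truncated conditional average plus an exponentially small error. Finally, on the event $|\bar W-\frac12|\le\eta$, the uniform statement \eqref{eq:capped_transfer} of Lemma \ref{lemm:localization} replaces $\overline{\operatorname{MSE}}_{\CRT,\bar W}$ by $\MSEexp_{\CRT,\bar W}$ with error $O(n^{-3})$. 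That error is uniform in $r\in[\frac12-\eta,\frac12+\eta]$, so it survives integration against the law of $\bar W$. The exponential term is absorbed into $O(n^{-3})$. Note that $\MSEexp_{\CRT,\bar W}$ refers to the fixed-fraction polynomial $\mathcal P_{j,\bar W}$, and it is only ever evaluated on the good event, where Lemma \ref{lemm:two_sample} guarantees its coefficients are well defined.

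The step needing most care is the conditioning identity. We must check that the Bernoulli estimator, computed through $g_j(v)=h_{j,v_W}(v_1/v_W,v_0/(1-v_W))$, coincides pointwise with the CRT estimator $h_{j,r}(\bar U_1,\bar U_0)$ at $r=\bar W$. We must also check that the completion conventions agree, so that the capped losses are literally equal conditionally on $W$. Both hold by construction: $v_1/v_W$ and $v_0/(1-v_W)$ are exactly the arm averages $\bar U_1,\bar U_0$, and both definitions use the same fixed value whenever a fit is not uniquely defined. The remaining ingredients, namely the uniformity in $r$ and Hoeffding's bound, are routine.
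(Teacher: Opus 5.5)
Your proposal is correct and follows the paper's proof step for step. You condition on the treated count so that the capped Bernoulli risk equals the average of the capped CRT risks, and you use the cap together with Hoeffding's bound $2e^{-2n\eta^2}$ to control the omitted event. You then apply Lemma~\ref{lemm:localization} once in its Bernoulli form and once uniformly in $r\in[\frac12-\eta,\frac12+\eta]$ to the conditional risks. Your extra checks, namely the finite-valued completion when $m\in\{0,n\}$ and the pointwise agreement of the $g_j$ and $h_{j,r}$ representations, spell out what the paper's short proof leaves implicit.
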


\begin{proof}
Conditional on $n\bar W=m$, Bernoulli assignment is uniform over
assignments with $m$ treated units, exactly as in a CRT. The law of
total expectation therefore identifies the conditional capped risk
exactly. The omitted event contributes at most
\[
\Pr(|\bar W-1/2|>\eta)\le2e^{-2n\eta^2}.
\]
Applying Lemma \ref{lemm:localization} to the Bernoulli risk and
uniformly to the conditional risks proves the polynomial identity
with $O(n^{-3})$ error.
\end{proof}

The proof of Theorem \ref{theo:deficiency} combines these lemmas, a
reduction from the CRT to the Bernoulli design, and the Bernoulli
calculation.

\begin{theo}
\label{theo:crt_reduction}
Suppose the Bernoulli design has second-order difference
\begin{equation}
\label{eq:bern_second_order}
\MSEexp_\Bern(\hat{\tau}_\IREG)-\MSEexp_\Bern(\hat{\tau}_\REG)
=
\frac{D_\Bern}{n^2}+O(n^{-3})
\end{equation}
at $\pi=1/2$, and let the sampling and moment conditions of Theorem \ref{theo:deficiency} hold. Then, under
complete randomization with $n_1=\lfloor n/2\rfloor$ treated units,
\begin{equation}
\label{eq:crt_second_order}
\MSEexp_\CRT(\hat{\tau}_\IREG)-\MSEexp_\CRT(\hat{\tau}_\REG)
=
\frac{D_\Bern+4G}{n^2}+O(n^{-3}).
\end{equation}
The same identity holds for the capped MSE.
\end{theo}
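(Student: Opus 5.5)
The plan is to compare the Bernoulli design with complete randomization at a random treated fraction, and then Taylor-expand in that fraction. By Lemma~\ref{lemm:bern_conditional},
\[
\MSEexp_\Bern(\htau_j)=\EEE\big[\MSEexp_{\CRT,\bar W}(\htau_j)\mathbf 1\{|\bar W-\tfrac12|\le\eta\}\big]+O(n^{-3}).
\]
I then substitute the uniform expansion of Lemma~\ref{lemm:conditional}, $\MSEexp_{\CRT,r}(\htau_j)=V_j(r)/n+c_j(r)/n^2+O(n^{-3})$ with $V_j,c_j$ smooth on $I_\eta$.

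Write $\xi=\bar W-\tfrac12$. Under Bernoulli$(1/2)$ assignment, $\EEE\xi=\EEE\xi^3=0$, $\EEE\xi^2=1/(4n)$ and $\EEE\xi^4=O(n^{-2})$. The indicator can be dropped at exponentially small cost, since $V_j$ and $c_j$ are bounded on $I_\eta$.

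Taylor-expanding $V_j$ to third order with a uniformly bounded fourth-derivative remainder on $I_\eta$ gives
\[
\EEE[V_j(\bar W)]/n=V_j(\tfrac12)/n+V_j''(\tfrac12)/(8n^2)+O(n^{-3}).
\]
A first-order expansion of $c_j$ gives $\EEE[c_j(\bar W)]/n^2=c_j(\tfrac12)/n^2+O(n^{-3})$. Hence
\[
D_\Bern=\{c_\IREG(\tfrac12)-c_\REG(\tfrac12)\}+\tfrac18\{V_\IREG''(\tfrac12)-V_\REG''(\tfrac12)\}.
\]

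The key input is an explicit formula for $V_\REG(r)-V_\IREG(r)$. I will identify $e_1(r)=V_j(r)$ from the gradient formula in the proof of Lemma~\ref{lemm:conditional} with the first-order variance of Proposition~\ref{prop:first_order} at treated fraction $r$. This is the standard delta-method calculation, using the explicit maps $h_{j,r}$ of Lemma~\ref{lemm:two_sample}. It yields
\[
V_\REG(r)-V_\IREG(r)=\frac{(1-2r)^2}{r(1-r)}\,G.
\]
Writing $r=\tfrac12+\xi$, this equals $16\xi^2 G/(1-4\xi^2)=16G\xi^2+O(\xi^4)$. Its second derivative at $\tfrac12$ is therefore $32G$, and its first derivative there vanishes. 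So $D_\Bern=D_{\CRT,1/2}-4G$, that is, $D_{\CRT,1/2}=D_\Bern+4G$.

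It remains to pass from $r=\tfrac12$ to the actual $r=\lfloor n/2\rfloor/n$, which equals $\tfrac12-1/(2n)$ when $n$ is odd. A naive shift produces an $O(n^{-2})$ contribution $V_j'(\tfrac12)(r-\tfrac12)/n$ for each estimator. In the difference, however, only $(V_\IREG-V_\REG)'(\tfrac12)=0$ enters, so what is left is $O((r-\tfrac12)^2/n)=O(n^{-3})$. The shift in $c_j$ is likewise $O(n^{-3})$. This gives \eqref{eq:crt_second_order}.

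The capped-MSE version then follows from Lemma~\ref{lemm:localization}, which shows that capped and polynomial MSE agree to $O(n^{-3})$ uniformly in $r\in I_\eta$, and under the Bernoulli design.

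I expect the main obstacle to be rigorously identifying $V_j(r)$ with the Proposition~\ref{prop:first_order} formulas at general $r$. That proposition is stated asymptotically, and I need an exact coefficient identity. I would verify it directly by computing the gradients $g_1,g_0$ of $h_{j,r}$ at $(\mu_1,\mu_0)$. For REG, the gradient in $\bar X_1-\bar X_0$ is $-\beta_{\rm pool}(r)$, where $\beta_{\rm pool}(r)=r\beta^*_{(1)}+(1-r)\beta^*_{(0)}$. For IREG it is $-\{(1-r)\beta^*_{(1)}+r\beta^*_{(0)}\}$. The slope estimates themselves do not enter at first order, because $d_X=0$ at the population point. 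Plugging these gradients into $g_1^\top\Var{U^{(1)}}g_1/r+g_0^\top\Var{U^{(0)}}g_0/(1-r)$ and subtracting then yields the displayed difference after a short algebraic simplification.
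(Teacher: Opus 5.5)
Your proposal is correct and follows essentially the paper's route. You write the Bernoulli risk as a mixture of CRT risks over the treated fraction and insert the uniform expansion of Lemma~\ref{lemm:conditional}. You extract $-4G$ from $\frac{(1-2r)^2}{r(1-r)}G=16Gu^2+O(u^4)$, with $u=r-\tfrac12$ and $\EEE[u^2]=1/(4n)$, and you handle odd $n$ through the vanishing first derivative of $V_\IREG-V_\REG$ at $\tfrac12$.

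Your explicit gradient check that $V_\REG(r)-V_\IREG(r)=\frac{(1-2r)^2}{r(1-r)}G$ holds as an exact identity for the coefficient $e_1(r)$ is a worthwhile tightening of a step the paper takes directly from Proposition~\ref{prop:first_order}. One small fix: keep the indicator on $\{|\bar W-\tfrac12|\le\eta\}$ until after Taylor expanding, since $V_j(r)$ has poles at $r\in\{0,1\}$.
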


\begin{proof}
\emph{Step 1: average the first-order variance difference.}
Let $n_1=\sum_{i=1}^n W_i$ under the Bernoulli design and write
$r=n_1/n$. Conditional on $n_1=m$, the treatment assignment has the same distribution as a CRT with treatment fraction $r$. Fix
$\eta\in(0,\frac12)$ and write
$\mathcal A_\eta=\{|r-1/2|\le\eta\}$. All expectations of
functions containing $r^{-1}$ or $(1-r)^{-1}$ below are restricted to
$\mathcal A_\eta$; their products with the indicator are defined
as zero off that event. By Lemma \ref{lemm:bern_conditional}, $\MSEexp_\Bern$ is, up to $O(n^{-3})$, the average over $n_1$ of
$\MSEexp_{\CRT,r}$ restricted to $|r-\frac12|\le\eta$, and by Lemma \ref{lemm:conditional} the latter admits the expansion \eqref{eq:conditional_expansion} uniformly on that set, with $V_j$ and $c_j$ infinitely differentiable. At treatment fraction $r$ the first-order variance difference is, by Proposition
\ref{prop:first_order},
\[
V_{\IREG}(r)-V_{\REG}(r)
=
-\frac{(1-2r)^2}{r(1-r)}G,
\]
and
\[
\frac{(1-2r)^2}{r(1-r)}
=
16u^2+O(u^4),
\qquad
u=r-\tfrac12.
\]
Under Bernoulli randomization with $\pi=1/2$,
$\EEE[u^2]=1/(4n)$ and $\EEE[u^4]=O(n^{-2})$.
Restricting these bounded moments to $\mathcal A_\eta$ changes
them by an exponentially small amount. The event is symmetric under
$u\mapsto-u$, so $\EEE[u\mathbf 1_{\mathcal A_\eta}]=0$.
It follows that
\[
\frac{1}{n}
\EEE\big[\{V_{\IREG}(r)-V_{\REG}(r)\}\mathbf 1_{\mathcal A_\eta}\big]
=
-\frac{4G}{n^2}+O(n^{-3}).
\]
\emph{Step 2: average the second-order coefficient.}
For the second-order coefficients, a second-order Taylor expansion of
$c_\IREG-c_\REG$ around $1/2$ together with the restricted moment identities just established gives
\[
\EEE[\{c_{\IREG}(r)-c_{\REG}(r)\}\mathbf 1_{\mathcal A_\eta}]
=
c_{\IREG}(1/2)-c_{\REG}(1/2)+O(n^{-1}).
\]
Averaging \eqref{eq:conditional_expansion} over $n_1$ therefore yields
\[
\MSEexp_{\Bern}(\hat{\tau}_\IREG)
-
\MSEexp_{\Bern}(\hat{\tau}_\REG)
=
\frac{
c_{\IREG}(1/2)-c_{\REG}(1/2)-4G
}{n^2}
+O(n^{-3}),
\]
so
\[
D_{\Bern}
=
c_{\IREG}(1/2)-c_{\REG}(1/2)-4G.
\]
\emph{Step 3: fix the arm sizes.}
Under balanced complete randomization, $r=1/2$ for even $n$, giving
\[
D_{\CRT}
=
c_{\IREG}(1/2)-c_{\REG}(1/2)
=
D_{\Bern}+4G.
\]
For even $n$, Lemma \ref{lemm:conditional} therefore gives directly
\[
\MSEexp_\CRT(\hat{\tau}_\IREG)-\MSEexp_\CRT(\hat{\tau}_\REG)
=
\frac{D_{\Bern}+4G}{n^2}+O(n^{-3}).
\]
For odd $n$, $r-1/2=O(n^{-1})$. The first-order variance difference is
quadratic in $r-1/2$, so its contribution to the MSE is $O(n^{-3})$;
the smoothness of $c_\IREG-c_\REG$ gives an $O(n^{-1})$ change in the
second-order coefficient, which is also $O(n^{-3})$ after division by
$n^2$. Thus the same expansion holds for odd $n$. Lemma \ref{lemm:localization} gives the capped-risk conclusion.
\end{proof}

\subsection{The second-order constant under Bernoulli assignment}

We now calculate $D_\Bern$. The tensor notation below is used only to
organize the contractions in the Taylor expansion; Lemma
\ref{lemm:matrix} converts the result to the matrix notation of the main
theorem. Here $F$ captures a fourth-moment contraction, $Q$ and
$\mathcal R$ capture third-moment contractions, and $A_P,B_P,C_P$
combine third moments with arm-specific residual structure.

\begin{theo}
\label{theo:bern_deficiency}
Let the sampling and moment conditions of Theorem \ref{theo:deficiency} hold, except that
$W_i\stackrel{\mathrm{iid}}{\sim}\text{Bernoulli}(1/2)$ independently of $(X_i,Y_i(0),Y_i(1))$. Define
\begin{align}
T_{3,ijk} &= \EEE[Z_iZ_jZ_k],
&
T_{4,ijkl} &= \EEE[Z_iZ_jZ_kZ_l],
&
q_i &= T_{3,ijk}\delta_{jk},
&
L_{ij} &= T_{3,iab}T_{3,jab},
\end{align}
write $S=q=\EEE[Z\|Z\|^2]$, and set $\mathcal E=\EEE[ZZ^\top\varepsilon]$ and $P=\mathcal E_W=\EEE[ZZ^\top(W-\tfrac12)\varepsilon]$. Here $P$ denotes this residual-moment matrix, not the superpopulation law. For the scalar contractions
\begin{align}
G&=\|\gamma\|^2,
&
F&=\gamma_i\gamma_jT_{4,ijkk},
&
Q&=(\gamma^\top q)^2,
&
\mathcal R&=\gamma^\top L\gamma,
\\
A_P&=\gamma_iT_{3,ijk}P_{jk},
&
B_P&=(\gamma^\top q)\tr(P),
&
C_P&=\gamma^\top Pq,
\end{align}
we have
\begin{equation}
\label{eq:bern_mse_diff}
\MSEexp_{\Bern}(\hat{\tau}_\IREG)
-
\MSEexp_{\Bern}(\hat{\tau}_\REG)
=
\frac{D_{\Bern}}{n^2}+O(n^{-3}),
\end{equation}
where
\begin{equation}
\label{eq:Dbern}
D_{\Bern}
=
-(2p+7)G
+F-Q-3\mathcal R
-16A_P-8B_P+8C_P
+8\tr(\mathcal E^2)
+8S^\top \EEE[Z\varepsilon^2].
\end{equation}
The same expansion holds for the capped MSE.
\end{theo}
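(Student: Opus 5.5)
We compute the Bernoulli second-order constant directly from the IID representation in Definition~\ref{def:msexp}. By Lemma~\ref{lemm:localization}, the capped MSE and $\MSEexp_\Bern$ agree to $O(n^{-3})$, so it suffices to work with the cubic Taylor polynomial $\tilde\tau_{n,j}$ of $g_j$ around $\mu=(1/2,\mu_1/2,\mu_0/2)$. The standard Nagar-type expansion for smooth functions of an IID mean then gives the $n^{-2}$ coefficient of $\EEE[(\tilde\tau_{n,j}-\tau)^2]$ as
\[
a_iH_{jk}\kappa_{ijk}
+\tfrac14 H_{ij}H_{kl}(\kappa_{ij}\kappa_{kl}+2\kappa_{ik}\kappa_{jl})
+a_iA_{3,jkl}\kappa_{ij}\kappa_{kl}.
\]
Here $\kappa$ denotes cumulants of $\mathcal V_i$, and the three terms come from $2\EEE[q_1q_2]$, $\EEE[q_2^2]$ and $2\EEE[q_1q_3]$ respectively. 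Because the same $\mathcal V_i$ serves both estimators, we only need the \emph{difference} $g_\IREG-g_\REG$ together with its cross terms against $g_\REG$. This is the main economy of the approach.

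\textbf{Step 1: reduce to a canonical form.} Since both estimators are affine equivariant in $X$ and in each arm's outcome, we may whiten, taking $X=Z$ and $\Sigma=I_p$. We may also replace $Y_i(w)$ by $\alpha^*_{(w)}+Z_i^\top\beta_w+\varepsilon_i(w)$ with $\beta_1-\beta_0=\gamma$. Writing the coefficients in the symmetric parametrization $\beta_1=\bar\beta+\gamma/2$ and $\beta_0=\bar\beta-\gamma/2$, the common slope $\bar\beta$ cancels exactly from both estimators, because each is invariant to adding $Z^\top b$ to both potential outcomes. The estimator-specific dependence is therefore only through $\gamma$ and the residuals.

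\textbf{Step 2: write both estimators explicitly in raw moments.} Set $r=v_W$, and let $\bar Z_w$, $C_w$, $s_w$ denote the arm moments built from $v$. From the formulas in Lemma~\ref{lemm:two_sample},
\[
g_\IREG-g_\REG=-d_X^\top\big[(1-r)\hat\beta_1+r\hat\beta_0-\hat\beta_{\rm pool}(r)\big].
\]
We expand the bracket to second order in $(C_w-I)$, $(s_w-\EEE s_w)$ and $(r-\tfrac12)$, noting that at $r=\tfrac12$ its leading value is $-(r-\tfrac12)\gamma$ plus fluctuation terms. Because $d_X=O_p(n^{-1/2})$, the difference is itself $O_p(n^{-1})$. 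Its contribution at order $n^{-2}$ therefore comes from three sources: (i) $2\EEE[q_1^{\REG}\,\Delta q_2]$, the cross term between the linear term of REG and the quadratic part of the difference; (ii) $\EEE[(\Delta q_2)^2]$; and (iii) $2\EEE[q_1\,\Delta q_3]$ together with $2\EEE[q_2^\REG\,\Delta q_2]$.

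\textbf{Step 3: evaluate the contractions.} Each contribution reduces to a finite list of cumulants of $\mathcal V_i$. These are moments such as $\EEE[ZZ^\top]$, $T_3$, $T_4$, $\EEE[ZZ^\top\varepsilon(w)]$ and $\EEE[Z\varepsilon(w)^2]$, with the $W$-coordinates contributing the factors $1/4$ and the signs that produce $P=\mathcal E_W$. The $-(2p+7)G$ term arises from $\EEE[\|d_X\|^2\,\|C-I\|^2]$-type traces and from the $(r-\tfrac12)^2$ imbalance factor; the latter yields $-4G$ beyond the CRT value, matching Theorem~\ref{theo:crt_reduction}. The terms $F$, $Q$ and $\mathcal R$ arise from the fourth- and third-moment parts of $\Var{C_w}$ paired with $\gamma$, and $A_P$, $B_P$, $C_P$ arise from pairing $C_w-I$ or $\bar Z_w$ with $s_w$ fluctuations, which carry $\varepsilon$. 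Finally, the $\varepsilon$-only terms $8\tr(\mathcal E^2)$ and $8S^\top\EEE[Z\varepsilon^2]$ come from the difference between arm-specific and pooled covariance inverses acting on residual cross-moments. I would organize the bookkeeping symbolically: index the Taylor coefficients by tensor type and have each contraction checked by computer algebra, as the paper's certificates suggest.

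\textbf{Main obstacle.} Step 3 is the main obstacle: correctly tracking all fourth-order contractions in $\EEE[q_2^2]+2\EEE[q_1q_3]$, including the Isserlis-type pairings $\kappa\kappa$ and the pieces generated by the random $r$ in $g_j$. My first attempt would be to verify the final formula on two test cases. The first is Gaussian $Z$ with $\varepsilon=0$, where $D_\Bern$ must reduce to $-(p+5)G$, consistent with Corollary~\ref{coro:homosk} after adding $4G$. The second is $\gamma=0$, where only the $\varepsilon$ terms survive. These checks pin down the coefficients before the general contraction is assembled.
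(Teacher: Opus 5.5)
Your route is essentially the paper's. It uses the same Bernoulli raw-moment representation, the same canonical reduction, the same Nagar coefficient (your display equals $\mathcal T_1+\mathcal T_2+\mathcal T_3$ once you use the symmetry of $H$ and $A_3$), and the same reliance on symbolic contraction. Working with $g_\IREG-g_\REG$ is not an extra economy over the paper either. Because $a_\IREG=a_\REG$, your sources are exactly $\Delta\mathcal T_1=2\EEE[q_1\Delta q_2]$, $\Delta\mathcal T_2=\EEE[(\Delta q_2)^2]+2\EEE[q_2^\REG\Delta q_2]$ and $\Delta\mathcal T_3=2\EEE[q_1\Delta q_3]$. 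The only real difference is that you differentiate the closed forms of Lemma~\ref{lemm:two_sample} rather than the normal-equation identities. Your organization does make the $\gamma=0$ case transparent. There $\Delta$ has no quadratic part, and with $u=r-\tfrac12$ and linearized $d_Z,t_w,A_w$ as in Lemma~\ref{lemm:bias}, one has $\Delta q_3=2u\,d_Z^\top(t_1-t_0)+\tfrac14 d_Z^\top(A_1-A_0)(t_1-t_0)$. The first piece contributes only $O(n^{-3})$ against $q_1$, and the second gives exactly $8\tr(\mathcal E^2)+8S^\top\EEE[Z\varepsilon^2]$.

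Two things need fixing.

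\textbf{The bracket's population value.} It is $(1-r)\beta_1+r\beta_0-\{r\beta_1+(1-r)\beta_0\}=(1-2r)\gamma=-2(r-\tfrac12)\gamma$, not $-(r-\tfrac12)\gamma$. If the bracket were $c(1-2r)\gamma$, the conditional first-order difference would be $(c^2-2c)\frac{(1-2r)^2}{r(1-r)}\frac{G}{n}$. This matches Proposition~\ref{prop:first_order}, and averages to the $-4G$ you invoke, only for $c=1$. Your value $c=\tfrac12$ would give $-3G$.

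\textbf{Your two checks do not pin down the coefficients.} For Gaussian $Z$ with $\varepsilon=0$, the blocks $Q,\mathcal R,A_P,B_P,C_P$ all vanish and $F=(p+2)G$. So only the combination $c_G+(p+2)c_F$ of the $G$- and $F$-coefficients is tested. At $\gamma=0$, only the two residual terms are visible. Hence the coefficients of $Q,\mathcal R,A_P,B_P,C_P$, and the split between $G$ and $F$, rest entirely on the contraction you leave to Step 3, which is the whole content of the theorem. The Gaussian target $-(p+5)G$ must also be computed independently, since Corollary~\ref{coro:homosk} is derived from this theorem. The paper's row-by-row values $\Delta\mathcal T_1,\Delta\mathcal T_2,\Delta\mathcal T_3$ are the finer targets to check against, including the cancellation of $\Xi_\gamma$ between the first two rows.
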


\begin{proof}
\emph{Step 1: remove population main effects and whiten the covariates.}
On the neighborhood used in Lemma \ref{lemm:localization}, both REG and IREG contain the regressors $1$, $W$ and $X$ and their fits are uniquely defined. Subtracting any linear combination of these regressors from the outcome changes the fitted coefficient on $W$ by the corresponding amount while leaving the estimation error unchanged, so we may work with
\[
Y_i
=
W_i X_i^\top\{\beta^*_{(1)}-\beta^*_{(0)}\}
+\varepsilon_i.
\]
Whitening as in \eqref{eq:whitening} gives the canonical  representation
$Y_i=W_i\gamma^\top Z_i+\varepsilon_i$, whose population ATE is
zero. We write
\[
W_i=\tfrac12+\omega_i,
\qquad
\omega_i\in\{-\tfrac12,\tfrac12\},
\qquad
\EEE[\omega_i]=0,
\qquad
\omega_i^2=\tfrac14.
\]

\emph{Step 2: identify the three second-order contributions.}
Use the Bernoulli raw-moment vector $\mathcal V_i$ in
Definition \ref{def:msexp}, formed with the transformed outcome
and whitened covariates from Step 1, and the corresponding map
$g_j$. Let
\[
\Sigma^{\mathcal V}_{ij}
=
\Cov(\mathcal V_{1,i},\mathcal V_{1,j}),
\qquad
\mu^{\mathcal V}_{3,ijk}
=
\EEE[
(\mathcal V_{1,i}-\mu_i)
(\mathcal V_{1,j}-\mu_j)
(\mathcal V_{1,k}-\mu_k)
]
\]
be the covariance and third central-moment tensors of one unit's
raw-moment vector; the first subscript in $\mathcal V_{1,i}$ denotes
the unit and the second denotes its coordinate. In the next three
contractions all repeated indices range over raw-moment coordinates.
Expanding the square of \eqref{eq:hall_expansion} gives
\[
n\MSEexp_\Bern(\htau_j)
=V^*+\frac{\mathcal T_{1,j}+\mathcal T_{2,j}+\mathcal T_{3,j}}{n}
+O(n^{-2}),
\]
where, suppressing $j$ again,
\begin{equation}
\label{eq:hall_T1}
\mathcal T_1
=
a_iH_{jk}\mu^{\mathcal V}_{3,ijk},
\end{equation}
\begin{equation}
\label{eq:hall_T2_bern}
\mathcal T_2
=
\frac14 H_{ij}H_{kl}
\p{
\Sigma^{\mathcal V}_{ik}\Sigma^{\mathcal V}_{jl}
+
\Sigma^{\mathcal V}_{il}\Sigma^{\mathcal V}_{jk}
+
\Sigma^{\mathcal V}_{ij}\Sigma^{\mathcal V}_{kl}
},
\end{equation}
\begin{equation}
\label{eq:hall_T3_bern}
\mathcal T_3
=
\frac13 a_iA_{3,jkl}\p{
\Sigma^{\mathcal V}_{ij}\Sigma^{\mathcal V}_{kl}
+
\Sigma^{\mathcal V}_{ik}\Sigma^{\mathcal V}_{jl}
+
\Sigma^{\mathcal V}_{il}\Sigma^{\mathcal V}_{jk}
},
\end{equation}
\emph{Step 3: compute derivatives from the least-squares normal equations.}
For REG, order the regressors as $(1,Z,W)$; for IREG, use
$(1,Z,W,WZ)$. In these orders, the population Gram matrices and their
inverses in the whitened coordinates are
\[
A_R=
\begin{pmatrix}
1&0&1/2\\
0&I_p&0\\
1/2&0&1/2
\end{pmatrix},
\qquad
A_R^{-1}=
\begin{pmatrix}
2&0&-2\\
0&I_p&0\\
-2&0&4
\end{pmatrix},
\qquad
A_I=
\begin{pmatrix}
1&0&1/2&0\\
0&I_p&0&I_p/2\\
1/2&0&1/2&0\\
0&I_p/2&0&I_p/2
\end{pmatrix},
\]
\[
A_I^{-1}
=
\begin{pmatrix}
2&0&-2&0\\
0&2I_p&0&-2I_p\\
-2&0&4&0\\
0&-2I_p&0&4I_p
\end{pmatrix},
\qquad
\theta_R=(0,\gamma/2,0),
\qquad
\theta_I=(0,0,0,\gamma).
\]

For either regressor vector, let $A(v)$ and $b(v)$ be its Gram
matrix and regressor--outcome cross-moment vector as functions
of the raw moments $v$, and write $A=A(\mu)$ and $b=b(\mu)$.
These functions are affine in the chosen raw-moment coordinates,
so their derivatives of order two and higher vanish.
For a perturbation direction $h_s$ of the raw moments, write $dA_s$
and $db_s$ for the corresponding perturbations of $A$ and $b$.
Writing $\Omega=A^{-1}$ and $\theta=\Omega b$, differentiation of
$A\theta=b$ gives the exact derivative identities
\[
D\theta[h]
=
\Omega(db-dA\,\theta),
\]
\[
D^2\theta[h_1,h_2]
=
-\Omega dA_1\Omega(db_2-dA_2\theta)
-\Omega dA_2\Omega(db_1-dA_1\theta),
\]
\[
D^3\theta[h_1,h_2,h_3]
=
\sum_{\varsigma\in\mathfrak S_3}
\Omega dA_{\varsigma_1}\Omega dA_{\varsigma_2}
\Omega(db_{\varsigma_3}-dA_{\varsigma_3}\theta).
\]
Here $\mathfrak S_3$ is the set of all six permutations
$(\varsigma_1,\varsigma_2,\varsigma_3)$ of $(1,2,3)$;
$dA$ and $db$ in the first identity are the perturbations in direction $h$.

\emph{Step 4: account for the centering in IREG.}
REG's estimator is the treatment coefficient,
$g_\REG(v)=\theta_W(v)$. IREG's is
$g_\IREG(v)=\theta_W(v)+\bar z(v)^\top\theta_{WZ}(v)$, where
$\bar z(v)$ is the full-sample covariate mean, a linear coordinate of
$v$ that vanishes at $\mu$ in whitened coordinates, and $\theta_{WZ}=\gamma$ at $\mu$. The product rule gives
\[
a_\IREG[h]
=
D\theta_W[h]+D\bar z[h]^\top\gamma,
\]
\[
H_\IREG[h_1,h_2]
=
D^2\theta_W[h_1,h_2]
+
D\bar z[h_1]^\top D\theta_{WZ}[h_2]
+
D\bar z[h_2]^\top D\theta_{WZ}[h_1],
\]
\[
A_{3,\IREG}[h_1,h_2,h_3]
=
D^3\theta_W[h_1,h_2,h_3]
+
\sum_{\{i,j,k\}=\{1,2,3\}}
D\bar z[h_i]^\top D^2\theta_{WZ}[h_j,h_k],
\]
where the sum means the three terms with
$(i,j,k)=(1,2,3),(2,1,3),(3,1,2)$. The covariance and
third-moment blocks of $\mathcal V_i$ are products of moments of
$(Z_i,\omega_i,\varepsilon_i)$ under the canonical representation in
Step 1, with $W_i$ marginally Bernoulli$(1/2)$.

\emph{Step 5: evaluate the contractions and take IREG minus REG.}
Write $\Delta\mathcal T_k=\mathcal T_{k,\IREG}-\mathcal T_{k,\REG}$.
Substituting these derivative tensors and the covariance and third-moment blocks of $\mathcal V_i$ into \eqref{eq:hall_T1}--\eqref{eq:hall_T3_bern}, we reduce the resulting indexed expressions using tensor symmetries, dummy-index relabeling, and the identities implied by $\EEE[ZZ^\top]=I_p$ and $\omega_i^2=1/4$. This gives
\begin{align}
\Delta\mathcal T_1
&=
2G+2F+8\Xi_\gamma,
\label{eq:dT1}
\\
\Delta\mathcal T_2
&=
-3G-F-Q-\mathcal R-8\Xi_\gamma-8A_P-8B_P,
\label{eq:dT2}
\\
\Delta\mathcal T_3
&=
-(2p+6)G-2\mathcal R-8A_P+8C_P
+8\tr(\mathcal E^2)
+8S^\top \EEE[Z\varepsilon^2],
\label{eq:dT3}
\end{align}
where
\[
\Xi_\gamma
=
\EEE[(\gamma^\top Z)\|Z\|^2\omega\varepsilon]
\]
with $\omega=W-\frac12$. The accompanying verification script
enumerates the full Cartesian products of the expansion directions and
reduces the resulting indexed expressions using only tensor symmetries, dummy-index relabeling, and the identities implied by
$\EEE[ZZ^\top]=I_p$ and $\omega_i^2=1/4$, with the dimension $p$ kept
symbolic. The symbolic remainder in each of the three rows is
identically zero; the verification is implemented in
\texttt{verification/general\char`\_p\char`\_symbolic\char`\_certificate.py} in the replication repository and runs in under a minute.

\emph{Step 6: collect the terms.}
Adding the three rows gives the coefficient of $G$ as
$2-3-(2p+6)=-(2p+7)$ and the coefficient of $F$ as $2-1=1$.
The $\Xi_\gamma$ terms cancel. The remaining coefficients are $-1$
for $Q$, $-3$ for $\mathcal R$, and $(-16,-8,8)$ for
$(A_P,B_P,C_P)$, together with the two residual terms in
\eqref{eq:Dbern}. This proves the stated formula.

One contraction can be checked directly without the full tensor
calculation. The third covariance pairing in
\eqref{eq:hall_T2_bern} is
\[
\frac14(H_{ij}\Sigma^{\mathcal V}_{ij})^2.
\]
It is the square of the leading bias coefficient. By the bias
calculation in Lemma \ref{lemm:bias}, those coefficients at balanced
allocation are $b_I=-4\tr(P)$ and
$b_R=-4\tr(P)-\gamma^\top q$. They are also the Bernoulli
coefficients: the conditional bias coefficient is smooth in $r$,
and Lemma \ref{lemm:localization}, applied to clipped errors, changes the mean by only $O(n^{-2})$. Conditioning on the treatment count and using its symmetry then preserves the leading bias coefficient. Thus this pairing contributes
\[
b_I^2-b_R^2
=-(\gamma^\top q)^2
-8(\gamma^\top q)\tr(P)
=-Q-8B_P.
\]
These are exactly the two bias-squared terms in the second row.
The other pairings and the linear--cubic contraction account for the
remaining variance contributions.

The computational certificate checks the remaining algebra. Derived
by hand and displayed above: the expansion terms \eqref{eq:hall_T1}--\eqref{eq:hall_T3_bern}, the Gram matrices and coefficients at $\mu$, the derivative identities, the product-rule structure of the IREG centering functional, and the reduction rules. Verified symbolically by the accompanying implementation: the substitution of these objects into \eqref{eq:hall_T1}--\eqref{eq:hall_T3_bern} and the reduction of the resulting indexed sums to the rows \eqref{eq:dT1}--\eqref{eq:dT3}, with symbolic $p$ and zero remainder. The resulting constant is checked independently by Monte Carlo in Appendix
\ref{sec:numerical-verification}.
\end{proof}

\subsection{Recovering the matrix formula in the main theorem}

Adding the $4G$ correction of Theorem \ref{theo:crt_reduction} to
the Bernoulli constant in Theorem \ref{theo:bern_deficiency} changes
only the coefficient of $G$, from $-(2p+7)$ to $-(2p+3)$. Hence
\begin{equation}
\label{eq:unsimp}
D_p
=
-(2p+3)G
+F-Q-3\mathcal R
-16A_P-8B_P+8C_P
+8\tr(\mathcal E^2)
+8S^\top \EEE[Z\varepsilon^2],
\end{equation}
which Lemma \ref{lemm:matrix} rewrites in the matrix form of
Theorem \ref{theo:deficiency}. Together with Lemma \ref{lemm:localization}, this proves its capped-MSE identity.

\begin{lemm}[Equivalent matrix form]
\label{lemm:matrix}
With $K$, $S$, $S_\gamma$, $\mathcal E$ and $\mathcal E_W$ as in
Theorem \ref{theo:deficiency},
\[
G=\|\gamma\|^2,
\qquad
F=\gamma^\top K\gamma,
\qquad
Q=(\gamma^\top S)^2,
\qquad
\mathcal R=\|S_\gamma\|_F^2,
\]
\[
A_P=\tr(S_\gamma \mathcal E_W),
\qquad
B_P=(\gamma^\top S)\tr(\mathcal E_W),
\qquad
C_P=\gamma^\top \mathcal E_WS.
\]
\end{lemm}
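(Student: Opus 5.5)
The plan is a direct index-by-index translation. Each tensor contraction in Theorem \ref{theo:bern_deficiency} is rewritten as a trace or quadratic form of the matrices $K$, $S$, $S_\gamma$, $\mathcal E$, $\mathcal E_W$. Throughout we use linearity of expectation to move the fixed vector $\gamma$ and the fixed matrix $P$ inside or outside $\EEE[\cdot]$.

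First, the two identities that need no work. $G=\|\gamma\|^2$ is a definition. Since $q=S$ by definition, $Q=(\gamma^\top q)^2=(\gamma^\top S)^2$.

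For $F$, contract $\delta_{kl}$:
\[
F=\gamma_i\gamma_j\EEE[Z_iZ_jZ_kZ_k]=\EEE[(\gamma^\top Z)^2\|Z\|^2]=\gamma^\top\EEE[ZZ^\top\|Z\|^2]\gamma=\gamma^\top K\gamma.
\]
For $\mathcal R$, note that $(S_\gamma)_{ab}=\EEE[Z_aZ_b(Z^\top\gamma)]=\gamma_iT_{3,iab}$. Then
\[
\mathcal R=\gamma_i\gamma_jT_{3,iab}T_{3,jab}=\sum_{a,b}(S_\gamma)_{ab}^2=\|S_\gamma\|_F^2.
\]
The same identity gives
\[
A_P=\gamma_iT_{3,ijk}P_{jk}=(S_\gamma)_{jk}P_{jk}=\tr(S_\gamma P^\top)=\tr(S_\gamma P),
\]
where the last step uses the symmetry of $P$, which is an expectation of $ZZ^\top$ times a scalar.

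The only step requiring care is identifying $P=\EEE[ZZ^\top(W-\tfrac12)\varepsilon]$ with $\mathcal E_W=\tfrac14(\EEE[ZZ^\top\varepsilon(1)]-\EEE[ZZ^\top\varepsilon(0)])$. Here $\varepsilon=\varepsilon(W)$, and $W$ is Bernoulli$(1/2)$ independent of $(Z,\varepsilon(0),\varepsilon(1))$. Conditioning on $W$ gives
\[
P=\tfrac12\cdot\tfrac12\EEE[ZZ^\top\varepsilon(1)]+\tfrac12\cdot(-\tfrac12)\EEE[ZZ^\top\varepsilon(0)]=\mathcal E_W.
\]
Similarly, $\mathcal E=\EEE[ZZ^\top\varepsilon]$ is understood as the mixture over $W$, which matches the main theorem's meaning.

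With $P=\mathcal E_W$ established, the last two identities are immediate. $B_P=(\gamma^\top q)\tr(P)=(\gamma^\top S)\tr(\mathcal E_W)$, and $C_P=\gamma^\top Pq=\gamma^\top\mathcal E_WS$.

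I expect no real obstacle. The only point to watch is the convention that $\varepsilon$ and $\mathcal E$ are formed under the Bernoulli$(1/2)$ mixture, so that they agree with the CRT quantities in Theorem \ref{theo:deficiency}. This holds because all of them are population moments that do not depend on the design.
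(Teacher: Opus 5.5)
Your proposal is correct and follows the paper's proof essentially line by line. It uses the same index contractions for $F$, $\mathcal R$ and $A_P$, namely $(S_\gamma)_{ab}=\gamma_iT_{3,iab}$ together with the symmetry of $P$, and treats $G$, $Q$, $B_P$, $C_P$ as immediate. Your explicit check that $\EEE[ZZ^\top(W-\tfrac12)\varepsilon]=\tfrac14\big(\EEE[ZZ^\top\varepsilon(1)]-\EEE[ZZ^\top\varepsilon(0)]\big)$ under independent Bernoulli$(1/2)$ assignment is a small addition that the paper omits, since it simply sets $P=\mathcal E_W$ by definition in Theorem \ref{theo:bern_deficiency}.
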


\begin{proof}
First,
\[
F
=
\gamma_i\gamma_jT_{4,ijkl}\delta_{kl}
=
\gamma^\top \EEE[ZZ^\top\|Z\|^2]\gamma
=
\gamma^\top K\gamma,
\]
and
\[
S=\EEE[Z_iZ_jZ_j]_{i=1}^p=q
\]
gives
\[
Q=(\gamma^\top q)^2=(\gamma^\top S)^2.
\]
Next,
\[
(S_\gamma)_{ab}
=
\EEE[Z_aZ_b(Z^\top\gamma)]
=
T_{3,abk}\gamma_k,
\]
so
\[
\|S_\gamma\|_F^2
=
\sum_{a,b}(S_\gamma)_{ab}^2
=
\gamma_k\gamma_\ell T_{3,abk}T_{3,ab\ell}
=
\gamma_k\gamma_\ell L_{k\ell}
=
\mathcal R.
\]
Since $S_\gamma$ and $\mathcal E_W$ are symmetric,
\[
\tr(S_\gamma \mathcal E_W)
=
(S_\gamma)_{jk}(\mathcal E_W)_{jk}
=
\gamma_iT_{3,ijk}P_{jk}
=
A_P;
\]
finally,
\[
B_P
=
(\gamma^\top q)\tr(P)
=
(\gamma^\top S)\tr(\mathcal E_W)
\]
and
\[
C_P
=
\gamma^\top Pq
=
\gamma^\top \mathcal E_WS.
\]
\end{proof}

\subsection{The exact estimator difference and the leading biases}

The following lemma allows for variance-reduced
verification of Appendix \ref{sec:numerical-verification}.

\begin{lemm}[Exact difference identity]
\label{lemm:exact_diff}
Let $\bar X_w$ and
\[
\mathcal S_w
=
\sum_{i:W_i=w}
(X_i-\bar X_w)(X_i-\bar X_w)^\top
\]
denote the within-arm means and scatter matrices, let
\[
\hbeta_w
=
\mathcal S_w^{-1}
\sum_{i:W_i=w}
(X_i-\bar X_w)(Y_i-\bar Y_w),
\]
and write $d_X=\bar X_1-\bar X_0$. Then, whenever
$\mathcal S_0$, $\mathcal S_1$ and $\mathcal S_0+\mathcal S_1$ are
invertible and $n_1=n_0$,
\[
\htau_\REG
=
\bar Y_1-\bar Y_0
-
d_X^\top
(\mathcal S_1+\mathcal S_0)^{-1}
(\mathcal S_1\hbeta_1+\mathcal S_0\hbeta_0),
\]
\[
\htau_\IREG
=
\bar Y_1-\bar Y_0
-
d_X^\top
\frac{\hbeta_1+\hbeta_0}{2},
\]
and consequently
\[
\htau_\IREG-\htau_\REG
=
\tfrac12\,d_X^\top
(\mathcal S_1+\mathcal S_0)^{-1}
(\mathcal S_1-\mathcal S_0)
(\hbeta_1-\hbeta_0)
=O_P(n^{-1}).
\]
\end{lemm}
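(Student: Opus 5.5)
The plan is to express both estimators through the within-arm quantities $\bar Y_w$, $\bar X_w$, $\mathcal S_w$ and $\hbeta_w$, using the two-sample representation of Lemma \ref{lemm:two_sample}. That lemma already gives $\htau_\REG=\bar Y_1-\bar Y_0-d_X^\top\hat\beta_{\rm pool}(r)$ and $\htau_\IREG=\bar Y_1-\bar Y_0-d_X^\top\{(1-r)\hbeta_1+r\hbeta_0\}$. We rewrite these in the scatter-matrix normalization.

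\emph{Step 1 (REG).} The coefficient on $X$ in the pooled regression with intercept and $W$ is the within-group slope. By Frisch--Waugh--Lovell, partialling out $(1,W)$ replaces each $X_i$ and $Y_i$ by its deviation from its arm mean. The slope is therefore $(\mathcal S_1+\mathcal S_0)^{-1}\sum_w\sum_{i:W_i=w}(X_i-\bar X_w)(Y_i-\bar Y_w)$. Since $\sum_{i:W_i=w}(X_i-\bar X_w)(Y_i-\bar Y_w)=\mathcal S_w\hbeta_w$, this slope equals $(\mathcal S_1+\mathcal S_0)^{-1}(\mathcal S_1\hbeta_1+\mathcal S_0\hbeta_0)$. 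The coefficient on $W$ is then $\bar Y_1-\bar Y_0-d_X^\top$ times this slope, because the intercept and $W$ coefficients fit the arm means exactly given the slope. Invertibility of $\mathcal S_0+\mathcal S_1$ guarantees uniqueness.

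\emph{Step 2 (IREG).} The fully interacted regression is equivalent to separate OLS fits in each arm, with fitted slopes $\hbeta_w$ and intercepts $\bar Y_w-\bar X_w^\top\hbeta_w$; this requires $\mathcal S_0$ and $\mathcal S_1$ to be invertible. Evaluating the difference of the fitted lines at $\bar X$ gives $\bar Y_1-\bar Y_0-(\bar X_1-\bar X)^\top\hbeta_1+(\bar X_0-\bar X)^\top\hbeta_0$. With $n_1=n_0$ we have $\bar X=(\bar X_1+\bar X_0)/2$, so $\bar X_1-\bar X=d_X/2$ and $\bar X_0-\bar X=-d_X/2$. This yields the stated formula.

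\emph{Step 3 (difference).} Subtracting, the difference equals $d_X^\top M$ with
\[
M=(\mathcal S_1+\mathcal S_0)^{-1}(\mathcal S_1\hbeta_1+\mathcal S_0\hbeta_0)-\tfrac12(\hbeta_1+\hbeta_0).
\]
Write $\tfrac12(\hbeta_1+\hbeta_0)=(\mathcal S_1+\mathcal S_0)^{-1}\tfrac12(\mathcal S_1+\mathcal S_0)(\hbeta_1+\hbeta_0)$. Then
\[
M=(\mathcal S_1+\mathcal S_0)^{-1}\tfrac12\big[\mathcal S_1\hbeta_1+\mathcal S_0\hbeta_0-\mathcal S_1\hbeta_0-\mathcal S_0\hbeta_1\big]=\tfrac12(\mathcal S_1+\mathcal S_0)^{-1}(\mathcal S_1-\mathcal S_0)(\hbeta_1-\hbeta_0).
\]

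\emph{Step 4 (order in probability).} We check that $d_X$, $(\mathcal S_1+\mathcal S_0)^{-1}(\mathcal S_1-\mathcal S_0)$ and $\hbeta_1-\hbeta_0$ are respectively $O_P(n^{-1/2})$, $O_P(n^{-1/2})$ and $O_P(1)$.
\begin{itemize}
\item $d_X=O_P(n^{-1/2})$ by the central limit theorem applied to the independent arm samples.
\item $\mathcal S_w/n_w\to\Sigma$ at rate $n^{-1/2}$ by finite fourth moments. Since $n_1=n_0$, $(\mathcal S_1-\mathcal S_0)/n=O_P(n^{-1/2})$ while $(\mathcal S_1+\mathcal S_0)/n\to\Sigma\succ0$, so $(\mathcal S_1+\mathcal S_0)^{-1}(\mathcal S_1-\mathcal S_0)=O_P(n^{-1/2})$.
\item $\hbeta_1-\hbeta_0\to\beta^*_{(1)}-\beta^*_{(0)}$, so it is $O_P(1)$.
\end{itemize}
The product is therefore $O_P(n^{-1})$.

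The only delicate point is Step 1: one must be careful that the $W$ coefficient in REG with a common slope equals the difference in means minus $d_X^\top$ times the slope, rather than involving the overall mean. This follows from the normal equations for the intercept and $W$, which force the residuals to average to zero within each arm.
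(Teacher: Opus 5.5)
Your proposal is correct and follows essentially the same route as the paper. Both use the Frisch--Waugh--Lovell pooled within-arm slope for REG, separate arm fits evaluated at $\bar X=(\bar X_1+\bar X_0)/2$ for IREG, direct subtraction, and the same three stochastic-order bounds on $d_X$, $(\mathcal S_1+\mathcal S_0)^{-1}(\mathcal S_1-\mathcal S_0)$ and $\hbeta_1-\hbeta_0$. You simply write out two steps the paper leaves implicit: the normal-equation argument for the $W$ coefficient, and the rewriting of $\tfrac12(\hbeta_1+\hbeta_0)$ through $(\mathcal S_1+\mathcal S_0)^{-1}$ used in the subtraction.
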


\begin{proof}
Both expressions for the estimators are the Frisch--Waugh--Lovell
representations of \eqref{eq:estimators}: the coefficient on $X$ in the REG regression is the pooled within-arm slope
$(\mathcal S_1+\mathcal S_0)^{-1}
(\mathcal S_1\hbeta_1+\mathcal S_0\hbeta_0)$, and IREG fits the two arms separately, so with $n_1=n_0$ the overall covariate mean satisfies
\[
\bar X_1-\bar X
=
d_X/2
=
-(\bar X_0-\bar X).
\]
Subtracting,
\[
\htau_\IREG-\htau_\REG
=
d_X^\top
\left[
(\mathcal S_1+\mathcal S_0)^{-1}
(\mathcal S_1\hbeta_1+\mathcal S_0\hbeta_0)
-
\tfrac12(\hbeta_1+\hbeta_0)
\right]
\]
\[
=
\tfrac12d_X^\top
(\mathcal S_1+\mathcal S_0)^{-1}
(\mathcal S_1-\mathcal S_0)
(\hbeta_1-\hbeta_0).
\]
The stochastic order follows from
\[
d_X=O_P(n^{-1/2}),
\qquad
(\mathcal S_1+\mathcal S_0)^{-1}
(\mathcal S_1-\mathcal S_0)=O_P(n^{-1/2}),
\qquad
\hbeta_1-\hbeta_0=O_P(1).
\]
\end{proof}

In whitened coordinates, the same Taylor expansion gives
\[
\htau_\IREG-\htau_\REG
=\tfrac14d_Z^\top(A_1-A_0)\gamma+O_P(n^{-3/2}),
\]
where $d_Z=\bar Z_1-\bar Z_0$ and $A_w$ is the arm-$w$ mean
of $ZZ^\top-I_p$; the arm sizes here remain $n_1=n_0=n/2$.

\begin{lemm}[Estimator bias]
\label{lemm:bias}
Under the sampling and moment conditions of Theorem
\ref{theo:deficiency} and balanced complete randomization, for
$e_{j}$ as in Remark \ref{rema:bias},
\[
\begin{aligned}
\EEE[e_{\IREG}]
&=-\frac{4\tr(\mathcal E_W)}{n}+O(n^{-2}),\\
\EEE[e_{\REG}]
&=-\frac{4\tr(\mathcal E_W)+\gamma^\top S}{n}+O(n^{-2}).
\end{aligned}
\]
\end{lemm}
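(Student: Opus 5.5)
By Lemma \ref{lemm:localization}, specifically \eqref{eq:clipped_bias_transfer}, it suffices to compute $\EEE[\mathcal P_{1/2}-\tau]$ up to $O(n^{-2})$ for each estimator. In the cubic polynomial, the linear term has mean zero and the cubic term has mean $O(n^{-2})$, by the within-arm moment formula \eqref{eq:mean_moment}. The leading bias is therefore $\tfrac12\EEE[q_2]$, which is $n^{-1}$ times a constant. I will compute this constant from the representation in Lemma \ref{lemm:two_sample} at $r=1/2$. As in Step 1 of the proof of Theorem \ref{theo:bern_deficiency}, I first reduce to the canonical model $Y_i=W_i\gamma^\top Z_i+\varepsilon_i$ in whitened coordinates with $\tau=0$. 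Subtracting regressors $1,X,W$ from the outcome does not change the error; the IREG centering at $\bar X$ is translation-equivariant.

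\emph{Quadratic terms.} Write $d_Z=\bar Z_1-\bar Z_0$, let $A_w$ be the arm mean of $ZZ^\top-I_p$, and let $s_w$ be the centered cross-moment. The estimator has the form $\bar Y_1-\bar Y_0-d_Z^\top\hat b$, where $\hat b=\tfrac12(\hat\beta_1+\hat\beta_0)$ for IREG and $\hat b=\hat\beta_{\rm pool}$ for REG. At the population point, $\hat\beta_1=\gamma$ and $\hat\beta_0=0$, so $\hat b\to\gamma/2$ in both cases. The second-order part of the error then has two pieces.

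The first piece comes from $\bar Y_1-\bar Y_0-d_Z^\top\gamma/2$. Here $\bar Y_1=\gamma^\top\bar Z_1+\bar\varepsilon_1$, which gives $\gamma^\top(\bar Z_1+\bar Z_0)/2+\bar\varepsilon_1-\bar\varepsilon_0$. This is linear, so it has mean zero.

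The second piece is $-d_Z^\top(\hat b-\gamma/2)$ expanded to first order. For IREG,
\[
\hat\beta_1-\gamma\approx \overline{Z\varepsilon(1)}_1-A_1\gamma-\bar Z_1(\bar\varepsilon_1+\gamma^\top\bar Z_1)+\dots,
\qquad
\hat\beta_0\approx \overline{Z\varepsilon(0)}_0+\dots,
\]
with the centered corrections only entering at higher order. Pairing against $d_Z$ and using arm independence together with $\EEE[d_{1}d_{1}']=\kappa/m$ and $m=n/2$ gives
\[
-\EEE\big[d_Z^\top\tfrac12(\hat\beta_1+\hat\beta_0-\gamma)\big]
=-\tfrac1n\big(\EEE[ZZ^\top\varepsilon(1)]-\EEE[ZZ^\top\varepsilon(0)]\big)\text{-trace terms}
+\tfrac1n\,\tfrac{1}{2}\cdot\text{(cancelling }\EEE[ZZ^\top Z^\top\gamma]\text{ contributions)}.
\]
I need to track the contributions carefully. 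The residual pieces give $\pm\tfrac{2}{n}\tr\EEE[ZZ^\top\varepsilon(w)]$ from each arm, with opposite signs because $d_Z$ enters with $+$ in arm 1 and $-$ in arm 0. Together these give $-\tfrac{1}{n}\cdot 4\tr(\mathcal E_W)$ after accounting for the $\tfrac14$ in $\mathcal E_W$. The $A_1\gamma$ term gives $+\tfrac1n\EEE[\|Z\|^2Z^\top\gamma]$ times a factor. For IREG this factor must cancel exactly against the contribution of the centering $\bar Z$-products, and checking that cancellation is the crux.

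\emph{REG versus IREG.} Rather than redoing the whole computation for REG, I will use Lemma \ref{lemm:exact_diff} and the remark after it:
\[
\htau_\IREG-\htau_\REG=\tfrac14 d_Z^\top(A_1-A_0)\gamma+O_P(n^{-3/2}).
\]
The cross moment of the leading term is $\EEE[d_Z^\top(A_1-A_0)\gamma]=\tfrac{2}{m}\EEE[Z\,(ZZ^\top)]\gamma$ contracted to $\gamma^\top S$, with $2/m=4/n$. So the leading term has expectation $\gamma^\top S/n$. The remainder must also have mean $O(n^{-2})$; I will check this via the polynomial expansion and degree parity, since only odd-degree centered terms appear at order $n^{-3/2}$. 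This gives $\mathrm{Bias}_\REG=\mathrm{Bias}_\IREG-\gamma^\top S/n$.

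It then suffices to establish the IREG formula. I expect the hardest step to be verifying that the skewness terms $\EEE[Z\|Z\|^2]$ cancel exactly in the IREG bias. I will attempt this by writing $\hat\beta_w$ through $C_w^{-1}s_w$ and expanding $C_w^{-1}\approx I-A_w+\bar Z_w\bar Z_w^\top$ explicitly. If the bookkeeping becomes unwieldy, I will cross-check against the verification approach of Theorem \ref{theo:bern_deficiency}: the third covariance pairing there equals $\tfrac12 H_{ij}\Sigma^{\mathcal V}_{ij}$, so it can be evaluated with the derivative identities of Step 3.
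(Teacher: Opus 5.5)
There is a genuine gap, but it is easy to repair. Your outline matches the paper's: reduce via Lemma \ref{lemm:localization} to the mean of the quadratic Taylor term, then work in the canonical whitened model with $m=n/2$. Your REG step through Lemma \ref{lemm:exact_diff} is an equivalent repackaging of the paper's pooled-slope perturbation $(t_1+t_0)/2+(A_1-A_0)\gamma/4$; both give $\EEE[q_{2,\IREG}-q_{2,\REG}]=\gamma^\top S/n$.

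The gap is in the IREG step, which you leave open and misdiagnose. Your expansion $\hat\beta_1-\gamma\approx\overline{Z\varepsilon(1)}_1-A_1\gamma+\dots$ contains a spurious first-order term. Since $\overline{ZY}_1=\overline{ZZ^\top}_1\gamma+\overline{Z\varepsilon(1)}_1$, one has exactly
\[
s_1=C_1\gamma+\bigl(\overline{Z\varepsilon(1)}_1-\bar Z_1\bar\varepsilon_1\bigr),
\qquad
\hat\beta_1-\gamma=C_1^{-1}\bigl(\overline{Z\varepsilon(1)}_1-\bar Z_1\bar\varepsilon_1\bigr),
\]
and likewise $\hat\beta_0=C_0^{-1}(\overline{Z\varepsilon(0)}_0-\bar Z_0\bar\varepsilon_0)$. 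So $\gamma$ drops out of the arm-specific slope errors entirely. Their linear parts are $t_w=\overline{Z\varepsilon(w)}_w$, and $q_{2,\IREG}=-\tfrac12 d_Z^\top(t_1+t_0)$ has mean $-4\tr(\mathcal E_W)/n$, with no skewness term to cancel. The $\gamma^\top S$ in the REG bias comes solely from the $\tfrac14(A_1-A_0)\gamma$ piece of the pooled slope.

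The cancellation you planned to look for could not work anyway. The centering products $\bar Z_1(\bar\varepsilon_1+\gamma^\top\bar Z_1)$ are quadratic in the deviations, so against $d_Z$ they are cubic and contribute only $O(n^{-2})$; your own remark that these corrections enter at higher order says as much. Nothing at order $n^{-1}$ offsets the spurious term's contribution $\tfrac12\EEE[\bar Z_1^\top A_1\gamma]=\gamma^\top S/n$. Carried through, your expansion would give $\EEE[e_{\IREG}]\approx-(4\tr(\mathcal E_W)-\gamma^\top S)/n$. Your difference identity would then give $\EEE[e_{\REG}]\approx-4\tr(\mathcal E_W)/n$, putting the skewness term on the wrong estimator.

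Three smaller points.
\begin{itemize}
\item With $q_2$ the degree-two Taylor term as in Lemma \ref{lemm:localization}, the leading bias is $\EEE[q_2]$, not $\tfrac12\EEE[q_2]$.
\item Each arm's residual contribution is $\mp\tfrac1n\tr\EEE[ZZ^\top\varepsilon(w)]$ (the factor $\tfrac12$ times $1/m=2/n$), not $\pm\tfrac2n$, which would double the answer. Your stated total happens to be right.
\item Balanced randomization here means $n_1=\lfloor n/2\rfloor$, whereas you and Lemma \ref{lemm:exact_diff} assume $n_1=n_0$. For odd $n$ you need the paper's extra observation: the leading bias is $n^{-1}$ times a smooth function of the treated fraction, and that fraction is within $O(n^{-1})$ of $1/2$. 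The resulting change in the bias is therefore $O(n^{-2})$.
\end{itemize}
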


\begin{proof}
By Lemma \ref{lemm:localization}, we may work with the cubic Taylor
polynomials. Their linear terms have mean zero, and their cubic
terms have expectations $O(n^{-2})$. Thus only the quadratic terms
determine the leading bias.

First take even $n$ and set $m=n/2$. In whitened coordinates, define
\[
d_Z=\bar Z_1-\bar Z_0,\qquad
t_w=\frac1m\sum_{i:W_i=w}Z_i\varepsilon_i(w),\qquad
A_w=\frac1m\sum_{i:W_i=w}(Z_iZ_i^\top-I_p).
\]
The first-order perturbations of the arm-specific slopes are $t_w$;
the pooled slope has perturbation
$(t_1+t_0)/2+(A_1-A_0)\gamma/4$.
Consequently, the quadratic estimator terms are
\[
q_{2,I}=-\frac12d_Z^\top(t_1+t_0),
\qquad
q_{2,R}=q_{2,I}-\frac14d_Z^\top(A_1-A_0)\gamma.
\]

Independence of the arms and centering eliminate the cross-arm
expectations. The remaining within-arm terms give
\[
\begin{aligned}
\EEE[d_Z^\top(t_1+t_0)]
&=\frac1m\EEE[\|Z\|^2\{\varepsilon(1)-\varepsilon(0)\}]
=\frac{4\tr(\mathcal E_W)}m,\\
\EEE[d_Z^\top(A_1-A_0)\gamma]
&=\frac2m\EEE[Z^\top(ZZ^\top-I_p)\gamma]
=\frac{2\gamma^\top S}m.
\end{aligned}
\]
Substituting these identities into the quadratic terms and using
$m=n/2$ proves both formulas.

For odd $n$, the treated fraction differs from $1/2$ by $O(n^{-1})$.
The leading bias is $n^{-1}$ times a smooth function of that
fraction, so this imbalance changes it by only $O(n^{-2})$.
\end{proof}

\section{Numerical illustration of Theorem \ref{theo:deficiency}}
\label{sec:numerical-verification}

The symbolic certificate verifies the algebra of Theorem
\ref{theo:bern_deficiency} exactly. Here we demonstrate empirically that the simulated difference in uncapped MSE lines up with our formula over a range of distributions.

\paragraph{Exact targets.}
We use data-generating processes with whitened covariates having
independent coordinates with known moments. The oracle residuals take
the form
\[
\varepsilon(w)=g_w(Z)+s_w(Z)\eta,
\]
where $\eta$ is independent of $Z$ with mean zero and unit variance,
$g_w$ is a polynomial orthogonal to $(1,Z)$, and $s_w^2$ is a polynomial. Each term in $D_p$ is therefore a finite combination of population moments and can be computed in exact rational arithmetic. 

We use six designs, chosen so that every nontrivial block of $D_p$ is
active in at least one design: Gaussian homoskedastic covariates;
centered-exponential covariates with heterogeneity; centered-exponential covariates with heteroskedasticity and $\gamma=0$; centered-exponential covariates with heterogeneity, arm-specific misspecification, and arm-specific heteroskedasticity; Gaussian covariates with arm-specific misspecification and heteroskedasticity; and a mixed skewed/Gaussian design. These designs are analytically tractable and their exact definitions are in \texttt{simulations/dgps.py} of the GitHub repository.

\paragraph{Variance reduction for numerical verification.}
We estimate the scaled difference in ordinary, uncapped MSE,
\[
D_n=n^2\left\{
\EEE[(\htau_\IREG-\tau)^2]
-\EEE[(\htau_\REG-\tau)^2]
\right\},
\]
which converges to $D_p$ under the additional mean-square remainder
conditions in Appendix \ref{sec:proofs}.

We checked both raw and variance-reduced estimates. The raw checks
gave compatible conclusions: at the largest sample size in each
of the 18 designs, all raw estimates lie within $2.3$ Monte Carlo
standard errors of $D_p$. Variance reduction improves precision
and hence power to distinguish nearby candidate constants.

On replication $b$, let $e_{j,b}=\htau_{j,b}-\tau$ and
$T_b=n^2(e_{\IREG,b}^2-e_{\REG,b}^2)$. Averaging $T_b$ estimates
$D_n$ directly. To reduce its noise, define
\[
d_1=\tfrac14d_Z^\top(A_1-A_0)\gamma,
\qquad
L=\gamma^\top\bar Z+\bar\varepsilon_1-\bar\varepsilon_0,
\]
where $d_Z=\bar Z_1-\bar Z_0$, $A_w$ averages $ZZ^\top-I_p$
within arm $w$, and $\bar\varepsilon_w$ averages its oracle
residuals. Lemma \ref{lemm:exact_diff} gives
$e_\IREG^2-e_\REG^2=2d_1L+O_P(n^{-2})$.
Thus $C_b=2n^2d_{1,b}L_b$ captures the leading fluctuation.
We report
\[
\hat D_p=\frac1B\sum_{b=1}^B(T_b-C_b)+\EEE[C_b].
\]
Subtracting the control variate and restoring its mean preserves
$\EEE[\hat D_p]=D_n$, provided the fitted estimators are defined
and have finite second moments. The control mean follows from
\begin{equation}
\label{eq:cv_identity}
\EEE[d_1L]
=\frac1{n^2}\EEE\Big[
(\|Z\|^2-1)(\gamma^\top Z)
\{\gamma^\top Z+\varepsilon(1)-\varepsilon(0)\}
\Big].
\end{equation}

The adjustment leaves $T_b-C_b=O_P(1)$; uniformly bounded variance
additionally requires $\sup_n\EEE[(T_b-C_b)^2]<\infty$.
Both estimates are reported by the code. In the Gaussian design
at $n=3200$, the adjustment reduces the Monte Carlo standard
error tenfold.

\paragraph{Results.}
Table \ref{tab:validation} reports the exact targets and the
control-variate estimates at $n\in\{200,800,3200\}$, using
$B=400{,}000$ replications for the two smaller sample sizes and
$B=120{,}000$ for $n=3200$.

The largest discrepancy occurs in the all-blocks design (fourth row).
There the estimate is $32.40$ at $n=200$, compared with the target
$44.52$, a difference of $12.1$ (about $22$ standard errors). The
difference falls to $5.7$ at $n=800$ and $2.55$ at $n=3200$. The
Bernoulli constant $D_\Bern=D_p-4G=41.32$ is within one
standard error of the $n=3200$ estimate, so that sample size alone does
not distinguish $D_\Bern$ from $D_p$ in this design. The successive
increments, $6.4$ and $3.1$, shrink by a
factor of about two as $n$ quadruples. A leading remainder proportional to $n^{-3}$ in MSE would shrink these increments by a factor of four. The observed increments do not establish a slower asymptotic rate: these sample sizes may be outside that regime, and the simulations use uncapped MSE. We do not extrapolate. The ablation in Table
\ref{tab:ablation} separates $D_p$ from $D_\Bern$ more clearly in other designs.

At $n=3200$, all six estimates are within $2.6$ standard errors of their targets. The two largest deviations are $z=-2.6$ in the all-blocks design and $z=+2.3$ in the $\gamma=0$ heteroskedastic design. In the latter, the estimate moves from $29.82$ at $n=200$ and $31.73$ at $n=800$ to $33.25$ at $n=3200$; with six designs, a deviation of this size is compatible with Monte Carlo error. In the all-blocks design the estimates are still moving toward the target, so that deviation also reflects finite-$n$ bias. The results are consistent with a remainder that is smaller than the $n^{-2}$ leading term, but three sample sizes are not enough to estimate its rate. 

\begin{table}[htbp]
\centering
\caption{Simulation comparison with Theorem \ref{theo:deficiency}. Targets are exact; estimates are the control-variate estimator with Monte Carlo standard errors in parentheses. $B=400{,}000$ for $n\in\{200,800\}$ and $B=120{,}000$ for $n=3200$. Parentheses after a design name list the blocks of $D_p$ that are active, in the notation of Theorem \ref{theo:bern_deficiency} and Lemma \ref{lemm:matrix} ($F=\gamma^\top K\gamma$, $Q=(\gamma^\top S)^2$, $\mathcal R=\|S_\gamma\|_F^2$, $E=\mathcal E$, $E_W=\mathcal E_W$); the mixed design has heterogeneity along its skewed coordinate and heteroskedastic residuals, so the covariate-shape blocks and $8S^\top\EEE[Z\varepsilon^2]$ are active.}
\label{tab:validation}
\small
\resizebox{\textwidth}{!}{%
\begin{tabular}{lrrrr}
\toprule
DGP (active blocks) & $D_p$ (exact) & $n=200$ & $n=800$ & $n=3200$ \\
\midrule
\input{tables/validation_body.tex} 
\bottomrule
\end{tabular}}
\end{table}

\paragraph{Power: what the verification would reject.}
Agreement with $D_p$ is more informative when the simulation can also
detect errors in the formula. We therefore compare the estimates with
several deliberately perturbed candidate constants. Table \ref{tab:ablation} reports $z$-statistics at $n=3200$ for the exact $D_p$ and several perturbed formulas: the Bernoulli constant $D_\Bern=D_p-4G$, which omits the CRT correction; the Gaussian special case $-(p+1)G$; and versions of $D_p$ with individual blocks removed.

The exact formula has $|z|\leq2.6$ in every design, whereas each
alternative has $|z|>3$ in at least one design, in some cases by tens of standard errors. The check has limited power when an alternative differs from $D_p$ by less than the residual finite-$n$ bias. In the all-blocks design, for example, $D_\Bern=D_p-3.2$ is not separated from $D_p$ at $n=3200$, which is why multiple designs are useful.

Without the control variate, the standard errors at $n=3200$ would be
$1.5$ to $3.8$, four to ten times larger in the heterogeneous designs. The Bernoulli constant would then not be separated from $D_p$ in any design ($|z|\leq3$ throughout), so the simulation would have little power to detect an omitted CRT correction.

\begin{table}[htbp]
\centering
\caption{Ablation: $z=(\hat D_p-D^{\mathrm{alt}})/\widehat{\mathrm{SE}}$ at $n=3200$ for each design and candidate constant. Values with $|z|>3$ indicate rejection of that candidate. Each ``drop'' column removes one block from $D_p$: the term $-3\mathcal R$, the term $-Q$, the residual terms $8\tr(\mathcal E^2)+8S^\top\EEE[Z\varepsilon^2]$ (``resid.''), or the three terms involving $\mathcal E_W$.}
\label{tab:ablation}
\small
\setlength{\tabcolsep}{4pt}
\resizebox{\textwidth}{!}{%
\begin{tabular}{lrrrrrrr}
\toprule
DGP & $D_p$ & $D_p-4G$ & $-(p+1)G$ & drop $-3\mathcal R$ & drop $-Q$ & drop resid.\ & drop $\mathcal E_W$ \\
\midrule
\input{tables/ablation_body.tex} 
\bottomrule
\end{tabular}}
\end{table}

\paragraph{Further covariate families and dimensions.}
The six designs above have $p\le3$ and independent coordinates. We also consider four covariate families at $p\in\{2,5,8\}$, including dependent coordinates (Table \ref{tab:families}).

For Gaussian covariates,
$D_p=-(p+1)G$. For Rademacher covariates,
$K=pI_p$ and $D_p=-(p+3)G$. For a symmetric scale mixture of normals,
where $2\%$ of units have six times the standard deviation, write
$s$ for the mixing scale ($s=6$ with probability $0.02$ and
$s=1$ otherwise). After standardizing the mixture,
\[
K=(p+2)\tilde\kappa I_p,
\qquad
\tilde\kappa=\frac{\EEE[s^4]}{\EEE[s^2]^2},
\]
and
\[
D_p
=
G\{(p+2)\tilde\kappa-(2p+3)\}.
\]
For Gaussian covariates with quadratic misspecification
$c(X^\top X-\tr\Sigma)$, where $c=0.3$ and
$\Sigma_{jk}=0.3^{|j-k|}$,
\[
\mathcal E=2c\Sigma,
\qquad
D_p=-(p+1)G+32c^2\tr(\Sigma^2).
\]

Every family uses homoskedastic $N(0,1)$ noise shared by the potential
outcomes, with $G=0.49$ ($1.96$ for the Rademacher family). Because both estimators are invariant to invertible linear transformations of the covariates, we simulate whitened covariates; dependence enters through $\gamma$ and, for the misspecified family, through $\tr(\Sigma^2)$.

The targets are exact. Recomputing them from the matrix form of Theorem \ref{theo:deficiency} using $10^6$ Monte Carlo draws reproduces them to within \gz{fam:matrixcheck:maxrel}\%, with the largest discrepancy in the scale mixture, where fourth moments are harder to estimate. Using exact targets avoids this additional Monte Carlo error.

With $n=\gz{fam:n}$ and $B=\gz{fam:B}$ replications per design, every
estimate is within \gz{fam:maxabsz} standard errors of its target, and
the control variate reduces the standard errors by factors of
\gz{fam:seratio:min} to \gz{fam:seratio:max} relative to the raw estimator
on the same replications.

Figure \ref{fig:validation-z} shows the $z$-statistics for all
$\gz{valall:n}$ checks: the designs of Table \ref{tab:validation} at
$n=3200$ and those of Table \ref{tab:families}. Of these,
$\gz{valall:within196}$ are within $\pm1.96$; the two exceptions are
among the deviations discussed above.

Figure \ref{fig:validation-rates} shows the estimates as a function of
$n$. For the four families at $p=5$, the standard errors remain roughly constant as $n$ grows. One of the twelve points below
$n=\gz{fam:n}$ lies outside $\pm1.96$ standard errors (the scale mixture at $n=1{,}000$, $z=\gz{fam:rate:heavytail1000:z}$), while the neighboring points at $n=500$ and $n=2{,}000$ are within
$\gz{fam:rate:heavytail:nbr}$ standard errors. With twenty-four family
checks, an isolated deviation of this size is compatible with Monte
Carlo variation.

\begin{table}[htbp]
\centering
\caption{Further simulation comparison with Theorem \ref{theo:deficiency}: four covariate families at $p\in\{2,5,8\}$, $n=\gz{fam:n}$, $B=\gz{fam:B}$. Targets are exact; estimates are the control-variate estimator with Monte Carlo standard errors in parentheses; the last column is the standard error of the raw estimator on the same replications.}
\label{tab:families}
\small
\begin{tabular}{lrrrrr}
\toprule
Family & $p$ & $D_p$ (exact) & Estimate & $z$ & Raw SE \\
\midrule
\input{tables/validation_families_body.tex} 
\bottomrule
\end{tabular}
\end{table}

\begin{figure}[htbp]
\centering
\includegraphics[width=\textwidth]{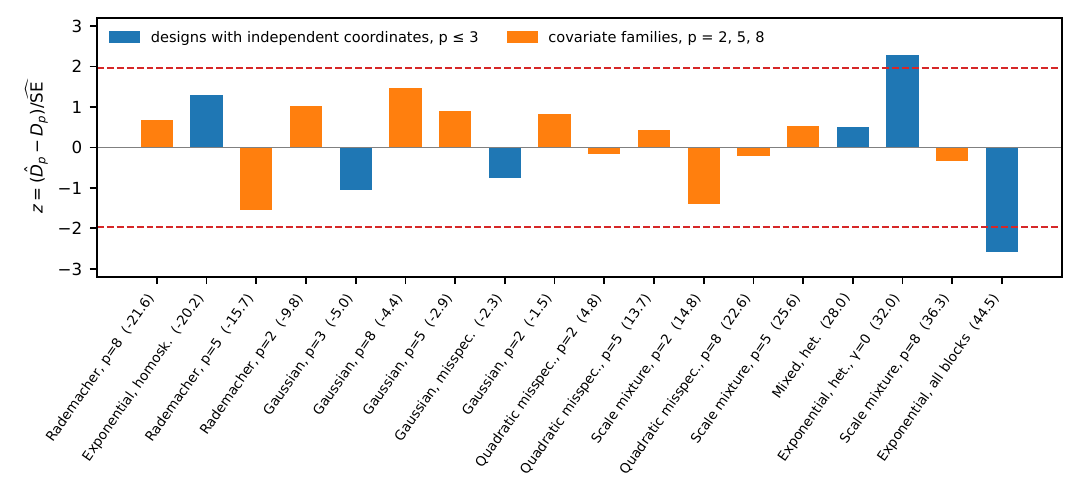}
\caption{$z$-statistics $(\hat D_p-D_p)/\widehat{\mathrm{SE}}$ of the control-variate estimates for the designs of Table \ref{tab:validation} ($n=3{,}200$) and Table \ref{tab:families} ($n=\gz{fam:n}$), sorted by $D_p$ (in parentheses). Dashed lines mark $\pm1.96$.}
\label{fig:validation-z}
\end{figure}

\begin{figure}[htbp]
\centering
\includegraphics[width=\textwidth]{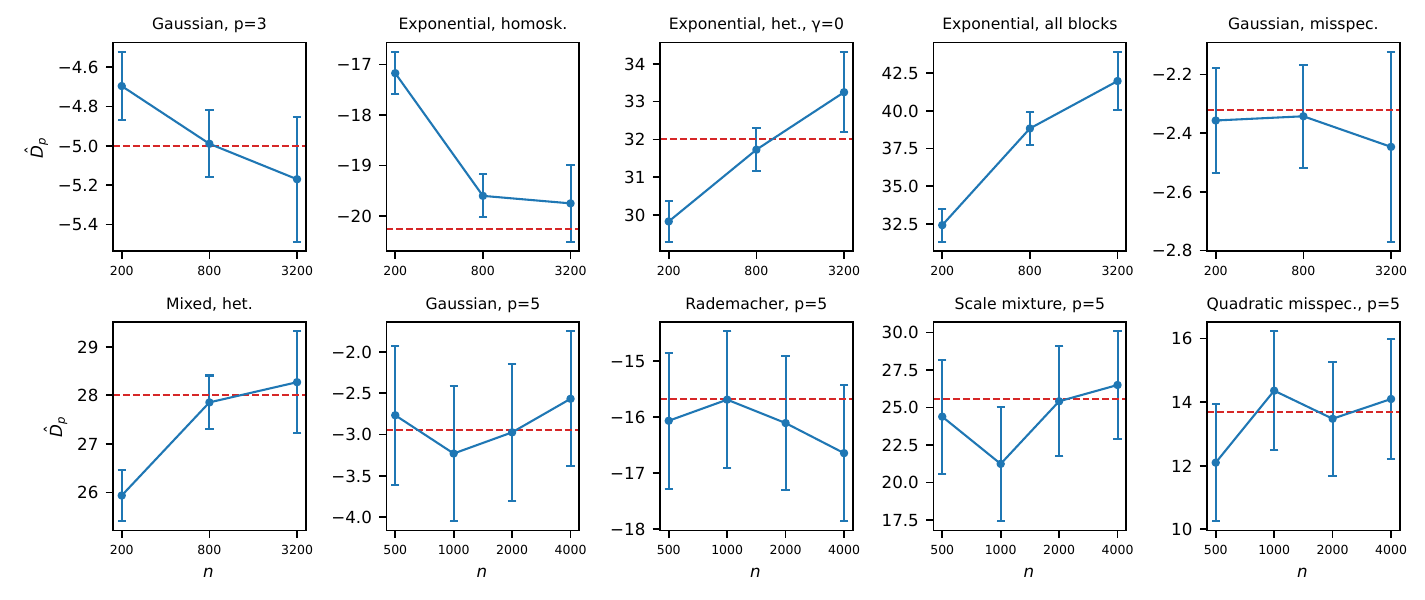}
\caption{Control-variate estimates of $D_p$ against $n$ (log scale) with $\pm1.96$ standard-error bars; the dashed line is the exact $D_p$. The first six panels are the designs of Table \ref{tab:validation} ($n=200$, $800$, $3{,}200$); the last four are the families of Table \ref{tab:families} at $p=5$ ($n=500$ to $\gz{fam:n}$).}
\label{fig:validation-rates}
\end{figure}

\paragraph{The CRT correction and the bias decomposition.}
We check Theorem \ref{theo:crt_reduction} separately by estimating the $n^{-2}$ constant under both balanced CRT and Bernoulli$(1/2)$ assignment on the same designs (\texttt{verification/crt\char`\_correction\char`\_certificate.py}).
In the Gaussian design, the estimated gap between the two designs is
$4.3\pm0.6$ at $n=200$ and $6.3\pm1.6$ at $n=800$, compared with the
predicted $4G=5$. In the exponential-covariate design, the corresponding gaps are $3.0\pm1.1$ and $8.6\pm3.0$. All four comparisons have $|z|<2$.

This check uses the raw statistic because the control variate is derived for the two-independent-samples representation of the CRT, so its power is limited. The ablation in Table \ref{tab:ablation} provides a sharper test: there, the Bernoulli constant is rejected by as many as thirty standard errors.

We also compare the leading coefficients in Remark \ref{rema:bias} with simulations of the ordinary, unclipped biases of both estimators
(\texttt{simulations/run\char`\_bias.py}). With $10^6$ replications, at $n=800$ all six estimated biases (REG and IREG in three designs) are within $0.8$ Monte Carlo standard errors of their leading-order
predictions. At $n=200$, the all-blocks design differs by about $11\%$
of the predicted REG bias ($z=9.3$) and by $5\%$ of the predicted IREG
bias ($z=3.5$); by $n=800$, both differences are negligible at the
reported Monte Carlo precision ($z=0.4$ and $z=-0.5$), consistent with the leading bias coefficients. Lemma \ref{lemm:bias} proves the $O(n^{-2})$ remainder for clipped errors; the ordinary-bias simulation does not establish that bound for unclipped errors.

\section{Range of validity of the second-order approximation}
\label{sec:validity}

Theorem \ref{theo:deficiency} is an asymptotic result. A separate question is when the approximation
\[
1+\Delta/n
\]
is accurate for the finite-sample MSE ratio. We examine this in the
scalar lognormal family given in the introduction example, with $\sigma(X)=1+X$, now with a treatment-effect slope of $0.4$ fixed while $X\sim\mathrm{LogNormal}(0,s^2)$ varies. The ratio compares IREG with REG, so values above one favor REG.
As $s$ increases, the covariate tails become heavier and the higher
moments entering $\Delta$ grow rapidly. 

\begin{table}[htbp]
\centering
\caption{Simulated MSE ratio $\widehat\MSE_\IREG/\widehat\MSE_\REG$ versus the second-order prediction $1+\Delta/n$, for $X\sim\mathrm{LogNormal}(0,s^2)$, $\sigma(X)=1+X$, treatment-effect slope $0.4$. Monte Carlo standard errors of the simulated ratios are given in parentheses in units of the last reported digit. The introductory example is this family at $s=1.1$ with no treatment-effect heterogeneity; the heavy-tailed example ($s=1.8$, $\sigma(X)=1+1.5X$, slope $0.3$) is a separate design.}
\label{tab:validity}
\small
\begin{tabular}{rrrrrrrrr}
\toprule
 & & & \multicolumn{2}{c}{$n=500$} & \multicolumn{2}{c}{$n=2000$} & \multicolumn{2}{c}{$n=8000$} \\
$s$ & skew $Z$ & $\Delta$ & sim. & pred. & sim. & pred. & sim. & pred. \\
\midrule
\input{tables/validity_body.tex} 
\bottomrule
\end{tabular}
\end{table}

For $s=0.8$ (skewness $3.7$), the simulated excess MSE is
$\gz{valid:s0.8:n500:frac}\%$ of the predicted excess $\Delta/n$ at
$n=500$ and $\gz{valid:s0.8:n2000:frac}\%$ at $n=2{,}000$; for $s=0.6$, the corresponding figures are $\gz{valid:s0.6:n500:frac}\%$ and $\gz{valid:s0.6:n2000:frac}\%$. Generally, the approximation accuracy deteriorates as the tails become heavier.

The heavy-tailed example illustrates the
same phenomenon more sharply. With $s=1.8$, $\sigma(X)=1+1.5X$, and
treatment-effect slope $0.3$, we have $\Delta\approx 3.9\times10^4$. The second-order prediction is then $78.5$, $20.4$, and $5.8$ at $n=500$, $2{,}000$, and $8{,}000$, compared with simulated MSE ratios of $1.68$, $1.34$, and $1.19$. The sign of the comparison is correct, but $1+\Delta/n$ is no longer a useful numerical approximation to the MSE ratio. 

\paragraph{Tail sensitivity.}
Deficiency depends on population moments that can be dominated by
extreme values rarely observed at the simulated sample sizes.
Using closed-form truncated lognormal moments, Table \ref{tab:tail}
recomputes $1+\Delta/n$ at $n=500$. Truncation at the $1-10^{-4}$
quantile reduces $\Delta$ by one to three orders of magnitude,
yet even truncation at $1-10^{-8}$ leaves predictions several
times larger than the simulated ratios.

The discrepancies begin at first order. For the same family with
$\gamma=0$, simulated versus first-order standard deviations
$\sqrt{V^*/n}$ are $\gz{samp:s0.6:sdR}$ versus
$\gz{samp:s0.6:sdtheory}$ at $s=0.6$,
$\gz{samp:s1.5:sdR}$ versus $\gz{samp:s1.5:sdtheory}$ at $s=1.5$,
and $\gz{samp:s2.0:sdR}$ versus $\gz{samp:s2.0:sdtheory}$ at $s=2$.
Thus the expansion loses quantitative accuracy under heavy tails.
Rare draws also increase Monte Carlo uncertainty, although the
second-order calculation correctly predicts the sign of the
REG--IREG comparison in every design.

\begin{table}[htbp]
\centering
\caption{Second-order prediction $1+\Delta/n$ at $n=500$ for the full lognormal law and for its truncation at the $q$ quantile, against the simulated MSE ratio (Monte Carlo standard error in parentheses, in units of the last digit). The last row is the family of Table \ref{tab:validity} at $s=2$ without treatment-effect heterogeneity ($\gamma=0$).}
\label{tab:tail}
\small
\setlength{\tabcolsep}{4pt}
\begin{tabular}{lccccc}
\toprule
& Simulated & \multicolumn{4}{c}{$1+\Delta/n$} \\
Design & ratio & full law & $q=1-10^{-4}$ & $q=1-10^{-6}$ & $q=1-10^{-8}$ \\
\midrule
\input{tables/tail_body.tex} 
\bottomrule
\end{tabular}
\end{table}

\section{Feasible Inference DGPs}
\label{sec:designs}

This section specifies the data-generating processes used for the
feasible inference simulations. The designs of Tables \ref{tab:battery} and \ref{tab:plugin-cal} all have $n=500$, balanced complete randomization and \gz{hte:reps} replications, and every interval is computed on the same simulated datasets. The scalar lognormal designs extend the family of the introduction:
\[
\begin{gathered}
p=1,\qquad
X_i\sim\mathrm{LogNormal}(0,s^2),\qquad
u_i\sim N(0,1),\\
Y_i(0)=0.9X_i+\sigma(X_i)u_i,\qquad
Y_i(1)=Y_i(0)+1+gX_i,
\end{gathered}
\]
with $\sigma(X)=1+X$ except in the $s=1.8$ design, where
$\sigma(X)=1+1.5X$, and
\[
(s,g)\in
\{(1.2,0.3),(1.5,0.1),(1.5,0.5),(1.8,0.3),
(2,0.1),(1.5,1),(1.5,3)\}.
\]
The remaining designs have $p=5$, potential outcomes that share their noise, and
\[
Y_i(0)=m(X_i)+\sigma(X_i)u_i,
\qquad
Y_i(1)=Y_i(0)+1+\gamma^*X_{i1},
\qquad
u_i\sim N(0,1),
\]
with $m(x)=0.9\sum_jx_j$ unless stated. They differ in the covariate law,
the noise scale and the slope $\gamma^*$:
\begin{itemize}
\item \emph{Gaussian}: $X_i\sim N(0,I_5)$, $\sigma\equiv1$ and
$\gamma^*=0.3$; the large-heterogeneity Gaussian row instead has
$X_i\sim N(0,\Sigma)$ with $\Sigma_{jk}=0.5^{|j-k|}$ and $\gamma^*=3$.
\item \emph{Misspecified, $G=0$}: $X_i\sim N(0,\Sigma)$ with
$\Sigma_{jk}=0.5^{|j-k|}$, $\sigma\equiv1$, $\gamma^*=0$, and
$m(x)=0.9\sum_jx_j+0.20(x_1^2-1)+0.15x_1x_2$, so that the working model is
misspecified in both arms.
\item \emph{Skewed, het., $G=0$}: independent coordinates, the first a
centered $\mathrm{LogNormal}(0,1.1^2)$ and the others standard normal, mixed
by the Cholesky factor of the matrix $0.35^{|j-k|}$ and then standardized to
unit variances; $\sigma(X)=1+0.5|X_{i1}|+0.25\|X_i\|$ and $\gamma^*=0$.
\item \emph{Heavy-tailed, het.}: the same construction with
$\mathrm{LogNormal}(0,1.5^2)$ and
$\sigma(X)=1+0.75|X_{i1}|+0.25\|X_i\|$; $\gamma^*=0.5$ in the
small-heterogeneity row and $\gamma^*=3$ in the large-heterogeneity row.
\item \emph{Rare extreme values}: $X_{i1}=\sqrt{(1-\rho)/\rho}$ with probability $\rho=0.02$ and $-\sqrt{\rho/(1-\rho)}$ otherwise, $X_{i2},\ldots,X_{i5}\sim N(0,1)$,
$\sigma\equiv1$ and $\gamma^*=1$. With $250$ units per arm, an arm contains no rare unit with probability about $0.6\%$, in which case the interacted design is rank deficient and IREG gives no interval (\gz{jk:rare:rankdef}\% of replications), and exactly one rare unit, which then has leverage one in its arm's fit, with probability about $3\%$.
\end{itemize}

\end{document}

%% file: tables/gamma0_values.tex
\expandafter\def\csname gz:all:REG_HC1:max\endcsname{95.4}
\expandafter\def\csname gz:all:REG_HC1:min\endcsname{94.7}
\expandafter\def\csname gz:analytic:cov:0.01\endcsname{94.9}
\expandafter\def\csname gz:analytic:cov:0.1\endcsname{93.7}
\expandafter\def\csname gz:bias:gauss:maxrelerr\endcsname{0.8}
\expandafter\def\csname gz:bias:gauss:ndesigns\endcsname{5}
\expandafter\def\csname gz:corr:heavy_tail_example:IREG_HC1:0.80\endcsname{67.2}
\expandafter\def\csname gz:corr:heavy_tail_example:IREG_HC1:0.90\endcsname{78.5}
\expandafter\def\csname gz:corr:heavy_tail_example:IREG_HC1:0.95\endcsname{85.6}
\expandafter\def\csname gz:corr:heavy_tail_example:IREG_HC1:0.98\endcsname{91.2}
\expandafter\def\csname gz:corr:heavy_tail_example:IREG_HC1:0.99\endcsname{93.7}
\expandafter\def\csname gz:corr:heavy_tail_example:IREG_HC1c:0.80\endcsname{75.3}
\expandafter\def\csname gz:corr:heavy_tail_example:IREG_HC1c:0.90\endcsname{88.1}
\expandafter\def\csname gz:corr:heavy_tail_example:IREG_HC1c:0.95\endcsname{94.4}
\expandafter\def\csname gz:corr:heavy_tail_example:IREG_HC1c:0.98\endcsname{97.9}
\expandafter\def\csname gz:corr:heavy_tail_example:IREG_HC1c:0.99\endcsname{99.1}
\expandafter\def\csname gz:corr:heavy_tail_example:IREG_HC1o:0.80\endcsname{68.2}
\expandafter\def\csname gz:corr:heavy_tail_example:IREG_HC1o:0.90\endcsname{79.5}
\expandafter\def\csname gz:corr:heavy_tail_example:IREG_HC1o:0.95\endcsname{86.5}
\expandafter\def\csname gz:corr:heavy_tail_example:IREG_HC1o:0.98\endcsname{91.8}
\expandafter\def\csname gz:corr:heavy_tail_example:IREG_HC1o:0.99\endcsname{94.2}
\expandafter\def\csname gz:corr:heavy_tail_example:IREG_HC3:0.80\endcsname{79.8}
\expandafter\def\csname gz:corr:heavy_tail_example:IREG_HC3:0.90\endcsname{89.3}
\expandafter\def\csname gz:corr:heavy_tail_example:IREG_HC3:0.95\endcsname{94.0}
\expandafter\def\csname gz:corr:heavy_tail_example:IREG_HC3:0.98\endcsname{97.0}
\expandafter\def\csname gz:corr:heavy_tail_example:IREG_HC3:0.99\endcsname{98.2}
\expandafter\def\csname gz:corr:heavy_tail_example:IREG_HC3c:0.80\endcsname{84.9}
\expandafter\def\csname gz:corr:heavy_tail_example:IREG_HC3c:0.90\endcsname{93.6}
\expandafter\def\csname gz:corr:heavy_tail_example:IREG_HC3c:0.95\endcsname{97.3}
\expandafter\def\csname gz:corr:heavy_tail_example:IREG_HC3c:0.98\endcsname{99.1}
\expandafter\def\csname gz:corr:heavy_tail_example:IREG_HC3c:0.99\endcsname{99.6}
\expandafter\def\csname gz:corr:heavy_tail_example:R2\endcsname{0.009}
\expandafter\def\csname gz:corr:heavy_tail_example:REG_HC1:0.80\endcsname{76.2}
\expandafter\def\csname gz:corr:heavy_tail_example:REG_HC1:0.90\endcsname{88.3}
\expandafter\def\csname gz:corr:heavy_tail_example:REG_HC1:0.95\endcsname{94.4}
\expandafter\def\csname gz:corr:heavy_tail_example:REG_HC1:0.98\endcsname{97.8}
\expandafter\def\csname gz:corr:heavy_tail_example:REG_HC1:0.99\endcsname{99.0}
\expandafter\def\csname gz:corr:heavy_tail_example:ratio\endcsname{31}
\expandafter\def\csname gz:corr:heavy_tail_example:reps\endcsname{100{,}000}
\expandafter\def\csname gz:corr:heavy_tail_example:wr\endcsname{1.06}
\expandafter\def\csname gz:corr:heavy_tail_example:wr_cover\endcsname{1.04}
\expandafter\def\csname gz:corr:heavy_tail_example:wr_miss\endcsname{1.35}
\expandafter\def\csname gz:corr:large_hte_n500_p5:Delta\endcsname{4.2}
\expandafter\def\csname gz:corr:large_hte_n500_p5:IREG_HC1:0.80\endcsname{52.1}
\expandafter\def\csname gz:corr:large_hte_n500_p5:IREG_HC1:0.90\endcsname{63.9}
\expandafter\def\csname gz:corr:large_hte_n500_p5:IREG_HC1:0.95\endcsname{72.7}
\expandafter\def\csname gz:corr:large_hte_n500_p5:IREG_HC1:0.98\endcsname{80.7}
\expandafter\def\csname gz:corr:large_hte_n500_p5:IREG_HC1:0.99\endcsname{85.3}
\expandafter\def\csname gz:corr:large_hte_n500_p5:IREG_HC1c:0.80\endcsname{80.2}
\expandafter\def\csname gz:corr:large_hte_n500_p5:IREG_HC1c:0.90\endcsname{90.0}
\expandafter\def\csname gz:corr:large_hte_n500_p5:IREG_HC1c:0.95\endcsname{95.2}
\expandafter\def\csname gz:corr:large_hte_n500_p5:IREG_HC1c:0.98\endcsname{98.1}
\expandafter\def\csname gz:corr:large_hte_n500_p5:IREG_HC1c:0.99\endcsname{99.1}
\expandafter\def\csname gz:corr:large_hte_n500_p5:IREG_HC1o:0.80\endcsname{80.2}
\expandafter\def\csname gz:corr:large_hte_n500_p5:IREG_HC1o:0.90\endcsname{90.0}
\expandafter\def\csname gz:corr:large_hte_n500_p5:IREG_HC1o:0.95\endcsname{95.1}
\expandafter\def\csname gz:corr:large_hte_n500_p5:IREG_HC1o:0.98\endcsname{98.2}
\expandafter\def\csname gz:corr:large_hte_n500_p5:IREG_HC1o:0.99\endcsname{99.2}
\expandafter\def\csname gz:corr:large_hte_n500_p5:IREG_HC3:0.80\endcsname{52.7}
\expandafter\def\csname gz:corr:large_hte_n500_p5:IREG_HC3:0.90\endcsname{64.6}
\expandafter\def\csname gz:corr:large_hte_n500_p5:IREG_HC3:0.95\endcsname{73.3}
\expandafter\def\csname gz:corr:large_hte_n500_p5:IREG_HC3:0.98\endcsname{81.3}
\expandafter\def\csname gz:corr:large_hte_n500_p5:IREG_HC3:0.99\endcsname{85.7}
\expandafter\def\csname gz:corr:large_hte_n500_p5:IREG_HC3c:0.80\endcsname{80.4}
\expandafter\def\csname gz:corr:large_hte_n500_p5:IREG_HC3c:0.90\endcsname{90.1}
\expandafter\def\csname gz:corr:large_hte_n500_p5:IREG_HC3c:0.95\endcsname{95.3}
\expandafter\def\csname gz:corr:large_hte_n500_p5:IREG_HC3c:0.98\endcsname{98.2}
\expandafter\def\csname gz:corr:large_hte_n500_p5:IREG_HC3c:0.99\endcsname{99.1}
\expandafter\def\csname gz:corr:large_hte_n500_p5:R2\endcsname{0.692}
\expandafter\def\csname gz:corr:large_hte_n500_p5:REG_HC1:0.80\endcsname{80.2}
\expandafter\def\csname gz:corr:large_hte_n500_p5:REG_HC1:0.90\endcsname{90.2}
\expandafter\def\csname gz:corr:large_hte_n500_p5:REG_HC1:0.95\endcsname{95.2}
\expandafter\def\csname gz:corr:large_hte_n500_p5:REG_HC1:0.98\endcsname{98.2}
\expandafter\def\csname gz:corr:large_hte_n500_p5:REG_HC1:0.99\endcsname{99.1}
\expandafter\def\csname gz:corr:large_hte_n500_p5:ratio\endcsname{1.00}
\expandafter\def\csname gz:corr:large_hte_n500_p5:reps\endcsname{20{,}000}
\expandafter\def\csname gz:corr:large_hte_n500_p5:wr\endcsname{1.80}
\expandafter\def\csname gz:corr:large_hte_n500_p5:wr_cover\endcsname{1.80}
\expandafter\def\csname gz:corr:large_hte_n500_p5:wr_miss\endcsname{1.80}
\expandafter\def\csname gz:corr:lognormal_s2.0_prop:IREG_HC1:0.80\endcsname{64.4}
\expandafter\def\csname gz:corr:lognormal_s2.0_prop:IREG_HC1:0.90\endcsname{75.5}
\expandafter\def\csname gz:corr:lognormal_s2.0_prop:IREG_HC1:0.95\endcsname{83.0}
\expandafter\def\csname gz:corr:lognormal_s2.0_prop:IREG_HC1:0.98\endcsname{89.1}
\expandafter\def\csname gz:corr:lognormal_s2.0_prop:IREG_HC1:0.99\endcsname{91.9}
\expandafter\def\csname gz:corr:lognormal_s2.0_prop:IREG_HC1c:0.80\endcsname{74.6}
\expandafter\def\csname gz:corr:lognormal_s2.0_prop:IREG_HC1c:0.90\endcsname{88.3}
\expandafter\def\csname gz:corr:lognormal_s2.0_prop:IREG_HC1c:0.95\endcsname{95.1}
\expandafter\def\csname gz:corr:lognormal_s2.0_prop:IREG_HC1c:0.98\endcsname{98.6}
\expandafter\def\csname gz:corr:lognormal_s2.0_prop:IREG_HC1c:0.99\endcsname{99.5}
\expandafter\def\csname gz:corr:lognormal_s2.0_prop:IREG_HC1o:0.80\endcsname{64.4}
\expandafter\def\csname gz:corr:lognormal_s2.0_prop:IREG_HC1o:0.90\endcsname{75.5}
\expandafter\def\csname gz:corr:lognormal_s2.0_prop:IREG_HC1o:0.95\endcsname{83.0}
\expandafter\def\csname gz:corr:lognormal_s2.0_prop:IREG_HC1o:0.98\endcsname{89.1}
\expandafter\def\csname gz:corr:lognormal_s2.0_prop:IREG_HC1o:0.99\endcsname{91.9}
\expandafter\def\csname gz:corr:lognormal_s2.0_prop:IREG_HC3:0.80\endcsname{79.5}
\expandafter\def\csname gz:corr:lognormal_s2.0_prop:IREG_HC3:0.90\endcsname{89.2}
\expandafter\def\csname gz:corr:lognormal_s2.0_prop:IREG_HC3:0.95\endcsname{93.9}
\expandafter\def\csname gz:corr:lognormal_s2.0_prop:IREG_HC3:0.98\endcsname{96.9}
\expandafter\def\csname gz:corr:lognormal_s2.0_prop:IREG_HC3:0.99\endcsname{98.0}
\expandafter\def\csname gz:corr:lognormal_s2.0_prop:IREG_HC3c:0.80\endcsname{85.7}
\expandafter\def\csname gz:corr:lognormal_s2.0_prop:IREG_HC3c:0.90\endcsname{94.6}
\expandafter\def\csname gz:corr:lognormal_s2.0_prop:IREG_HC3c:0.95\endcsname{98.1}
\expandafter\def\csname gz:corr:lognormal_s2.0_prop:IREG_HC3c:0.98\endcsname{99.5}
\expandafter\def\csname gz:corr:lognormal_s2.0_prop:IREG_HC3c:0.99\endcsname{99.8}
\expandafter\def\csname gz:corr:lognormal_s2.0_prop:R2\endcsname{0.000}
\expandafter\def\csname gz:corr:lognormal_s2.0_prop:REG_HC1:0.80\endcsname{76.3}
\expandafter\def\csname gz:corr:lognormal_s2.0_prop:REG_HC1:0.90\endcsname{88.8}
\expandafter\def\csname gz:corr:lognormal_s2.0_prop:REG_HC1:0.95\endcsname{95.3}
\expandafter\def\csname gz:corr:lognormal_s2.0_prop:REG_HC1:0.98\endcsname{98.6}
\expandafter\def\csname gz:corr:lognormal_s2.0_prop:REG_HC1:0.99\endcsname{99.5}
\expandafter\def\csname gz:corr:lognormal_s2.0_prop:ratio\endcsname{--}
\expandafter\def\csname gz:corr:lognormal_s2.0_prop:reps\endcsname{20{,}000}
\expandafter\def\csname gz:corr:lognormal_s2.0_prop:wr\endcsname{1.07}
\expandafter\def\csname gz:corr:lognormal_s2.0_prop:wr_cover\endcsname{1.05}
\expandafter\def\csname gz:corr:lognormal_s2.0_prop:wr_miss\endcsname{1.48}
\expandafter\def\csname gz:corr:mcse:max\endcsname{0.4}
\expandafter\def\csname gz:def:s0.8\endcsname{12}
\expandafter\def\csname gz:def:s1.0\endcsname{41}
\expandafter\def\csname gz:def:s1.2\endcsname{166}
\expandafter\def\csname gz:defpred:s0.8:n2000\endcsname{0.6}
\expandafter\def\csname gz:defpred:s0.8:n500\endcsname{2.4}
\expandafter\def\csname gz:defpred:s0.8:n8000\endcsname{0.2}
\expandafter\def\csname gz:defpred:s1.0:n2000\endcsname{2.1}
\expandafter\def\csname gz:defpred:s1.0:n500\endcsname{8.3}
\expandafter\def\csname gz:defpred:s1.0:n8000\endcsname{0.5}
\expandafter\def\csname gz:defpred:s1.2:n2000\endcsname{8.3}
\expandafter\def\csname gz:defpred:s1.2:n500\endcsname{33.2}
\expandafter\def\csname gz:defpred:s1.2:n8000\endcsname{2.1}
\expandafter\def\csname gz:defsim:s0.8:n2000\endcsname{0.5}
\expandafter\def\csname gz:defsim:s0.8:n500\endcsname{1.8}
\expandafter\def\csname gz:defsim:s0.8:n8000\endcsname{0.1}
\expandafter\def\csname gz:defsim:s1.0:n2000\endcsname{1.4}
\expandafter\def\csname gz:defsim:s1.0:n500\endcsname{4.3}
\expandafter\def\csname gz:defsim:s1.0:n8000\endcsname{0.4}
\expandafter\def\csname gz:defsim:s1.2:n2000\endcsname{3.5}
\expandafter\def\csname gz:defsim:s1.2:n500\endcsname{9.2}
\expandafter\def\csname gz:defsim:s1.2:n8000\endcsname{1.3}
\expandafter\def\csname gz:diag:s0.6:seratio:IREG:median\endcsname{1.01}
\expandafter\def\csname gz:diag:s0.6:seratio:IREG:p90\endcsname{1.02}
\expandafter\def\csname gz:diag:s0.6:seratio:IREG:p95\endcsname{1.03}
\expandafter\def\csname gz:diag:s0.6:seratio:REG:median\endcsname{1.01}
\expandafter\def\csname gz:diag:s0.6:seratio:REG:p90\endcsname{1.01}
\expandafter\def\csname gz:diag:s0.6:seratio:REG:p95\endcsname{1.01}
\expandafter\def\csname gz:diag:s0.6:width:IREG_HC1\endcsname{0.82}
\expandafter\def\csname gz:diag:s0.6:width:IREG_HC3\endcsname{0.82}
\expandafter\def\csname gz:diag:s0.6:width:REG_HC1\endcsname{0.81}
\expandafter\def\csname gz:diag:s0.6:width:REG_HC3\endcsname{0.82}
\expandafter\def\csname gz:diag:s0.8:seratio:IREG:median\endcsname{1.02}
\expandafter\def\csname gz:diag:s0.8:seratio:IREG:p90\endcsname{1.05}
\expandafter\def\csname gz:diag:s0.8:seratio:IREG:p95\endcsname{1.07}
\expandafter\def\csname gz:diag:s0.8:seratio:REG:median\endcsname{1.01}
\expandafter\def\csname gz:diag:s0.8:seratio:REG:p90\endcsname{1.03}
\expandafter\def\csname gz:diag:s0.8:seratio:REG:p95\endcsname{1.04}
\expandafter\def\csname gz:diag:s0.8:width:IREG_HC1\endcsname{0.93}
\expandafter\def\csname gz:diag:s0.8:width:IREG_HC3\endcsname{0.95}
\expandafter\def\csname gz:diag:s0.8:width:REG_HC1\endcsname{0.93}
\expandafter\def\csname gz:diag:s0.8:width:REG_HC3\endcsname{0.94}
\expandafter\def\csname gz:diag:s1.0:seratio:IREG:median\endcsname{1.04}
\expandafter\def\csname gz:diag:s1.0:seratio:IREG:p90\endcsname{1.11}
\expandafter\def\csname gz:diag:s1.0:seratio:IREG:p95\endcsname{1.16}
\expandafter\def\csname gz:diag:s1.0:seratio:REG:median\endcsname{1.02}
\expandafter\def\csname gz:diag:s1.0:seratio:REG:p90\endcsname{1.05}
\expandafter\def\csname gz:diag:s1.0:seratio:REG:p95\endcsname{1.08}
\expandafter\def\csname gz:diag:s1.0:width:IREG_HC1\endcsname{1.13}
\expandafter\def\csname gz:diag:s1.0:width:IREG_HC3\endcsname{1.18}
\expandafter\def\csname gz:diag:s1.0:width:REG_HC1\endcsname{1.13}
\expandafter\def\csname gz:diag:s1.0:width:REG_HC3\endcsname{1.15}
\expandafter\def\csname gz:diag:s1.2:seratio:IREG:median\endcsname{1.07}
\expandafter\def\csname gz:diag:s1.2:seratio:IREG:p90\endcsname{1.22}
\expandafter\def\csname gz:diag:s1.2:seratio:IREG:p95\endcsname{1.31}
\expandafter\def\csname gz:diag:s1.2:seratio:REG:median\endcsname{1.03}
\expandafter\def\csname gz:diag:s1.2:seratio:REG:p90\endcsname{1.10}
\expandafter\def\csname gz:diag:s1.2:seratio:REG:p95\endcsname{1.15}
\expandafter\def\csname gz:diag:s1.2:width:IREG_HC1\endcsname{1.47}
\expandafter\def\csname gz:diag:s1.2:width:IREG_HC3\endcsname{1.58}
\expandafter\def\csname gz:diag:s1.2:width:REG_HC1\endcsname{1.46}
\expandafter\def\csname gz:diag:s1.2:width:REG_HC3\endcsname{1.51}
\expandafter\def\csname gz:diag:s1.5:seratio:IREG:median\endcsname{1.13}
\expandafter\def\csname gz:diag:s1.5:seratio:IREG:p90\endcsname{1.45}
\expandafter\def\csname gz:diag:s1.5:seratio:IREG:p95\endcsname{1.63}
\expandafter\def\csname gz:diag:s1.5:seratio:REG:median\endcsname{1.05}
\expandafter\def\csname gz:diag:s1.5:seratio:REG:p90\endcsname{1.19}
\expandafter\def\csname gz:diag:s1.5:seratio:REG:p95\endcsname{1.28}
\expandafter\def\csname gz:diag:s1.5:width:IREG_HC1\endcsname{2.43}
\expandafter\def\csname gz:diag:s1.5:width:IREG_HC3\endcsname{2.81}
\expandafter\def\csname gz:diag:s1.5:width:REG_HC1\endcsname{2.41}
\expandafter\def\csname gz:diag:s1.5:width:REG_HC3\endcsname{2.57}
\expandafter\def\csname gz:diag:s1.8:seratio:IREG:median\endcsname{1.22}
\expandafter\def\csname gz:diag:s1.8:seratio:IREG:p90\endcsname{1.71}
\expandafter\def\csname gz:diag:s1.8:seratio:IREG:p95\endcsname{2.04}
\expandafter\def\csname gz:diag:s1.8:seratio:REG:median\endcsname{1.07}
\expandafter\def\csname gz:diag:s1.8:seratio:REG:p90\endcsname{1.28}
\expandafter\def\csname gz:diag:s1.8:seratio:REG:p95\endcsname{1.41}
\expandafter\def\csname gz:diag:s1.8:width:IREG_HC1\endcsname{4.43}
\expandafter\def\csname gz:diag:s1.8:width:IREG_HC3\endcsname{5.56}
\expandafter\def\csname gz:diag:s1.8:width:REG_HC1\endcsname{4.39}
\expandafter\def\csname gz:diag:s1.8:width:REG_HC3\endcsname{4.81}
\expandafter\def\csname gz:diag:s2.0:seratio:IREG:median\endcsname{1.28}
\expandafter\def\csname gz:diag:s2.0:seratio:IREG:p90\endcsname{1.92}
\expandafter\def\csname gz:diag:s2.0:seratio:IREG:p95\endcsname{2.37}
\expandafter\def\csname gz:diag:s2.0:seratio:REG:median\endcsname{1.09}
\expandafter\def\csname gz:diag:s2.0:seratio:REG:p90\endcsname{1.34}
\expandafter\def\csname gz:diag:s2.0:seratio:REG:p95\endcsname{1.48}
\expandafter\def\csname gz:diag:s2.0:width:IREG_HC1\endcsname{6.87}
\expandafter\def\csname gz:diag:s2.0:width:IREG_HC3\endcsname{9.14}
\expandafter\def\csname gz:diag:s2.0:width:REG_HC1\endcsname{6.80}
\expandafter\def\csname gz:diag:s2.0:width:REG_HC3\endcsname{7.57}
\expandafter\def\csname gz:diag:scale:IREG_HC1\endcsname{1.43}
\expandafter\def\csname gz:diag:scale:IREG_HC1:0.80:obs\endcsname{64.4}
\expandafter\def\csname gz:diag:scale:IREG_HC1:0.80:pred\endcsname{63.0}
\expandafter\def\csname gz:diag:scale:IREG_HC1:0.80:qratio\endcsname{1.43}
\expandafter\def\csname gz:diag:scale:IREG_HC1:0.90:obs\endcsname{75.5}
\expandafter\def\csname gz:diag:scale:IREG_HC1:0.90:pred\endcsname{75.0}
\expandafter\def\csname gz:diag:scale:IREG_HC1:0.90:qratio\endcsname{1.46}
\expandafter\def\csname gz:diag:scale:IREG_HC1:0.95:obs\endcsname{83.0}
\expandafter\def\csname gz:diag:scale:IREG_HC1:0.95:pred\endcsname{83.0}
\expandafter\def\csname gz:diag:scale:IREG_HC1:0.95:qratio\endcsname{1.52}
\expandafter\def\csname gz:diag:scale:IREG_HC1:0.98:obs\endcsname{89.1}
\expandafter\def\csname gz:diag:scale:IREG_HC1:0.98:pred\endcsname{89.7}
\expandafter\def\csname gz:diag:scale:IREG_HC1:0.98:qratio\endcsname{1.64}
\expandafter\def\csname gz:diag:scale:IREG_HC1:0.99:obs\endcsname{91.9}
\expandafter\def\csname gz:diag:scale:IREG_HC1:0.99:pred\endcsname{92.9}
\expandafter\def\csname gz:diag:scale:IREG_HC1:0.99:qratio\endcsname{1.74}
\expandafter\def\csname gz:diag:scale:IREG_HC1:maxdev\endcsname{1.4}
\expandafter\def\csname gz:diag:scale:IREG_HC3\endcsname{1.05}
\expandafter\def\csname gz:diag:scale:IREG_HC3:0.80:obs\endcsname{79.5}
\expandafter\def\csname gz:diag:scale:IREG_HC3:0.80:pred\endcsname{77.9}
\expandafter\def\csname gz:diag:scale:IREG_HC3:0.80:qratio\endcsname{1.01}
\expandafter\def\csname gz:diag:scale:IREG_HC3:0.90:obs\endcsname{89.2}
\expandafter\def\csname gz:diag:scale:IREG_HC3:0.90:pred\endcsname{88.4}
\expandafter\def\csname gz:diag:scale:IREG_HC3:0.90:qratio\endcsname{1.02}
\expandafter\def\csname gz:diag:scale:IREG_HC3:0.95:obs\endcsname{93.9}
\expandafter\def\csname gz:diag:scale:IREG_HC3:0.95:pred\endcsname{93.9}
\expandafter\def\csname gz:diag:scale:IREG_HC3:0.95:qratio\endcsname{1.05}
\expandafter\def\csname gz:diag:scale:IREG_HC3:0.98:obs\endcsname{96.9}
\expandafter\def\csname gz:diag:scale:IREG_HC3:0.98:pred\endcsname{97.4}
\expandafter\def\csname gz:diag:scale:IREG_HC3:0.98:qratio\endcsname{1.10}
\expandafter\def\csname gz:diag:scale:IREG_HC3:0.99:obs\endcsname{98.0}
\expandafter\def\csname gz:diag:scale:IREG_HC3:0.99:pred\endcsname{98.6}
\expandafter\def\csname gz:diag:scale:IREG_HC3:0.99:qratio\endcsname{1.14}
\expandafter\def\csname gz:diag:scale:IREG_HC3:maxdev\endcsname{1.6}
\expandafter\def\csname gz:diag:scale:REG_HC1\endcsname{0.99}
\expandafter\def\csname gz:diag:scale:REG_HC1:0.80:obs\endcsname{76.3}
\expandafter\def\csname gz:diag:scale:REG_HC1:0.80:pred\endcsname{80.6}
\expandafter\def\csname gz:diag:scale:REG_HC1:0.80:qratio\endcsname{1.08}
\expandafter\def\csname gz:diag:scale:REG_HC1:0.90:obs\endcsname{88.8}
\expandafter\def\csname gz:diag:scale:REG_HC1:0.90:pred\endcsname{90.5}
\expandafter\def\csname gz:diag:scale:REG_HC1:0.90:qratio\endcsname{1.03}
\expandafter\def\csname gz:diag:scale:REG_HC1:0.95:obs\endcsname{95.3}
\expandafter\def\csname gz:diag:scale:REG_HC1:0.95:pred\endcsname{95.3}
\expandafter\def\csname gz:diag:scale:REG_HC1:0.95:qratio\endcsname{0.99}
\expandafter\def\csname gz:diag:scale:REG_HC1:0.98:obs\endcsname{98.6}
\expandafter\def\csname gz:diag:scale:REG_HC1:0.98:pred\endcsname{98.2}
\expandafter\def\csname gz:diag:scale:REG_HC1:0.98:qratio\endcsname{0.95}
\expandafter\def\csname gz:diag:scale:REG_HC1:0.99:obs\endcsname{99.5}
\expandafter\def\csname gz:diag:scale:REG_HC1:0.99:pred\endcsname{99.1}
\expandafter\def\csname gz:diag:scale:REG_HC1:0.99:qratio\endcsname{0.94}
\expandafter\def\csname gz:diag:scale:REG_HC1:maxdev\endcsname{4.3}
\expandafter\def\csname gz:estimatr:v1\endcsname{1.0.2}
\expandafter\def\csname gz:estimatr:v2\endcsname{2.0.0}
\expandafter\def\csname gz:fam:B\endcsname{20{,}000}
\expandafter\def\csname gz:fam:matrixcheck:maxrel\endcsname{5}
\expandafter\def\csname gz:fam:maxabsz\endcsname{1.55}
\expandafter\def\csname gz:fam:n\endcsname{4{,}000}
\expandafter\def\csname gz:fam:rate:heavytail1000:z\endcsname{-2.21}
\expandafter\def\csname gz:fam:rate:heavytail:nbr\endcsname{0.6}
\expandafter\def\csname gz:fam:rate:maxabsz\endcsname{2.21}
\expandafter\def\csname gz:fam:seratio:max\endcsname{11.4}
\expandafter\def\csname gz:fam:seratio:min\endcsname{3.9}
\expandafter\def\csname gz:fam:target:max\endcsname{36.3}
\expandafter\def\csname gz:fam:target:min\endcsname{-21.6}
\expandafter\def\csname gz:gauss_quad:IREG_HC1:all\endcsname{94.9}
\expandafter\def\csname gz:gauss_quad:IREG_HC1:all:nointerval\endcsname{0.0}
\expandafter\def\csname gz:gauss_quad:IREG_HC1:full\endcsname{94.9}
\expandafter\def\csname gz:gauss_quad:IREG_HC1:full:nointerval\endcsname{0.0}
\expandafter\def\csname gz:gauss_quad:IREG_HC1:lev1\endcsname{--}
\expandafter\def\csname gz:gauss_quad:IREG_HC2:all\endcsname{94.9}
\expandafter\def\csname gz:gauss_quad:IREG_HC2:all:nointerval\endcsname{0.0}
\expandafter\def\csname gz:gauss_quad:IREG_HC2:full\endcsname{94.9}
\expandafter\def\csname gz:gauss_quad:IREG_HC2:full:nointerval\endcsname{0.0}
\expandafter\def\csname gz:gauss_quad:IREG_HC2:lev1\endcsname{--}
\expandafter\def\csname gz:gauss_quad:IREG_HC3:all\endcsname{95.0}
\expandafter\def\csname gz:gauss_quad:IREG_HC3:all:nointerval\endcsname{0.0}
\expandafter\def\csname gz:gauss_quad:IREG_HC3:full\endcsname{95.0}
\expandafter\def\csname gz:gauss_quad:IREG_HC3:full:nointerval\endcsname{0.0}
\expandafter\def\csname gz:gauss_quad:IREG_HC3:lev1\endcsname{--}
\expandafter\def\csname gz:gauss_quad:REG_HC1:all\endcsname{94.9}
\expandafter\def\csname gz:gauss_quad:REG_HC1:all:nointerval\endcsname{0.0}
\expandafter\def\csname gz:gauss_quad:REG_HC1:full\endcsname{94.9}
\expandafter\def\csname gz:gauss_quad:REG_HC1:full:nointerval\endcsname{0.0}
\expandafter\def\csname gz:gauss_quad:REG_HC1:lev1\endcsname{--}
\expandafter\def\csname gz:gauss_quad:REG_HC3:all\endcsname{95.0}
\expandafter\def\csname gz:gauss_quad:REG_HC3:all:nointerval\endcsname{0.0}
\expandafter\def\csname gz:gauss_quad:REG_HC3:full\endcsname{95.0}
\expandafter\def\csname gz:gauss_quad:REG_HC3:full:nointerval\endcsname{0.0}
\expandafter\def\csname gz:gauss_quad:REG_HC3:lev1\endcsname{--}
\expandafter\def\csname gz:gauss_quad:lev1\endcsname{0.0}
\expandafter\def\csname gz:gauss_quad:medlev\endcsname{0.04}
\expandafter\def\csname gz:gauss_quad:rankdef\endcsname{0.0}
\expandafter\def\csname gz:gauss_quad:reps\endcsname{20{,}000}
\expandafter\def\csname gz:hte:ba:gauss_high:IREG_HC1:cal:0.80\endcsname{52.4}
\expandafter\def\csname gz:hte:ba:gauss_high:IREG_HC1:cal:0.90\endcsname{63.8}
\expandafter\def\csname gz:hte:ba:gauss_high:IREG_HC1:cal:0.95\endcsname{72.2}
\expandafter\def\csname gz:hte:ba:gauss_high:IREG_HC1:cal:0.98\endcsname{80.5}
\expandafter\def\csname gz:hte:ba:gauss_high:IREG_HC1:cal:0.99\endcsname{85.2}
\expandafter\def\csname gz:hte:ba:gauss_high:IREG_HC1:cov\endcsname{72.2}
\expandafter\def\csname gz:hte:ba:gauss_high:IREG_HC1:width\endcsname{0.35}
\expandafter\def\csname gz:hte:ba:gauss_high:IREG_HC1c:cal:0.80\endcsname{80.1}
\expandafter\def\csname gz:hte:ba:gauss_high:IREG_HC1c:cal:0.90\endcsname{90.0}
\expandafter\def\csname gz:hte:ba:gauss_high:IREG_HC1c:cal:0.95\endcsname{95.1}
\expandafter\def\csname gz:hte:ba:gauss_high:IREG_HC1c:cal:0.98\endcsname{98.1}
\expandafter\def\csname gz:hte:ba:gauss_high:IREG_HC1c:cal:0.99\endcsname{99.1}
\expandafter\def\csname gz:hte:ba:gauss_high:IREG_HC1c:cov\endcsname{95.1}
\expandafter\def\csname gz:hte:ba:gauss_high:IREG_HC1c:width\endcsname{0.64}
\expandafter\def\csname gz:hte:ba:gauss_high:IREG_HC1o:cal:0.80\endcsname{80.1}
\expandafter\def\csname gz:hte:ba:gauss_high:IREG_HC1o:cal:0.90\endcsname{90.0}
\expandafter\def\csname gz:hte:ba:gauss_high:IREG_HC1o:cal:0.95\endcsname{95.1}
\expandafter\def\csname gz:hte:ba:gauss_high:IREG_HC1o:cal:0.98\endcsname{98.1}
\expandafter\def\csname gz:hte:ba:gauss_high:IREG_HC1o:cal:0.99\endcsname{99.1}
\expandafter\def\csname gz:hte:ba:gauss_high:IREG_HC1o:cov\endcsname{95.1}
\expandafter\def\csname gz:hte:ba:gauss_high:IREG_HC1o:width\endcsname{0.63}
\expandafter\def\csname gz:hte:ba:gauss_high:IREG_HC3:cal:0.80\endcsname{53.0}
\expandafter\def\csname gz:hte:ba:gauss_high:IREG_HC3:cal:0.90\endcsname{64.5}
\expandafter\def\csname gz:hte:ba:gauss_high:IREG_HC3:cal:0.95\endcsname{72.9}
\expandafter\def\csname gz:hte:ba:gauss_high:IREG_HC3:cal:0.98\endcsname{81.0}
\expandafter\def\csname gz:hte:ba:gauss_high:IREG_HC3:cal:0.99\endcsname{85.7}
\expandafter\def\csname gz:hte:ba:gauss_high:IREG_HC3:cov\endcsname{72.9}
\expandafter\def\csname gz:hte:ba:gauss_high:IREG_HC3:width\endcsname{0.36}
\expandafter\def\csname gz:hte:ba:gauss_high:IREG_HC3c:cal:0.80\endcsname{80.3}
\expandafter\def\csname gz:hte:ba:gauss_high:IREG_HC3c:cal:0.90\endcsname{90.2}
\expandafter\def\csname gz:hte:ba:gauss_high:IREG_HC3c:cal:0.95\endcsname{95.2}
\expandafter\def\csname gz:hte:ba:gauss_high:IREG_HC3c:cal:0.98\endcsname{98.2}
\expandafter\def\csname gz:hte:ba:gauss_high:IREG_HC3c:cal:0.99\endcsname{99.1}
\expandafter\def\csname gz:hte:ba:gauss_high:IREG_HC3c:cov\endcsname{95.2}
\expandafter\def\csname gz:hte:ba:gauss_high:IREG_HC3c:width\endcsname{0.64}
\expandafter\def\csname gz:hte:ba:gauss_high:R2\endcsname{0.69}
\expandafter\def\csname gz:hte:ba:gauss_high:REG_HC1:cal:0.80\endcsname{80.1}
\expandafter\def\csname gz:hte:ba:gauss_high:REG_HC1:cal:0.90\endcsname{89.9}
\expandafter\def\csname gz:hte:ba:gauss_high:REG_HC1:cal:0.95\endcsname{95.0}
\expandafter\def\csname gz:hte:ba:gauss_high:REG_HC1:cal:0.98\endcsname{98.1}
\expandafter\def\csname gz:hte:ba:gauss_high:REG_HC1:cal:0.99\endcsname{99.0}
\expandafter\def\csname gz:hte:ba:gauss_high:REG_HC1:cov\endcsname{95.0}
\expandafter\def\csname gz:hte:ba:gauss_high:REG_HC1:width\endcsname{0.64}
\expandafter\def\csname gz:hte:ba:gauss_high:REG_HC3:cal:0.80\endcsname{80.6}
\expandafter\def\csname gz:hte:ba:gauss_high:REG_HC3:cal:0.90\endcsname{90.3}
\expandafter\def\csname gz:hte:ba:gauss_high:REG_HC3:cal:0.95\endcsname{95.3}
\expandafter\def\csname gz:hte:ba:gauss_high:REG_HC3:cal:0.98\endcsname{98.2}
\expandafter\def\csname gz:hte:ba:gauss_high:REG_HC3:cal:0.99\endcsname{99.1}
\expandafter\def\csname gz:hte:ba:gauss_high:REG_HC3:cov\endcsname{95.3}
\expandafter\def\csname gz:hte:ba:gauss_high:REG_HC3:width\endcsname{0.64}
\expandafter\def\csname gz:hte:ba:gauss_high:gstar\endcsname{3.00}
\expandafter\def\csname gz:hte:ba:gauss_high:lev1\endcsname{0.0}
\expandafter\def\csname gz:hte:ba:gauss_high:mse\endcsname{0.99}
\expandafter\def\csname gz:hte:ba:gauss_high:plugratio\endcsname{1.01}
\expandafter\def\csname gz:hte:ba:gauss_high:plugratio:fmt\endcsname{1.0}
\expandafter\def\csname gz:hte:ba:gauss_high:rankdef\endcsname{0.0}
\expandafter\def\csname gz:hte:ba:gauss_high:wr1\endcsname{1.00}
\expandafter\def\csname gz:hte:ba:gauss_high:wr3\endcsname{0.99}
\expandafter\def\csname gz:hte:ba:gauss_low:IREG_HC1:cal:0.80\endcsname{79.5}
\expandafter\def\csname gz:hte:ba:gauss_low:IREG_HC1:cal:0.90\endcsname{89.5}
\expandafter\def\csname gz:hte:ba:gauss_low:IREG_HC1:cal:0.95\endcsname{94.6}
\expandafter\def\csname gz:hte:ba:gauss_low:IREG_HC1:cal:0.98\endcsname{97.8}
\expandafter\def\csname gz:hte:ba:gauss_low:IREG_HC1:cal:0.99\endcsname{98.9}
\expandafter\def\csname gz:hte:ba:gauss_low:IREG_HC1:cov\endcsname{94.6}
\expandafter\def\csname gz:hte:ba:gauss_low:IREG_HC1:width\endcsname{0.35}
\expandafter\def\csname gz:hte:ba:gauss_low:IREG_HC1c:cal:0.80\endcsname{80.1}
\expandafter\def\csname gz:hte:ba:gauss_low:IREG_HC1c:cal:0.90\endcsname{90.0}
\expandafter\def\csname gz:hte:ba:gauss_low:IREG_HC1c:cal:0.95\endcsname{95.0}
\expandafter\def\csname gz:hte:ba:gauss_low:IREG_HC1c:cal:0.98\endcsname{98.1}
\expandafter\def\csname gz:hte:ba:gauss_low:IREG_HC1c:cal:0.99\endcsname{99.1}
\expandafter\def\csname gz:hte:ba:gauss_low:IREG_HC1c:cov\endcsname{95.0}
\expandafter\def\csname gz:hte:ba:gauss_low:IREG_HC1c:width\endcsname{0.36}
\expandafter\def\csname gz:hte:ba:gauss_low:IREG_HC1o:cal:0.80\endcsname{79.9}
\expandafter\def\csname gz:hte:ba:gauss_low:IREG_HC1o:cal:0.90\endcsname{89.8}
\expandafter\def\csname gz:hte:ba:gauss_low:IREG_HC1o:cal:0.95\endcsname{94.9}
\expandafter\def\csname gz:hte:ba:gauss_low:IREG_HC1o:cal:0.98\endcsname{98.0}
\expandafter\def\csname gz:hte:ba:gauss_low:IREG_HC1o:cal:0.99\endcsname{99.0}
\expandafter\def\csname gz:hte:ba:gauss_low:IREG_HC1o:cov\endcsname{94.9}
\expandafter\def\csname gz:hte:ba:gauss_low:IREG_HC1o:width\endcsname{0.36}
\expandafter\def\csname gz:hte:ba:gauss_low:IREG_HC3:cal:0.80\endcsname{79.9}
\expandafter\def\csname gz:hte:ba:gauss_low:IREG_HC3:cal:0.90\endcsname{89.8}
\expandafter\def\csname gz:hte:ba:gauss_low:IREG_HC3:cal:0.95\endcsname{94.9}
\expandafter\def\csname gz:hte:ba:gauss_low:IREG_HC3:cal:0.98\endcsname{98.0}
\expandafter\def\csname gz:hte:ba:gauss_low:IREG_HC3:cal:0.99\endcsname{99.0}
\expandafter\def\csname gz:hte:ba:gauss_low:IREG_HC3:cov\endcsname{94.9}
\expandafter\def\csname gz:hte:ba:gauss_low:IREG_HC3:width\endcsname{0.36}
\expandafter\def\csname gz:hte:ba:gauss_low:IREG_HC3c:cal:0.80\endcsname{80.7}
\expandafter\def\csname gz:hte:ba:gauss_low:IREG_HC3c:cal:0.90\endcsname{90.4}
\expandafter\def\csname gz:hte:ba:gauss_low:IREG_HC3c:cal:0.95\endcsname{95.2}
\expandafter\def\csname gz:hte:ba:gauss_low:IREG_HC3c:cal:0.98\endcsname{98.2}
\expandafter\def\csname gz:hte:ba:gauss_low:IREG_HC3c:cal:0.99\endcsname{99.1}
\expandafter\def\csname gz:hte:ba:gauss_low:IREG_HC3c:cov\endcsname{95.2}
\expandafter\def\csname gz:hte:ba:gauss_low:IREG_HC3c:width\endcsname{0.36}
\expandafter\def\csname gz:hte:ba:gauss_low:R2\endcsname{0.02}
\expandafter\def\csname gz:hte:ba:gauss_low:REG_HC1:cal:0.80\endcsname{79.9}
\expandafter\def\csname gz:hte:ba:gauss_low:REG_HC1:cal:0.90\endcsname{89.9}
\expandafter\def\csname gz:hte:ba:gauss_low:REG_HC1:cal:0.95\endcsname{94.8}
\expandafter\def\csname gz:hte:ba:gauss_low:REG_HC1:cal:0.98\endcsname{98.0}
\expandafter\def\csname gz:hte:ba:gauss_low:REG_HC1:cal:0.99\endcsname{99.0}
\expandafter\def\csname gz:hte:ba:gauss_low:REG_HC1:cov\endcsname{94.8}
\expandafter\def\csname gz:hte:ba:gauss_low:REG_HC1:width\endcsname{0.36}
\expandafter\def\csname gz:hte:ba:gauss_low:REG_HC3:cal:0.80\endcsname{80.2}
\expandafter\def\csname gz:hte:ba:gauss_low:REG_HC3:cal:0.90\endcsname{90.1}
\expandafter\def\csname gz:hte:ba:gauss_low:REG_HC3:cal:0.95\endcsname{95.0}
\expandafter\def\csname gz:hte:ba:gauss_low:REG_HC3:cal:0.98\endcsname{98.1}
\expandafter\def\csname gz:hte:ba:gauss_low:REG_HC3:cal:0.99\endcsname{99.0}
\expandafter\def\csname gz:hte:ba:gauss_low:REG_HC3:cov\endcsname{95.0}
\expandafter\def\csname gz:hte:ba:gauss_low:REG_HC3:width\endcsname{0.36}
\expandafter\def\csname gz:hte:ba:gauss_low:gstar\endcsname{0.30}
\expandafter\def\csname gz:hte:ba:gauss_low:lev1\endcsname{0.0}
\expandafter\def\csname gz:hte:ba:gauss_low:mse\endcsname{1.00}
\expandafter\def\csname gz:hte:ba:gauss_low:plugratio\endcsname{1.45}
\expandafter\def\csname gz:hte:ba:gauss_low:plugratio:fmt\endcsname{1.4}
\expandafter\def\csname gz:hte:ba:gauss_low:rankdef\endcsname{0.0}
\expandafter\def\csname gz:hte:ba:gauss_low:wr1\endcsname{1.00}
\expandafter\def\csname gz:hte:ba:gauss_low:wr3\endcsname{1.01}
\expandafter\def\csname gz:hte:ba:heavy_ex:IREG_HC1:cal:0.80\endcsname{67.4}
\expandafter\def\csname gz:hte:ba:heavy_ex:IREG_HC1:cal:0.90\endcsname{78.5}
\expandafter\def\csname gz:hte:ba:heavy_ex:IREG_HC1:cal:0.95\endcsname{85.5}
\expandafter\def\csname gz:hte:ba:heavy_ex:IREG_HC1:cal:0.98\endcsname{91.2}
\expandafter\def\csname gz:hte:ba:heavy_ex:IREG_HC1:cal:0.99\endcsname{93.7}
\expandafter\def\csname gz:hte:ba:heavy_ex:IREG_HC1:cov\endcsname{85.5}
\expandafter\def\csname gz:hte:ba:heavy_ex:IREG_HC1:width\endcsname{6.56}
\expandafter\def\csname gz:hte:ba:heavy_ex:IREG_HC1c:cal:0.80\endcsname{75.5}
\expandafter\def\csname gz:hte:ba:heavy_ex:IREG_HC1c:cal:0.90\endcsname{87.9}
\expandafter\def\csname gz:hte:ba:heavy_ex:IREG_HC1c:cal:0.95\endcsname{94.3}
\expandafter\def\csname gz:hte:ba:heavy_ex:IREG_HC1c:cal:0.98\endcsname{98.0}
\expandafter\def\csname gz:hte:ba:heavy_ex:IREG_HC1c:cal:0.99\endcsname{99.1}
\expandafter\def\csname gz:hte:ba:heavy_ex:IREG_HC1c:cov\endcsname{94.3}
\expandafter\def\csname gz:hte:ba:heavy_ex:IREG_HC1c:width\endcsname{7.26}
\expandafter\def\csname gz:hte:ba:heavy_ex:IREG_HC1o:cal:0.80\endcsname{68.3}
\expandafter\def\csname gz:hte:ba:heavy_ex:IREG_HC1o:cal:0.90\endcsname{79.3}
\expandafter\def\csname gz:hte:ba:heavy_ex:IREG_HC1o:cal:0.95\endcsname{86.4}
\expandafter\def\csname gz:hte:ba:heavy_ex:IREG_HC1o:cal:0.98\endcsname{91.7}
\expandafter\def\csname gz:hte:ba:heavy_ex:IREG_HC1o:cal:0.99\endcsname{94.2}
\expandafter\def\csname gz:hte:ba:heavy_ex:IREG_HC1o:cov\endcsname{86.4}
\expandafter\def\csname gz:hte:ba:heavy_ex:IREG_HC1o:width\endcsname{6.69}
\expandafter\def\csname gz:hte:ba:heavy_ex:IREG_HC3:cal:0.80\endcsname{80.0}
\expandafter\def\csname gz:hte:ba:heavy_ex:IREG_HC3:cal:0.90\endcsname{89.3}
\expandafter\def\csname gz:hte:ba:heavy_ex:IREG_HC3:cal:0.95\endcsname{94.0}
\expandafter\def\csname gz:hte:ba:heavy_ex:IREG_HC3:cal:0.98\endcsname{97.1}
\expandafter\def\csname gz:hte:ba:heavy_ex:IREG_HC3:cal:0.99\endcsname{98.1}
\expandafter\def\csname gz:hte:ba:heavy_ex:IREG_HC3:cov\endcsname{94.0}
\expandafter\def\csname gz:hte:ba:heavy_ex:IREG_HC3:width\endcsname{8.26}
\expandafter\def\csname gz:hte:ba:heavy_ex:IREG_HC3c:cal:0.80\endcsname{84.9}
\expandafter\def\csname gz:hte:ba:heavy_ex:IREG_HC3c:cal:0.90\endcsname{93.7}
\expandafter\def\csname gz:hte:ba:heavy_ex:IREG_HC3c:cal:0.95\endcsname{97.3}
\expandafter\def\csname gz:hte:ba:heavy_ex:IREG_HC3c:cal:0.98\endcsname{99.1}
\expandafter\def\csname gz:hte:ba:heavy_ex:IREG_HC3c:cal:0.99\endcsname{99.7}
\expandafter\def\csname gz:hte:ba:heavy_ex:IREG_HC3c:cov\endcsname{97.3}
\expandafter\def\csname gz:hte:ba:heavy_ex:IREG_HC3c:width\endcsname{8.85}
\expandafter\def\csname gz:hte:ba:heavy_ex:R2\endcsname{0.009}
\expandafter\def\csname gz:hte:ba:heavy_ex:REG_HC1:cal:0.80\endcsname{76.4}
\expandafter\def\csname gz:hte:ba:heavy_ex:REG_HC1:cal:0.90\endcsname{88.3}
\expandafter\def\csname gz:hte:ba:heavy_ex:REG_HC1:cal:0.95\endcsname{94.3}
\expandafter\def\csname gz:hte:ba:heavy_ex:REG_HC1:cal:0.98\endcsname{97.9}
\expandafter\def\csname gz:hte:ba:heavy_ex:REG_HC1:cal:0.99\endcsname{99.0}
\expandafter\def\csname gz:hte:ba:heavy_ex:REG_HC1:cov\endcsname{94.3}
\expandafter\def\csname gz:hte:ba:heavy_ex:REG_HC1:width\endcsname{6.57}
\expandafter\def\csname gz:hte:ba:heavy_ex:REG_HC3:cal:0.80\endcsname{83.1}
\expandafter\def\csname gz:hte:ba:heavy_ex:REG_HC3:cal:0.90\endcsname{92.5}
\expandafter\def\csname gz:hte:ba:heavy_ex:REG_HC3:cal:0.95\endcsname{96.6}
\expandafter\def\csname gz:hte:ba:heavy_ex:REG_HC3:cal:0.98\endcsname{98.9}
\expandafter\def\csname gz:hte:ba:heavy_ex:REG_HC3:cal:0.99\endcsname{99.5}
\expandafter\def\csname gz:hte:ba:heavy_ex:REG_HC3:cov\endcsname{96.6}
\expandafter\def\csname gz:hte:ba:heavy_ex:REG_HC3:width\endcsname{7.19}
\expandafter\def\csname gz:hte:ba:heavy_ex:gstar\endcsname{0.30}
\expandafter\def\csname gz:hte:ba:heavy_ex:lev1\endcsname{0.0}
\expandafter\def\csname gz:hte:ba:heavy_ex:mse\endcsname{1.66}
\expandafter\def\csname gz:hte:ba:heavy_ex:plugratio\endcsname{25.64}
\expandafter\def\csname gz:hte:ba:heavy_ex:plugratio:fmt\endcsname{25.6}
\expandafter\def\csname gz:hte:ba:heavy_ex:rankdef\endcsname{0.0}
\expandafter\def\csname gz:hte:ba:heavy_ex:wr1\endcsname{1.11}
\expandafter\def\csname gz:hte:ba:heavy_ex:wr3\endcsname{1.23}
\expandafter\def\csname gz:hte:ba:heavy_high:IREG_HC1:cal:0.80\endcsname{67.4}
\expandafter\def\csname gz:hte:ba:heavy_high:IREG_HC1:cal:0.90\endcsname{79.3}
\expandafter\def\csname gz:hte:ba:heavy_high:IREG_HC1:cal:0.95\endcsname{86.8}
\expandafter\def\csname gz:hte:ba:heavy_high:IREG_HC1:cal:0.98\endcsname{92.3}
\expandafter\def\csname gz:hte:ba:heavy_high:IREG_HC1:cal:0.99\endcsname{94.8}
\expandafter\def\csname gz:hte:ba:heavy_high:IREG_HC1:cov\endcsname{86.8}
\expandafter\def\csname gz:hte:ba:heavy_high:IREG_HC1:width\endcsname{0.62}
\expandafter\def\csname gz:hte:ba:heavy_high:IREG_HC1c:cal:0.80\endcsname{78.6}
\expandafter\def\csname gz:hte:ba:heavy_high:IREG_HC1c:cal:0.90\endcsname{89.0}
\expandafter\def\csname gz:hte:ba:heavy_high:IREG_HC1c:cal:0.95\endcsname{94.2}
\expandafter\def\csname gz:hte:ba:heavy_high:IREG_HC1c:cal:0.98\endcsname{97.3}
\expandafter\def\csname gz:hte:ba:heavy_high:IREG_HC1c:cal:0.99\endcsname{98.5}
\expandafter\def\csname gz:hte:ba:heavy_high:IREG_HC1c:cov\endcsname{94.2}
\expandafter\def\csname gz:hte:ba:heavy_high:IREG_HC1c:width\endcsname{0.75}
\expandafter\def\csname gz:hte:ba:heavy_high:IREG_HC1o:cal:0.80\endcsname{80.0}
\expandafter\def\csname gz:hte:ba:heavy_high:IREG_HC1o:cal:0.90\endcsname{89.7}
\expandafter\def\csname gz:hte:ba:heavy_high:IREG_HC1o:cal:0.95\endcsname{94.6}
\expandafter\def\csname gz:hte:ba:heavy_high:IREG_HC1o:cal:0.98\endcsname{97.6}
\expandafter\def\csname gz:hte:ba:heavy_high:IREG_HC1o:cal:0.99\endcsname{98.8}
\expandafter\def\csname gz:hte:ba:heavy_high:IREG_HC1o:cov\endcsname{94.6}
\expandafter\def\csname gz:hte:ba:heavy_high:IREG_HC1o:width\endcsname{0.82}
\expandafter\def\csname gz:hte:ba:heavy_high:IREG_HC3:cal:0.80\endcsname{71.6}
\expandafter\def\csname gz:hte:ba:heavy_high:IREG_HC3:cal:0.90\endcsname{82.9}
\expandafter\def\csname gz:hte:ba:heavy_high:IREG_HC3:cal:0.95\endcsname{89.6}
\expandafter\def\csname gz:hte:ba:heavy_high:IREG_HC3:cal:0.98\endcsname{94.5}
\expandafter\def\csname gz:hte:ba:heavy_high:IREG_HC3:cal:0.99\endcsname{96.5}
\expandafter\def\csname gz:hte:ba:heavy_high:IREG_HC3:cov\endcsname{89.6}
\expandafter\def\csname gz:hte:ba:heavy_high:IREG_HC3:width\endcsname{0.66}
\expandafter\def\csname gz:hte:ba:heavy_high:IREG_HC3c:cal:0.80\endcsname{81.0}
\expandafter\def\csname gz:hte:ba:heavy_high:IREG_HC3c:cal:0.90\endcsname{90.7}
\expandafter\def\csname gz:hte:ba:heavy_high:IREG_HC3c:cal:0.95\endcsname{95.2}
\expandafter\def\csname gz:hte:ba:heavy_high:IREG_HC3c:cal:0.98\endcsname{97.9}
\expandafter\def\csname gz:hte:ba:heavy_high:IREG_HC3c:cal:0.99\endcsname{98.8}
\expandafter\def\csname gz:hte:ba:heavy_high:IREG_HC3c:cov\endcsname{95.2}
\expandafter\def\csname gz:hte:ba:heavy_high:IREG_HC3c:width\endcsname{0.79}
\expandafter\def\csname gz:hte:ba:heavy_high:R2\endcsname{0.37}
\expandafter\def\csname gz:hte:ba:heavy_high:REG_HC1:cal:0.80\endcsname{75.5}
\expandafter\def\csname gz:hte:ba:heavy_high:REG_HC1:cal:0.90\endcsname{86.0}
\expandafter\def\csname gz:hte:ba:heavy_high:REG_HC1:cal:0.95\endcsname{91.9}
\expandafter\def\csname gz:hte:ba:heavy_high:REG_HC1:cal:0.98\endcsname{96.0}
\expandafter\def\csname gz:hte:ba:heavy_high:REG_HC1:cal:0.99\endcsname{97.7}
\expandafter\def\csname gz:hte:ba:heavy_high:REG_HC1:cov\endcsname{91.9}
\expandafter\def\csname gz:hte:ba:heavy_high:REG_HC1:width\endcsname{0.73}
\expandafter\def\csname gz:hte:ba:heavy_high:REG_HC3:cal:0.80\endcsname{78.2}
\expandafter\def\csname gz:hte:ba:heavy_high:REG_HC3:cal:0.90\endcsname{87.7}
\expandafter\def\csname gz:hte:ba:heavy_high:REG_HC3:cal:0.95\endcsname{93.1}
\expandafter\def\csname gz:hte:ba:heavy_high:REG_HC3:cal:0.98\endcsname{96.7}
\expandafter\def\csname gz:hte:ba:heavy_high:REG_HC3:cal:0.99\endcsname{98.0}
\expandafter\def\csname gz:hte:ba:heavy_high:REG_HC3:cov\endcsname{93.1}
\expandafter\def\csname gz:hte:ba:heavy_high:REG_HC3:width\endcsname{0.76}
\expandafter\def\csname gz:hte:ba:heavy_high:gstar\endcsname{3.00}
\expandafter\def\csname gz:hte:ba:heavy_high:lev1\endcsname{0.0}
\expandafter\def\csname gz:hte:ba:heavy_high:mse\endcsname{1.08}
\expandafter\def\csname gz:hte:ba:heavy_high:plugratio\endcsname{1.14}
\expandafter\def\csname gz:hte:ba:heavy_high:plugratio:fmt\endcsname{1.1}
\expandafter\def\csname gz:hte:ba:heavy_high:rankdef\endcsname{0.0}
\expandafter\def\csname gz:hte:ba:heavy_high:wr1\endcsname{1.03}
\expandafter\def\csname gz:hte:ba:heavy_high:wr3\endcsname{1.04}
\expandafter\def\csname gz:hte:ba:heavy_low:IREG_HC1:cal:0.80\endcsname{77.0}
\expandafter\def\csname gz:hte:ba:heavy_low:IREG_HC1:cal:0.90\endcsname{87.9}
\expandafter\def\csname gz:hte:ba:heavy_low:IREG_HC1:cal:0.95\endcsname{93.2}
\expandafter\def\csname gz:hte:ba:heavy_low:IREG_HC1:cal:0.98\endcsname{97.1}
\expandafter\def\csname gz:hte:ba:heavy_low:IREG_HC1:cal:0.99\endcsname{98.3}
\expandafter\def\csname gz:hte:ba:heavy_low:IREG_HC1:cov\endcsname{93.2}
\expandafter\def\csname gz:hte:ba:heavy_low:IREG_HC1:width\endcsname{0.62}
\expandafter\def\csname gz:hte:ba:heavy_low:IREG_HC1c:cal:0.80\endcsname{79.1}
\expandafter\def\csname gz:hte:ba:heavy_low:IREG_HC1c:cal:0.90\endcsname{89.7}
\expandafter\def\csname gz:hte:ba:heavy_low:IREG_HC1c:cal:0.95\endcsname{94.7}
\expandafter\def\csname gz:hte:ba:heavy_low:IREG_HC1c:cal:0.98\endcsname{98.0}
\expandafter\def\csname gz:hte:ba:heavy_low:IREG_HC1c:cal:0.99\endcsname{98.9}
\expandafter\def\csname gz:hte:ba:heavy_low:IREG_HC1c:cov\endcsname{94.7}
\expandafter\def\csname gz:hte:ba:heavy_low:IREG_HC1c:width\endcsname{0.64}
\expandafter\def\csname gz:hte:ba:heavy_low:IREG_HC1o:cal:0.80\endcsname{77.5}
\expandafter\def\csname gz:hte:ba:heavy_low:IREG_HC1o:cal:0.90\endcsname{88.2}
\expandafter\def\csname gz:hte:ba:heavy_low:IREG_HC1o:cal:0.95\endcsname{93.5}
\expandafter\def\csname gz:hte:ba:heavy_low:IREG_HC1o:cal:0.98\endcsname{97.3}
\expandafter\def\csname gz:hte:ba:heavy_low:IREG_HC1o:cal:0.99\endcsname{98.4}
\expandafter\def\csname gz:hte:ba:heavy_low:IREG_HC1o:cov\endcsname{93.5}
\expandafter\def\csname gz:hte:ba:heavy_low:IREG_HC1o:width\endcsname{0.63}
\expandafter\def\csname gz:hte:ba:heavy_low:IREG_HC3:cal:0.80\endcsname{81.0}
\expandafter\def\csname gz:hte:ba:heavy_low:IREG_HC3:cal:0.90\endcsname{90.8}
\expandafter\def\csname gz:hte:ba:heavy_low:IREG_HC3:cal:0.95\endcsname{95.3}
\expandafter\def\csname gz:hte:ba:heavy_low:IREG_HC3:cal:0.98\endcsname{98.2}
\expandafter\def\csname gz:hte:ba:heavy_low:IREG_HC3:cal:0.99\endcsname{99.0}
\expandafter\def\csname gz:hte:ba:heavy_low:IREG_HC3:cov\endcsname{95.3}
\expandafter\def\csname gz:hte:ba:heavy_low:IREG_HC3:width\endcsname{0.66}
\expandafter\def\csname gz:hte:ba:heavy_low:IREG_HC3c:cal:0.80\endcsname{82.6}
\expandafter\def\csname gz:hte:ba:heavy_low:IREG_HC3c:cal:0.90\endcsname{92.0}
\expandafter\def\csname gz:hte:ba:heavy_low:IREG_HC3c:cal:0.95\endcsname{96.2}
\expandafter\def\csname gz:hte:ba:heavy_low:IREG_HC3c:cal:0.98\endcsname{98.6}
\expandafter\def\csname gz:hte:ba:heavy_low:IREG_HC3c:cal:0.99\endcsname{99.3}
\expandafter\def\csname gz:hte:ba:heavy_low:IREG_HC3c:cov\endcsname{96.2}
\expandafter\def\csname gz:hte:ba:heavy_low:IREG_HC3c:width\endcsname{0.68}
\expandafter\def\csname gz:hte:ba:heavy_low:R2\endcsname{0.02}
\expandafter\def\csname gz:hte:ba:heavy_low:REG_HC1:cal:0.80\endcsname{79.0}
\expandafter\def\csname gz:hte:ba:heavy_low:REG_HC1:cal:0.90\endcsname{89.6}
\expandafter\def\csname gz:hte:ba:heavy_low:REG_HC1:cal:0.95\endcsname{94.7}
\expandafter\def\csname gz:hte:ba:heavy_low:REG_HC1:cal:0.98\endcsname{97.9}
\expandafter\def\csname gz:hte:ba:heavy_low:REG_HC1:cal:0.99\endcsname{98.9}
\expandafter\def\csname gz:hte:ba:heavy_low:REG_HC1:cov\endcsname{94.7}
\expandafter\def\csname gz:hte:ba:heavy_low:REG_HC1:width\endcsname{0.63}
\expandafter\def\csname gz:hte:ba:heavy_low:REG_HC3:cal:0.80\endcsname{80.9}
\expandafter\def\csname gz:hte:ba:heavy_low:REG_HC3:cal:0.90\endcsname{91.0}
\expandafter\def\csname gz:hte:ba:heavy_low:REG_HC3:cal:0.95\endcsname{95.5}
\expandafter\def\csname gz:hte:ba:heavy_low:REG_HC3:cal:0.98\endcsname{98.3}
\expandafter\def\csname gz:hte:ba:heavy_low:REG_HC3:cal:0.99\endcsname{99.2}
\expandafter\def\csname gz:hte:ba:heavy_low:REG_HC3:cov\endcsname{95.5}
\expandafter\def\csname gz:hte:ba:heavy_low:REG_HC3:width\endcsname{0.64}
\expandafter\def\csname gz:hte:ba:heavy_low:gstar\endcsname{0.50}
\expandafter\def\csname gz:hte:ba:heavy_low:lev1\endcsname{0.0}
\expandafter\def\csname gz:hte:ba:heavy_low:mse\endcsname{1.08}
\expandafter\def\csname gz:hte:ba:heavy_low:plugratio\endcsname{5.70}
\expandafter\def\csname gz:hte:ba:heavy_low:plugratio:fmt\endcsname{5.7}
\expandafter\def\csname gz:hte:ba:heavy_low:rankdef\endcsname{0.0}
\expandafter\def\csname gz:hte:ba:heavy_low:wr1\endcsname{1.02}
\expandafter\def\csname gz:hte:ba:heavy_low:wr3\endcsname{1.05}
\expandafter\def\csname gz:hte:ba:ln12_small:IREG_HC1:cal:0.80\endcsname{75.8}
\expandafter\def\csname gz:hte:ba:ln12_small:IREG_HC1:cal:0.90\endcsname{86.7}
\expandafter\def\csname gz:hte:ba:ln12_small:IREG_HC1:cal:0.95\endcsname{92.6}
\expandafter\def\csname gz:hte:ba:ln12_small:IREG_HC1:cal:0.98\endcsname{96.5}
\expandafter\def\csname gz:hte:ba:ln12_small:IREG_HC1:cal:0.99\endcsname{97.9}
\expandafter\def\csname gz:hte:ba:ln12_small:IREG_HC1:cov\endcsname{92.6}
\expandafter\def\csname gz:hte:ba:ln12_small:IREG_HC1:width\endcsname{1.47}
\expandafter\def\csname gz:hte:ba:ln12_small:IREG_HC1c:cal:0.80\endcsname{78.4}
\expandafter\def\csname gz:hte:ba:ln12_small:IREG_HC1c:cal:0.90\endcsname{88.9}
\expandafter\def\csname gz:hte:ba:ln12_small:IREG_HC1c:cal:0.95\endcsname{94.6}
\expandafter\def\csname gz:hte:ba:ln12_small:IREG_HC1c:cal:0.98\endcsname{97.9}
\expandafter\def\csname gz:hte:ba:ln12_small:IREG_HC1c:cal:0.99\endcsname{98.9}
\expandafter\def\csname gz:hte:ba:ln12_small:IREG_HC1c:cov\endcsname{94.6}
\expandafter\def\csname gz:hte:ba:ln12_small:IREG_HC1c:width\endcsname{1.52}
\expandafter\def\csname gz:hte:ba:ln12_small:IREG_HC1o:cal:0.80\endcsname{76.3}
\expandafter\def\csname gz:hte:ba:ln12_small:IREG_HC1o:cal:0.90\endcsname{87.0}
\expandafter\def\csname gz:hte:ba:ln12_small:IREG_HC1o:cal:0.95\endcsname{92.9}
\expandafter\def\csname gz:hte:ba:ln12_small:IREG_HC1o:cal:0.98\endcsname{96.7}
\expandafter\def\csname gz:hte:ba:ln12_small:IREG_HC1o:cal:0.99\endcsname{98.1}
\expandafter\def\csname gz:hte:ba:ln12_small:IREG_HC1o:cov\endcsname{92.9}
\expandafter\def\csname gz:hte:ba:ln12_small:IREG_HC1o:width\endcsname{1.48}
\expandafter\def\csname gz:hte:ba:ln12_small:IREG_HC3:cal:0.80\endcsname{80.5}
\expandafter\def\csname gz:hte:ba:ln12_small:IREG_HC3:cal:0.90\endcsname{90.1}
\expandafter\def\csname gz:hte:ba:ln12_small:IREG_HC3:cal:0.95\endcsname{94.9}
\expandafter\def\csname gz:hte:ba:ln12_small:IREG_HC3:cal:0.98\endcsname{97.9}
\expandafter\def\csname gz:hte:ba:ln12_small:IREG_HC3:cal:0.99\endcsname{98.8}
\expandafter\def\csname gz:hte:ba:ln12_small:IREG_HC3:cov\endcsname{94.9}
\expandafter\def\csname gz:hte:ba:ln12_small:IREG_HC3:width\endcsname{1.59}
\expandafter\def\csname gz:hte:ba:ln12_small:IREG_HC3c:cal:0.80\endcsname{82.3}
\expandafter\def\csname gz:hte:ba:ln12_small:IREG_HC3c:cal:0.90\endcsname{91.7}
\expandafter\def\csname gz:hte:ba:ln12_small:IREG_HC3c:cal:0.95\endcsname{96.1}
\expandafter\def\csname gz:hte:ba:ln12_small:IREG_HC3c:cal:0.98\endcsname{98.5}
\expandafter\def\csname gz:hte:ba:ln12_small:IREG_HC3c:cal:0.99\endcsname{99.3}
\expandafter\def\csname gz:hte:ba:ln12_small:IREG_HC3c:cov\endcsname{96.1}
\expandafter\def\csname gz:hte:ba:ln12_small:IREG_HC3c:width\endcsname{1.63}
\expandafter\def\csname gz:hte:ba:ln12_small:R2\endcsname{0.01}
\expandafter\def\csname gz:hte:ba:ln12_small:REG_HC1:cal:0.80\endcsname{78.6}
\expandafter\def\csname gz:hte:ba:ln12_small:REG_HC1:cal:0.90\endcsname{89.0}
\expandafter\def\csname gz:hte:ba:ln12_small:REG_HC1:cal:0.95\endcsname{94.6}
\expandafter\def\csname gz:hte:ba:ln12_small:REG_HC1:cal:0.98\endcsname{97.9}
\expandafter\def\csname gz:hte:ba:ln12_small:REG_HC1:cal:0.99\endcsname{98.9}
\expandafter\def\csname gz:hte:ba:ln12_small:REG_HC1:cov\endcsname{94.6}
\expandafter\def\csname gz:hte:ba:ln12_small:REG_HC1:width\endcsname{1.47}
\expandafter\def\csname gz:hte:ba:ln12_small:REG_HC3:cal:0.80\endcsname{80.8}
\expandafter\def\csname gz:hte:ba:ln12_small:REG_HC3:cal:0.90\endcsname{90.8}
\expandafter\def\csname gz:hte:ba:ln12_small:REG_HC3:cal:0.95\endcsname{95.7}
\expandafter\def\csname gz:hte:ba:ln12_small:REG_HC3:cal:0.98\endcsname{98.4}
\expandafter\def\csname gz:hte:ba:ln12_small:REG_HC3:cal:0.99\endcsname{99.2}
\expandafter\def\csname gz:hte:ba:ln12_small:REG_HC3:cov\endcsname{95.7}
\expandafter\def\csname gz:hte:ba:ln12_small:REG_HC3:width\endcsname{1.53}
\expandafter\def\csname gz:hte:ba:ln12_small:gstar\endcsname{0.30}
\expandafter\def\csname gz:hte:ba:ln12_small:lev1\endcsname{0.0}
\expandafter\def\csname gz:hte:ba:ln12_small:mse\endcsname{1.09}
\expandafter\def\csname gz:hte:ba:ln12_small:plugratio\endcsname{6.66}
\expandafter\def\csname gz:hte:ba:ln12_small:plugratio:fmt\endcsname{6.7}
\expandafter\def\csname gz:hte:ba:ln12_small:rankdef\endcsname{0.0}
\expandafter\def\csname gz:hte:ba:ln12_small:wr1\endcsname{1.03}
\expandafter\def\csname gz:hte:ba:ln12_small:wr3\endcsname{1.07}
\expandafter\def\csname gz:hte:ba:ln15_g05:IREG_HC1:cal:0.80\endcsname{70.7}
\expandafter\def\csname gz:hte:ba:ln15_g05:IREG_HC1:cal:0.90\endcsname{81.9}
\expandafter\def\csname gz:hte:ba:ln15_g05:IREG_HC1:cal:0.95\endcsname{88.8}
\expandafter\def\csname gz:hte:ba:ln15_g05:IREG_HC1:cal:0.98\endcsname{93.7}
\expandafter\def\csname gz:hte:ba:ln15_g05:IREG_HC1:cal:0.99\endcsname{95.9}
\expandafter\def\csname gz:hte:ba:ln15_g05:IREG_HC1:cov\endcsname{88.8}
\expandafter\def\csname gz:hte:ba:ln15_g05:IREG_HC1:width\endcsname{2.43}
\expandafter\def\csname gz:hte:ba:ln15_g05:IREG_HC1c:cal:0.80\endcsname{76.6}
\expandafter\def\csname gz:hte:ba:ln15_g05:IREG_HC1c:cal:0.90\endcsname{87.7}
\expandafter\def\csname gz:hte:ba:ln15_g05:IREG_HC1c:cal:0.95\endcsname{93.7}
\expandafter\def\csname gz:hte:ba:ln15_g05:IREG_HC1c:cal:0.98\endcsname{97.2}
\expandafter\def\csname gz:hte:ba:ln15_g05:IREG_HC1c:cal:0.99\endcsname{98.5}
\expandafter\def\csname gz:hte:ba:ln15_g05:IREG_HC1c:cov\endcsname{93.7}
\expandafter\def\csname gz:hte:ba:ln15_g05:IREG_HC1c:width\endcsname{2.62}
\expandafter\def\csname gz:hte:ba:ln15_g05:IREG_HC1o:cal:0.80\endcsname{73.2}
\expandafter\def\csname gz:hte:ba:ln15_g05:IREG_HC1o:cal:0.90\endcsname{84.2}
\expandafter\def\csname gz:hte:ba:ln15_g05:IREG_HC1o:cal:0.95\endcsname{90.5}
\expandafter\def\csname gz:hte:ba:ln15_g05:IREG_HC1o:cal:0.98\endcsname{95.0}
\expandafter\def\csname gz:hte:ba:ln15_g05:IREG_HC1o:cal:0.99\endcsname{96.8}
\expandafter\def\csname gz:hte:ba:ln15_g05:IREG_HC1o:cov\endcsname{90.5}
\expandafter\def\csname gz:hte:ba:ln15_g05:IREG_HC1o:width\endcsname{2.55}
\expandafter\def\csname gz:hte:ba:ln15_g05:IREG_HC3:cal:0.80\endcsname{79.0}
\expandafter\def\csname gz:hte:ba:ln15_g05:IREG_HC3:cal:0.90\endcsname{88.8}
\expandafter\def\csname gz:hte:ba:ln15_g05:IREG_HC3:cal:0.95\endcsname{93.8}
\expandafter\def\csname gz:hte:ba:ln15_g05:IREG_HC3:cal:0.98\endcsname{97.1}
\expandafter\def\csname gz:hte:ba:ln15_g05:IREG_HC3:cal:0.99\endcsname{98.4}
\expandafter\def\csname gz:hte:ba:ln15_g05:IREG_HC3:cov\endcsname{93.8}
\expandafter\def\csname gz:hte:ba:ln15_g05:IREG_HC3:width\endcsname{2.80}
\expandafter\def\csname gz:hte:ba:ln15_g05:IREG_HC3c:cal:0.80\endcsname{83.0}
\expandafter\def\csname gz:hte:ba:ln15_g05:IREG_HC3c:cal:0.90\endcsname{92.0}
\expandafter\def\csname gz:hte:ba:ln15_g05:IREG_HC3c:cal:0.95\endcsname{96.2}
\expandafter\def\csname gz:hte:ba:ln15_g05:IREG_HC3c:cal:0.98\endcsname{98.5}
\expandafter\def\csname gz:hte:ba:ln15_g05:IREG_HC3c:cal:0.99\endcsname{99.2}
\expandafter\def\csname gz:hte:ba:ln15_g05:IREG_HC3c:cov\endcsname{96.2}
\expandafter\def\csname gz:hte:ba:ln15_g05:IREG_HC3c:width\endcsname{2.98}
\expandafter\def\csname gz:hte:ba:ln15_g05:R2\endcsname{0.05}
\expandafter\def\csname gz:hte:ba:ln15_g05:REG_HC1:cal:0.80\endcsname{76.4}
\expandafter\def\csname gz:hte:ba:ln15_g05:REG_HC1:cal:0.90\endcsname{87.3}
\expandafter\def\csname gz:hte:ba:ln15_g05:REG_HC1:cal:0.95\endcsname{93.2}
\expandafter\def\csname gz:hte:ba:ln15_g05:REG_HC1:cal:0.98\endcsname{96.9}
\expandafter\def\csname gz:hte:ba:ln15_g05:REG_HC1:cal:0.99\endcsname{98.3}
\expandafter\def\csname gz:hte:ba:ln15_g05:REG_HC1:cov\endcsname{93.2}
\expandafter\def\csname gz:hte:ba:ln15_g05:REG_HC1:width\endcsname{2.46}
\expandafter\def\csname gz:hte:ba:ln15_g05:REG_HC3:cal:0.80\endcsname{80.6}
\expandafter\def\csname gz:hte:ba:ln15_g05:REG_HC3:cal:0.90\endcsname{90.4}
\expandafter\def\csname gz:hte:ba:ln15_g05:REG_HC3:cal:0.95\endcsname{95.1}
\expandafter\def\csname gz:hte:ba:ln15_g05:REG_HC3:cal:0.98\endcsname{97.9}
\expandafter\def\csname gz:hte:ba:ln15_g05:REG_HC3:cal:0.99\endcsname{98.8}
\expandafter\def\csname gz:hte:ba:ln15_g05:REG_HC3:cov\endcsname{95.1}
\expandafter\def\csname gz:hte:ba:ln15_g05:REG_HC3:width\endcsname{2.61}
\expandafter\def\csname gz:hte:ba:ln15_g05:gstar\endcsname{0.50}
\expandafter\def\csname gz:hte:ba:ln15_g05:lev1\endcsname{0.0}
\expandafter\def\csname gz:hte:ba:ln15_g05:mse\endcsname{1.24}
\expandafter\def\csname gz:hte:ba:ln15_g05:plugratio\endcsname{4.08}
\expandafter\def\csname gz:hte:ba:ln15_g05:plugratio:fmt\endcsname{4.1}
\expandafter\def\csname gz:hte:ba:ln15_g05:rankdef\endcsname{0.0}
\expandafter\def\csname gz:hte:ba:ln15_g05:wr1\endcsname{1.06}
\expandafter\def\csname gz:hte:ba:ln15_g05:wr3\endcsname{1.14}
\expandafter\def\csname gz:hte:ba:ln15_large:IREG_HC1:cal:0.80\endcsname{45.8}
\expandafter\def\csname gz:hte:ba:ln15_large:IREG_HC1:cal:0.90\endcsname{56.3}
\expandafter\def\csname gz:hte:ba:ln15_large:IREG_HC1:cal:0.95\endcsname{64.4}
\expandafter\def\csname gz:hte:ba:ln15_large:IREG_HC1:cal:0.98\endcsname{72.4}
\expandafter\def\csname gz:hte:ba:ln15_large:IREG_HC1:cal:0.99\endcsname{76.9}
\expandafter\def\csname gz:hte:ba:ln15_large:IREG_HC1:cov\endcsname{64.4}
\expandafter\def\csname gz:hte:ba:ln15_large:IREG_HC1:width\endcsname{2.42}
\expandafter\def\csname gz:hte:ba:ln15_large:IREG_HC1c:cal:0.80\endcsname{75.3}
\expandafter\def\csname gz:hte:ba:ln15_large:IREG_HC1c:cal:0.90\endcsname{85.1}
\expandafter\def\csname gz:hte:ba:ln15_large:IREG_HC1c:cal:0.95\endcsname{89.8}
\expandafter\def\csname gz:hte:ba:ln15_large:IREG_HC1c:cal:0.98\endcsname{93.7}
\expandafter\def\csname gz:hte:ba:ln15_large:IREG_HC1c:cal:0.99\endcsname{95.4}
\expandafter\def\csname gz:hte:ba:ln15_large:IREG_HC1c:cov\endcsname{89.8}
\expandafter\def\csname gz:hte:ba:ln15_large:IREG_HC1c:width\endcsname{4.46}
\expandafter\def\csname gz:hte:ba:ln15_large:IREG_HC1o:cal:0.80\endcsname{81.1}
\expandafter\def\csname gz:hte:ba:ln15_large:IREG_HC1o:cal:0.90\endcsname{90.9}
\expandafter\def\csname gz:hte:ba:ln15_large:IREG_HC1o:cal:0.95\endcsname{95.2}
\expandafter\def\csname gz:hte:ba:ln15_large:IREG_HC1o:cal:0.98\endcsname{97.7}
\expandafter\def\csname gz:hte:ba:ln15_large:IREG_HC1o:cal:0.99\endcsname{98.5}
\expandafter\def\csname gz:hte:ba:ln15_large:IREG_HC1o:cov\endcsname{95.2}
\expandafter\def\csname gz:hte:ba:ln15_large:IREG_HC1o:width\endcsname{5.32}
\expandafter\def\csname gz:hte:ba:ln15_large:IREG_HC3:cal:0.80\endcsname{53.3}
\expandafter\def\csname gz:hte:ba:ln15_large:IREG_HC3:cal:0.90\endcsname{64.5}
\expandafter\def\csname gz:hte:ba:ln15_large:IREG_HC3:cal:0.95\endcsname{72.2}
\expandafter\def\csname gz:hte:ba:ln15_large:IREG_HC3:cal:0.98\endcsname{79.9}
\expandafter\def\csname gz:hte:ba:ln15_large:IREG_HC3:cal:0.99\endcsname{83.8}
\expandafter\def\csname gz:hte:ba:ln15_large:IREG_HC3:cov\endcsname{72.2}
\expandafter\def\csname gz:hte:ba:ln15_large:IREG_HC3:width\endcsname{2.81}
\expandafter\def\csname gz:hte:ba:ln15_large:IREG_HC3c:cal:0.80\endcsname{78.1}
\expandafter\def\csname gz:hte:ba:ln15_large:IREG_HC3c:cal:0.90\endcsname{87.1}
\expandafter\def\csname gz:hte:ba:ln15_large:IREG_HC3c:cal:0.95\endcsname{91.4}
\expandafter\def\csname gz:hte:ba:ln15_large:IREG_HC3c:cal:0.98\endcsname{94.8}
\expandafter\def\csname gz:hte:ba:ln15_large:IREG_HC3c:cal:0.99\endcsname{96.3}
\expandafter\def\csname gz:hte:ba:ln15_large:IREG_HC3c:cov\endcsname{91.4}
\expandafter\def\csname gz:hte:ba:ln15_large:IREG_HC3c:width\endcsname{4.75}
\expandafter\def\csname gz:hte:ba:ln15_large:R2\endcsname{0.65}
\expandafter\def\csname gz:hte:ba:ln15_large:REG_HC1:cal:0.80\endcsname{67.8}
\expandafter\def\csname gz:hte:ba:ln15_large:REG_HC1:cal:0.90\endcsname{78.1}
\expandafter\def\csname gz:hte:ba:ln15_large:REG_HC1:cal:0.95\endcsname{84.1}
\expandafter\def\csname gz:hte:ba:ln15_large:REG_HC1:cal:0.98\endcsname{89.2}
\expandafter\def\csname gz:hte:ba:ln15_large:REG_HC1:cal:0.99\endcsname{91.8}
\expandafter\def\csname gz:hte:ba:ln15_large:REG_HC1:cov\endcsname{84.1}
\expandafter\def\csname gz:hte:ba:ln15_large:REG_HC1:width\endcsname{4.17}
\expandafter\def\csname gz:hte:ba:ln15_large:REG_HC3:cal:0.80\endcsname{71.5}
\expandafter\def\csname gz:hte:ba:ln15_large:REG_HC3:cal:0.90\endcsname{80.5}
\expandafter\def\csname gz:hte:ba:ln15_large:REG_HC3:cal:0.95\endcsname{86.0}
\expandafter\def\csname gz:hte:ba:ln15_large:REG_HC3:cal:0.98\endcsname{90.6}
\expandafter\def\csname gz:hte:ba:ln15_large:REG_HC3:cal:0.99\endcsname{93.0}
\expandafter\def\csname gz:hte:ba:ln15_large:REG_HC3:cov\endcsname{86.0}
\expandafter\def\csname gz:hte:ba:ln15_large:REG_HC3:width\endcsname{4.52}
\expandafter\def\csname gz:hte:ba:ln15_large:gstar\endcsname{3.00}
\expandafter\def\csname gz:hte:ba:ln15_large:lev1\endcsname{0.0}
\expandafter\def\csname gz:hte:ba:ln15_large:mse\endcsname{1.11}
\expandafter\def\csname gz:hte:ba:ln15_large:plugratio\endcsname{1.06}
\expandafter\def\csname gz:hte:ba:ln15_large:plugratio:fmt\endcsname{1.1}
\expandafter\def\csname gz:hte:ba:ln15_large:rankdef\endcsname{0.0}
\expandafter\def\csname gz:hte:ba:ln15_large:wr1\endcsname{1.07}
\expandafter\def\csname gz:hte:ba:ln15_large:wr3\endcsname{1.05}
\expandafter\def\csname gz:hte:ba:ln15_mid:IREG_HC1:cal:0.80\endcsname{66.9}
\expandafter\def\csname gz:hte:ba:ln15_mid:IREG_HC1:cal:0.90\endcsname{78.1}
\expandafter\def\csname gz:hte:ba:ln15_mid:IREG_HC1:cal:0.95\endcsname{85.3}
\expandafter\def\csname gz:hte:ba:ln15_mid:IREG_HC1:cal:0.98\endcsname{91.3}
\expandafter\def\csname gz:hte:ba:ln15_mid:IREG_HC1:cal:0.99\endcsname{94.0}
\expandafter\def\csname gz:hte:ba:ln15_mid:IREG_HC1:cov\endcsname{85.3}
\expandafter\def\csname gz:hte:ba:ln15_mid:IREG_HC1:width\endcsname{2.42}
\expandafter\def\csname gz:hte:ba:ln15_mid:IREG_HC1c:cal:0.80\endcsname{75.8}
\expandafter\def\csname gz:hte:ba:ln15_mid:IREG_HC1c:cal:0.90\endcsname{86.6}
\expandafter\def\csname gz:hte:ba:ln15_mid:IREG_HC1c:cal:0.95\endcsname{92.0}
\expandafter\def\csname gz:hte:ba:ln15_mid:IREG_HC1c:cal:0.98\endcsname{95.8}
\expandafter\def\csname gz:hte:ba:ln15_mid:IREG_HC1c:cal:0.99\endcsname{97.3}
\expandafter\def\csname gz:hte:ba:ln15_mid:IREG_HC1c:cov\endcsname{92.0}
\expandafter\def\csname gz:hte:ba:ln15_mid:IREG_HC1c:width\endcsname{2.81}
\expandafter\def\csname gz:hte:ba:ln15_mid:IREG_HC1o:cal:0.80\endcsname{75.3}
\expandafter\def\csname gz:hte:ba:ln15_mid:IREG_HC1o:cal:0.90\endcsname{86.0}
\expandafter\def\csname gz:hte:ba:ln15_mid:IREG_HC1o:cal:0.95\endcsname{91.9}
\expandafter\def\csname gz:hte:ba:ln15_mid:IREG_HC1o:cal:0.98\endcsname{96.0}
\expandafter\def\csname gz:hte:ba:ln15_mid:IREG_HC1o:cal:0.99\endcsname{97.6}
\expandafter\def\csname gz:hte:ba:ln15_mid:IREG_HC1o:cov\endcsname{91.9}
\expandafter\def\csname gz:hte:ba:ln15_mid:IREG_HC1o:width\endcsname{2.89}
\expandafter\def\csname gz:hte:ba:ln15_mid:IREG_HC3:cal:0.80\endcsname{74.9}
\expandafter\def\csname gz:hte:ba:ln15_mid:IREG_HC3:cal:0.90\endcsname{85.3}
\expandafter\def\csname gz:hte:ba:ln15_mid:IREG_HC3:cal:0.95\endcsname{91.2}
\expandafter\def\csname gz:hte:ba:ln15_mid:IREG_HC3:cal:0.98\endcsname{95.5}
\expandafter\def\csname gz:hte:ba:ln15_mid:IREG_HC3:cal:0.99\endcsname{97.2}
\expandafter\def\csname gz:hte:ba:ln15_mid:IREG_HC3:cov\endcsname{91.2}
\expandafter\def\csname gz:hte:ba:ln15_mid:IREG_HC3:width\endcsname{2.80}
\expandafter\def\csname gz:hte:ba:ln15_mid:IREG_HC3c:cal:0.80\endcsname{81.4}
\expandafter\def\csname gz:hte:ba:ln15_mid:IREG_HC3c:cal:0.90\endcsname{90.3}
\expandafter\def\csname gz:hte:ba:ln15_mid:IREG_HC3c:cal:0.95\endcsname{94.9}
\expandafter\def\csname gz:hte:ba:ln15_mid:IREG_HC3c:cal:0.98\endcsname{97.5}
\expandafter\def\csname gz:hte:ba:ln15_mid:IREG_HC3c:cal:0.99\endcsname{98.5}
\expandafter\def\csname gz:hte:ba:ln15_mid:IREG_HC3c:cov\endcsname{94.9}
\expandafter\def\csname gz:hte:ba:ln15_mid:IREG_HC3c:width\endcsname{3.16}
\expandafter\def\csname gz:hte:ba:ln15_mid:R2\endcsname{0.17}
\expandafter\def\csname gz:hte:ba:ln15_mid:REG_HC1:cal:0.80\endcsname{73.3}
\expandafter\def\csname gz:hte:ba:ln15_mid:REG_HC1:cal:0.90\endcsname{84.3}
\expandafter\def\csname gz:hte:ba:ln15_mid:REG_HC1:cal:0.95\endcsname{90.2}
\expandafter\def\csname gz:hte:ba:ln15_mid:REG_HC1:cal:0.98\endcsname{94.5}
\expandafter\def\csname gz:hte:ba:ln15_mid:REG_HC1:cal:0.99\endcsname{96.4}
\expandafter\def\csname gz:hte:ba:ln15_mid:REG_HC1:cov\endcsname{90.2}
\expandafter\def\csname gz:hte:ba:ln15_mid:REG_HC1:width\endcsname{2.64}
\expandafter\def\csname gz:hte:ba:ln15_mid:REG_HC3:cal:0.80\endcsname{77.1}
\expandafter\def\csname gz:hte:ba:ln15_mid:REG_HC3:cal:0.90\endcsname{86.7}
\expandafter\def\csname gz:hte:ba:ln15_mid:REG_HC3:cal:0.95\endcsname{92.0}
\expandafter\def\csname gz:hte:ba:ln15_mid:REG_HC3:cal:0.98\endcsname{95.5}
\expandafter\def\csname gz:hte:ba:ln15_mid:REG_HC3:cal:0.99\endcsname{97.1}
\expandafter\def\csname gz:hte:ba:ln15_mid:REG_HC3:cov\endcsname{92.0}
\expandafter\def\csname gz:hte:ba:ln15_mid:REG_HC3:width\endcsname{2.81}
\expandafter\def\csname gz:hte:ba:ln15_mid:gstar\endcsname{1.00}
\expandafter\def\csname gz:hte:ba:ln15_mid:lev1\endcsname{0.0}
\expandafter\def\csname gz:hte:ba:ln15_mid:mse\endcsname{1.19}
\expandafter\def\csname gz:hte:ba:ln15_mid:plugratio\endcsname{1.70}
\expandafter\def\csname gz:hte:ba:ln15_mid:plugratio:fmt\endcsname{1.7}
\expandafter\def\csname gz:hte:ba:ln15_mid:rankdef\endcsname{0.0}
\expandafter\def\csname gz:hte:ba:ln15_mid:wr1\endcsname{1.06}
\expandafter\def\csname gz:hte:ba:ln15_mid:wr3\endcsname{1.12}
\expandafter\def\csname gz:hte:ba:ln15_small:IREG_HC1:cal:0.80\endcsname{72.5}
\expandafter\def\csname gz:hte:ba:ln15_small:IREG_HC1:cal:0.90\endcsname{83.7}
\expandafter\def\csname gz:hte:ba:ln15_small:IREG_HC1:cal:0.95\endcsname{90.3}
\expandafter\def\csname gz:hte:ba:ln15_small:IREG_HC1:cal:0.98\endcsname{94.9}
\expandafter\def\csname gz:hte:ba:ln15_small:IREG_HC1:cal:0.99\endcsname{96.8}
\expandafter\def\csname gz:hte:ba:ln15_small:IREG_HC1:cov\endcsname{90.3}
\expandafter\def\csname gz:hte:ba:ln15_small:IREG_HC1:width\endcsname{2.42}
\expandafter\def\csname gz:hte:ba:ln15_small:IREG_HC1c:cal:0.80\endcsname{77.0}
\expandafter\def\csname gz:hte:ba:ln15_small:IREG_HC1c:cal:0.90\endcsname{88.9}
\expandafter\def\csname gz:hte:ba:ln15_small:IREG_HC1c:cal:0.95\endcsname{94.7}
\expandafter\def\csname gz:hte:ba:ln15_small:IREG_HC1c:cal:0.98\endcsname{98.2}
\expandafter\def\csname gz:hte:ba:ln15_small:IREG_HC1c:cal:0.99\endcsname{99.3}
\expandafter\def\csname gz:hte:ba:ln15_small:IREG_HC1c:cov\endcsname{94.7}
\expandafter\def\csname gz:hte:ba:ln15_small:IREG_HC1c:width\endcsname{2.55}
\expandafter\def\csname gz:hte:ba:ln15_small:IREG_HC1o:cal:0.80\endcsname{72.6}
\expandafter\def\csname gz:hte:ba:ln15_small:IREG_HC1o:cal:0.90\endcsname{83.7}
\expandafter\def\csname gz:hte:ba:ln15_small:IREG_HC1o:cal:0.95\endcsname{90.4}
\expandafter\def\csname gz:hte:ba:ln15_small:IREG_HC1o:cal:0.98\endcsname{94.9}
\expandafter\def\csname gz:hte:ba:ln15_small:IREG_HC1o:cal:0.99\endcsname{96.8}
\expandafter\def\csname gz:hte:ba:ln15_small:IREG_HC1o:cov\endcsname{90.4}
\expandafter\def\csname gz:hte:ba:ln15_small:IREG_HC1o:width\endcsname{2.42}
\expandafter\def\csname gz:hte:ba:ln15_small:IREG_HC3:cal:0.80\endcsname{80.7}
\expandafter\def\csname gz:hte:ba:ln15_small:IREG_HC3:cal:0.90\endcsname{90.3}
\expandafter\def\csname gz:hte:ba:ln15_small:IREG_HC3:cal:0.95\endcsname{94.9}
\expandafter\def\csname gz:hte:ba:ln15_small:IREG_HC3:cal:0.98\endcsname{97.8}
\expandafter\def\csname gz:hte:ba:ln15_small:IREG_HC3:cal:0.99\endcsname{98.8}
\expandafter\def\csname gz:hte:ba:ln15_small:IREG_HC3:cov\endcsname{94.9}
\expandafter\def\csname gz:hte:ba:ln15_small:IREG_HC3:width\endcsname{2.80}
\expandafter\def\csname gz:hte:ba:ln15_small:IREG_HC3c:cal:0.80\endcsname{83.9}
\expandafter\def\csname gz:hte:ba:ln15_small:IREG_HC3c:cal:0.90\endcsname{93.1}
\expandafter\def\csname gz:hte:ba:ln15_small:IREG_HC3c:cal:0.95\endcsname{97.0}
\expandafter\def\csname gz:hte:ba:ln15_small:IREG_HC3c:cal:0.98\endcsname{99.2}
\expandafter\def\csname gz:hte:ba:ln15_small:IREG_HC3c:cal:0.99\endcsname{99.6}
\expandafter\def\csname gz:hte:ba:ln15_small:IREG_HC3c:cov\endcsname{97.0}
\expandafter\def\csname gz:hte:ba:ln15_small:IREG_HC3c:width\endcsname{2.91}
\expandafter\def\csname gz:hte:ba:ln15_small:R2\endcsname{0.002}
\expandafter\def\csname gz:hte:ba:ln15_small:REG_HC1:cal:0.80\endcsname{77.8}
\expandafter\def\csname gz:hte:ba:ln15_small:REG_HC1:cal:0.90\endcsname{89.3}
\expandafter\def\csname gz:hte:ba:ln15_small:REG_HC1:cal:0.95\endcsname{94.9}
\expandafter\def\csname gz:hte:ba:ln15_small:REG_HC1:cal:0.98\endcsname{98.2}
\expandafter\def\csname gz:hte:ba:ln15_small:REG_HC1:cal:0.99\endcsname{99.3}
\expandafter\def\csname gz:hte:ba:ln15_small:REG_HC1:cov\endcsname{94.9}
\expandafter\def\csname gz:hte:ba:ln15_small:REG_HC1:width\endcsname{2.40}
\expandafter\def\csname gz:hte:ba:ln15_small:REG_HC3:cal:0.80\endcsname{82.5}
\expandafter\def\csname gz:hte:ba:ln15_small:REG_HC3:cal:0.90\endcsname{92.2}
\expandafter\def\csname gz:hte:ba:ln15_small:REG_HC3:cal:0.95\endcsname{96.7}
\expandafter\def\csname gz:hte:ba:ln15_small:REG_HC3:cal:0.98\endcsname{99.0}
\expandafter\def\csname gz:hte:ba:ln15_small:REG_HC3:cal:0.99\endcsname{99.6}
\expandafter\def\csname gz:hte:ba:ln15_small:REG_HC3:cov\endcsname{96.7}
\expandafter\def\csname gz:hte:ba:ln15_small:REG_HC3:width\endcsname{2.55}
\expandafter\def\csname gz:hte:ba:ln15_small:gstar\endcsname{0.10}
\expandafter\def\csname gz:hte:ba:ln15_small:lev1\endcsname{0.0}
\expandafter\def\csname gz:hte:ba:ln15_small:mse\endcsname{1.24}
\expandafter\def\csname gz:hte:ba:ln15_small:plugratio\endcsname{71.01}
\expandafter\def\csname gz:hte:ba:ln15_small:plugratio:fmt\endcsname{71.0}
\expandafter\def\csname gz:hte:ba:ln15_small:rankdef\endcsname{0.0}
\expandafter\def\csname gz:hte:ba:ln15_small:wr1\endcsname{1.06}
\expandafter\def\csname gz:hte:ba:ln15_small:wr3\endcsname{1.14}
\expandafter\def\csname gz:hte:ba:ln20_small:IREG_HC1:cal:0.80\endcsname{64.9}
\expandafter\def\csname gz:hte:ba:ln20_small:IREG_HC1:cal:0.90\endcsname{75.7}
\expandafter\def\csname gz:hte:ba:ln20_small:IREG_HC1:cal:0.95\endcsname{83.1}
\expandafter\def\csname gz:hte:ba:ln20_small:IREG_HC1:cal:0.98\endcsname{89.0}
\expandafter\def\csname gz:hte:ba:ln20_small:IREG_HC1:cal:0.99\endcsname{91.8}
\expandafter\def\csname gz:hte:ba:ln20_small:IREG_HC1:cov\endcsname{83.1}
\expandafter\def\csname gz:hte:ba:ln20_small:IREG_HC1:width\endcsname{6.84}
\expandafter\def\csname gz:hte:ba:ln20_small:IREG_HC1c:cal:0.80\endcsname{75.1}
\expandafter\def\csname gz:hte:ba:ln20_small:IREG_HC1c:cal:0.90\endcsname{88.2}
\expandafter\def\csname gz:hte:ba:ln20_small:IREG_HC1c:cal:0.95\endcsname{94.8}
\expandafter\def\csname gz:hte:ba:ln20_small:IREG_HC1c:cal:0.98\endcsname{98.3}
\expandafter\def\csname gz:hte:ba:ln20_small:IREG_HC1c:cal:0.99\endcsname{99.3}
\expandafter\def\csname gz:hte:ba:ln20_small:IREG_HC1c:cov\endcsname{94.8}
\expandafter\def\csname gz:hte:ba:ln20_small:IREG_HC1c:width\endcsname{7.76}
\expandafter\def\csname gz:hte:ba:ln20_small:IREG_HC1o:cal:0.80\endcsname{65.3}
\expandafter\def\csname gz:hte:ba:ln20_small:IREG_HC1o:cal:0.90\endcsname{76.2}
\expandafter\def\csname gz:hte:ba:ln20_small:IREG_HC1o:cal:0.95\endcsname{83.5}
\expandafter\def\csname gz:hte:ba:ln20_small:IREG_HC1o:cal:0.98\endcsname{89.3}
\expandafter\def\csname gz:hte:ba:ln20_small:IREG_HC1o:cal:0.99\endcsname{92.1}
\expandafter\def\csname gz:hte:ba:ln20_small:IREG_HC1o:cov\endcsname{83.5}
\expandafter\def\csname gz:hte:ba:ln20_small:IREG_HC1o:width\endcsname{6.91}
\expandafter\def\csname gz:hte:ba:ln20_small:IREG_HC3:cal:0.80\endcsname{79.8}
\expandafter\def\csname gz:hte:ba:ln20_small:IREG_HC3:cal:0.90\endcsname{89.1}
\expandafter\def\csname gz:hte:ba:ln20_small:IREG_HC3:cal:0.95\endcsname{93.8}
\expandafter\def\csname gz:hte:ba:ln20_small:IREG_HC3:cal:0.98\endcsname{96.8}
\expandafter\def\csname gz:hte:ba:ln20_small:IREG_HC3:cal:0.99\endcsname{97.9}
\expandafter\def\csname gz:hte:ba:ln20_small:IREG_HC3:cov\endcsname{93.8}
\expandafter\def\csname gz:hte:ba:ln20_small:IREG_HC3:width\endcsname{9.11}
\expandafter\def\csname gz:hte:ba:ln20_small:IREG_HC3c:cal:0.80\endcsname{85.8}
\expandafter\def\csname gz:hte:ba:ln20_small:IREG_HC3c:cal:0.90\endcsname{94.4}
\expandafter\def\csname gz:hte:ba:ln20_small:IREG_HC3c:cal:0.95\endcsname{97.8}
\expandafter\def\csname gz:hte:ba:ln20_small:IREG_HC3c:cal:0.98\endcsname{99.4}
\expandafter\def\csname gz:hte:ba:ln20_small:IREG_HC3c:cal:0.99\endcsname{99.8}
\expandafter\def\csname gz:hte:ba:ln20_small:IREG_HC3c:cov\endcsname{97.8}
\expandafter\def\csname gz:hte:ba:ln20_small:IREG_HC3c:width\endcsname{9.84}
\expandafter\def\csname gz:hte:ba:ln20_small:R2\endcsname{0.002}
\expandafter\def\csname gz:hte:ba:ln20_small:REG_HC1:cal:0.80\endcsname{76.1}
\expandafter\def\csname gz:hte:ba:ln20_small:REG_HC1:cal:0.90\endcsname{88.7}
\expandafter\def\csname gz:hte:ba:ln20_small:REG_HC1:cal:0.95\endcsname{95.0}
\expandafter\def\csname gz:hte:ba:ln20_small:REG_HC1:cal:0.98\endcsname{98.3}
\expandafter\def\csname gz:hte:ba:ln20_small:REG_HC1:cal:0.99\endcsname{99.3}
\expandafter\def\csname gz:hte:ba:ln20_small:REG_HC1:cov\endcsname{95.0}
\expandafter\def\csname gz:hte:ba:ln20_small:REG_HC1:width\endcsname{6.80}
\expandafter\def\csname gz:hte:ba:ln20_small:REG_HC3:cal:0.80\endcsname{84.2}
\expandafter\def\csname gz:hte:ba:ln20_small:REG_HC3:cal:0.90\endcsname{93.7}
\expandafter\def\csname gz:hte:ba:ln20_small:REG_HC3:cal:0.95\endcsname{97.6}
\expandafter\def\csname gz:hte:ba:ln20_small:REG_HC3:cal:0.98\endcsname{99.3}
\expandafter\def\csname gz:hte:ba:ln20_small:REG_HC3:cal:0.99\endcsname{99.7}
\expandafter\def\csname gz:hte:ba:ln20_small:REG_HC3:cov\endcsname{97.6}
\expandafter\def\csname gz:hte:ba:ln20_small:REG_HC3:width\endcsname{7.56}
\expandafter\def\csname gz:hte:ba:ln20_small:gstar\endcsname{0.10}
\expandafter\def\csname gz:hte:ba:ln20_small:lev1\endcsname{0.0}
\expandafter\def\csname gz:hte:ba:ln20_small:mse\endcsname{1.95}
\expandafter\def\csname gz:hte:ba:ln20_small:plugratio\endcsname{90.25}
\expandafter\def\csname gz:hte:ba:ln20_small:plugratio:fmt\endcsname{90.2}
\expandafter\def\csname gz:hte:ba:ln20_small:rankdef\endcsname{0.0}
\expandafter\def\csname gz:hte:ba:ln20_small:wr1\endcsname{1.14}
\expandafter\def\csname gz:hte:ba:ln20_small:wr3\endcsname{1.30}
\expandafter\def\csname gz:hte:ba:misspec_null:IREG_HC1:cal:0.80\endcsname{79.5}
\expandafter\def\csname gz:hte:ba:misspec_null:IREG_HC1:cal:0.90\endcsname{89.7}
\expandafter\def\csname gz:hte:ba:misspec_null:IREG_HC1:cal:0.95\endcsname{94.8}
\expandafter\def\csname gz:hte:ba:misspec_null:IREG_HC1:cal:0.98\endcsname{97.9}
\expandafter\def\csname gz:hte:ba:misspec_null:IREG_HC1:cal:0.99\endcsname{98.9}
\expandafter\def\csname gz:hte:ba:misspec_null:IREG_HC1:cov\endcsname{94.8}
\expandafter\def\csname gz:hte:ba:misspec_null:IREG_HC1:width\endcsname{0.38}
\expandafter\def\csname gz:hte:ba:misspec_null:IREG_HC1c:cal:0.80\endcsname{79.8}
\expandafter\def\csname gz:hte:ba:misspec_null:IREG_HC1c:cal:0.90\endcsname{89.9}
\expandafter\def\csname gz:hte:ba:misspec_null:IREG_HC1c:cal:0.95\endcsname{94.9}
\expandafter\def\csname gz:hte:ba:misspec_null:IREG_HC1c:cal:0.98\endcsname{98.0}
\expandafter\def\csname gz:hte:ba:misspec_null:IREG_HC1c:cal:0.99\endcsname{99.0}
\expandafter\def\csname gz:hte:ba:misspec_null:IREG_HC1c:cov\endcsname{94.9}
\expandafter\def\csname gz:hte:ba:misspec_null:IREG_HC1c:width\endcsname{0.38}
\expandafter\def\csname gz:hte:ba:misspec_null:IREG_HC1o:cal:0.80\endcsname{79.5}
\expandafter\def\csname gz:hte:ba:misspec_null:IREG_HC1o:cal:0.90\endcsname{89.7}
\expandafter\def\csname gz:hte:ba:misspec_null:IREG_HC1o:cal:0.95\endcsname{94.8}
\expandafter\def\csname gz:hte:ba:misspec_null:IREG_HC1o:cal:0.98\endcsname{97.9}
\expandafter\def\csname gz:hte:ba:misspec_null:IREG_HC1o:cal:0.99\endcsname{98.9}
\expandafter\def\csname gz:hte:ba:misspec_null:IREG_HC1o:cov\endcsname{94.8}
\expandafter\def\csname gz:hte:ba:misspec_null:IREG_HC1o:width\endcsname{0.38}
\expandafter\def\csname gz:hte:ba:misspec_null:IREG_HC3:cal:0.80\endcsname{80.1}
\expandafter\def\csname gz:hte:ba:misspec_null:IREG_HC3:cal:0.90\endcsname{90.2}
\expandafter\def\csname gz:hte:ba:misspec_null:IREG_HC3:cal:0.95\endcsname{95.1}
\expandafter\def\csname gz:hte:ba:misspec_null:IREG_HC3:cal:0.98\endcsname{98.1}
\expandafter\def\csname gz:hte:ba:misspec_null:IREG_HC3:cal:0.99\endcsname{99.1}
\expandafter\def\csname gz:hte:ba:misspec_null:IREG_HC3:cov\endcsname{95.1}
\expandafter\def\csname gz:hte:ba:misspec_null:IREG_HC3:width\endcsname{0.39}
\expandafter\def\csname gz:hte:ba:misspec_null:IREG_HC3c:cal:0.80\endcsname{80.4}
\expandafter\def\csname gz:hte:ba:misspec_null:IREG_HC3c:cal:0.90\endcsname{90.4}
\expandafter\def\csname gz:hte:ba:misspec_null:IREG_HC3c:cal:0.95\endcsname{95.3}
\expandafter\def\csname gz:hte:ba:misspec_null:IREG_HC3c:cal:0.98\endcsname{98.1}
\expandafter\def\csname gz:hte:ba:misspec_null:IREG_HC3c:cal:0.99\endcsname{99.1}
\expandafter\def\csname gz:hte:ba:misspec_null:IREG_HC3c:cov\endcsname{95.3}
\expandafter\def\csname gz:hte:ba:misspec_null:IREG_HC3c:width\endcsname{0.39}
\expandafter\def\csname gz:hte:ba:misspec_null:R2\endcsname{0.000}
\expandafter\def\csname gz:hte:ba:misspec_null:REG_HC1:cal:0.80\endcsname{79.6}
\expandafter\def\csname gz:hte:ba:misspec_null:REG_HC1:cal:0.90\endcsname{89.7}
\expandafter\def\csname gz:hte:ba:misspec_null:REG_HC1:cal:0.95\endcsname{94.8}
\expandafter\def\csname gz:hte:ba:misspec_null:REG_HC1:cal:0.98\endcsname{97.9}
\expandafter\def\csname gz:hte:ba:misspec_null:REG_HC1:cal:0.99\endcsname{98.9}
\expandafter\def\csname gz:hte:ba:misspec_null:REG_HC1:cov\endcsname{94.8}
\expandafter\def\csname gz:hte:ba:misspec_null:REG_HC1:width\endcsname{0.38}
\expandafter\def\csname gz:hte:ba:misspec_null:REG_HC3:cal:0.80\endcsname{79.9}
\expandafter\def\csname gz:hte:ba:misspec_null:REG_HC3:cal:0.90\endcsname{90.0}
\expandafter\def\csname gz:hte:ba:misspec_null:REG_HC3:cal:0.95\endcsname{95.0}
\expandafter\def\csname gz:hte:ba:misspec_null:REG_HC3:cal:0.98\endcsname{98.0}
\expandafter\def\csname gz:hte:ba:misspec_null:REG_HC3:cal:0.99\endcsname{99.0}
\expandafter\def\csname gz:hte:ba:misspec_null:REG_HC3:cov\endcsname{95.0}
\expandafter\def\csname gz:hte:ba:misspec_null:REG_HC3:width\endcsname{0.38}
\expandafter\def\csname gz:hte:ba:misspec_null:gstar\endcsname{0.00}
\expandafter\def\csname gz:hte:ba:misspec_null:lev1\endcsname{0.0}
\expandafter\def\csname gz:hte:ba:misspec_null:mse\endcsname{1.00}
\expandafter\def\csname gz:hte:ba:misspec_null:plugratio:fmt\endcsname{--}
\expandafter\def\csname gz:hte:ba:misspec_null:rankdef\endcsname{0.0}
\expandafter\def\csname gz:hte:ba:misspec_null:wr1\endcsname{1.01}
\expandafter\def\csname gz:hte:ba:misspec_null:wr3\endcsname{1.01}
\expandafter\def\csname gz:hte:ba:mse:max\endcsname{1.95}
\expandafter\def\csname gz:hte:ba:mse:min\endcsname{0.97}
\expandafter\def\csname gz:hte:ba:ndesigns\endcsname{14}
\expandafter\def\csname gz:hte:ba:rare:IREG_HC1:cal:0.80\endcsname{73.6}
\expandafter\def\csname gz:hte:ba:rare:IREG_HC1:cal:0.90\endcsname{84.8}
\expandafter\def\csname gz:hte:ba:rare:IREG_HC1:cal:0.95\endcsname{90.8}
\expandafter\def\csname gz:hte:ba:rare:IREG_HC1:cal:0.98\endcsname{95.1}
\expandafter\def\csname gz:hte:ba:rare:IREG_HC1:cal:0.99\endcsname{96.7}
\expandafter\def\csname gz:hte:ba:rare:IREG_HC1:cov\endcsname{90.8}
\expandafter\def\csname gz:hte:ba:rare:IREG_HC1:width\endcsname{0.35}
\expandafter\def\csname gz:hte:ba:rare:IREG_HC1c:cal:0.80\endcsname{78.8}
\expandafter\def\csname gz:hte:ba:rare:IREG_HC1c:cal:0.90\endcsname{89.1}
\expandafter\def\csname gz:hte:ba:rare:IREG_HC1c:cal:0.95\endcsname{94.0}
\expandafter\def\csname gz:hte:ba:rare:IREG_HC1c:cal:0.98\endcsname{96.8}
\expandafter\def\csname gz:hte:ba:rare:IREG_HC1c:cal:0.99\endcsname{97.7}
\expandafter\def\csname gz:hte:ba:rare:IREG_HC1c:cov\endcsname{94.0}
\expandafter\def\csname gz:hte:ba:rare:IREG_HC1c:width\endcsname{0.40}
\expandafter\def\csname gz:hte:ba:rare:IREG_HC1o:cal:0.80\endcsname{78.6}
\expandafter\def\csname gz:hte:ba:rare:IREG_HC1o:cal:0.90\endcsname{89.0}
\expandafter\def\csname gz:hte:ba:rare:IREG_HC1o:cal:0.95\endcsname{93.8}
\expandafter\def\csname gz:hte:ba:rare:IREG_HC1o:cal:0.98\endcsname{96.8}
\expandafter\def\csname gz:hte:ba:rare:IREG_HC1o:cal:0.99\endcsname{97.8}
\expandafter\def\csname gz:hte:ba:rare:IREG_HC1o:cov\endcsname{93.8}
\expandafter\def\csname gz:hte:ba:rare:IREG_HC1o:width\endcsname{0.39}
\expandafter\def\csname gz:hte:ba:rare:IREG_HC3:cal:0.80\endcsname{74.3}
\expandafter\def\csname gz:hte:ba:rare:IREG_HC3:cal:0.90\endcsname{85.3}
\expandafter\def\csname gz:hte:ba:rare:IREG_HC3:cal:0.95\endcsname{91.2}
\expandafter\def\csname gz:hte:ba:rare:IREG_HC3:cal:0.98\endcsname{95.3}
\expandafter\def\csname gz:hte:ba:rare:IREG_HC3:cal:0.99\endcsname{96.9}
\expandafter\def\csname gz:hte:ba:rare:IREG_HC3:cov\endcsname{91.2}
\expandafter\def\csname gz:hte:ba:rare:IREG_HC3:width\endcsname{0.36}
\expandafter\def\csname gz:hte:ba:rare:IREG_HC3c:cal:0.80\endcsname{79.3}
\expandafter\def\csname gz:hte:ba:rare:IREG_HC3c:cal:0.90\endcsname{89.5}
\expandafter\def\csname gz:hte:ba:rare:IREG_HC3c:cal:0.95\endcsname{94.2}
\expandafter\def\csname gz:hte:ba:rare:IREG_HC3c:cal:0.98\endcsname{96.9}
\expandafter\def\csname gz:hte:ba:rare:IREG_HC3c:cal:0.99\endcsname{97.8}
\expandafter\def\csname gz:hte:ba:rare:IREG_HC3c:cov\endcsname{94.2}
\expandafter\def\csname gz:hte:ba:rare:IREG_HC3c:width\endcsname{0.40}
\expandafter\def\csname gz:hte:ba:rare:R2\endcsname{0.20}
\expandafter\def\csname gz:hte:ba:rare:REG_HC1:cal:0.80\endcsname{78.9}
\expandafter\def\csname gz:hte:ba:rare:REG_HC1:cal:0.90\endcsname{89.3}
\expandafter\def\csname gz:hte:ba:rare:REG_HC1:cal:0.95\endcsname{94.7}
\expandafter\def\csname gz:hte:ba:rare:REG_HC1:cal:0.98\endcsname{97.7}
\expandafter\def\csname gz:hte:ba:rare:REG_HC1:cal:0.99\endcsname{98.8}
\expandafter\def\csname gz:hte:ba:rare:REG_HC1:cov\endcsname{94.7}
\expandafter\def\csname gz:hte:ba:rare:REG_HC1:width\endcsname{0.40}
\expandafter\def\csname gz:hte:ba:rare:REG_HC3:cal:0.80\endcsname{80.4}
\expandafter\def\csname gz:hte:ba:rare:REG_HC3:cal:0.90\endcsname{90.3}
\expandafter\def\csname gz:hte:ba:rare:REG_HC3:cal:0.95\endcsname{95.5}
\expandafter\def\csname gz:hte:ba:rare:REG_HC3:cal:0.98\endcsname{98.1}
\expandafter\def\csname gz:hte:ba:rare:REG_HC3:cal:0.99\endcsname{99.0}
\expandafter\def\csname gz:hte:ba:rare:REG_HC3:cov\endcsname{95.5}
\expandafter\def\csname gz:hte:ba:rare:REG_HC3:width\endcsname{0.41}
\expandafter\def\csname gz:hte:ba:rare:gstar\endcsname{1.00}
\expandafter\def\csname gz:hte:ba:rare:lev1\endcsname{6.2}
\expandafter\def\csname gz:hte:ba:rare:mse\endcsname{0.97}
\expandafter\def\csname gz:hte:ba:rare:plugratio\endcsname{1.04}
\expandafter\def\csname gz:hte:ba:rare:plugratio:fmt\endcsname{1.0}
\expandafter\def\csname gz:hte:ba:rare:rankdef\endcsname{1.3}
\expandafter\def\csname gz:hte:ba:rare:wr1\endcsname{0.99}
\expandafter\def\csname gz:hte:ba:rare:wr3\endcsname{0.97}
\expandafter\def\csname gz:hte:ba:skew_null:IREG_HC1:cal:0.80\endcsname{79.5}
\expandafter\def\csname gz:hte:ba:skew_null:IREG_HC1:cal:0.90\endcsname{89.6}
\expandafter\def\csname gz:hte:ba:skew_null:IREG_HC1:cal:0.95\endcsname{94.6}
\expandafter\def\csname gz:hte:ba:skew_null:IREG_HC1:cal:0.98\endcsname{97.8}
\expandafter\def\csname gz:hte:ba:skew_null:IREG_HC1:cal:0.99\endcsname{98.8}
\expandafter\def\csname gz:hte:ba:skew_null:IREG_HC1:cov\endcsname{94.6}
\expandafter\def\csname gz:hte:ba:skew_null:IREG_HC1:width\endcsname{0.64}
\expandafter\def\csname gz:hte:ba:skew_null:IREG_HC1c:cal:0.80\endcsname{80.1}
\expandafter\def\csname gz:hte:ba:skew_null:IREG_HC1c:cal:0.90\endcsname{90.2}
\expandafter\def\csname gz:hte:ba:skew_null:IREG_HC1c:cal:0.95\endcsname{95.0}
\expandafter\def\csname gz:hte:ba:skew_null:IREG_HC1c:cal:0.98\endcsname{98.1}
\expandafter\def\csname gz:hte:ba:skew_null:IREG_HC1c:cal:0.99\endcsname{99.0}
\expandafter\def\csname gz:hte:ba:skew_null:IREG_HC1c:cov\endcsname{95.0}
\expandafter\def\csname gz:hte:ba:skew_null:IREG_HC1c:width\endcsname{0.65}
\expandafter\def\csname gz:hte:ba:skew_null:IREG_HC1o:cal:0.80\endcsname{79.5}
\expandafter\def\csname gz:hte:ba:skew_null:IREG_HC1o:cal:0.90\endcsname{89.6}
\expandafter\def\csname gz:hte:ba:skew_null:IREG_HC1o:cal:0.95\endcsname{94.6}
\expandafter\def\csname gz:hte:ba:skew_null:IREG_HC1o:cal:0.98\endcsname{97.8}
\expandafter\def\csname gz:hte:ba:skew_null:IREG_HC1o:cal:0.99\endcsname{98.8}
\expandafter\def\csname gz:hte:ba:skew_null:IREG_HC1o:cov\endcsname{94.6}
\expandafter\def\csname gz:hte:ba:skew_null:IREG_HC1o:width\endcsname{0.64}
\expandafter\def\csname gz:hte:ba:skew_null:IREG_HC3:cal:0.80\endcsname{81.2}
\expandafter\def\csname gz:hte:ba:skew_null:IREG_HC3:cal:0.90\endcsname{90.9}
\expandafter\def\csname gz:hte:ba:skew_null:IREG_HC3:cal:0.95\endcsname{95.6}
\expandafter\def\csname gz:hte:ba:skew_null:IREG_HC3:cal:0.98\endcsname{98.3}
\expandafter\def\csname gz:hte:ba:skew_null:IREG_HC3:cal:0.99\endcsname{99.1}
\expandafter\def\csname gz:hte:ba:skew_null:IREG_HC3:cov\endcsname{95.6}
\expandafter\def\csname gz:hte:ba:skew_null:IREG_HC3:width\endcsname{0.66}
\expandafter\def\csname gz:hte:ba:skew_null:IREG_HC3c:cal:0.80\endcsname{81.8}
\expandafter\def\csname gz:hte:ba:skew_null:IREG_HC3c:cal:0.90\endcsname{91.4}
\expandafter\def\csname gz:hte:ba:skew_null:IREG_HC3c:cal:0.95\endcsname{95.9}
\expandafter\def\csname gz:hte:ba:skew_null:IREG_HC3c:cal:0.98\endcsname{98.5}
\expandafter\def\csname gz:hte:ba:skew_null:IREG_HC3c:cal:0.99\endcsname{99.2}
\expandafter\def\csname gz:hte:ba:skew_null:IREG_HC3c:cov\endcsname{95.9}
\expandafter\def\csname gz:hte:ba:skew_null:IREG_HC3c:width\endcsname{0.67}
\expandafter\def\csname gz:hte:ba:skew_null:R2\endcsname{0.000}
\expandafter\def\csname gz:hte:ba:skew_null:REG_HC1:cal:0.80\endcsname{80.0}
\expandafter\def\csname gz:hte:ba:skew_null:REG_HC1:cal:0.90\endcsname{90.1}
\expandafter\def\csname gz:hte:ba:skew_null:REG_HC1:cal:0.95\endcsname{95.0}
\expandafter\def\csname gz:hte:ba:skew_null:REG_HC1:cal:0.98\endcsname{98.0}
\expandafter\def\csname gz:hte:ba:skew_null:REG_HC1:cal:0.99\endcsname{98.9}
\expandafter\def\csname gz:hte:ba:skew_null:REG_HC1:cov\endcsname{95.0}
\expandafter\def\csname gz:hte:ba:skew_null:REG_HC1:width\endcsname{0.64}
\expandafter\def\csname gz:hte:ba:skew_null:REG_HC3:cal:0.80\endcsname{80.9}
\expandafter\def\csname gz:hte:ba:skew_null:REG_HC3:cal:0.90\endcsname{90.8}
\expandafter\def\csname gz:hte:ba:skew_null:REG_HC3:cal:0.95\endcsname{95.5}
\expandafter\def\csname gz:hte:ba:skew_null:REG_HC3:cal:0.98\endcsname{98.3}
\expandafter\def\csname gz:hte:ba:skew_null:REG_HC3:cal:0.99\endcsname{99.1}
\expandafter\def\csname gz:hte:ba:skew_null:REG_HC3:cov\endcsname{95.5}
\expandafter\def\csname gz:hte:ba:skew_null:REG_HC3:width\endcsname{0.65}
\expandafter\def\csname gz:hte:ba:skew_null:gstar\endcsname{0.00}
\expandafter\def\csname gz:hte:ba:skew_null:lev1\endcsname{0.0}
\expandafter\def\csname gz:hte:ba:skew_null:mse\endcsname{1.02}
\expandafter\def\csname gz:hte:ba:skew_null:plugratio:fmt\endcsname{--}
\expandafter\def\csname gz:hte:ba:skew_null:rankdef\endcsname{0.0}
\expandafter\def\csname gz:hte:ba:skew_null:wr1\endcsname{1.01}
\expandafter\def\csname gz:hte:ba:skew_null:wr3\endcsname{1.03}
\expandafter\def\csname gz:hte:ba:small:I1:max\endcsname{94.8}
\expandafter\def\csname gz:hte:ba:small:I1:min\endcsname{83.1}
\expandafter\def\csname gz:hte:ba:small:I1c:max\endcsname{95.0}
\expandafter\def\csname gz:hte:ba:small:I1c:min\endcsname{93.7}
\expandafter\def\csname gz:hte:ba:small:I3:max\endcsname{95.6}
\expandafter\def\csname gz:hte:ba:small:I3:min\endcsname{93.8}
\expandafter\def\csname gz:hte:ba:small:I3c:max\endcsname{97.8}
\expandafter\def\csname gz:hte:ba:small:I3c:min\endcsname{95.2}
\expandafter\def\csname gz:hte:ba:small:I3cal80:max\endcsname{80.7}
\expandafter\def\csname gz:hte:ba:small:I3cal80:min\endcsname{79.0}
\expandafter\def\csname gz:hte:ba:small:R1:max\endcsname{95.0}
\expandafter\def\csname gz:hte:ba:small:R1:min\endcsname{93.2}
\expandafter\def\csname gz:hte:ba:small:plug:max\endcsname{90}
\expandafter\def\csname gz:hte:ba:small:plug:min\endcsname{1}
\expandafter\def\csname gz:hte:ba:small:wr1:max\endcsname{1.14}
\expandafter\def\csname gz:hte:ba:small:wr3:max\endcsname{1.30}
\expandafter\def\csname gz:hte:ba:smallng:plug:max\endcsname{90}
\expandafter\def\csname gz:hte:ba:smallng:plug:min\endcsname{4}
\expandafter\def\csname gz:hte:mcse:max\endcsname{0.35}
\expandafter\def\csname gz:hte:reps\endcsname{20{,}000}
\expandafter\def\csname gz:hte:sw:R2_0.00:IREG_HC1:cov\endcsname{94.9}
\expandafter\def\csname gz:hte:sw:R2_0.00:IREG_HC1:width\endcsname{0.35}
\expandafter\def\csname gz:hte:sw:R2_0.00:IREG_HC1c:cov\endcsname{95.0}
\expandafter\def\csname gz:hte:sw:R2_0.00:IREG_HC1c:width\endcsname{0.35}
\expandafter\def\csname gz:hte:sw:R2_0.00:IREG_HC1o:cov\endcsname{94.9}
\expandafter\def\csname gz:hte:sw:R2_0.00:IREG_HC1o:width\endcsname{0.35}
\expandafter\def\csname gz:hte:sw:R2_0.00:IREG_HC3:cov\endcsname{95.1}
\expandafter\def\csname gz:hte:sw:R2_0.00:IREG_HC3:width\endcsname{0.36}
\expandafter\def\csname gz:hte:sw:R2_0.00:IREG_HC3c:cov\endcsname{95.3}
\expandafter\def\csname gz:hte:sw:R2_0.00:IREG_HC3c:width\endcsname{0.36}
\expandafter\def\csname gz:hte:sw:R2_0.00:R2\endcsname{0.000}
\expandafter\def\csname gz:hte:sw:R2_0.00:REG_HC1:cov\endcsname{94.9}
\expandafter\def\csname gz:hte:sw:R2_0.00:REG_HC1:width\endcsname{0.35}
\expandafter\def\csname gz:hte:sw:R2_0.00:REG_HC3:cov\endcsname{95.0}
\expandafter\def\csname gz:hte:sw:R2_0.00:REG_HC3:width\endcsname{0.36}
\expandafter\def\csname gz:hte:sw:R2_0.00:asy\endcsname{95.0}
\expandafter\def\csname gz:hte:sw:R2_0.00:gstar\endcsname{0.00}
\expandafter\def\csname gz:hte:sw:R2_0.00:lev1\endcsname{0.0}
\expandafter\def\csname gz:hte:sw:R2_0.00:mse\endcsname{1.00}
\expandafter\def\csname gz:hte:sw:R2_0.00:rankdef\endcsname{0.0}
\expandafter\def\csname gz:hte:sw:R2_0.10:IREG_HC1:cov\endcsname{93.7}
\expandafter\def\csname gz:hte:sw:R2_0.10:IREG_HC1:width\endcsname{0.35}
\expandafter\def\csname gz:hte:sw:R2_0.10:IREG_HC1c:cov\endcsname{95.1}
\expandafter\def\csname gz:hte:sw:R2_0.10:IREG_HC1c:width\endcsname{0.37}
\expandafter\def\csname gz:hte:sw:R2_0.10:IREG_HC1o:cov\endcsname{95.0}
\expandafter\def\csname gz:hte:sw:R2_0.10:IREG_HC1o:width\endcsname{0.37}
\expandafter\def\csname gz:hte:sw:R2_0.10:IREG_HC3:cov\endcsname{94.0}
\expandafter\def\csname gz:hte:sw:R2_0.10:IREG_HC3:width\endcsname{0.36}
\expandafter\def\csname gz:hte:sw:R2_0.10:IREG_HC3c:cov\endcsname{95.3}
\expandafter\def\csname gz:hte:sw:R2_0.10:IREG_HC3c:width\endcsname{0.38}
\expandafter\def\csname gz:hte:sw:R2_0.10:R2\endcsname{0.10}
\expandafter\def\csname gz:hte:sw:R2_0.10:REG_HC1:cov\endcsname{94.9}
\expandafter\def\csname gz:hte:sw:R2_0.10:REG_HC1:width\endcsname{0.37}
\expandafter\def\csname gz:hte:sw:R2_0.10:REG_HC3:cov\endcsname{95.1}
\expandafter\def\csname gz:hte:sw:R2_0.10:REG_HC3:width\endcsname{0.37}
\expandafter\def\csname gz:hte:sw:R2_0.10:asy\endcsname{93.7}
\expandafter\def\csname gz:hte:sw:R2_0.10:gstar\endcsname{0.67}
\expandafter\def\csname gz:hte:sw:R2_0.10:lev1\endcsname{0.0}
\expandafter\def\csname gz:hte:sw:R2_0.10:mse\endcsname{1.00}
\expandafter\def\csname gz:hte:sw:R2_0.10:plugratio\endcsname{1.09}
\expandafter\def\csname gz:hte:sw:R2_0.10:rankdef\endcsname{0.0}
\expandafter\def\csname gz:hte:sw:R2_0.25:IREG_HC1:cov\endcsname{91.1}
\expandafter\def\csname gz:hte:sw:R2_0.25:IREG_HC1:width\endcsname{0.35}
\expandafter\def\csname gz:hte:sw:R2_0.25:IREG_HC1c:cov\endcsname{95.0}
\expandafter\def\csname gz:hte:sw:R2_0.25:IREG_HC1c:width\endcsname{0.41}
\expandafter\def\csname gz:hte:sw:R2_0.25:IREG_HC1o:cov\endcsname{94.9}
\expandafter\def\csname gz:hte:sw:R2_0.25:IREG_HC1o:width\endcsname{0.41}
\expandafter\def\csname gz:hte:sw:R2_0.25:IREG_HC3:cov\endcsname{91.5}
\expandafter\def\csname gz:hte:sw:R2_0.25:IREG_HC3:width\endcsname{0.36}
\expandafter\def\csname gz:hte:sw:R2_0.25:IREG_HC3c:cov\endcsname{95.1}
\expandafter\def\csname gz:hte:sw:R2_0.25:IREG_HC3c:width\endcsname{0.41}
\expandafter\def\csname gz:hte:sw:R2_0.25:R2\endcsname{0.25}
\expandafter\def\csname gz:hte:sw:R2_0.25:REG_HC1:cov\endcsname{94.9}
\expandafter\def\csname gz:hte:sw:R2_0.25:REG_HC1:width\endcsname{0.41}
\expandafter\def\csname gz:hte:sw:R2_0.25:REG_HC3:cov\endcsname{95.1}
\expandafter\def\csname gz:hte:sw:R2_0.25:REG_HC3:width\endcsname{0.41}
\expandafter\def\csname gz:hte:sw:R2_0.25:asy\endcsname{91.0}
\expandafter\def\csname gz:hte:sw:R2_0.25:gstar\endcsname{1.15}
\expandafter\def\csname gz:hte:sw:R2_0.25:lev1\endcsname{0.0}
\expandafter\def\csname gz:hte:sw:R2_0.25:mse\endcsname{1.00}
\expandafter\def\csname gz:hte:sw:R2_0.25:plugratio\endcsname{1.03}
\expandafter\def\csname gz:hte:sw:R2_0.25:rankdef\endcsname{0.0}
\expandafter\def\csname gz:hte:sw:R2_0.50:IREG_HC1:cov\endcsname{83.3}
\expandafter\def\csname gz:hte:sw:R2_0.50:IREG_HC1:width\endcsname{0.35}
\expandafter\def\csname gz:hte:sw:R2_0.50:IREG_HC1c:cov\endcsname{95.1}
\expandafter\def\csname gz:hte:sw:R2_0.50:IREG_HC1c:width\endcsname{0.50}
\expandafter\def\csname gz:hte:sw:R2_0.50:IREG_HC1o:cov\endcsname{95.1}
\expandafter\def\csname gz:hte:sw:R2_0.50:IREG_HC1o:width\endcsname{0.50}
\expandafter\def\csname gz:hte:sw:R2_0.50:IREG_HC3:cov\endcsname{83.8}
\expandafter\def\csname gz:hte:sw:R2_0.50:IREG_HC3:width\endcsname{0.36}
\expandafter\def\csname gz:hte:sw:R2_0.50:IREG_HC3c:cov\endcsname{95.2}
\expandafter\def\csname gz:hte:sw:R2_0.50:IREG_HC3c:width\endcsname{0.50}
\expandafter\def\csname gz:hte:sw:R2_0.50:R2\endcsname{0.50}
\expandafter\def\csname gz:hte:sw:R2_0.50:REG_HC1:cov\endcsname{95.1}
\expandafter\def\csname gz:hte:sw:R2_0.50:REG_HC1:width\endcsname{0.50}
\expandafter\def\csname gz:hte:sw:R2_0.50:REG_HC3:cov\endcsname{95.2}
\expandafter\def\csname gz:hte:sw:R2_0.50:REG_HC3:width\endcsname{0.50}
\expandafter\def\csname gz:hte:sw:R2_0.50:asy\endcsname{83.4}
\expandafter\def\csname gz:hte:sw:R2_0.50:gstar\endcsname{2.00}
\expandafter\def\csname gz:hte:sw:R2_0.50:lev1\endcsname{0.0}
\expandafter\def\csname gz:hte:sw:R2_0.50:mse\endcsname{0.99}
\expandafter\def\csname gz:hte:sw:R2_0.50:plugratio\endcsname{1.01}
\expandafter\def\csname gz:hte:sw:R2_0.50:rankdef\endcsname{0.0}
\expandafter\def\csname gz:hte:sw:R2_0.70:IREG_HC1:cov\endcsname{71.8}
\expandafter\def\csname gz:hte:sw:R2_0.70:IREG_HC1:width\endcsname{0.35}
\expandafter\def\csname gz:hte:sw:R2_0.70:IREG_HC1c:cov\endcsname{95.2}
\expandafter\def\csname gz:hte:sw:R2_0.70:IREG_HC1c:width\endcsname{0.64}
\expandafter\def\csname gz:hte:sw:R2_0.70:IREG_HC1o:cov\endcsname{95.2}
\expandafter\def\csname gz:hte:sw:R2_0.70:IREG_HC1o:width\endcsname{0.64}
\expandafter\def\csname gz:hte:sw:R2_0.70:IREG_HC3:cov\endcsname{72.4}
\expandafter\def\csname gz:hte:sw:R2_0.70:IREG_HC3:width\endcsname{0.36}
\expandafter\def\csname gz:hte:sw:R2_0.70:IREG_HC3c:cov\endcsname{95.2}
\expandafter\def\csname gz:hte:sw:R2_0.70:IREG_HC3c:width\endcsname{0.65}
\expandafter\def\csname gz:hte:sw:R2_0.70:R2\endcsname{0.70}
\expandafter\def\csname gz:hte:sw:R2_0.70:REG_HC1:cov\endcsname{95.2}
\expandafter\def\csname gz:hte:sw:R2_0.70:REG_HC1:width\endcsname{0.64}
\expandafter\def\csname gz:hte:sw:R2_0.70:REG_HC3:cov\endcsname{95.4}
\expandafter\def\csname gz:hte:sw:R2_0.70:REG_HC3:width\endcsname{0.65}
\expandafter\def\csname gz:hte:sw:R2_0.70:asy\endcsname{71.7}
\expandafter\def\csname gz:hte:sw:R2_0.70:gstar\endcsname{3.06}
\expandafter\def\csname gz:hte:sw:R2_0.70:lev1\endcsname{0.0}
\expandafter\def\csname gz:hte:sw:R2_0.70:mse\endcsname{0.99}
\expandafter\def\csname gz:hte:sw:R2_0.70:plugratio\endcsname{1.00}
\expandafter\def\csname gz:hte:sw:R2_0.70:rankdef\endcsname{0.0}
\expandafter\def\csname gz:hte:sw:R2_0.85:IREG_HC1:cov\endcsname{55.4}
\expandafter\def\csname gz:hte:sw:R2_0.85:IREG_HC1:width\endcsname{0.35}
\expandafter\def\csname gz:hte:sw:R2_0.85:IREG_HC1c:cov\endcsname{94.9}
\expandafter\def\csname gz:hte:sw:R2_0.85:IREG_HC1c:width\endcsname{0.91}
\expandafter\def\csname gz:hte:sw:R2_0.85:IREG_HC1o:cov\endcsname{95.0}
\expandafter\def\csname gz:hte:sw:R2_0.85:IREG_HC1o:width\endcsname{0.91}
\expandafter\def\csname gz:hte:sw:R2_0.85:IREG_HC3:cov\endcsname{55.9}
\expandafter\def\csname gz:hte:sw:R2_0.85:IREG_HC3:width\endcsname{0.36}
\expandafter\def\csname gz:hte:sw:R2_0.85:IREG_HC3c:cov\endcsname{95.0}
\expandafter\def\csname gz:hte:sw:R2_0.85:IREG_HC3c:width\endcsname{0.91}
\expandafter\def\csname gz:hte:sw:R2_0.85:R2\endcsname{0.85}
\expandafter\def\csname gz:hte:sw:R2_0.85:REG_HC1:cov\endcsname{95.0}
\expandafter\def\csname gz:hte:sw:R2_0.85:REG_HC1:width\endcsname{0.91}
\expandafter\def\csname gz:hte:sw:R2_0.85:REG_HC3:cov\endcsname{95.1}
\expandafter\def\csname gz:hte:sw:R2_0.85:REG_HC3:width\endcsname{0.92}
\expandafter\def\csname gz:hte:sw:R2_0.85:asy\endcsname{55.2}
\expandafter\def\csname gz:hte:sw:R2_0.85:gstar\endcsname{4.76}
\expandafter\def\csname gz:hte:sw:R2_0.85:lev1\endcsname{0.0}
\expandafter\def\csname gz:hte:sw:R2_0.85:mse\endcsname{0.99}
\expandafter\def\csname gz:hte:sw:R2_0.85:plugratio\endcsname{1.00}
\expandafter\def\csname gz:hte:sw:R2_0.85:rankdef\endcsname{0.0}
\expandafter\def\csname gz:hte:sw:R2_0.90:IREG_HC1:cov\endcsname{47.1}
\expandafter\def\csname gz:hte:sw:R2_0.90:IREG_HC1:width\endcsname{0.35}
\expandafter\def\csname gz:hte:sw:R2_0.90:IREG_HC1c:cov\endcsname{95.1}
\expandafter\def\csname gz:hte:sw:R2_0.90:IREG_HC1c:width\endcsname{1.11}
\expandafter\def\csname gz:hte:sw:R2_0.90:IREG_HC1o:cov\endcsname{95.2}
\expandafter\def\csname gz:hte:sw:R2_0.90:IREG_HC1o:width\endcsname{1.11}
\expandafter\def\csname gz:hte:sw:R2_0.90:IREG_HC3:cov\endcsname{47.6}
\expandafter\def\csname gz:hte:sw:R2_0.90:IREG_HC3:width\endcsname{0.36}
\expandafter\def\csname gz:hte:sw:R2_0.90:IREG_HC3c:cov\endcsname{95.1}
\expandafter\def\csname gz:hte:sw:R2_0.90:IREG_HC3c:width\endcsname{1.11}
\expandafter\def\csname gz:hte:sw:R2_0.90:R2\endcsname{0.90}
\expandafter\def\csname gz:hte:sw:R2_0.90:REG_HC1:cov\endcsname{95.1}
\expandafter\def\csname gz:hte:sw:R2_0.90:REG_HC1:width\endcsname{1.12}
\expandafter\def\csname gz:hte:sw:R2_0.90:REG_HC3:cov\endcsname{95.3}
\expandafter\def\csname gz:hte:sw:R2_0.90:REG_HC3:width\endcsname{1.13}
\expandafter\def\csname gz:hte:sw:R2_0.90:asy\endcsname{46.5}
\expandafter\def\csname gz:hte:sw:R2_0.90:gstar\endcsname{6.00}
\expandafter\def\csname gz:hte:sw:R2_0.90:lev1\endcsname{0.0}
\expandafter\def\csname gz:hte:sw:R2_0.90:mse\endcsname{0.99}
\expandafter\def\csname gz:hte:sw:R2_0.90:plugratio\endcsname{1.00}
\expandafter\def\csname gz:hte:sw:R2_0.90:rankdef\endcsname{0.0}
\expandafter\def\csname gz:hte:sw:maxdev:c:o\endcsname{0.1}
\expandafter\def\csname gz:hte:sw:oracle:max\endcsname{95.2}
\expandafter\def\csname gz:hte:sw:oracle:min\endcsname{94.9}
\expandafter\def\csname gz:hte:sw:reg:max\endcsname{95.2}
\expandafter\def\csname gz:hte:sw:reg:min\endcsname{94.9}
\expandafter\def\csname gz:intro:Delta\endcsname{83.4}
\expandafter\def\csname gz:intro:Deltaround\endcsname{83}
\expandafter\def\csname gz:intro:cov:IREG_HC1\endcsname{93.5}
\expandafter\def\csname gz:intro:cov:IREG_HC3\endcsname{95.2}
\expandafter\def\csname gz:intro:cov:REG_HC1\endcsname{94.9}
\expandafter\def\csname gz:intro:cov:mcse\endcsname{0.08}
\expandafter\def\csname gz:intro:pred:2000\endcsname{4.2}
\expandafter\def\csname gz:intro:pred:500\endcsname{16.7}
\expandafter\def\csname gz:intro:pred:8000\endcsname{1.04}
\expandafter\def\csname gz:intro:se:500\endcsname{0.07}
\expandafter\def\csname gz:intro:sim:2000\endcsname{2.3}
\expandafter\def\csname gz:intro:sim:500\endcsname{6.7}
\expandafter\def\csname gz:intro:sim:8000\endcsname{0.73}
\expandafter\def\csname gz:intro:skew\endcsname{8.2}
\expandafter\def\csname gz:jk:gauss_high:IREG_HC1c:cov\endcsname{95.1}
\expandafter\def\csname gz:jk:gauss_high:IREG_HC1c:width\endcsname{0.64}
\expandafter\def\csname gz:jk:gauss_high:IREG_JK:cov\endcsname{95.1}
\expandafter\def\csname gz:jk:gauss_high:IREG_JK:width\endcsname{0.64}
\expandafter\def\csname gz:jk:gauss_high:R2\endcsname{0.692}
\expandafter\def\csname gz:jk:gauss_high:REG_HC1:cov\endcsname{95.0}
\expandafter\def\csname gz:jk:gauss_high:REG_HC1:width\endcsname{0.64}
\expandafter\def\csname gz:jk:gauss_high:REG_JK:cov\endcsname{95.2}
\expandafter\def\csname gz:jk:gauss_high:REG_JK:width\endcsname{0.64}
\expandafter\def\csname gz:jk:gauss_low:IREG_HC1c:cov\endcsname{95.0}
\expandafter\def\csname gz:jk:gauss_low:IREG_HC1c:width\endcsname{0.36}
\expandafter\def\csname gz:jk:gauss_low:IREG_JK:cov\endcsname{95.0}
\expandafter\def\csname gz:jk:gauss_low:IREG_JK:width\endcsname{0.36}
\expandafter\def\csname gz:jk:gauss_low:R2\endcsname{0.022}
\expandafter\def\csname gz:jk:gauss_low:REG_HC1:cov\endcsname{94.8}
\expandafter\def\csname gz:jk:gauss_low:REG_HC1:width\endcsname{0.36}
\expandafter\def\csname gz:jk:gauss_low:REG_JK:cov\endcsname{95.0}
\expandafter\def\csname gz:jk:gauss_low:REG_JK:width\endcsname{0.36}
\expandafter\def\csname gz:jk:heavy_ex:IREG_HC1c:cov\endcsname{94.3}
\expandafter\def\csname gz:jk:heavy_ex:IREG_HC1c:width\endcsname{7.26}
\expandafter\def\csname gz:jk:heavy_ex:IREG_JK:cov\endcsname{96.4}
\expandafter\def\csname gz:jk:heavy_ex:IREG_JK:width\endcsname{8.04}
\expandafter\def\csname gz:jk:heavy_ex:R2\endcsname{0.009}
\expandafter\def\csname gz:jk:heavy_ex:REG_HC1:cov\endcsname{94.3}
\expandafter\def\csname gz:jk:heavy_ex:REG_HC1:width\endcsname{6.57}
\expandafter\def\csname gz:jk:heavy_ex:REG_JK:cov\endcsname{96.6}
\expandafter\def\csname gz:jk:heavy_ex:REG_JK:width\endcsname{7.18}
\expandafter\def\csname gz:jk:heavy_high:IREG_HC1c:cov\endcsname{94.2}
\expandafter\def\csname gz:jk:heavy_high:IREG_HC1c:width\endcsname{0.75}
\expandafter\def\csname gz:jk:heavy_high:IREG_JK:cov\endcsname{94.6}
\expandafter\def\csname gz:jk:heavy_high:IREG_JK:width\endcsname{0.77}
\expandafter\def\csname gz:jk:heavy_high:R2\endcsname{0.365}
\expandafter\def\csname gz:jk:heavy_high:REG_HC1:cov\endcsname{91.9}
\expandafter\def\csname gz:jk:heavy_high:REG_HC1:width\endcsname{0.73}
\expandafter\def\csname gz:jk:heavy_high:REG_JK:cov\endcsname{93.0}
\expandafter\def\csname gz:jk:heavy_high:REG_JK:width\endcsname{0.76}
\expandafter\def\csname gz:jk:heavy_low:IREG_HC1c:cov\endcsname{94.7}
\expandafter\def\csname gz:jk:heavy_low:IREG_HC1c:width\endcsname{0.64}
\expandafter\def\csname gz:jk:heavy_low:IREG_JK:cov\endcsname{95.5}
\expandafter\def\csname gz:jk:heavy_low:IREG_JK:width\endcsname{0.66}
\expandafter\def\csname gz:jk:heavy_low:R2\endcsname{0.016}
\expandafter\def\csname gz:jk:heavy_low:REG_HC1:cov\endcsname{94.7}
\expandafter\def\csname gz:jk:heavy_low:REG_HC1:width\endcsname{0.63}
\expandafter\def\csname gz:jk:heavy_low:REG_JK:cov\endcsname{95.5}
\expandafter\def\csname gz:jk:heavy_low:REG_JK:width\endcsname{0.64}
\expandafter\def\csname gz:jk:ln12_small:IREG_HC1c:cov\endcsname{94.6}
\expandafter\def\csname gz:jk:ln12_small:IREG_HC1c:width\endcsname{1.52}
\expandafter\def\csname gz:jk:ln12_small:IREG_JK:cov\endcsname{95.6}
\expandafter\def\csname gz:jk:ln12_small:IREG_JK:width\endcsname{1.57}
\expandafter\def\csname gz:jk:ln12_small:R2\endcsname{0.013}
\expandafter\def\csname gz:jk:ln12_small:REG_HC1:cov\endcsname{94.6}
\expandafter\def\csname gz:jk:ln12_small:REG_HC1:width\endcsname{1.47}
\expandafter\def\csname gz:jk:ln12_small:REG_JK:cov\endcsname{95.7}
\expandafter\def\csname gz:jk:ln12_small:REG_JK:width\endcsname{1.52}
\expandafter\def\csname gz:jk:ln15_g05:IREG_HC1c:cov\endcsname{93.7}
\expandafter\def\csname gz:jk:ln15_g05:IREG_HC1c:width\endcsname{2.62}
\expandafter\def\csname gz:jk:ln15_g05:IREG_JK:cov\endcsname{95.3}
\expandafter\def\csname gz:jk:ln15_g05:IREG_JK:width\endcsname{2.79}
\expandafter\def\csname gz:jk:ln15_g05:R2\endcsname{0.049}
\expandafter\def\csname gz:jk:ln15_g05:REG_HC1:cov\endcsname{93.2}
\expandafter\def\csname gz:jk:ln15_g05:REG_HC1:width\endcsname{2.46}
\expandafter\def\csname gz:jk:ln15_g05:REG_JK:cov\endcsname{95.1}
\expandafter\def\csname gz:jk:ln15_g05:REG_JK:width\endcsname{2.61}
\expandafter\def\csname gz:jk:ln15_large:IREG_HC1c:cov\endcsname{89.8}
\expandafter\def\csname gz:jk:ln15_large:IREG_HC1c:width\endcsname{4.46}
\expandafter\def\csname gz:jk:ln15_large:IREG_JK:cov\endcsname{90.4}
\expandafter\def\csname gz:jk:ln15_large:IREG_JK:width\endcsname{4.57}
\expandafter\def\csname gz:jk:ln15_large:R2\endcsname{0.651}
\expandafter\def\csname gz:jk:ln15_large:REG_HC1:cov\endcsname{84.1}
\expandafter\def\csname gz:jk:ln15_large:REG_HC1:width\endcsname{4.17}
\expandafter\def\csname gz:jk:ln15_large:REG_JK:cov\endcsname{86.0}
\expandafter\def\csname gz:jk:ln15_large:REG_JK:width\endcsname{4.51}
\expandafter\def\csname gz:jk:ln15_mid:IREG_HC1c:cov\endcsname{92.0}
\expandafter\def\csname gz:jk:ln15_mid:IREG_HC1c:width\endcsname{2.81}
\expandafter\def\csname gz:jk:ln15_mid:IREG_JK:cov\endcsname{93.7}
\expandafter\def\csname gz:jk:ln15_mid:IREG_JK:width\endcsname{2.97}
\expandafter\def\csname gz:jk:ln15_mid:R2\endcsname{0.172}
\expandafter\def\csname gz:jk:ln15_mid:REG_HC1:cov\endcsname{90.2}
\expandafter\def\csname gz:jk:ln15_mid:REG_HC1:width\endcsname{2.64}
\expandafter\def\csname gz:jk:ln15_mid:REG_JK:cov\endcsname{91.9}
\expandafter\def\csname gz:jk:ln15_mid:REG_JK:width\endcsname{2.81}
\expandafter\def\csname gz:jk:ln15_small:IREG_HC1c:cov\endcsname{94.7}
\expandafter\def\csname gz:jk:ln15_small:IREG_HC1c:width\endcsname{2.55}
\expandafter\def\csname gz:jk:ln15_small:IREG_JK:cov\endcsname{96.2}
\expandafter\def\csname gz:jk:ln15_small:IREG_JK:width\endcsname{2.73}
\expandafter\def\csname gz:jk:ln15_small:R2\endcsname{0.002}
\expandafter\def\csname gz:jk:ln15_small:REG_HC1:cov\endcsname{94.9}
\expandafter\def\csname gz:jk:ln15_small:REG_HC1:width\endcsname{2.40}
\expandafter\def\csname gz:jk:ln15_small:REG_JK:cov\endcsname{96.7}
\expandafter\def\csname gz:jk:ln15_small:REG_JK:width\endcsname{2.55}
\expandafter\def\csname gz:jk:ln20_small:IREG_HC1c:cov\endcsname{94.8}
\expandafter\def\csname gz:jk:ln20_small:IREG_HC1c:width\endcsname{7.76}
\expandafter\def\csname gz:jk:ln20_small:IREG_JK:cov\endcsname{97.0}
\expandafter\def\csname gz:jk:ln20_small:IREG_JK:width\endcsname{8.78}
\expandafter\def\csname gz:jk:ln20_small:R2\endcsname{0.002}
\expandafter\def\csname gz:jk:ln20_small:REG_HC1:cov\endcsname{95.0}
\expandafter\def\csname gz:jk:ln20_small:REG_HC1:width\endcsname{6.80}
\expandafter\def\csname gz:jk:ln20_small:REG_JK:cov\endcsname{97.6}
\expandafter\def\csname gz:jk:ln20_small:REG_JK:width\endcsname{7.55}
\expandafter\def\csname gz:jk:misspec_null:IREG_HC1c:cov\endcsname{94.9}
\expandafter\def\csname gz:jk:misspec_null:IREG_HC1c:width\endcsname{0.38}
\expandafter\def\csname gz:jk:misspec_null:IREG_JK:cov\endcsname{95.0}
\expandafter\def\csname gz:jk:misspec_null:IREG_JK:width\endcsname{0.38}
\expandafter\def\csname gz:jk:misspec_null:R2\endcsname{0}
\expandafter\def\csname gz:jk:misspec_null:REG_HC1:cov\endcsname{94.8}
\expandafter\def\csname gz:jk:misspec_null:REG_HC1:width\endcsname{0.38}
\expandafter\def\csname gz:jk:misspec_null:REG_JK:cov\endcsname{95.0}
\expandafter\def\csname gz:jk:misspec_null:REG_JK:width\endcsname{0.38}
\expandafter\def\csname gz:jk:rare:IREG:degcov\endcsname{89.9}
\expandafter\def\csname gz:jk:rare:IREG:degfrac\endcsname{6.3}
\expandafter\def\csname gz:jk:rare:IREG:othercov\endcsname{94.3}
\expandafter\def\csname gz:jk:rare:IREG:strict\endcsname{88.4}
\expandafter\def\csname gz:jk:rare:IREG_HC1c:cov\endcsname{94.0}
\expandafter\def\csname gz:jk:rare:IREG_HC1c:width\endcsname{0.40}
\expandafter\def\csname gz:jk:rare:IREG_JK:cov\endcsname{94.0}
\expandafter\def\csname gz:jk:rare:IREG_JK:width\endcsname{0.40}
\expandafter\def\csname gz:jk:rare:R2\endcsname{0.2}
\expandafter\def\csname gz:jk:rare:REG:degfrac\endcsname{0.03}
\expandafter\def\csname gz:jk:rare:REG_HC1:cov\endcsname{94.7}
\expandafter\def\csname gz:jk:rare:REG_HC1:width\endcsname{0.40}
\expandafter\def\csname gz:jk:rare:REG_JK:cov\endcsname{95.5}
\expandafter\def\csname gz:jk:rare:REG_JK:width\endcsname{0.41}
\expandafter\def\csname gz:jk:rare:rankdef\endcsname{1.3}
\expandafter\def\csname gz:jk:skew_null:IREG_HC1c:cov\endcsname{95.0}
\expandafter\def\csname gz:jk:skew_null:IREG_HC1c:width\endcsname{0.65}
\expandafter\def\csname gz:jk:skew_null:IREG_JK:cov\endcsname{95.4}
\expandafter\def\csname gz:jk:skew_null:IREG_JK:width\endcsname{0.66}
\expandafter\def\csname gz:jk:skew_null:R2\endcsname{0}
\expandafter\def\csname gz:jk:skew_null:REG_HC1:cov\endcsname{95.0}
\expandafter\def\csname gz:jk:skew_null:REG_HC1:width\endcsname{0.64}
\expandafter\def\csname gz:jk:skew_null:REG_JK:cov\endcsname{95.5}
\expandafter\def\csname gz:jk:skew_null:REG_JK:width\endcsname{0.65}
\expandafter\def\csname gz:jk:small:IREG_HC1c:0.80:max\endcsname{80.1}
\expandafter\def\csname gz:jk:small:IREG_HC1c:0.80:min\endcsname{75.1}
\expandafter\def\csname gz:jk:small:IREG_HC1c:0.95:max\endcsname{95.0}
\expandafter\def\csname gz:jk:small:IREG_HC1c:0.95:min\endcsname{93.7}
\expandafter\def\csname gz:jk:small:IREG_HC1c:0.99:max\endcsname{99.3}
\expandafter\def\csname gz:jk:small:IREG_HC1c:0.99:min\endcsname{98.5}
\expandafter\def\csname gz:jk:small:IREG_HC1c:max\endcsname{95.0}
\expandafter\def\csname gz:jk:small:IREG_HC1c:maxshort\endcsname{4.9}
\expandafter\def\csname gz:jk:small:IREG_HC1c:min\endcsname{93.7}
\expandafter\def\csname gz:jk:small:IREG_JK:0.80:max\endcsname{82.3}
\expandafter\def\csname gz:jk:small:IREG_JK:0.80:min\endcsname{79.8}
\expandafter\def\csname gz:jk:small:IREG_JK:0.95:max\endcsname{97.0}
\expandafter\def\csname gz:jk:small:IREG_JK:0.95:min\endcsname{95.0}
\expandafter\def\csname gz:jk:small:IREG_JK:0.99:max\endcsname{99.6}
\expandafter\def\csname gz:jk:small:IREG_JK:0.99:min\endcsname{98.9}
\expandafter\def\csname gz:jk:small:IREG_JK:max\endcsname{97.0}
\expandafter\def\csname gz:jk:small:IREG_JK:maxshort\endcsname{0.2}
\expandafter\def\csname gz:jk:small:IREG_JK:min\endcsname{95.0}
\expandafter\def\csname gz:jk:small:REG_HC1:0.80:max\endcsname{80.0}
\expandafter\def\csname gz:jk:small:REG_HC1:0.80:min\endcsname{76.1}
\expandafter\def\csname gz:jk:small:REG_HC1:0.95:max\endcsname{95.0}
\expandafter\def\csname gz:jk:small:REG_HC1:0.95:min\endcsname{93.2}
\expandafter\def\csname gz:jk:small:REG_HC1:0.99:max\endcsname{99.3}
\expandafter\def\csname gz:jk:small:REG_HC1:0.99:min\endcsname{98.3}
\expandafter\def\csname gz:jk:small:REG_HC1:max\endcsname{95.0}
\expandafter\def\csname gz:jk:small:REG_HC1:maxshort\endcsname{3.9}
\expandafter\def\csname gz:jk:small:REG_HC1:min\endcsname{93.2}
\expandafter\def\csname gz:jk:small:REG_JK:0.80:max\endcsname{84.2}
\expandafter\def\csname gz:jk:small:REG_JK:0.80:min\endcsname{79.9}
\expandafter\def\csname gz:jk:small:REG_JK:0.95:max\endcsname{97.6}
\expandafter\def\csname gz:jk:small:REG_JK:0.95:min\endcsname{95.0}
\expandafter\def\csname gz:jk:small:REG_JK:0.99:max\endcsname{99.7}
\expandafter\def\csname gz:jk:small:REG_JK:0.99:min\endcsname{98.8}
\expandafter\def\csname gz:jk:small:REG_JK:max\endcsname{97.6}
\expandafter\def\csname gz:jk:small:REG_JK:maxshort\endcsname{0.2}
\expandafter\def\csname gz:jk:small:REG_JK:min\endcsname{95.0}
\expandafter\def\csname gz:jk:small:mseratio:max\endcsname{1.95}
\expandafter\def\csname gz:jk:small:mseratio:min\endcsname{1.00}
\expandafter\def\csname gz:jk:small:wRJK_IHC1c:max\endcsname{1.00}
\expandafter\def\csname gz:jk:small:wRJK_IHC1c:maxpct\endcsname{0.5}
\expandafter\def\csname gz:jk:small:wRJK_IHC1c:min\endcsname{0.97}
\expandafter\def\csname gz:jk:small:wRJK_IJK:max\endcsname{1.00}
\expandafter\def\csname gz:jk:small:wRJK_IJK:min\endcsname{0.86}
\expandafter\def\csname gz:jk:small:wRJK_IJK:minpct\endcsname{14}
\expandafter\def\csname gz:jk:small:wRJK_RHC1:max\endcsname{1.11}
\expandafter\def\csname gz:lev1:mcse:max\endcsname{1.4}
\expandafter\def\csname gz:lev1:n:min\endcsname{1225}
\expandafter\def\csname gz:lognormal_s0.6_prop:IREG_HC1:all\endcsname{94.5}
\expandafter\def\csname gz:lognormal_s0.6_prop:IREG_HC1:all:nointerval\endcsname{0.0}
\expandafter\def\csname gz:lognormal_s0.6_prop:IREG_HC1:full\endcsname{94.5}
\expandafter\def\csname gz:lognormal_s0.6_prop:IREG_HC1:full:nointerval\endcsname{0.0}
\expandafter\def\csname gz:lognormal_s0.6_prop:IREG_HC1:lev1\endcsname{--}
\expandafter\def\csname gz:lognormal_s0.6_prop:IREG_HC2:all\endcsname{94.7}
\expandafter\def\csname gz:lognormal_s0.6_prop:IREG_HC2:all:nointerval\endcsname{0.0}
\expandafter\def\csname gz:lognormal_s0.6_prop:IREG_HC2:full\endcsname{94.7}
\expandafter\def\csname gz:lognormal_s0.6_prop:IREG_HC2:full:nointerval\endcsname{0.0}
\expandafter\def\csname gz:lognormal_s0.6_prop:IREG_HC2:lev1\endcsname{--}
\expandafter\def\csname gz:lognormal_s0.6_prop:IREG_HC3:all\endcsname{94.8}
\expandafter\def\csname gz:lognormal_s0.6_prop:IREG_HC3:all:nointerval\endcsname{0.0}
\expandafter\def\csname gz:lognormal_s0.6_prop:IREG_HC3:full\endcsname{94.8}
\expandafter\def\csname gz:lognormal_s0.6_prop:IREG_HC3:full:nointerval\endcsname{0.0}
\expandafter\def\csname gz:lognormal_s0.6_prop:IREG_HC3:lev1\endcsname{--}
\expandafter\def\csname gz:lognormal_s0.6_prop:REG_HC1:all\endcsname{94.7}
\expandafter\def\csname gz:lognormal_s0.6_prop:REG_HC1:all:nointerval\endcsname{0.0}
\expandafter\def\csname gz:lognormal_s0.6_prop:REG_HC1:full\endcsname{94.7}
\expandafter\def\csname gz:lognormal_s0.6_prop:REG_HC1:full:nointerval\endcsname{0.0}
\expandafter\def\csname gz:lognormal_s0.6_prop:REG_HC1:lev1\endcsname{--}
\expandafter\def\csname gz:lognormal_s0.6_prop:REG_HC3:all\endcsname{94.9}
\expandafter\def\csname gz:lognormal_s0.6_prop:REG_HC3:all:nointerval\endcsname{0.0}
\expandafter\def\csname gz:lognormal_s0.6_prop:REG_HC3:full\endcsname{94.9}
\expandafter\def\csname gz:lognormal_s0.6_prop:REG_HC3:full:nointerval\endcsname{0.0}
\expandafter\def\csname gz:lognormal_s0.6_prop:REG_HC3:lev1\endcsname{--}
\expandafter\def\csname gz:lognormal_s0.6_prop:lev1\endcsname{0.0}
\expandafter\def\csname gz:lognormal_s0.6_prop:medlev\endcsname{0.15}
\expandafter\def\csname gz:lognormal_s0.6_prop:rankdef\endcsname{0.0}
\expandafter\def\csname gz:lognormal_s0.6_prop:reps\endcsname{20{,}000}
\expandafter\def\csname gz:lognormal_s0.8_prop:IREG_HC1:all\endcsname{94.4}
\expandafter\def\csname gz:lognormal_s0.8_prop:IREG_HC1:all:nointerval\endcsname{0.0}
\expandafter\def\csname gz:lognormal_s0.8_prop:IREG_HC1:full\endcsname{94.4}
\expandafter\def\csname gz:lognormal_s0.8_prop:IREG_HC1:full:nointerval\endcsname{0.0}
\expandafter\def\csname gz:lognormal_s0.8_prop:IREG_HC1:lev1\endcsname{--}
\expandafter\def\csname gz:lognormal_s0.8_prop:IREG_HC2:all\endcsname{94.6}
\expandafter\def\csname gz:lognormal_s0.8_prop:IREG_HC2:all:nointerval\endcsname{0.0}
\expandafter\def\csname gz:lognormal_s0.8_prop:IREG_HC2:full\endcsname{94.6}
\expandafter\def\csname gz:lognormal_s0.8_prop:IREG_HC2:full:nointerval\endcsname{0.0}
\expandafter\def\csname gz:lognormal_s0.8_prop:IREG_HC2:lev1\endcsname{--}
\expandafter\def\csname gz:lognormal_s0.8_prop:IREG_HC3:all\endcsname{95.1}
\expandafter\def\csname gz:lognormal_s0.8_prop:IREG_HC3:all:nointerval\endcsname{0.0}
\expandafter\def\csname gz:lognormal_s0.8_prop:IREG_HC3:full\endcsname{95.1}
\expandafter\def\csname gz:lognormal_s0.8_prop:IREG_HC3:full:nointerval\endcsname{0.0}
\expandafter\def\csname gz:lognormal_s0.8_prop:IREG_HC3:lev1\endcsname{--}
\expandafter\def\csname gz:lognormal_s0.8_prop:REG_HC1:all\endcsname{94.7}
\expandafter\def\csname gz:lognormal_s0.8_prop:REG_HC1:all:nointerval\endcsname{0.0}
\expandafter\def\csname gz:lognormal_s0.8_prop:REG_HC1:full\endcsname{94.7}
\expandafter\def\csname gz:lognormal_s0.8_prop:REG_HC1:full:nointerval\endcsname{0.0}
\expandafter\def\csname gz:lognormal_s0.8_prop:REG_HC1:lev1\endcsname{--}
\expandafter\def\csname gz:lognormal_s0.8_prop:REG_HC3:all\endcsname{95.1}
\expandafter\def\csname gz:lognormal_s0.8_prop:REG_HC3:all:nointerval\endcsname{0.0}
\expandafter\def\csname gz:lognormal_s0.8_prop:REG_HC3:full\endcsname{95.1}
\expandafter\def\csname gz:lognormal_s0.8_prop:REG_HC3:full:nointerval\endcsname{0.0}
\expandafter\def\csname gz:lognormal_s0.8_prop:REG_HC3:lev1\endcsname{--}
\expandafter\def\csname gz:lognormal_s0.8_prop:lev1\endcsname{0.0}
\expandafter\def\csname gz:lognormal_s0.8_prop:medlev\endcsname{0.21}
\expandafter\def\csname gz:lognormal_s0.8_prop:rankdef\endcsname{0.0}
\expandafter\def\csname gz:lognormal_s0.8_prop:reps\endcsname{20{,}000}
\expandafter\def\csname gz:lognormal_s1.0_log:IREG_HC1:all\endcsname{94.5}
\expandafter\def\csname gz:lognormal_s1.0_log:IREG_HC1:all:nointerval\endcsname{0.0}
\expandafter\def\csname gz:lognormal_s1.0_log:IREG_HC1:full\endcsname{94.5}
\expandafter\def\csname gz:lognormal_s1.0_log:IREG_HC1:full:nointerval\endcsname{0.0}
\expandafter\def\csname gz:lognormal_s1.0_log:IREG_HC1:lev1\endcsname{--}
\expandafter\def\csname gz:lognormal_s1.0_log:IREG_HC2:all\endcsname{95.0}
\expandafter\def\csname gz:lognormal_s1.0_log:IREG_HC2:all:nointerval\endcsname{0.0}
\expandafter\def\csname gz:lognormal_s1.0_log:IREG_HC2:full\endcsname{95.0}
\expandafter\def\csname gz:lognormal_s1.0_log:IREG_HC2:full:nointerval\endcsname{0.0}
\expandafter\def\csname gz:lognormal_s1.0_log:IREG_HC2:lev1\endcsname{--}
\expandafter\def\csname gz:lognormal_s1.0_log:IREG_HC3:all\endcsname{95.7}
\expandafter\def\csname gz:lognormal_s1.0_log:IREG_HC3:all:nointerval\endcsname{0.0}
\expandafter\def\csname gz:lognormal_s1.0_log:IREG_HC3:full\endcsname{95.7}
\expandafter\def\csname gz:lognormal_s1.0_log:IREG_HC3:full:nointerval\endcsname{0.0}
\expandafter\def\csname gz:lognormal_s1.0_log:IREG_HC3:lev1\endcsname{--}
\expandafter\def\csname gz:lognormal_s1.0_log:REG_HC1:all\endcsname{95.0}
\expandafter\def\csname gz:lognormal_s1.0_log:REG_HC1:all:nointerval\endcsname{0.0}
\expandafter\def\csname gz:lognormal_s1.0_log:REG_HC1:full\endcsname{95.0}
\expandafter\def\csname gz:lognormal_s1.0_log:REG_HC1:full:nointerval\endcsname{0.0}
\expandafter\def\csname gz:lognormal_s1.0_log:REG_HC1:lev1\endcsname{--}
\expandafter\def\csname gz:lognormal_s1.0_log:REG_HC3:all\endcsname{95.6}
\expandafter\def\csname gz:lognormal_s1.0_log:REG_HC3:all:nointerval\endcsname{0.0}
\expandafter\def\csname gz:lognormal_s1.0_log:REG_HC3:full\endcsname{95.6}
\expandafter\def\csname gz:lognormal_s1.0_log:REG_HC3:full:nointerval\endcsname{0.0}
\expandafter\def\csname gz:lognormal_s1.0_log:REG_HC3:lev1\endcsname{--}
\expandafter\def\csname gz:lognormal_s1.0_log:lev1\endcsname{0.0}
\expandafter\def\csname gz:lognormal_s1.0_log:medlev\endcsname{0.29}
\expandafter\def\csname gz:lognormal_s1.0_log:rankdef\endcsname{0.0}
\expandafter\def\csname gz:lognormal_s1.0_log:reps\endcsname{20{,}000}
\expandafter\def\csname gz:lognormal_s1.0_log_prop:IREG_HC1:all\endcsname{94.2}
\expandafter\def\csname gz:lognormal_s1.0_log_prop:IREG_HC1:all:nointerval\endcsname{0.0}
\expandafter\def\csname gz:lognormal_s1.0_log_prop:IREG_HC1:full\endcsname{94.2}
\expandafter\def\csname gz:lognormal_s1.0_log_prop:IREG_HC1:full:nointerval\endcsname{0.0}
\expandafter\def\csname gz:lognormal_s1.0_log_prop:IREG_HC1:lev1\endcsname{--}
\expandafter\def\csname gz:lognormal_s1.0_log_prop:IREG_HC2:all\endcsname{94.7}
\expandafter\def\csname gz:lognormal_s1.0_log_prop:IREG_HC2:all:nointerval\endcsname{0.0}
\expandafter\def\csname gz:lognormal_s1.0_log_prop:IREG_HC2:full\endcsname{94.7}
\expandafter\def\csname gz:lognormal_s1.0_log_prop:IREG_HC2:full:nointerval\endcsname{0.0}
\expandafter\def\csname gz:lognormal_s1.0_log_prop:IREG_HC2:lev1\endcsname{--}
\expandafter\def\csname gz:lognormal_s1.0_log_prop:IREG_HC3:all\endcsname{95.5}
\expandafter\def\csname gz:lognormal_s1.0_log_prop:IREG_HC3:all:nointerval\endcsname{0.0}
\expandafter\def\csname gz:lognormal_s1.0_log_prop:IREG_HC3:full\endcsname{95.5}
\expandafter\def\csname gz:lognormal_s1.0_log_prop:IREG_HC3:full:nointerval\endcsname{0.0}
\expandafter\def\csname gz:lognormal_s1.0_log_prop:IREG_HC3:lev1\endcsname{--}
\expandafter\def\csname gz:lognormal_s1.0_log_prop:REG_HC1:all\endcsname{94.9}
\expandafter\def\csname gz:lognormal_s1.0_log_prop:REG_HC1:all:nointerval\endcsname{0.0}
\expandafter\def\csname gz:lognormal_s1.0_log_prop:REG_HC1:full\endcsname{94.9}
\expandafter\def\csname gz:lognormal_s1.0_log_prop:REG_HC1:full:nointerval\endcsname{0.0}
\expandafter\def\csname gz:lognormal_s1.0_log_prop:REG_HC1:lev1\endcsname{--}
\expandafter\def\csname gz:lognormal_s1.0_log_prop:REG_HC3:all\endcsname{95.6}
\expandafter\def\csname gz:lognormal_s1.0_log_prop:REG_HC3:all:nointerval\endcsname{0.0}
\expandafter\def\csname gz:lognormal_s1.0_log_prop:REG_HC3:full\endcsname{95.6}
\expandafter\def\csname gz:lognormal_s1.0_log_prop:REG_HC3:full:nointerval\endcsname{0.0}
\expandafter\def\csname gz:lognormal_s1.0_log_prop:REG_HC3:lev1\endcsname{--}
\expandafter\def\csname gz:lognormal_s1.0_log_prop:lev1\endcsname{0.0}
\expandafter\def\csname gz:lognormal_s1.0_log_prop:medlev\endcsname{0.29}
\expandafter\def\csname gz:lognormal_s1.0_log_prop:rankdef\endcsname{0.0}
\expandafter\def\csname gz:lognormal_s1.0_log_prop:reps\endcsname{20{,}000}
\expandafter\def\csname gz:lognormal_s1.0_prop:IREG_HC1:all\endcsname{94.2}
\expandafter\def\csname gz:lognormal_s1.0_prop:IREG_HC1:all:nointerval\endcsname{0.0}
\expandafter\def\csname gz:lognormal_s1.0_prop:IREG_HC1:full\endcsname{94.2}
\expandafter\def\csname gz:lognormal_s1.0_prop:IREG_HC1:full:nointerval\endcsname{0.0}
\expandafter\def\csname gz:lognormal_s1.0_prop:IREG_HC1:lev1\endcsname{--}
\expandafter\def\csname gz:lognormal_s1.0_prop:IREG_HC2:all\endcsname{94.7}
\expandafter\def\csname gz:lognormal_s1.0_prop:IREG_HC2:all:nointerval\endcsname{0.0}
\expandafter\def\csname gz:lognormal_s1.0_prop:IREG_HC2:full\endcsname{94.7}
\expandafter\def\csname gz:lognormal_s1.0_prop:IREG_HC2:full:nointerval\endcsname{0.0}
\expandafter\def\csname gz:lognormal_s1.0_prop:IREG_HC2:lev1\endcsname{--}
\expandafter\def\csname gz:lognormal_s1.0_prop:IREG_HC3:all\endcsname{95.4}
\expandafter\def\csname gz:lognormal_s1.0_prop:IREG_HC3:all:nointerval\endcsname{0.0}
\expandafter\def\csname gz:lognormal_s1.0_prop:IREG_HC3:full\endcsname{95.4}
\expandafter\def\csname gz:lognormal_s1.0_prop:IREG_HC3:full:nointerval\endcsname{0.0}
\expandafter\def\csname gz:lognormal_s1.0_prop:IREG_HC3:lev1\endcsname{--}
\expandafter\def\csname gz:lognormal_s1.0_prop:REG_HC1:all\endcsname{95.1}
\expandafter\def\csname gz:lognormal_s1.0_prop:REG_HC1:all:nointerval\endcsname{0.0}
\expandafter\def\csname gz:lognormal_s1.0_prop:REG_HC1:full\endcsname{95.1}
\expandafter\def\csname gz:lognormal_s1.0_prop:REG_HC1:full:nointerval\endcsname{0.0}
\expandafter\def\csname gz:lognormal_s1.0_prop:REG_HC1:lev1\endcsname{--}
\expandafter\def\csname gz:lognormal_s1.0_prop:REG_HC3:all\endcsname{95.8}
\expandafter\def\csname gz:lognormal_s1.0_prop:REG_HC3:all:nointerval\endcsname{0.0}
\expandafter\def\csname gz:lognormal_s1.0_prop:REG_HC3:full\endcsname{95.8}
\expandafter\def\csname gz:lognormal_s1.0_prop:REG_HC3:full:nointerval\endcsname{0.0}
\expandafter\def\csname gz:lognormal_s1.0_prop:REG_HC3:lev1\endcsname{--}
\expandafter\def\csname gz:lognormal_s1.0_prop:lev1\endcsname{0.0}
\expandafter\def\csname gz:lognormal_s1.0_prop:medlev\endcsname{0.29}
\expandafter\def\csname gz:lognormal_s1.0_prop:rankdef\endcsname{0.0}
\expandafter\def\csname gz:lognormal_s1.0_prop:reps\endcsname{20{,}000}
\expandafter\def\csname gz:lognormal_s1.0_prop_p5_n200:IREG_HC1:all\endcsname{93.1}
\expandafter\def\csname gz:lognormal_s1.0_prop_p5_n200:IREG_HC1:all:nointerval\endcsname{0.0}
\expandafter\def\csname gz:lognormal_s1.0_prop_p5_n200:IREG_HC1:full\endcsname{93.1}
\expandafter\def\csname gz:lognormal_s1.0_prop_p5_n200:IREG_HC1:full:nointerval\endcsname{0.0}
\expandafter\def\csname gz:lognormal_s1.0_prop_p5_n200:IREG_HC1:lev1\endcsname{--}
\expandafter\def\csname gz:lognormal_s1.0_prop_p5_n200:IREG_HC2:all\endcsname{93.8}
\expandafter\def\csname gz:lognormal_s1.0_prop_p5_n200:IREG_HC2:all:nointerval\endcsname{0.0}
\expandafter\def\csname gz:lognormal_s1.0_prop_p5_n200:IREG_HC2:full\endcsname{93.8}
\expandafter\def\csname gz:lognormal_s1.0_prop_p5_n200:IREG_HC2:full:nointerval\endcsname{0.0}
\expandafter\def\csname gz:lognormal_s1.0_prop_p5_n200:IREG_HC2:lev1\endcsname{--}
\expandafter\def\csname gz:lognormal_s1.0_prop_p5_n200:IREG_HC3:all\endcsname{95.5}
\expandafter\def\csname gz:lognormal_s1.0_prop_p5_n200:IREG_HC3:all:nointerval\endcsname{0.0}
\expandafter\def\csname gz:lognormal_s1.0_prop_p5_n200:IREG_HC3:full\endcsname{95.5}
\expandafter\def\csname gz:lognormal_s1.0_prop_p5_n200:IREG_HC3:full:nointerval\endcsname{0.0}
\expandafter\def\csname gz:lognormal_s1.0_prop_p5_n200:IREG_HC3:lev1\endcsname{--}
\expandafter\def\csname gz:lognormal_s1.0_prop_p5_n200:REG_HC1:all\endcsname{94.9}
\expandafter\def\csname gz:lognormal_s1.0_prop_p5_n200:REG_HC1:all:nointerval\endcsname{0.0}
\expandafter\def\csname gz:lognormal_s1.0_prop_p5_n200:REG_HC1:full\endcsname{94.9}
\expandafter\def\csname gz:lognormal_s1.0_prop_p5_n200:REG_HC1:full:nointerval\endcsname{0.0}
\expandafter\def\csname gz:lognormal_s1.0_prop_p5_n200:REG_HC1:lev1\endcsname{--}
\expandafter\def\csname gz:lognormal_s1.0_prop_p5_n200:REG_HC3:all\endcsname{96.1}
\expandafter\def\csname gz:lognormal_s1.0_prop_p5_n200:REG_HC3:all:nointerval\endcsname{0.0}
\expandafter\def\csname gz:lognormal_s1.0_prop_p5_n200:REG_HC3:full\endcsname{96.1}
\expandafter\def\csname gz:lognormal_s1.0_prop_p5_n200:REG_HC3:full:nointerval\endcsname{0.0}
\expandafter\def\csname gz:lognormal_s1.0_prop_p5_n200:REG_HC3:lev1\endcsname{--}
\expandafter\def\csname gz:lognormal_s1.0_prop_p5_n200:lev1\endcsname{0.0}
\expandafter\def\csname gz:lognormal_s1.0_prop_p5_n200:medlev\endcsname{0.42}
\expandafter\def\csname gz:lognormal_s1.0_prop_p5_n200:rankdef\endcsname{0.0}
\expandafter\def\csname gz:lognormal_s1.0_prop_p5_n200:reps\endcsname{20{,}000}
\expandafter\def\csname gz:lognormal_s1.0_prop_p5_n500:IREG_HC1:all\endcsname{94.2}
\expandafter\def\csname gz:lognormal_s1.0_prop_p5_n500:IREG_HC1:all:nointerval\endcsname{0.0}
\expandafter\def\csname gz:lognormal_s1.0_prop_p5_n500:IREG_HC1:full\endcsname{94.2}
\expandafter\def\csname gz:lognormal_s1.0_prop_p5_n500:IREG_HC1:full:nointerval\endcsname{0.0}
\expandafter\def\csname gz:lognormal_s1.0_prop_p5_n500:IREG_HC1:lev1\endcsname{--}
\expandafter\def\csname gz:lognormal_s1.0_prop_p5_n500:IREG_HC2:all\endcsname{94.7}
\expandafter\def\csname gz:lognormal_s1.0_prop_p5_n500:IREG_HC2:all:nointerval\endcsname{0.0}
\expandafter\def\csname gz:lognormal_s1.0_prop_p5_n500:IREG_HC2:full\endcsname{94.7}
\expandafter\def\csname gz:lognormal_s1.0_prop_p5_n500:IREG_HC2:full:nointerval\endcsname{0.0}
\expandafter\def\csname gz:lognormal_s1.0_prop_p5_n500:IREG_HC2:lev1\endcsname{--}
\expandafter\def\csname gz:lognormal_s1.0_prop_p5_n500:IREG_HC3:all\endcsname{95.5}
\expandafter\def\csname gz:lognormal_s1.0_prop_p5_n500:IREG_HC3:all:nointerval\endcsname{0.0}
\expandafter\def\csname gz:lognormal_s1.0_prop_p5_n500:IREG_HC3:full\endcsname{95.5}
\expandafter\def\csname gz:lognormal_s1.0_prop_p5_n500:IREG_HC3:full:nointerval\endcsname{0.0}
\expandafter\def\csname gz:lognormal_s1.0_prop_p5_n500:IREG_HC3:lev1\endcsname{--}
\expandafter\def\csname gz:lognormal_s1.0_prop_p5_n500:REG_HC1:all\endcsname{95.1}
\expandafter\def\csname gz:lognormal_s1.0_prop_p5_n500:REG_HC1:all:nointerval\endcsname{0.0}
\expandafter\def\csname gz:lognormal_s1.0_prop_p5_n500:REG_HC1:full\endcsname{95.1}
\expandafter\def\csname gz:lognormal_s1.0_prop_p5_n500:REG_HC1:full:nointerval\endcsname{0.0}
\expandafter\def\csname gz:lognormal_s1.0_prop_p5_n500:REG_HC1:lev1\endcsname{--}
\expandafter\def\csname gz:lognormal_s1.0_prop_p5_n500:REG_HC3:all\endcsname{95.7}
\expandafter\def\csname gz:lognormal_s1.0_prop_p5_n500:REG_HC3:all:nointerval\endcsname{0.0}
\expandafter\def\csname gz:lognormal_s1.0_prop_p5_n500:REG_HC3:full\endcsname{95.7}
\expandafter\def\csname gz:lognormal_s1.0_prop_p5_n500:REG_HC3:full:nointerval\endcsname{0.0}
\expandafter\def\csname gz:lognormal_s1.0_prop_p5_n500:REG_HC3:lev1\endcsname{--}
\expandafter\def\csname gz:lognormal_s1.0_prop_p5_n500:lev1\endcsname{0.0}
\expandafter\def\csname gz:lognormal_s1.0_prop_p5_n500:medlev\endcsname{0.30}
\expandafter\def\csname gz:lognormal_s1.0_prop_p5_n500:rankdef\endcsname{0.0}
\expandafter\def\csname gz:lognormal_s1.0_prop_p5_n500:reps\endcsname{20{,}000}
\expandafter\def\csname gz:lognormal_s1.0_quadstd:IREG_HC1:all\endcsname{93.0}
\expandafter\def\csname gz:lognormal_s1.0_quadstd:IREG_HC1:all:nointerval\endcsname{0.0}
\expandafter\def\csname gz:lognormal_s1.0_quadstd:IREG_HC1:full\endcsname{93.0}
\expandafter\def\csname gz:lognormal_s1.0_quadstd:IREG_HC1:full:nointerval\endcsname{0.0}
\expandafter\def\csname gz:lognormal_s1.0_quadstd:IREG_HC1:lev1\endcsname{--}
\expandafter\def\csname gz:lognormal_s1.0_quadstd:IREG_HC2:all\endcsname{94.5}
\expandafter\def\csname gz:lognormal_s1.0_quadstd:IREG_HC2:all:nointerval\endcsname{0.0}
\expandafter\def\csname gz:lognormal_s1.0_quadstd:IREG_HC2:full\endcsname{94.5}
\expandafter\def\csname gz:lognormal_s1.0_quadstd:IREG_HC2:full:nointerval\endcsname{0.0}
\expandafter\def\csname gz:lognormal_s1.0_quadstd:IREG_HC2:lev1\endcsname{--}
\expandafter\def\csname gz:lognormal_s1.0_quadstd:IREG_HC3:all\endcsname{95.7}
\expandafter\def\csname gz:lognormal_s1.0_quadstd:IREG_HC3:all:nointerval\endcsname{0.0}
\expandafter\def\csname gz:lognormal_s1.0_quadstd:IREG_HC3:full\endcsname{95.7}
\expandafter\def\csname gz:lognormal_s1.0_quadstd:IREG_HC3:full:nointerval\endcsname{0.0}
\expandafter\def\csname gz:lognormal_s1.0_quadstd:IREG_HC3:lev1\endcsname{--}
\expandafter\def\csname gz:lognormal_s1.0_quadstd:REG_HC1:all\endcsname{95.1}
\expandafter\def\csname gz:lognormal_s1.0_quadstd:REG_HC1:all:nointerval\endcsname{0.0}
\expandafter\def\csname gz:lognormal_s1.0_quadstd:REG_HC1:full\endcsname{95.1}
\expandafter\def\csname gz:lognormal_s1.0_quadstd:REG_HC1:full:nointerval\endcsname{0.0}
\expandafter\def\csname gz:lognormal_s1.0_quadstd:REG_HC1:lev1\endcsname{--}
\expandafter\def\csname gz:lognormal_s1.0_quadstd:REG_HC3:all\endcsname{96.2}
\expandafter\def\csname gz:lognormal_s1.0_quadstd:REG_HC3:all:nointerval\endcsname{0.0}
\expandafter\def\csname gz:lognormal_s1.0_quadstd:REG_HC3:full\endcsname{96.2}
\expandafter\def\csname gz:lognormal_s1.0_quadstd:REG_HC3:full:nointerval\endcsname{0.0}
\expandafter\def\csname gz:lognormal_s1.0_quadstd:REG_HC3:lev1\endcsname{--}
\expandafter\def\csname gz:lognormal_s1.0_quadstd:lev1\endcsname{0.0}
\expandafter\def\csname gz:lognormal_s1.0_quadstd:medlev\endcsname{0.29}
\expandafter\def\csname gz:lognormal_s1.0_quadstd:rankdef\endcsname{0.0}
\expandafter\def\csname gz:lognormal_s1.0_quadstd:reps\endcsname{20{,}000}
\expandafter\def\csname gz:lognormal_s1.2_prop:IREG_HC1:all\endcsname{92.6}
\expandafter\def\csname gz:lognormal_s1.2_prop:IREG_HC1:all:nointerval\endcsname{0.0}
\expandafter\def\csname gz:lognormal_s1.2_prop:IREG_HC1:full\endcsname{92.6}
\expandafter\def\csname gz:lognormal_s1.2_prop:IREG_HC1:full:nointerval\endcsname{0.0}
\expandafter\def\csname gz:lognormal_s1.2_prop:IREG_HC1:lev1\endcsname{--}
\expandafter\def\csname gz:lognormal_s1.2_prop:IREG_HC2:all\endcsname{93.7}
\expandafter\def\csname gz:lognormal_s1.2_prop:IREG_HC2:all:nointerval\endcsname{0.0}
\expandafter\def\csname gz:lognormal_s1.2_prop:IREG_HC2:full\endcsname{93.7}
\expandafter\def\csname gz:lognormal_s1.2_prop:IREG_HC2:full:nointerval\endcsname{0.0}
\expandafter\def\csname gz:lognormal_s1.2_prop:IREG_HC2:lev1\endcsname{--}
\expandafter\def\csname gz:lognormal_s1.2_prop:IREG_HC3:all\endcsname{95.1}
\expandafter\def\csname gz:lognormal_s1.2_prop:IREG_HC3:all:nointerval\endcsname{0.0}
\expandafter\def\csname gz:lognormal_s1.2_prop:IREG_HC3:full\endcsname{95.1}
\expandafter\def\csname gz:lognormal_s1.2_prop:IREG_HC3:full:nointerval\endcsname{0.0}
\expandafter\def\csname gz:lognormal_s1.2_prop:IREG_HC3:lev1\endcsname{--}
\expandafter\def\csname gz:lognormal_s1.2_prop:REG_HC1:all\endcsname{94.8}
\expandafter\def\csname gz:lognormal_s1.2_prop:REG_HC1:all:nointerval\endcsname{0.0}
\expandafter\def\csname gz:lognormal_s1.2_prop:REG_HC1:full\endcsname{94.8}
\expandafter\def\csname gz:lognormal_s1.2_prop:REG_HC1:full:nointerval\endcsname{0.0}
\expandafter\def\csname gz:lognormal_s1.2_prop:REG_HC1:lev1\endcsname{--}
\expandafter\def\csname gz:lognormal_s1.2_prop:REG_HC3:all\endcsname{95.8}
\expandafter\def\csname gz:lognormal_s1.2_prop:REG_HC3:all:nointerval\endcsname{0.0}
\expandafter\def\csname gz:lognormal_s1.2_prop:REG_HC3:full\endcsname{95.8}
\expandafter\def\csname gz:lognormal_s1.2_prop:REG_HC3:full:nointerval\endcsname{0.0}
\expandafter\def\csname gz:lognormal_s1.2_prop:REG_HC3:lev1\endcsname{--}
\expandafter\def\csname gz:lognormal_s1.2_prop:lev1\endcsname{0.0}
\expandafter\def\csname gz:lognormal_s1.2_prop:medlev\endcsname{0.37}
\expandafter\def\csname gz:lognormal_s1.2_prop:rankdef\endcsname{0.0}
\expandafter\def\csname gz:lognormal_s1.2_prop:reps\endcsname{20{,}000}
\expandafter\def\csname gz:lognormal_s1.2_prop_n2000:IREG_HC1:all\endcsname{94.6}
\expandafter\def\csname gz:lognormal_s1.2_prop_n2000:IREG_HC1:all:nointerval\endcsname{0.0}
\expandafter\def\csname gz:lognormal_s1.2_prop_n2000:IREG_HC1:full\endcsname{94.6}
\expandafter\def\csname gz:lognormal_s1.2_prop_n2000:IREG_HC1:full:nointerval\endcsname{0.0}
\expandafter\def\csname gz:lognormal_s1.2_prop_n2000:IREG_HC1:lev1\endcsname{--}
\expandafter\def\csname gz:lognormal_s1.2_prop_n2000:IREG_HC2:all\endcsname{94.9}
\expandafter\def\csname gz:lognormal_s1.2_prop_n2000:IREG_HC2:all:nointerval\endcsname{0.0}
\expandafter\def\csname gz:lognormal_s1.2_prop_n2000:IREG_HC2:full\endcsname{94.9}
\expandafter\def\csname gz:lognormal_s1.2_prop_n2000:IREG_HC2:full:nointerval\endcsname{0.0}
\expandafter\def\csname gz:lognormal_s1.2_prop_n2000:IREG_HC2:lev1\endcsname{--}
\expandafter\def\csname gz:lognormal_s1.2_prop_n2000:IREG_HC3:all\endcsname{95.6}
\expandafter\def\csname gz:lognormal_s1.2_prop_n2000:IREG_HC3:all:nointerval\endcsname{0.0}
\expandafter\def\csname gz:lognormal_s1.2_prop_n2000:IREG_HC3:full\endcsname{95.6}
\expandafter\def\csname gz:lognormal_s1.2_prop_n2000:IREG_HC3:full:nointerval\endcsname{0.0}
\expandafter\def\csname gz:lognormal_s1.2_prop_n2000:IREG_HC3:lev1\endcsname{--}
\expandafter\def\csname gz:lognormal_s1.2_prop_n2000:REG_HC1:all\endcsname{95.4}
\expandafter\def\csname gz:lognormal_s1.2_prop_n2000:REG_HC1:all:nointerval\endcsname{0.0}
\expandafter\def\csname gz:lognormal_s1.2_prop_n2000:REG_HC1:full\endcsname{95.4}
\expandafter\def\csname gz:lognormal_s1.2_prop_n2000:REG_HC1:full:nointerval\endcsname{0.0}
\expandafter\def\csname gz:lognormal_s1.2_prop_n2000:REG_HC1:lev1\endcsname{--}
\expandafter\def\csname gz:lognormal_s1.2_prop_n2000:REG_HC3:all\endcsname{95.8}
\expandafter\def\csname gz:lognormal_s1.2_prop_n2000:REG_HC3:all:nointerval\endcsname{0.0}
\expandafter\def\csname gz:lognormal_s1.2_prop_n2000:REG_HC3:full\endcsname{95.8}
\expandafter\def\csname gz:lognormal_s1.2_prop_n2000:REG_HC3:full:nointerval\endcsname{0.0}
\expandafter\def\csname gz:lognormal_s1.2_prop_n2000:REG_HC3:lev1\endcsname{--}
\expandafter\def\csname gz:lognormal_s1.2_prop_n2000:lev1\endcsname{0.0}
\expandafter\def\csname gz:lognormal_s1.2_prop_n2000:medlev\endcsname{0.24}
\expandafter\def\csname gz:lognormal_s1.2_prop_n2000:rankdef\endcsname{0.0}
\expandafter\def\csname gz:lognormal_s1.2_prop_n2000:reps\endcsname{10{,}000}
\expandafter\def\csname gz:lognormal_s1.2_prop_n200:IREG_HC1:all\endcsname{91.3}
\expandafter\def\csname gz:lognormal_s1.2_prop_n200:IREG_HC1:all:nointerval\endcsname{0.0}
\expandafter\def\csname gz:lognormal_s1.2_prop_n200:IREG_HC1:full\endcsname{91.3}
\expandafter\def\csname gz:lognormal_s1.2_prop_n200:IREG_HC1:full:nointerval\endcsname{0.0}
\expandafter\def\csname gz:lognormal_s1.2_prop_n200:IREG_HC1:lev1\endcsname{--}
\expandafter\def\csname gz:lognormal_s1.2_prop_n200:IREG_HC2:all\endcsname{92.8}
\expandafter\def\csname gz:lognormal_s1.2_prop_n200:IREG_HC2:all:nointerval\endcsname{0.0}
\expandafter\def\csname gz:lognormal_s1.2_prop_n200:IREG_HC2:full\endcsname{92.8}
\expandafter\def\csname gz:lognormal_s1.2_prop_n200:IREG_HC2:full:nointerval\endcsname{0.0}
\expandafter\def\csname gz:lognormal_s1.2_prop_n200:IREG_HC2:lev1\endcsname{--}
\expandafter\def\csname gz:lognormal_s1.2_prop_n200:IREG_HC3:all\endcsname{94.9}
\expandafter\def\csname gz:lognormal_s1.2_prop_n200:IREG_HC3:all:nointerval\endcsname{0.0}
\expandafter\def\csname gz:lognormal_s1.2_prop_n200:IREG_HC3:full\endcsname{94.9}
\expandafter\def\csname gz:lognormal_s1.2_prop_n200:IREG_HC3:full:nointerval\endcsname{0.0}
\expandafter\def\csname gz:lognormal_s1.2_prop_n200:IREG_HC3:lev1\endcsname{--}
\expandafter\def\csname gz:lognormal_s1.2_prop_n200:REG_HC1:all\endcsname{94.8}
\expandafter\def\csname gz:lognormal_s1.2_prop_n200:REG_HC1:all:nointerval\endcsname{0.0}
\expandafter\def\csname gz:lognormal_s1.2_prop_n200:REG_HC1:full\endcsname{94.8}
\expandafter\def\csname gz:lognormal_s1.2_prop_n200:REG_HC1:full:nointerval\endcsname{0.0}
\expandafter\def\csname gz:lognormal_s1.2_prop_n200:REG_HC1:lev1\endcsname{--}
\expandafter\def\csname gz:lognormal_s1.2_prop_n200:REG_HC3:all\endcsname{96.4}
\expandafter\def\csname gz:lognormal_s1.2_prop_n200:REG_HC3:all:nointerval\endcsname{0.0}
\expandafter\def\csname gz:lognormal_s1.2_prop_n200:REG_HC3:full\endcsname{96.4}
\expandafter\def\csname gz:lognormal_s1.2_prop_n200:REG_HC3:full:nointerval\endcsname{0.0}
\expandafter\def\csname gz:lognormal_s1.2_prop_n200:REG_HC3:lev1\endcsname{--}
\expandafter\def\csname gz:lognormal_s1.2_prop_n200:lev1\endcsname{0.0}
\expandafter\def\csname gz:lognormal_s1.2_prop_n200:medlev\endcsname{0.47}
\expandafter\def\csname gz:lognormal_s1.2_prop_n200:rankdef\endcsname{0.0}
\expandafter\def\csname gz:lognormal_s1.2_prop_n200:reps\endcsname{20{,}000}
\expandafter\def\csname gz:lognormal_s1.2_prop_n8000:IREG_HC1:all\endcsname{94.5}
\expandafter\def\csname gz:lognormal_s1.2_prop_n8000:IREG_HC1:all:nointerval\endcsname{0.0}
\expandafter\def\csname gz:lognormal_s1.2_prop_n8000:IREG_HC1:full\endcsname{94.5}
\expandafter\def\csname gz:lognormal_s1.2_prop_n8000:IREG_HC1:full:nointerval\endcsname{0.0}
\expandafter\def\csname gz:lognormal_s1.2_prop_n8000:IREG_HC1:lev1\endcsname{--}
\expandafter\def\csname gz:lognormal_s1.2_prop_n8000:IREG_HC2:all\endcsname{94.8}
\expandafter\def\csname gz:lognormal_s1.2_prop_n8000:IREG_HC2:all:nointerval\endcsname{0.0}
\expandafter\def\csname gz:lognormal_s1.2_prop_n8000:IREG_HC2:full\endcsname{94.8}
\expandafter\def\csname gz:lognormal_s1.2_prop_n8000:IREG_HC2:full:nointerval\endcsname{0.0}
\expandafter\def\csname gz:lognormal_s1.2_prop_n8000:IREG_HC2:lev1\endcsname{--}
\expandafter\def\csname gz:lognormal_s1.2_prop_n8000:IREG_HC3:all\endcsname{94.9}
\expandafter\def\csname gz:lognormal_s1.2_prop_n8000:IREG_HC3:all:nointerval\endcsname{0.0}
\expandafter\def\csname gz:lognormal_s1.2_prop_n8000:IREG_HC3:full\endcsname{94.9}
\expandafter\def\csname gz:lognormal_s1.2_prop_n8000:IREG_HC3:full:nointerval\endcsname{0.0}
\expandafter\def\csname gz:lognormal_s1.2_prop_n8000:IREG_HC3:lev1\endcsname{--}
\expandafter\def\csname gz:lognormal_s1.2_prop_n8000:REG_HC1:all\endcsname{94.8}
\expandafter\def\csname gz:lognormal_s1.2_prop_n8000:REG_HC1:all:nointerval\endcsname{0.0}
\expandafter\def\csname gz:lognormal_s1.2_prop_n8000:REG_HC1:full\endcsname{94.8}
\expandafter\def\csname gz:lognormal_s1.2_prop_n8000:REG_HC1:full:nointerval\endcsname{0.0}
\expandafter\def\csname gz:lognormal_s1.2_prop_n8000:REG_HC1:lev1\endcsname{--}
\expandafter\def\csname gz:lognormal_s1.2_prop_n8000:REG_HC3:all\endcsname{95.0}
\expandafter\def\csname gz:lognormal_s1.2_prop_n8000:REG_HC3:all:nointerval\endcsname{0.0}
\expandafter\def\csname gz:lognormal_s1.2_prop_n8000:REG_HC3:full\endcsname{95.0}
\expandafter\def\csname gz:lognormal_s1.2_prop_n8000:REG_HC3:full:nointerval\endcsname{0.0}
\expandafter\def\csname gz:lognormal_s1.2_prop_n8000:REG_HC3:lev1\endcsname{--}
\expandafter\def\csname gz:lognormal_s1.2_prop_n8000:lev1\endcsname{0.0}
\expandafter\def\csname gz:lognormal_s1.2_prop_n8000:medlev\endcsname{0.14}
\expandafter\def\csname gz:lognormal_s1.2_prop_n8000:rankdef\endcsname{0.0}
\expandafter\def\csname gz:lognormal_s1.2_prop_n8000:reps\endcsname{4{,}000}
\expandafter\def\csname gz:lognormal_s1.5_prop:IREG_HC1:all\endcsname{89.9}
\expandafter\def\csname gz:lognormal_s1.5_prop:IREG_HC1:all:nointerval\endcsname{0.0}
\expandafter\def\csname gz:lognormal_s1.5_prop:IREG_HC1:full\endcsname{89.9}
\expandafter\def\csname gz:lognormal_s1.5_prop:IREG_HC1:full:nointerval\endcsname{0.0}
\expandafter\def\csname gz:lognormal_s1.5_prop:IREG_HC1:lev1\endcsname{--}
\expandafter\def\csname gz:lognormal_s1.5_prop:IREG_HC2:all\endcsname{91.9}
\expandafter\def\csname gz:lognormal_s1.5_prop:IREG_HC2:all:nointerval\endcsname{0.0}
\expandafter\def\csname gz:lognormal_s1.5_prop:IREG_HC2:full\endcsname{91.9}
\expandafter\def\csname gz:lognormal_s1.5_prop:IREG_HC2:full:nointerval\endcsname{0.0}
\expandafter\def\csname gz:lognormal_s1.5_prop:IREG_HC2:lev1\endcsname{--}
\expandafter\def\csname gz:lognormal_s1.5_prop:IREG_HC3:all\endcsname{94.7}
\expandafter\def\csname gz:lognormal_s1.5_prop:IREG_HC3:all:nointerval\endcsname{0.0}
\expandafter\def\csname gz:lognormal_s1.5_prop:IREG_HC3:full\endcsname{94.7}
\expandafter\def\csname gz:lognormal_s1.5_prop:IREG_HC3:full:nointerval\endcsname{0.0}
\expandafter\def\csname gz:lognormal_s1.5_prop:IREG_HC3:lev1\endcsname{--}
\expandafter\def\csname gz:lognormal_s1.5_prop:REG_HC1:all\endcsname{95.0}
\expandafter\def\csname gz:lognormal_s1.5_prop:REG_HC1:all:nointerval\endcsname{0.0}
\expandafter\def\csname gz:lognormal_s1.5_prop:REG_HC1:full\endcsname{95.0}
\expandafter\def\csname gz:lognormal_s1.5_prop:REG_HC1:full:nointerval\endcsname{0.0}
\expandafter\def\csname gz:lognormal_s1.5_prop:REG_HC1:lev1\endcsname{--}
\expandafter\def\csname gz:lognormal_s1.5_prop:REG_HC3:all\endcsname{96.7}
\expandafter\def\csname gz:lognormal_s1.5_prop:REG_HC3:all:nointerval\endcsname{0.0}
\expandafter\def\csname gz:lognormal_s1.5_prop:REG_HC3:full\endcsname{96.7}
\expandafter\def\csname gz:lognormal_s1.5_prop:REG_HC3:full:nointerval\endcsname{0.0}
\expandafter\def\csname gz:lognormal_s1.5_prop:REG_HC3:lev1\endcsname{--}
\expandafter\def\csname gz:lognormal_s1.5_prop:lev1\endcsname{0.0}
\expandafter\def\csname gz:lognormal_s1.5_prop:medlev\endcsname{0.48}
\expandafter\def\csname gz:lognormal_s1.5_prop:rankdef\endcsname{0.0}
\expandafter\def\csname gz:lognormal_s1.5_prop:reps\endcsname{20{,}000}
\expandafter\def\csname gz:lognormal_s1.8_prop:IREG_HC1:all\endcsname{86.1}
\expandafter\def\csname gz:lognormal_s1.8_prop:IREG_HC1:all:nointerval\endcsname{0.0}
\expandafter\def\csname gz:lognormal_s1.8_prop:IREG_HC1:full\endcsname{86.1}
\expandafter\def\csname gz:lognormal_s1.8_prop:IREG_HC1:full:nointerval\endcsname{0.0}
\expandafter\def\csname gz:lognormal_s1.8_prop:IREG_HC1:lev1\endcsname{--}
\expandafter\def\csname gz:lognormal_s1.8_prop:IREG_HC2:all\endcsname{89.2}
\expandafter\def\csname gz:lognormal_s1.8_prop:IREG_HC2:all:nointerval\endcsname{0.0}
\expandafter\def\csname gz:lognormal_s1.8_prop:IREG_HC2:full\endcsname{89.2}
\expandafter\def\csname gz:lognormal_s1.8_prop:IREG_HC2:full:nointerval\endcsname{0.0}
\expandafter\def\csname gz:lognormal_s1.8_prop:IREG_HC2:lev1\endcsname{--}
\expandafter\def\csname gz:lognormal_s1.8_prop:IREG_HC3:all\endcsname{94.1}
\expandafter\def\csname gz:lognormal_s1.8_prop:IREG_HC3:all:nointerval\endcsname{0.0}
\expandafter\def\csname gz:lognormal_s1.8_prop:IREG_HC3:full\endcsname{94.1}
\expandafter\def\csname gz:lognormal_s1.8_prop:IREG_HC3:full:nointerval\endcsname{0.0}
\expandafter\def\csname gz:lognormal_s1.8_prop:IREG_HC3:lev1\endcsname{--}
\expandafter\def\csname gz:lognormal_s1.8_prop:REG_HC1:all\endcsname{95.1}
\expandafter\def\csname gz:lognormal_s1.8_prop:REG_HC1:all:nointerval\endcsname{0.0}
\expandafter\def\csname gz:lognormal_s1.8_prop:REG_HC1:full\endcsname{95.1}
\expandafter\def\csname gz:lognormal_s1.8_prop:REG_HC1:full:nointerval\endcsname{0.0}
\expandafter\def\csname gz:lognormal_s1.8_prop:REG_HC1:lev1\endcsname{--}
\expandafter\def\csname gz:lognormal_s1.8_prop:REG_HC3:all\endcsname{97.4}
\expandafter\def\csname gz:lognormal_s1.8_prop:REG_HC3:all:nointerval\endcsname{0.0}
\expandafter\def\csname gz:lognormal_s1.8_prop:REG_HC3:full\endcsname{97.4}
\expandafter\def\csname gz:lognormal_s1.8_prop:REG_HC3:full:nointerval\endcsname{0.0}
\expandafter\def\csname gz:lognormal_s1.8_prop:REG_HC3:lev1\endcsname{--}
\expandafter\def\csname gz:lognormal_s1.8_prop:lev1\endcsname{0.0}
\expandafter\def\csname gz:lognormal_s1.8_prop:medlev\endcsname{0.59}
\expandafter\def\csname gz:lognormal_s1.8_prop:rankdef\endcsname{0.0}
\expandafter\def\csname gz:lognormal_s1.8_prop:reps\endcsname{20{,}000}
\expandafter\def\csname gz:lognormal_s2.0_prop:IREG_HC1:all\endcsname{83.0}
\expandafter\def\csname gz:lognormal_s2.0_prop:IREG_HC1:all:nointerval\endcsname{0.0}
\expandafter\def\csname gz:lognormal_s2.0_prop:IREG_HC1:full\endcsname{83.0}
\expandafter\def\csname gz:lognormal_s2.0_prop:IREG_HC1:full:nointerval\endcsname{0.0}
\expandafter\def\csname gz:lognormal_s2.0_prop:IREG_HC1:lev1\endcsname{--}
\expandafter\def\csname gz:lognormal_s2.0_prop:IREG_HC2:all\endcsname{87.1}
\expandafter\def\csname gz:lognormal_s2.0_prop:IREG_HC2:all:nointerval\endcsname{0.0}
\expandafter\def\csname gz:lognormal_s2.0_prop:IREG_HC2:full\endcsname{87.1}
\expandafter\def\csname gz:lognormal_s2.0_prop:IREG_HC2:full:nointerval\endcsname{0.0}
\expandafter\def\csname gz:lognormal_s2.0_prop:IREG_HC2:lev1\endcsname{--}
\expandafter\def\csname gz:lognormal_s2.0_prop:IREG_HC3:all\endcsname{93.9}
\expandafter\def\csname gz:lognormal_s2.0_prop:IREG_HC3:all:nointerval\endcsname{0.0}
\expandafter\def\csname gz:lognormal_s2.0_prop:IREG_HC3:full\endcsname{93.9}
\expandafter\def\csname gz:lognormal_s2.0_prop:IREG_HC3:full:nointerval\endcsname{0.0}
\expandafter\def\csname gz:lognormal_s2.0_prop:IREG_HC3:lev1\endcsname{--}
\expandafter\def\csname gz:lognormal_s2.0_prop:REG_HC1:all\endcsname{95.3}
\expandafter\def\csname gz:lognormal_s2.0_prop:REG_HC1:all:nointerval\endcsname{0.0}
\expandafter\def\csname gz:lognormal_s2.0_prop:REG_HC1:full\endcsname{95.3}
\expandafter\def\csname gz:lognormal_s2.0_prop:REG_HC1:full:nointerval\endcsname{0.0}
\expandafter\def\csname gz:lognormal_s2.0_prop:REG_HC1:lev1\endcsname{--}
\expandafter\def\csname gz:lognormal_s2.0_prop:REG_HC3:all\endcsname{97.9}
\expandafter\def\csname gz:lognormal_s2.0_prop:REG_HC3:all:nointerval\endcsname{0.0}
\expandafter\def\csname gz:lognormal_s2.0_prop:REG_HC3:full\endcsname{97.9}
\expandafter\def\csname gz:lognormal_s2.0_prop:REG_HC3:full:nointerval\endcsname{0.0}
\expandafter\def\csname gz:lognormal_s2.0_prop:REG_HC3:lev1\endcsname{--}
\expandafter\def\csname gz:lognormal_s2.0_prop:lev1\endcsname{0.0}
\expandafter\def\csname gz:lognormal_s2.0_prop:medlev\endcsname{0.65}
\expandafter\def\csname gz:lognormal_s2.0_prop:rankdef\endcsname{0.0}
\expandafter\def\csname gz:lognormal_s2.0_prop:reps\endcsname{20{,}000}
\expandafter\def\csname gz:mc:heavy:se\endcsname{0.033}
\expandafter\def\csname gz:mcse:max20k\endcsname{0.27}
\expandafter\def\csname gz:mcse:maxall\endcsname{0.36}
\expandafter\def\csname gz:rare:theory:lev1\endcsname{6.39}
\expandafter\def\csname gz:rare:theory:rankdef\endcsname{1.28}
\expandafter\def\csname gz:rare_q0.02_homo:IREG_HC1:all\endcsname{94.7}
\expandafter\def\csname gz:rare_q0.02_homo:IREG_HC1:all:nointerval\endcsname{0.2}
\expandafter\def\csname gz:rare_q0.02_homo:IREG_HC1:full\endcsname{94.9}
\expandafter\def\csname gz:rare_q0.02_homo:IREG_HC1:full:nointerval\endcsname{0.0}
\expandafter\def\csname gz:rare_q0.02_homo:IREG_HC1:lev1\endcsname{95.7}
\expandafter\def\csname gz:rare_q0.02_homo:IREG_HC1:lev1:nointerval\endcsname{0.0}
\expandafter\def\csname gz:rare_q0.02_homo:IREG_HC2:all\endcsname{90.9}
\expandafter\def\csname gz:rare_q0.02_homo:IREG_HC2:all:nointerval\endcsname{4.2}
\expandafter\def\csname gz:rare_q0.02_homo:IREG_HC2:full\endcsname{91.2}
\expandafter\def\csname gz:rare_q0.02_homo:IREG_HC2:full:nointerval\endcsname{3.8}
\expandafter\def\csname gz:rare_q0.02_homo:IREG_HC2:lev1\endcsname{41.3}
\expandafter\def\csname gz:rare_q0.02_homo:IREG_HC2:lev1:nointerval\endcsname{56.4}
\expandafter\def\csname gz:rare_q0.02_homo:IREG_HC3:all\endcsname{94.7}
\expandafter\def\csname gz:rare_q0.02_homo:IREG_HC3:all:nointerval\endcsname{0.2}
\expandafter\def\csname gz:rare_q0.02_homo:IREG_HC3:full\endcsname{95.2}
\expandafter\def\csname gz:rare_q0.02_homo:IREG_HC3:full:nointerval\endcsname{0.0}
\expandafter\def\csname gz:rare_q0.02_homo:IREG_HC3:lev1\endcsname{97.6}
\expandafter\def\csname gz:rare_q0.02_homo:IREG_HC3:lev1:nointerval\endcsname{0.0}
\expandafter\def\csname gz:rare_q0.02_homo:REG_HC1:all\endcsname{95.0}
\expandafter\def\csname gz:rare_q0.02_homo:REG_HC1:all:nointerval\endcsname{0.0}
\expandafter\def\csname gz:rare_q0.02_homo:REG_HC1:full\endcsname{94.9}
\expandafter\def\csname gz:rare_q0.02_homo:REG_HC1:full:nointerval\endcsname{0.0}
\expandafter\def\csname gz:rare_q0.02_homo:REG_HC1:lev1\endcsname{95.8}
\expandafter\def\csname gz:rare_q0.02_homo:REG_HC1:lev1:nointerval\endcsname{0.0}
\expandafter\def\csname gz:rare_q0.02_homo:REG_HC3:all\endcsname{95.0}
\expandafter\def\csname gz:rare_q0.02_homo:REG_HC3:all:nointerval\endcsname{0.0}
\expandafter\def\csname gz:rare_q0.02_homo:REG_HC3:full\endcsname{95.0}
\expandafter\def\csname gz:rare_q0.02_homo:REG_HC3:full:nointerval\endcsname{0.0}
\expandafter\def\csname gz:rare_q0.02_homo:REG_HC3:lev1\endcsname{95.9}
\expandafter\def\csname gz:rare_q0.02_homo:REG_HC3:lev1:nointerval\endcsname{0.0}
\expandafter\def\csname gz:rare_q0.02_homo:lev1\endcsname{6.6}
\expandafter\def\csname gz:rare_q0.02_homo:medlev\endcsname{0.25}
\expandafter\def\csname gz:rare_q0.02_homo:rankdef\endcsname{1.3}
\expandafter\def\csname gz:rare_q0.02_homo:reps\endcsname{20{,}000}
\expandafter\def\csname gz:rare_q0.02_homo@v2:IREG_HC1:all\endcsname{94.7}
\expandafter\def\csname gz:rare_q0.02_homo@v2:IREG_HC1:full\endcsname{94.9}
\expandafter\def\csname gz:rare_q0.02_homo@v2:IREG_HC1:lev1\endcsname{95.7}
\expandafter\def\csname gz:rare_q0.02_homo@v2:IREG_HC2:all\endcsname{94.7}
\expandafter\def\csname gz:rare_q0.02_homo@v2:IREG_HC2:full\endcsname{94.9}
\expandafter\def\csname gz:rare_q0.02_homo@v2:IREG_HC2:lev1\endcsname{95.7}
\expandafter\def\csname gz:rare_q0.02_homo@v2:IREG_HC3:all\endcsname{94.9}
\expandafter\def\csname gz:rare_q0.02_homo@v2:IREG_HC3:full\endcsname{95.1}
\expandafter\def\csname gz:rare_q0.02_homo@v2:IREG_HC3:lev1\endcsname{97.1}
\expandafter\def\csname gz:rare_q0.02_prop:IREG_HC1:all\endcsname{92.0}
\expandafter\def\csname gz:rare_q0.02_prop:IREG_HC1:all:nointerval\endcsname{0.3}
\expandafter\def\csname gz:rare_q0.02_prop:IREG_HC1:full\endcsname{92.2}
\expandafter\def\csname gz:rare_q0.02_prop:IREG_HC1:full:nointerval\endcsname{0.0}
\expandafter\def\csname gz:rare_q0.02_prop:IREG_HC1:lev1\endcsname{85.6}
\expandafter\def\csname gz:rare_q0.02_prop:IREG_HC1:lev1:nointerval\endcsname{0.0}
\expandafter\def\csname gz:rare_q0.02_prop:IREG_HC2:all\endcsname{90.1}
\expandafter\def\csname gz:rare_q0.02_prop:IREG_HC2:all:nointerval\endcsname{3.7}
\expandafter\def\csname gz:rare_q0.02_prop:IREG_HC2:full\endcsname{90.5}
\expandafter\def\csname gz:rare_q0.02_prop:IREG_HC2:full:nointerval\endcsname{3.3}
\expandafter\def\csname gz:rare_q0.02_prop:IREG_HC2:lev1\endcsname{40.2}
\expandafter\def\csname gz:rare_q0.02_prop:IREG_HC2:lev1:nointerval\endcsname{53.6}
\expandafter\def\csname gz:rare_q0.02_prop:IREG_HC3:all\endcsname{94.8}
\expandafter\def\csname gz:rare_q0.02_prop:IREG_HC3:all:nointerval\endcsname{0.2}
\expandafter\def\csname gz:rare_q0.02_prop:IREG_HC3:full\endcsname{95.1}
\expandafter\def\csname gz:rare_q0.02_prop:IREG_HC3:full:nointerval\endcsname{0.0}
\expandafter\def\csname gz:rare_q0.02_prop:IREG_HC3:lev1\endcsname{94.9}
\expandafter\def\csname gz:rare_q0.02_prop:IREG_HC3:lev1:nointerval\endcsname{0.0}
\expandafter\def\csname gz:rare_q0.02_prop:REG_HC1:all\endcsname{94.9}
\expandafter\def\csname gz:rare_q0.02_prop:REG_HC1:all:nointerval\endcsname{0.0}
\expandafter\def\csname gz:rare_q0.02_prop:REG_HC1:full\endcsname{94.9}
\expandafter\def\csname gz:rare_q0.02_prop:REG_HC1:full:nointerval\endcsname{0.0}
\expandafter\def\csname gz:rare_q0.02_prop:REG_HC1:lev1\endcsname{95.7}
\expandafter\def\csname gz:rare_q0.02_prop:REG_HC1:lev1:nointerval\endcsname{0.0}
\expandafter\def\csname gz:rare_q0.02_prop:REG_HC3:all\endcsname{96.0}
\expandafter\def\csname gz:rare_q0.02_prop:REG_HC3:all:nointerval\endcsname{0.0}
\expandafter\def\csname gz:rare_q0.02_prop:REG_HC3:full\endcsname{96.0}
\expandafter\def\csname gz:rare_q0.02_prop:REG_HC3:full:nointerval\endcsname{0.0}
\expandafter\def\csname gz:rare_q0.02_prop:REG_HC3:lev1\endcsname{96.6}
\expandafter\def\csname gz:rare_q0.02_prop:REG_HC3:lev1:nointerval\endcsname{0.0}
\expandafter\def\csname gz:rare_q0.02_prop:lev1\endcsname{6.1}
\expandafter\def\csname gz:rare_q0.02_prop:medlev\endcsname{0.25}
\expandafter\def\csname gz:rare_q0.02_prop:rankdef\endcsname{1.4}
\expandafter\def\csname gz:rare_q0.02_prop:reps\endcsname{20{,}000}
\expandafter\def\csname gz:rare_q0.02_prop@v2:IREG_HC1:all\endcsname{92.0}
\expandafter\def\csname gz:rare_q0.02_prop@v2:IREG_HC1:full\endcsname{92.2}
\expandafter\def\csname gz:rare_q0.02_prop@v2:IREG_HC1:lev1\endcsname{85.6}
\expandafter\def\csname gz:rare_q0.02_prop@v2:IREG_HC2:all\endcsname{93.1}
\expandafter\def\csname gz:rare_q0.02_prop@v2:IREG_HC2:full\endcsname{93.3}
\expandafter\def\csname gz:rare_q0.02_prop@v2:IREG_HC2:lev1\endcsname{85.8}
\expandafter\def\csname gz:rare_q0.02_prop@v2:IREG_HC3:all\endcsname{94.6}
\expandafter\def\csname gz:rare_q0.02_prop@v2:IREG_HC3:full\endcsname{94.9}
\expandafter\def\csname gz:rare_q0.02_prop@v2:IREG_HC3:lev1\endcsname{90.5}
\expandafter\def\csname gz:samp:s0.6:mseratio\endcsname{1.01}
\expandafter\def\csname gz:samp:s0.6:mseratio:se\endcsname{0.000}
\expandafter\def\csname gz:samp:s0.6:pred\endcsname{1.01}
\expandafter\def\csname gz:samp:s0.6:sdI\endcsname{0.21}
\expandafter\def\csname gz:samp:s0.6:sdR\endcsname{0.21}
\expandafter\def\csname gz:samp:s0.6:sdpct\endcsname{0}
\expandafter\def\csname gz:samp:s0.6:sdtheory\endcsname{0.21}
\expandafter\def\csname gz:samp:s0.6:seI\endcsname{0.21}
\expandafter\def\csname gz:samp:s0.6:seR\endcsname{0.21}
\expandafter\def\csname gz:samp:s0.6:sepct\endcsname{0}
\expandafter\def\csname gz:samp:s0.8:mseratio\endcsname{1.02}
\expandafter\def\csname gz:samp:s0.8:mseratio:se\endcsname{0.001}
\expandafter\def\csname gz:samp:s0.8:pred\endcsname{1.03}
\expandafter\def\csname gz:samp:s0.8:sdI\endcsname{0.24}
\expandafter\def\csname gz:samp:s0.8:sdR\endcsname{0.24}
\expandafter\def\csname gz:samp:s0.8:sdpct\endcsname{1}
\expandafter\def\csname gz:samp:s0.8:sdtheory\endcsname{0.24}
\expandafter\def\csname gz:samp:s0.8:seI\endcsname{0.24}
\expandafter\def\csname gz:samp:s0.8:seR\endcsname{0.24}
\expandafter\def\csname gz:samp:s0.8:sepct\endcsname{0}
\expandafter\def\csname gz:samp:s1.0:mseratio\endcsname{1.04}
\expandafter\def\csname gz:samp:s1.0:mseratio:se\endcsname{0.002}
\expandafter\def\csname gz:samp:s1.0:pred\endcsname{1.09}
\expandafter\def\csname gz:samp:s1.0:sdI\endcsname{0.30}
\expandafter\def\csname gz:samp:s1.0:sdR\endcsname{0.30}
\expandafter\def\csname gz:samp:s1.0:sdpct\endcsname{2}
\expandafter\def\csname gz:samp:s1.0:sdtheory\endcsname{0.31}
\expandafter\def\csname gz:samp:s1.0:seI\endcsname{0.29}
\expandafter\def\csname gz:samp:s1.0:seR\endcsname{0.29}
\expandafter\def\csname gz:samp:s1.0:sepct\endcsname{0}
\expandafter\def\csname gz:samp:s1.2:mseratio\endcsname{1.10}
\expandafter\def\csname gz:samp:s1.2:mseratio:se\endcsname{0.003}
\expandafter\def\csname gz:samp:s1.2:pred\endcsname{1.34}
\expandafter\def\csname gz:samp:s1.2:sdI\endcsname{0.42}
\expandafter\def\csname gz:samp:s1.2:sdR\endcsname{0.40}
\expandafter\def\csname gz:samp:s1.2:sdpct\endcsname{5}
\expandafter\def\csname gz:samp:s1.2:sdtheory\endcsname{0.43}
\expandafter\def\csname gz:samp:s1.2:seI\endcsname{0.37}
\expandafter\def\csname gz:samp:s1.2:seR\endcsname{0.37}
\expandafter\def\csname gz:samp:s1.2:sepct\endcsname{1}
\expandafter\def\csname gz:samp:s1.5:mseratio\endcsname{1.27}
\expandafter\def\csname gz:samp:s1.5:mseratio:se\endcsname{0.014}
\expandafter\def\csname gz:samp:s1.5:pred\endcsname{4.81}
\expandafter\def\csname gz:samp:s1.5:sdI\endcsname{0.81}
\expandafter\def\csname gz:samp:s1.5:sdR\endcsname{0.72}
\expandafter\def\csname gz:samp:s1.5:sdpct\endcsname{13}
\expandafter\def\csname gz:samp:s1.5:sdtheory\endcsname{0.88}
\expandafter\def\csname gz:samp:s1.5:seI\endcsname{0.62}
\expandafter\def\csname gz:samp:s1.5:seR\endcsname{0.61}
\expandafter\def\csname gz:samp:s1.5:sepct\endcsname{1}
\expandafter\def\csname gz:samp:s1.8:mseratio\endcsname{1.62}
\expandafter\def\csname gz:samp:s1.8:mseratio:se\endcsname{0.039}
\expandafter\def\csname gz:samp:s1.8:pred\endcsname{72}
\expandafter\def\csname gz:samp:s1.8:sdI\endcsname{1.86}
\expandafter\def\csname gz:samp:s1.8:sdR\endcsname{1.46}
\expandafter\def\csname gz:samp:s1.8:sdpct\endcsname{27}
\expandafter\def\csname gz:samp:s1.8:sdtheory\endcsname{2.30}
\expandafter\def\csname gz:samp:s1.8:seI\endcsname{1.13}
\expandafter\def\csname gz:samp:s1.8:seR\endcsname{1.12}
\expandafter\def\csname gz:samp:s1.8:sepct\endcsname{1}
\expandafter\def\csname gz:samp:s2.0:mseratio\endcsname{2.00}
\expandafter\def\csname gz:samp:s2.0:mseratio:se\endcsname{0.056}
\expandafter\def\csname gz:samp:s2.0:pred\endcsname{672}
\expandafter\def\csname gz:samp:s2.0:sdI\endcsname{3.48}
\expandafter\def\csname gz:samp:s2.0:sdR\endcsname{2.46}
\expandafter\def\csname gz:samp:s2.0:sdpct\endcsname{41}
\expandafter\def\csname gz:samp:s2.0:sdtheory\endcsname{4.90}
\expandafter\def\csname gz:samp:s2.0:seI\endcsname{1.75}
\expandafter\def\csname gz:samp:s2.0:seR\endcsname{1.73}
\expandafter\def\csname gz:samp:s2.0:sepct\endcsname{1}
\expandafter\def\csname gz:tail:g0:4\endcsname{2.45}
\expandafter\def\csname gz:tail:g0:6\endcsname{17.3}
\expandafter\def\csname gz:tail:g0:8\endcsname{97.4}
\expandafter\def\csname gz:tail:g0:full\endcsname{672.1}
\expandafter\def\csname gz:tail:g0:sim\endcsname{2.00}
\expandafter\def\csname gz:tail:heavy:4\endcsname{1.90}
\expandafter\def\csname gz:tail:heavy:6\endcsname{7.94}
\expandafter\def\csname gz:tail:heavy:8\endcsname{26.6}
\expandafter\def\csname gz:tail:heavy:full\endcsname{78.5}
\expandafter\def\csname gz:tail:heavy:sim\endcsname{1.68}
\expandafter\def\csname gz:tail:t9:4\endcsname{1.36}
\expandafter\def\csname gz:tail:t9:6\endcsname{2.44}
\expandafter\def\csname gz:tail:t9:8\endcsname{3.81}
\expandafter\def\csname gz:tail:t9:full\endcsname{5.05}
\expandafter\def\csname gz:tail:t9:sim\endcsname{1.25}
\expandafter\def\csname gz:tr:X:IREG_HC1:cov\endcsname{83.0}
\expandafter\def\csname gz:tr:X:IREG_HC1:width\endcsname{6.87}
\expandafter\def\csname gz:tr:X:IREG_HC3:cov\endcsname{93.9}
\expandafter\def\csname gz:tr:X:IREG_HC3:width\endcsname{9.14}
\expandafter\def\csname gz:tr:X:REG_HC1:cov\endcsname{95.3}
\expandafter\def\csname gz:tr:X:REG_HC1:width\endcsname{6.80}
\expandafter\def\csname gz:tr:X:REG_HC3:cov\endcsname{97.9}
\expandafter\def\csname gz:tr:X:REG_HC3:width\endcsname{7.57}
\expandafter\def\csname gz:tr:X:lev\endcsname{0.65}
\expandafter\def\csname gz:tr:log1pX:IREG_HC1:cov\endcsname{95.6}
\expandafter\def\csname gz:tr:log1pX:IREG_HC1:width\endcsname{9.92}
\expandafter\def\csname gz:tr:log1pX:IREG_HC3:cov\endcsname{96.8}
\expandafter\def\csname gz:tr:log1pX:IREG_HC3:width\endcsname{10.48}
\expandafter\def\csname gz:tr:log1pX:REG_HC1:cov\endcsname{96.7}
\expandafter\def\csname gz:tr:log1pX:REG_HC1:width\endcsname{9.92}
\expandafter\def\csname gz:tr:log1pX:REG_HC3:cov\endcsname{97.2}
\expandafter\def\csname gz:tr:log1pX:REG_HC3:width\endcsname{10.21}
\expandafter\def\csname gz:tr:log1pX:lev\endcsname{0.08}
\expandafter\def\csname gz:val:crossgap\endcsname{\ensuremath{4\times10^{-14}}}
\expandafter\def\csname gz:val:crossn\endcsname{20}
\expandafter\def\csname gz:val:datasets\endcsname{100}
\expandafter\def\csname gz:val:formulagap\endcsname{0}
\expandafter\def\csname gz:val:full:dbern\endcsname{41.32}
\expandafter\def\csname gz:val:full:extrap\endcsname{45}
\expandafter\def\csname gz:val:full:inc1\endcsname{6.4}
\expandafter\def\csname gz:val:full:inc2\endcsname{3.1}
\expandafter\def\csname gz:val:handgap\endcsname{\ensuremath{3\times10^{-14}}}
\expandafter\def\csname gz:val:lev:hcthree:v1\endcsname{0.195}
\expandafter\def\csname gz:val:lev:hcthree:v2\endcsname{0.132}
\expandafter\def\csname gz:val:lev:hcthree:zero\endcsname{0.132}
\expandafter\def\csname gz:val:lev:hctwo\endcsname{0.129}
\expandafter\def\csname gz:val:lev:onem\endcsname{\ensuremath{1\times10^{-16}}}
\expandafter\def\csname gz:val:lev:resid\endcsname{\ensuremath{1\times10^{-14}}}
\expandafter\def\csname gz:val:p1gap\endcsname{\ensuremath{9\times10^{-16}}}
\expandafter\def\csname gz:val:plugin\endcsname{0.0076}
\expandafter\def\csname gz:valall:maxabsz\endcsname{2.60}
\expandafter\def\csname gz:valall:n\endcsname{18}
\expandafter\def\csname gz:valall:within196\endcsname{16}
\expandafter\def\csname gz:valid:s0.6:n2000:frac\endcsname{94}
\expandafter\def\csname gz:valid:s0.6:n500:frac\endcsname{91}
\expandafter\def\csname gz:valid:s0.8:n2000:frac\endcsname{85}
\expandafter\def\csname gz:valid:s0.8:n500:frac\endcsname{75}

%% file: tables/jk_battery_body.tex
\multicolumn{12}{l}{\emph{Small or no heterogeneity}} \\
Gaussian & 0.022 & 0.091 & 0.36 & 95.0 & 0.36 & 95.0 & 0.091 & 0.36 & 94.8 & 0.36 & 95.0 \\
Misspecified, $G=0$ & 0 & 0.098 & 0.38 & 94.9 & 0.38 & 95.0 & 0.098 & 0.38 & 94.8 & 0.38 & 95.0 \\
Skewed, het., $G=0$ & 0 & 0.167 & 0.65 & 95.0 & 0.66 & 95.4 & 0.165 & 0.64 & 95.0 & 0.65 & 95.5 \\
Heavy-tailed, het. & 0.016 & 0.175 & 0.64 & 94.7 & 0.66 & 95.5 & 0.168 & 0.63 & 94.7 & 0.64 & 95.5 \\
Lognormal $s=1.2$, $g=0.3$ & 0.013 & 0.425 & 1.52 & 94.6 & 1.57 & 95.6 & 0.408 & 1.47 & 94.6 & 1.52 & 95.7 \\
Lognormal $s=1.5$, $g=0.1$ & 0.002 & 0.798 & 2.55 & 94.7 & 2.73 & 96.2 & 0.715 & 2.40 & 94.9 & 2.55 & 96.7 \\
Lognormal $s=1.5$, $g=0.5$ & 0.049 & 0.839 & 2.62 & 93.7 & 2.79 & 95.3 & 0.753 & 2.46 & 93.2 & 2.61 & 95.1 \\
Lognormal $s=1.8$, $g=0.3$, $\sigma=1+1.5X$ & 0.009 & 2.826 & 7.26 & 94.3 & 8.04 & 96.4 & 2.196 & 6.57 & 94.3 & 7.18 & 96.6 \\
Lognormal $s=2$, $g=0.1$ & 0.002 & 3.433 & 7.76 & 94.8 & 8.78 & 97.0 & 2.458 & 6.80 & 95.0 & 7.55 & 97.6 \\
\addlinespace
\multicolumn{12}{l}{\emph{Moderate heterogeneity}} \\
Lognormal $s=1.5$, $g=1$ & 0.172 & 0.901 & 2.81 & 92.0 & 2.97 & 93.7 & 0.826 & 2.64 & 90.2 & 2.81 & 91.9 \\
Rare extreme values & 0.2 & 0.100 & 0.40 & 94.0 & 0.40 & 94.0 & 0.102 & 0.40 & 94.7 & 0.41 & 95.5 \\
\addlinespace
\multicolumn{12}{l}{\emph{Large heterogeneity}} \\
Heavy-tailed, het. & 0.365 & 0.220 & 0.75 & 94.2 & 0.77 & 94.6 & 0.212 & 0.73 & 91.9 & 0.76 & 93.0 \\
Lognormal $s=1.5$, $g=3$ & 0.651 & 1.451 & 4.46 & 89.8 & 4.57 & 90.4 & 1.380 & 4.17 & 84.1 & 4.51 & 86.0 \\
Gaussian & 0.692 & 0.161 & 0.64 & 95.1 & 0.64 & 95.1 & 0.162 & 0.64 & 95.0 & 0.64 & 95.2 \\

%% file: tables/jk_calibration_body.tex
Gaussian & 0.022 & 80.1 & 95.0 & 99.1 & 80.2 & 95.0 & 99.1 & 79.9 & 94.8 & 99.0 & 80.2 & 95.0 & 99.0 \\
Misspecified, $G=0$ & 0 & 79.8 & 94.9 & 99.0 & 79.8 & 95.0 & 99.0 & 79.6 & 94.8 & 98.9 & 79.9 & 95.0 & 99.0 \\
Skewed, het., $G=0$ & 0 & 80.1 & 95.0 & 99.0 & 80.8 & 95.4 & 99.1 & 80.0 & 95.0 & 98.9 & 80.8 & 95.5 & 99.1 \\
Heavy-tailed, het. & 0.016 & 79.1 & 94.7 & 98.9 & 80.8 & 95.5 & 99.1 & 79.0 & 94.7 & 98.9 & 80.9 & 95.5 & 99.1 \\
Lognormal $s=1.2$, $g=0.3$ & 0.013 & 78.4 & 94.6 & 98.9 & 80.6 & 95.6 & 99.2 & 78.6 & 94.6 & 98.9 & 80.8 & 95.7 & 99.2 \\
Lognormal $s=1.5$, $g=0.1$ & 0.002 & 77.0 & 94.7 & 99.3 & 81.2 & 96.2 & 99.6 & 77.8 & 94.9 & 99.3 & 82.5 & 96.7 & 99.6 \\
Lognormal $s=1.5$, $g=0.5$ & 0.049 & 76.6 & 93.7 & 98.5 & 80.4 & 95.3 & 98.9 & 76.4 & 93.2 & 98.3 & 80.6 & 95.1 & 98.8 \\
Lognormal $s=1.8$, $g=0.3$, $\sigma=1+1.5X$ & 0.009 & 75.5 & 94.3 & 99.1 & 81.5 & 96.4 & 99.5 & 76.4 & 94.3 & 99.0 & 83.1 & 96.6 & 99.5 \\
Lognormal $s=2$, $g=0.1$ & 0.002 & 75.1 & 94.8 & 99.3 & 82.3 & 97.0 & 99.6 & 76.1 & 95.0 & 99.3 & 84.2 & 97.6 & 99.7 \\

%% file: tables/validation_body.tex
Gaussian, homoskedastic, $p=3$ ($F$) & $-5.00$ & $-4.70\ (0.09)$ & $-4.99\ (0.09)$ & $-5.17\ (0.16)$ \\
Exponential covariates, homoskedastic ($F,Q,R$) & $-20.25$ & $-17.17\ (0.21)$ & $-19.60\ (0.22)$ & $-19.75\ (0.39)$ \\
Exponential, heteroskedastic, $\gamma=0$ & $32.00$ & $29.82\ (0.28)$ & $31.73\ (0.29)$ & $33.25\ (0.54)$ \\
Exponential, HTE + arm-specific misspec.\ (all) & $44.52$ & $32.40\ (0.56)$ & $38.82\ (0.55)$ & $41.97\ (0.98)$ \\
Gaussian, arm-specific misspec.\ ($\tr E^2,E_W$) & $-2.32$ & $-2.36\ (0.09)$ & $-2.34\ (0.09)$ & $-2.45\ (0.16)$ \\
Mixed skewed/Gaussian, heteroskedastic & $28.00$ & $25.94\ (0.27)$ & $27.85\ (0.28)$ & $28.27\ (0.53)$ \\

%% file: tables/ablation_body.tex
Gaussian, homoskedastic, $p=3$ ($F$) & $-1.0$ & $+29.8$ & $-1.0$ & $-1.0$ & $-1.0$ & $-1.0$ & $-1.0$ \\
Exponential covariates, homoskedastic ($F,Q,R$) & $+1.3$ & $+14.2$ & $-41.3$ & $-37.4$ & $-21.9$ & $+1.3$ & $+1.3$ \\
Exponential, heteroskedastic, $\gamma=0$ & $+2.3$ & $+2.3$ & $+61.1$ & $+2.3$ & $+2.3$ & $+61.1$ & $+2.3$ \\
Exponential, HTE + arm-specific misspec.\ (all) & $-2.6$ & $+0.7$ & $+45.1$ & $-12.4$ & $-3.2$ & $+65.8$ & $-17.7$ \\
Gaussian, arm-specific misspec.\ ($\tr E^2,E_W$) & $-0.8$ & $+23.5$ & $+3.4$ & $-0.8$ & $-0.8$ & $+3.4$ & $-0.8$ \\
Mixed skewed/Gaussian, heteroskedastic & $+0.5$ & $+4.2$ & $+55.7$ & $-5.1$ & $-1.4$ & $+60.3$ & $+0.5$ \\

%% file: tables/validation_families_body.tex
Gaussian & 2 & $-1.47$ & $-1.31\ (0.20)$ & $+0.82$ & $2.27$ \\
Gaussian & 5 & $-2.94$ & $-2.57\ (0.42)$ & $+0.90$ & $3.22$ \\
Gaussian & 8 & $-4.41$ & $-3.50\ (0.62)$ & $+1.48$ & $4.00$ \\
Scale mixture & 2 & $14.81$ & $13.33\ (1.06)$ & $-1.40$ & $8.14$ \\
Scale mixture & 5 & $25.56$ & $26.51\ (1.83)$ & $+0.52$ & $10.39$ \\
Scale mixture & 8 & $36.30$ & $35.37\ (2.74)$ & $-0.34$ & $12.85$ \\
Rademacher & 2 & $-9.80$ & $-9.51\ (0.29)$ & $+1.02$ & $2.97$ \\
Rademacher & 5 & $-15.68$ & $-16.65\ (0.62)$ & $-1.55$ & $6.20$ \\
Rademacher & 8 & $-21.56$ & $-20.94\ (0.94)$ & $+0.67$ & $8.13$ \\
Quadratic misspec. & 2 & $4.81$ & $4.75\ (0.35)$ & $-0.17$ & $3.12$ \\
Quadratic misspec. & 5 & $13.68$ & $14.10\ (0.96)$ & $+0.44$ & $4.99$ \\
Quadratic misspec. & 8 & $22.56$ & $22.19\ (1.72)$ & $-0.21$ & $6.77$ \\

%% file: tables/validity_body.tex
0.6 & 2.3 & 4 & 1.0072\,(1) & 1.0079 & 1.0019\,(1) & 1.0020 & 1.0004\,(1) & 1.0005 \\
0.8 & 3.7 & 12 & 1.0184\,(3) & 1.0243 & 1.0052\,(2) & 1.0061 & 1.0014\,(1) & 1.0015 \\
1.0 & 6.2 & 41 & 1.0429\,(7) & 1.0827 & 1.0137\,(4) & 1.0207 & 1.0043\,(2) & 1.0052 \\
1.2 & 11.2 & 166 & 1.0923\,(15) & 1.3323 & 1.0347\,(11) & 1.0831 & 1.0127\,(8) & 1.0208 \\
1.5 & 33.5 & 2025 & 1.2517\,(51) & 5.0503 & 1.1163\,(44) & 2.0126 & 1.0534\,(36) & 1.2531 \\

%% file: tables/tail_body.tex
Heavy-tailed example ($s=1.8$) & 1.68\,(3) & 78.5 & 1.90 & 7.94 & 26.6 \\
Table \ref{tab:validity}, $s=1.5$ & 1.25\,(1) & 5.05 & 1.36 & 2.44 & 3.81 \\
$\gamma=0$, $s=2$ & 2.00\,(6) & 672.1 & 2.45 & 17.3 & 97.4 \\

%% file: references.bib
@article{EfronStein1981,
  author  = {Efron, Bradley and Stein, Charles},
  title   = {The Jackknife Estimate of Variance},
  journal = {Annals of Statistics},
  year    = {1981},
  volume  = {9},
  number  = {3},
  pages   = {586--596}
}

@book{imbens2015causal,
  author    = {Imbens, Guido W. and Rubin, Donald B.},
  title     = {Causal Inference for Statistics, Social, and Biomedical Sciences},
  publisher = {Cambridge University Press},
  year      = {2015}
}

@techreport{FDA2023,
  author      = {{U.S. Food and Drug Administration}},
  title       = {Adjusting for Covariates in Randomized Clinical Trials for Drugs and
                 Biological Products: Guidance for Industry},
  institution = {U.S. Food and Drug Administration},
  month       = may,
  year        = {2023}
}

@book{fisher1935design,
  author    = {Fisher, Ronald A.},
  title     = {The Design of Experiments},
  publisher = {Oliver and Boyd},
  address   = {Edinburgh},
  year      = {1935}
}

@article{Freedman2008,
  author  = {Freedman, David A.},
  title   = {On Regression Adjustments to Experimental Data},
  journal = {Advances in Applied Mathematics},
  volume  = {40},
  number  = {2},
  pages   = {180--193},
  year    = {2008}
}

@article{Lin2013,
  author  = {Lin, Winston},
  title   = {Agnostic Notes on Regression Adjustments to Experimental Data:
             Reexamining {F}reedman's Critique},
  journal = {The Annals of Applied Statistics},
  volume  = {7},
  number  = {1},
  pages   = {295--318},
  year    = {2013}
}

@article{li2017general,
  author  = {Li, Xinran and Ding, Peng},
  title   = {General Forms of Finite Population Central Limit Theorems with
             Applications to Causal Inference},
  journal = {Journal of the American Statistical Association},
  volume  = {112},
  number  = {520},
  pages   = {1759--1769},
  year    = {2017}
}

@article{negi2021revisiting,
  author  = {Negi, Akanksha and Wooldridge, Jeffrey M.},
  title   = {Revisiting Regression Adjustment in Experiments with Heterogeneous
             Treatment Effects},
  journal = {Econometric Reviews},
  volume  = {40},
  number  = {5},
  pages   = {504--534},
  year    = {2021}
}

@misc{Wager2025,
  author = {Wager, Stefan},
  title  = {Causal Inference: A Statistical Learning Approach},
  note   = {Stanford University lecture notes},
  year   = {2025}
}

@article{zhao2023covariate,
  author  = {Zhao, Anqi and Ding, Peng},
  title   = {Covariate Adjustment in Multiarmed, Possibly Factorial Experiments},
  journal = {Journal of the Royal Statistical Society Series B: Statistical Methodology},
  volume  = {85},
  number  = {1},
  pages   = {1--23},
  year    = {2023}
}

@article{HodgesLehmann1970,
  author  = {Hodges, Joseph L. and Lehmann, Erich L.},
  title   = {Deficiency},
  journal = {The Annals of Mathematical Statistics},
  volume  = {41},
  number = {3},
  pages   = {783--801},
  year    = {1970}
}

@misc{Kranz2024,
  author = {Kranz, Sebastian},
  title  = {From Replications to Revelations: Heteroskedasticity-Robust Inference},
  note   = {arXiv:2411.14763},
  year   = {2024}
}

@article{imbens2009recent,
  author  = {Imbens, Guido W. and Wooldridge, Jeffrey M.},
  title   = {Recent Developments in the Econometrics of Program Evaluation},
  journal = {Journal of Economic Literature},
  volume  = {47},
  number  = {1},
  pages   = {5--86},
  year    = {2009}
}

@article{ChesherJewitt1987,
  author  = {Chesher, Andrew and Jewitt, Ian},
  title   = {The Bias of a Heteroskedasticity Consistent Covariance Matrix Estimator},
  journal = {Econometrica},
  volume  = {55},
  number  = {5},
  pages   = {1217--1222},
  year    = {1987}
}

@article{MacKinnonWhite1985,
  author  = {MacKinnon, James G. and White, Halbert},
  title   = {Some Heteroskedasticity-Consistent Covariance Matrix Estimators with
             Improved Finite Sample Properties},
  journal = {Journal of Econometrics},
  volume  = {29},
  number  = {3},
  pages   = {305--325},
  year    = {1985}
}

@article{LongErvin2000,
  author  = {Long, J. Scott and Ervin, Laurie H.},
  title   = {Using Heteroscedasticity Consistent Standard Errors in the Linear
             Regression Model},
  journal = {The American Statistician},
  volume  = {54},
  number  = {3},
  pages   = {217--224},
  year    = {2000}
}

@article{imbens2016robust,
  author  = {Imbens, Guido W. and Koles{\'a}r, Michal},
  title   = {Robust Standard Errors in Small Samples: Some Practical Advice},
  journal = {The Review of Economics and Statistics},
  volume  = {98},
  number  = {4},
  pages   = {701--712},
  year    = {2016}
}

@article{lei2021regression,
  author  = {Lei, Lihua and Ding, Peng},
  title   = {Regression Adjustment in Completely Randomized Experiments with a
             Diverging Number of Covariates},
  journal = {Biometrika},
  volume  = {108},
  number  = {4},
  pages   = {815--828},
  year    = {2021}
}

@book{BhattacharyaRao1976,
  author    = {Bhattacharya, Rabindra N. and Rao, R. Ranga},
  title     = {Normal Approximation and Asymptotic Expansions},
  publisher = {Wiley},
  address   = {New York},
  year      = {1976}
}

@book{Hall1992,
  author    = {Hall, Peter},
  title     = {The Bootstrap and Edgeworth Expansion},
  publisher = {Springer-Verlag},
  address   = {New York},
  year      = {1992}
}

@book{Ghosh1994,
  author    = {Ghosh, Jayanta K.},
  title     = {Higher Order Asymptotics},
  series    = {NSF-CBMS Regional Conference Series in Probability and Statistics},
  publisher = {Institute of Mathematical Statistics},
  year      = {1994}
}

@article{Sargan1974,
  author  = {Sargan, J. D.},
  title   = {The Validity of {N}agar's Expansion for the Moments of Econometric
             Estimators},
  journal = {Econometrica},
  volume  = {42},
  number  = {1},
  pages   = {169--176},
  year    = {1974}
}

@article{Nagar1959,
  author  = {Nagar, A. L.},
  title   = {The Bias and Moment Matrix of the General $k$-Class Estimators of the
             Parameters in Simultaneous Equations},
  journal = {Econometrica},
  volume  = {27},
  number  = {4},
  pages   = {575--595},
  year    = {1959}
}

@article{Sargan1976,
  author  = {Sargan, J. D.},
  title   = {Econometric Estimators and the {E}dgeworth Approximation},
  journal = {Econometrica},
  volume  = {44},
  number  = {3},
  pages   = {421--448},
  year    = {1976}
}

@incollection{Rothenberg1984,
  author    = {Rothenberg, Thomas J.},
  title     = {Approximating the Distributions of Econometric Estimators and Test
               Statistics},
  booktitle = {Handbook of Econometrics},
  editor    = {Griliches, Zvi and Intriligator, Michael D.},
  volume    = {2},
  chapter   = {15},
  pages     = {881--935},
  publisher = {North-Holland},
  year      = {1984}
}

@article{NeweySmith2004,
  author  = {Newey, Whitney K. and Smith, Richard J.},
  title   = {Higher Order Properties of {GMM} and Generalized Empirical Likelihood
             Estimators},
  journal = {Econometrica},
  volume  = {72},
  number  = {1},
  pages   = {219--255},
  year    = {2004}
}

@article{ChiangMatsushitaOtsu2026,
  author  = {Chiang, Harold D. and Matsushita, Yukitoshi and Otsu, Taisuke},
  title   = {Regression Adjustment in Completely Randomized Experiments with Many
             Covariates},
  journal = {arXiv preprint arXiv:2302.00469},
  year    = {2025}
}

@article{rohatgi1989sharp,
  author  = {Rohatgi, Vijay K. and Sz{\'e}kely, G{\'a}bor J.},
  title   = {Sharp Inequalities Between Skewness and Kurtosis},
  journal = {Statistics \& Probability Letters},
  volume  = {8},
  number  = {4},
  pages   = {297--299},
  year    = {1989}
}

@manual{estimatr,
  author = {Blair, Graeme and Cooper, Jasper and Coppock, Alexander and Humphreys, Macartan and Sonnet, Luke},
  title  = {estimatr: Fast Estimators for Design-Based Inference},
  year   = {2024},
  note   = {R package version 1.0.2},
  url    = {https://CRAN.R-project.org/package=estimatr}
}

@article{sympy,
  title = {{SymPy}: symbolic computing in {Python}},
  author = {Meurer, Aaron and Smith, Christopher P. and Paprocki, Mateusz
    and {\v{C}}ert{\'i}k, Ond{\v{r}}ej and Kirpichev, Sergey B.
    and Rocklin, Matthew and Kumar, AMiT and Ivanov, Sergiu
    and Moore, Jason K. and Singh, Sartaj and Rathnayake, Thilina
    and Vig, Sean and Granger, Brian E. and Muller, Richard P.
    and Bonazzi, Francesco and Gupta, Harsh and Vats, Shivam
    and Johansson, Fredrik and Pedregosa, Fabian and Curry, Matthew J.
    and Terrel, Andy R. and Rou{\v{c}}ka, {\v{S}}t{\v{e}}p{\'a}n
    and Saboo, Ashutosh and Fernando, Isuru and Kulal, Sumith
    and Cimrman, Robert and Scopatz, Anthony},
  journal = {PeerJ Computer Science},
  volume = {3},
  pages = {e103},
  year = {2017},
  doi = {10.7717/peerj-cs.103},
  url = {https://doi.org/10.7717/peerj-cs.103}
}

@article{Tan2014,
  author  = {Tan, Zhiqiang},
  title   = {Second-Order Asymptotic Theory for Calibration Estimators in Sampling and Missing-Data Problems},
  journal = {Journal of Multivariate Analysis},
  volume  = {131},
  pages   = {240--253},
  year    = {2014},
  doi     = {10.1016/j.jmva.2014.07.003}
}
